\renewcommand{\tocsection}[3]{%
  \indentlabel{\@ifnotempty{#2}{\bfseries\ignorespaces#1 #2\quad}}\bfseries#3}
\renewcommand{\tocsubsection}[3]{%
  \indentlabel{\@ifnotempty{#2}{\ignorespaces#1 #2\quad}}#3}
\newcommand\@dotsep{4.5}
\def\@tocline#1#2#3#4#5#6#7{\relax
  \ifnum #1>\c@tocdepth % then omit
  \else
    \par \addpenalty\@secpenalty\addvspace{#2}%
    \begingroup \hyphenpenalty\@M
    \@ifempty{#4}{%
      \@tempdima\csname r@tocindent\number#1\endcsname\relax
    }{%
      \@tempdima#4\relax
    }%
    \parindent\z@ \leftskip#3\relax \advance\leftskip\@tempdima\relax
    \rightskip\@pnumwidth plus1em \parfillskip-\@pnumwidth
    #5\leavevmode\hskip-\@tempdima{#6}\nobreak
    \leaders\hbox{$\m@th\mkern \@dotsep mu\hbox{.}\mkern \@dotsep mu$}\hfill
    \nobreak
    \hbox to\@pnumwidth{\@tocpagenum{\ifnum#1=1\bfseries\fi#7}}\par% <-- \bfseries for \section page
    \nobreak
    \endgroup
  \fi}
\renewcommand\csname r@tocindent0\endcsname{0pt}
\def\l@subsection{\@tocline{2}{0pt}{2.5pc}{5pc}{}}
\patchcmd{\@setaddresses}{\indent}{\noindent}{}{}
\patchcmd{\@setaddresses}{\indent}{\noindent}{}{}
\patchcmd{\@setaddresses}{\indent}{\noindent}{}{}
\patchcmd{\@setaddresses}{\indent}{\noindent}{}{}
\DeclareMathOperator{\C}{\mathcal{C}}
\DeclareMathOperator{\Gal}{Gal}
\DeclareMathOperator{\rk}{rk}
\DeclareMathOperator{\dd}{d}
\DeclareMathOperator{\Mat}{Mat}
\DeclareMathOperator{\GL}{GL}
\DeclareMathOperator{\ww}{w}
\theoremstyle{definition}
\newtheorem{theorem}{Theorem}[section]
\newtheorem{lemma}[theorem]{Lemma}
\newtheorem{corollary}[theorem]{Corollary}
\newtheorem{definition}[theorem]{Definition}
\newtheorem{proposition}[theorem]{Proposition}
\newtheorem{remark}[theorem]{Remark}
\newtheorem{open}[theorem]{Problem}
\newcommand{\fqn}{\mathbb{F}_{q^n}}
\newcommand{\cC}{{\mathcal C}}
\newcommand{\F}{{\mathbb F}}
\newcommand{\Z}{{\mathbb Z}}
\newcommand{\NN}{{\mathbb N}}
\newcommand{\U}{{\mathcal{U}}}
\newcommand{\muu}{{\boldsymbol{\mu}}}
\newcommand{\ds}{\mathrm{d}^{\mathrm{srk}}}
\newcommand{\bfn}{\mathbf {n}}
\newcommand{\fq}{{\mathbb F}_{q}}
\newcommand{\Fq}{{\mathbb F}_{q}}
\newcommand{\Fm}{{\mathbb F}_{q^m}}
\newcommand{\la}{\langle}
\newcommand{\ra}{\rangle}
\renewcommand{\mod}{\hbox{{\rm mod}\,}}
\newcommand{\PG}{\mathrm{PG}}
\newcommand{\N}{\mathrm{N}}
\newcommand\qbin[3]{\left[\begin{matrix} #1 \\ #2 \end{matrix} \right]_{#3}}
\newcommand{\Fmnk}{[\bfn,k]_{q^m/q}}
\newcommand{\Fmnkd}{[\bfn,k,d]_{q^m/q}}
\newcommand{\Fmkd}{[n,k,d]_{q^m/q}}
\newcommand{\st}{\,:\,}
\title{On MSRD codes, \lowercase{h}-designs and disjoint maximum scattered linear sets}
\date{}
\author[P. Santonastaso]{Paolo Santonastaso}
\address{Paolo Santonastaso, \textnormal{Dipartimento di Matematica e Fisica, Universit\`a degli Studi della Campania ``Luigi Vanvitelli'', Viale Lincoln, 5, I--\,81100 Caserta, Italy}}
\email{paolo.santonastaso@unicampania.it}
\author[J. Sheekey]{John Sheekey}
\address{John Sheekey, \textnormal{School of Mathematics and Statistics, UCD,  Belfield, Dublin 4}}
\email{john.sheekey@ucd.ie}
\subjclass[2020]{11T07; 94B27; 51E20} 
\keywords{Sum-rank metric code; Subspace design; Linear set}
\begin{document}

\maketitle

\begin{abstract}
In this paper we study geometric aspects of codes in the sum-rank metric. We establish the geometric description of generalised weights, and analyse the Delsarte and geometric dual operations. We establish a correspondence between maximum sum-rank distance codes and $h$-designs, extending the well-known correspondence between MDS codes and arcs in projective spaces and between MRD codes and $h$-scatttered subspaces. We use the geometric setting to construct new  $h$-designs and new MSRD codes via new families of pairwise disjoint maximum scattered linear sets. 

\end{abstract}

%New results:
%\begin{itemize}
 %   \item geometric definition of generalized weights
 %   \item correspondence h-design and MSRD
%    \item construction of 1-design for every $m,k$
 %   \item construction h-design and their dual

%\end{itemize}

\section{Introduction}

The interplay between coding theory and finite geometry has a long and rich history. From the classical connections between linear codes in the Hamming metric and (multi)sets of points in a projective space, through more recent connections between linear codes in the rank metric and linear sets in a projective space, these correspondences have lead to many interesting results, techniques, and research questions. Natural notions in one of these settings lead to new questions in the other, and in many cases questions that appear difficult to address in one setting become more tractable in the other.

In this paper we consider codes in the {\it sum-rank metric}, a metric that can be seen as a generalisation of both the Hamming and rank metrics, and their geometric counterparts. The sum-rank metric has attracted sustained interest in recent years, due to applications in information theory, including multishot network coding, distributed storage systems and multi-antenna wireless communication \cite{martinez2022survey}. Codes in the sum-rank metric can be viewed as spaces of block-diagonal matrices, and so this allows for a broad range of cases of possible interest by varying the number and sizes of the blocks; however, most research focuses on cases with additional structure where they more closely resemble the classical cases of the Hamming and rank-metrics, while still generalising enough to give rise to a rich source of interesting mathematical questions and sufficient applicability. 

%distributed storage systems,
%multishot network coding, streaming over erasure channels,
%and multi-antenna wireless communication.

The correspondence between linear codes in the Hamming metric and sets of points in a projective space (sometimes referred to as {\it projective systems}) is well-known and is a key tool in classical coding theory. More recently, a correspondence between codes in the rank-metric (with additional linearity assumptions) and geometric objects analogous to projective systems has been established, leading to new constructions, techniques, and furthering our understanding of the structure of these codes.  

With certain assumptions on the parameters and linearity of a sum-rank metric code, the geometric counterparts of optimal codes are {\it (optimal) subspace designs} and further analogues of projective systems. These are sets of $\Fq$-subspaces of an $\Fm$-vector space with restricted intersection properties with respect to $\Fm$-subspaces, and were first introduced by Guruswami and Xing in \cite{guruswami2016explicit} for the purposes of studying rank-metric codes. The connection with sum-rank metric codes was established in \cite{neri2023geometry}, which leads to a host of natural questions regarding the existence and structure of these objects.

In this paper we determine the geometric counterpart of the coding-theoretic notion of {\it generalised weights} for the sum-rank metric in terms of subspace designs, and study geometric characterisations of the systems corresponding to {\it MSRD codes}. We study the implications of the coding-theoretic {\it Delsarte dual} on subspace designs, and in the opposite direction we consider the implications of  geometric duality for codes. We then use the geometric setting to construct new subspace designs, and hence new MSRD codes. In particular we show the existence of large sets of of pairwise disjoint {\it maximum scattered linear sets} in many projective spaces.

\subsection{Notation}

Through this paper $q$ is a prime power and $\Fq$ is the finite field with $q$ elements. By $\Fm$ we denote the degree $m$ extension field  of $\Fq$ and by $V(k,q^m)$ we denote a $k$-dimensional $\F_{q^m}$-vector space. For a positive integer $i$, by $[i]$, we denote the set $[i]=\{j \in \mathbb{N}: 1 \leq j \leq i\}$.
\noindent Let $t$ be a positive integer. With
$\bfn=(n_1,\ldots,n_t) \in \NN^t$ we will always denote the ordered tuples with $n_1 \geq n_2 \geq \ldots \geq n_t$ and we set $N\coloneqq n_1+\ldots+n_t$. We use the following compact notation for the direct sum of vector spaces 
$$ \Fm^\bfn\coloneqq\bigoplus_{i=1}^t\Fm^{n_i}.$$
If $A_i \in \F_{q}^{n_i \times n_i}$, for each $i$, we denote by $diag(A_1,\ldots,A_t) \in \F_q^{N \times N}$ the block diagonal matrix having as blocks on the diagonal the $A_i$'s.

\subsection{Layout of paper}

The paper is organised as follows. In Section 2 we recall the basic definitions and preliminary results on sum-rank metric codes and subspace designs. We establish the necessary tools and correspondences between the settings, and prove some basic properties which will be used in later sections. In Section 3 we further develop the geometric setting, in particular with regard to $h$-designs, generalised weights, and geometric duality. Section 4 contains the main new constructions of this paper; in particular we construct new examples of optimal maximum $1$-designs for many new parameters. In Section 5 we show how these objects give rise to two-intersecting sets, strongly regular graphs, and two-weight codes in the Hamming metric. Section 6 contains the coding theory consequences of the constructions of Section 4, as well as analysis of the duality operations (both geometric and coding theoretic) applied to the new constructions.

\section{Sum-rank-metric codes and their geometric description}

\subsection{Preliminaries}

The $\Fq$-rank of a vector $v=(v_1,\ldots,v_n) \in \F_{q^m}^n$ is defined as the quantity
$$\rk_q(v)\coloneqq\dim_{\fq} \langle v_1,\ldots, v_n\rangle_{\fq}.$$ 
For an element $x=(x_1 , \ldots,  x_t)\in\F_{q^m}^\bfn$, with $x_i\in\F_{q^m}^{n_i}$, (sometimes we also write $(x_1 \mid \ldots \mid x_t)$), the \textbf{$\Fq$-sum-rank weight of $x$} is defined as
$$ \ww(x)=\sum_{i=1}^t \rk_q(x_i).$$
A \textbf{(vector) sum-rank metric code} $\C$ is an $\Fm$-subspace of $\Fm^\bfn$ endowed with the 
\textbf{$\Fq$-sum-rank distance on $\Fm^\bfn$} defined as 
\[
\dd(x,y)=\sum_{i=1}^t \rk_q(x_i-y_i),
\]
for any  $x=(x_1 , \ldots, x_t), y=(y_1, \ldots, y_t) \in \F_{q^m}^\bfn$, with $x_i,y_i \in \F_{q^m}^{n_i}$. The minimum (sum-rank) distance of a code $\C$ is denoted $d(\C)$, that is,
\[
d(\C):=\min\{\dd(x,y) \st x, y \in \C, x\neq y  \}= \min\{\ww(x) \st x \in \C, x\neq 0  \}.
\]
From now on, we write that $\C$ is an $\Fmnkd$ code, if $\C$ is a sum-rank metric code in $\F_{q^m}^{\bfn}$ having $\Fm$-dimension $k$ and minimum (sum-rank) distance $d$, or simply an $\Fmnk$ code if the minimum distance is not relevant/known.

\medskip 
For an $\Fmnk$ code, we denote by $W_r(\C)$ the number of codewords having $\fq$-sum-rank weight $r$. For vector sum-rank metric codes, a Singleton-like bound holds.

\begin{theorem}[see \textnormal{\cite[Theorem III.2]{byrne2021fundamental} \cite[Proposition 16]{Martinez2018skew}}]  \label{th:Singletonbound}
    Let $\C$ be an $[\bfn,k,d]_{q^m/q}$ code. Let $j$ and $\delta$ be the unique integers satisfying \[d-1=\sum_{i=1}^{j-1}\min\{m,n_i\}+\delta \mbox{ and }0 \leq \delta \leq \min\{m,n_j\}-1.\]
Then 
\[
\lvert \C \rvert \leq q^{m\sum_{i=j}^tn_i-\max\{m,n_j\} \delta}.\] 
In particular,
    \[
    \begin{cases}
    k \leq N-d+1 & \mbox{ if } m \geq n_1; \\
    mk \leq n(tm-d+1) & \mbox{ if } n\coloneqq n_1=\ldots=n_t \geq m.
    \end{cases}
    \]
\end{theorem}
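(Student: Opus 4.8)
The plan is to prove the cardinality bound by a rank-puncturing argument and then read off the two special cases. First I would fix an $\Fq$-basis of $\Fm$ and identify each block $x_i \in \Fm^{n_i}$ with an $m \times n_i$ matrix $M_i$ over $\Fq$, so that $\rk_q(x_i) = \rk(M_i)$ and a codeword becomes a tuple $(M_1,\ldots,M_t)$ with $\ww(x) = \sum_{i=1}^t \rk(M_i)$. I would also note at the outset that, because $n_1 \ge \cdots \ge n_t$, the ``mixed-radix'' decomposition $d-1 = \sum_{i=1}^{j-1}\min\{m,n_i\} + \delta$ with $0 \le \delta \le \min\{m,n_j\}-1$ determines $j \in \{1,\ldots,t\}$ and $\delta$ uniquely, provided $d-1 < \sum_{i=1}^t\min\{m,n_i\}$; otherwise $\C=\{0\}$ and the statement is trivial.

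The core step is to build an injective $\Fq$-linear map $\pi$ on $\C$ whose image lies in a space of the claimed $\Fq$-dimension. I define $\pi$ by deleting the blocks $1,\ldots,j-1$ entirely, keeping the blocks $j+1,\ldots,t$ untouched, and in block $j$ deleting either $\delta$ rows or $\delta$ columns of $M_j$, whichever discards more coordinates. Deleting $\delta$ columns costs $m\delta$ entries over $\Fq$, while deleting $\delta$ rows costs $n_j\delta$; choosing the larger discards $\max\{m,n_j\}\delta$ coordinates. Hence $\pi(\C)$ lies in an $\Fq$-space of dimension $m\sum_{i\ge j}n_i - \max\{m,n_j\}\delta$.

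The injectivity is where the minimum-distance hypothesis enters, and it is the step to get right. Since $\C$ and $\pi$ are $\Fq$-linear, it suffices to show $\ker\pi \cap \C = \{0\}$. A codeword $x$ with $\pi(x)=0$ is arbitrary on blocks $1,\ldots,j-1$, vanishes on blocks $>j$, and in block $j$ has $M_j$ supported on the $\delta$ deleted rows (or columns), forcing $\rk(M_j)\le\delta$. Its sum-rank weight is therefore at most $\sum_{i=1}^{j-1}\min\{m,n_i\} + \delta = d-1 < d$, so $x=0$. Consequently $\pi$ is injective and $|\C| = |\pi(\C)| \le q^{m\sum_{i\ge j}n_i - \max\{m,n_j\}\delta}$, which is the asserted bound.

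Finally I would specialise. If $m \ge n_1$ then $\min\{m,n_i\}=n_i$ and $\max\{m,n_j\}=m$, so $\sum_{i\ge j}n_i - \delta = N - (d-1) = N-d+1$ and the bound reads $q^{mk} = |\C| \le q^{m(N-d+1)}$, i.e.\ $k \le N-d+1$. If $n:=n_1=\cdots=n_t\ge m$ then $\min\{m,n_i\}=m$, $\max\{m,n_j\}=n$, and $d-1=(j-1)m+\delta$, whence $m\sum_{i\ge j}n_i - n\delta = n\bigl((t-j+1)m-\delta\bigr) = n(tm-d+1)$, giving $mk \le n(tm-d+1)$. The one genuine subtlety, and the reason the naive Hamming/rank puncturing is not enough, is the appearance of $\max\{m,n_j\}$ rather than $m$: obtaining the stronger factor when $n_j > m$ requires deleting rows instead of columns, which is legitimate here precisely because bounding the cardinality only needs $\pi$ to be injective and $\Fq$-linear, not $\Fm$-linear.
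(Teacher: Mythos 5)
Your proof is correct: the puncturing map is well defined and $\Fq$-linear, the injectivity argument via the minimum distance is sound, the dimension count (including the row-versus-column choice that produces the factor $\max\{m,n_j\}\delta$) is right, and both specialisations follow correctly. Note that the paper itself gives no proof of this theorem -- it is imported from \cite[Theorem III.2]{byrne2021fundamental} and \cite[Proposition 16]{Martinez2018skew} -- and your argument is essentially the standard projection/puncturing proof used in those references, so there is nothing further to reconcile.
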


An $[\bfn,k,d]_{q^m/q}$ code is called \textbf{Maximum Sum-Rank Distance code} (or shortly \textbf{MSRD code}) if its parameters attains the bound of Theorem \ref{th:Singletonbound}.

%From now on, we will suppose that $n:=n_1=\ldots=n_t$ and so $\bfn=(n,\ldots,n)$.

\medskip

Let consider the symmetric and non-degenerate bilinear form on $\F_{q^m}^\bfn$ 
\[(x,y)\in\F_{q^m}^\bfn\times \F_{q^m}^\bfn \mapsto \sum_{i=1}^t x_i \cdot y_i \in \F_{q^m}\] induced by the standard inner product on $\F_{q^m}^n$. 
The dual of an $[\bfn,k,d]_{q^m/q}$ code $\C$ is defined as 
\[
\C^{\perp} =\left\{
y=(y_1 ,\ldots, y_t) \in \F_{q^m}^\bfn \colon \sum_{i=1}^t x_i \cdot y_i=0 \mbox{ for all } x=(x_1 \mid \ldots \mid x_t) \in \C
\right\}.
\]

Consequentely, the dual of a $\Fmnkd$ code is a $[\bfn,N-k]_{q^m/q}$ code, where $N=\sum_{i=1}^t n_i$. Moreover, the MSRD property is invariant under dualization.

\begin{theorem} [see \textnormal{\cite[Theorem 6.1]{byrne2021fundamental},\cite[Theorem 5]{martinez2019theory}}] \label{th:dualMSRD}
Let $\C$ be an $[\bfn,k,d]_{q^m/q}$ code. 
Then, $$d(\C^{\perp})\leq 
     \begin{cases}
     N-d+2, & \mbox{ if } m \geq n_1, \\
     tm-d+2, & \mbox{ if } n\coloneqq n_1=\ldots=n_t \geq m,
     \end{cases}
     $$
    and the equality holds when $\C$ is an MSRD code. In particular, if $\C$ is an MSRD code then $\C^\perp$ is an MSRD code as well.
\end{theorem}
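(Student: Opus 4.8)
The plan is to obtain the inequality formally from Theorem \ref{th:Singletonbound}, applied once to $\C$ and once to its dual, and then to recognise the equality clause as the assertion that $\C^{\perp}$ is itself MSRD. First I would exploit that $\C^{\perp}$ is again a code on the same tuple $\bfn$, namely a $[\bfn, N-k]_{q^m/q}$ code, so the Singleton-type bound is available for it. In the regime $m \geq n_1$ it gives $N-k \leq N - d(\C^{\perp}) + 1$, that is $d(\C^{\perp}) \leq k+1$; combining this with $k \leq N - d + 1$, which is Theorem \ref{th:Singletonbound} for $\C$, yields $d(\C^{\perp}) \leq N - d + 2$. In the regime $n \coloneqq n_1 = \ldots = n_t \geq m$ (so $N = tn$) the same theorem applied to $\C^{\perp}$ reads $m(tn - k) \leq n(tm - d(\C^{\perp}) + 1)$, i.e.\ $d(\C^{\perp}) \leq mk/n + 1$, and substituting $mk/n \leq tm - d + 1$ (Theorem \ref{th:Singletonbound} for $\C$) gives $d(\C^{\perp}) \leq tm - d + 2$. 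This settles the stated inequality for every code.

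Next I would reformulate the equality statement. If $\C$ is MSRD, the bound of Theorem \ref{th:Singletonbound} is an equality for $\C$, so the second inequality used in each of the two chains above is in fact an equality; consequently $d(\C^{\perp})$ attains $N - d + 2$ (respectively $tm - d + 2$) if and only if the first inequality is also tight, i.e.\ if and only if $\C^{\perp}$ meets its own Singleton bound. Thus both remaining claims of the theorem---the equality case and the final ``in particular''---collapse to the single implication that the dual of an MSRD code is MSRD.

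To prove that implication I would mimic the classical Hamming argument that the dual of an MDS code is MDS, suitably upgraded to the rank structure. Assume $m \geq n_1$ and $\C$ MSRD, so $d = N-k+1$, and suppose for contradiction that $\C^{\perp}$ has a nonzero codeword $y = (y_1, \ldots, y_t)$ of sum-rank weight at most $k$. Representing each block $y_i$ as a matrix over $\Fq$ and recording the $\Fq$-column and row spaces of the $y_i$, the condition $y \in \C^{\perp}$ translates into vanishing of the trace form of $\C$ against a family of $\Fq$-subspaces whose total dimension is at most $k$; the MSRD (Singleton-equality) property of $\C$ forces the corresponding restriction of $\C$ to have full expected dimension, which is incompatible with the existence of such a $y$. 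The main obstacle is precisely this last step: in the Hamming case it is the transparent matroid-duality fact that a set of columns of a generator matrix is independent exactly when the complementary columns of a parity-check matrix are, whereas in the sum-rank setting the ``coordinates'' are refined into the internal $\Fq$-structure of each block, so one must track partial ranks and the restriction of the bilinear form block by block, and control the interaction between the $\Fm$-linearity of $\C$ and the $\Fq$-rank metric. I would handle this bookkeeping either directly, or---more cleanly---via the sum-rank MacWilliams identities: the sum-rank weight distribution of an MSRD code is forced by its parameters to be the extremal one, and its MacWilliams transform is again the extremal distribution for the dual parameters $[\bfn, N-k, N-d+2]_{q^m/q}$, whence $\C^{\perp}$ is MSRD.
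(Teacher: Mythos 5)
Your reduction is correct and clean: applying Theorem~\ref{th:Singletonbound} once to $\C$ and once to $\C^{\perp}$ does give the stated inequality in both regimes, and you are right that the equality clause and the final ``in particular'' clause collapse to the single implication that the dual of an MSRD code is again MSRD. Note that the paper itself offers no proof to compare against: the theorem is quoted from \cite{byrne2021fundamental} and \cite{martinez2019theory}, so your argument has to stand on its own or be assembled from results the paper recalls.

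The genuine gap is that this core implication is never actually established: you flag the decisive step as ``the main obstacle'' and then defer it, and neither of your two proposed ways out closes it. The direct route, as sketched, does not transfer verbatim from the Hamming case. For $m\ge n_1$, a nonzero $y\in\C^{\perp}$ with $\ww(y)=w\le k$ produces $\Fq$-subspaces $W_i\subseteq\U_i$ of an associated system with $\sum_i\dim_{\Fq}W_i=w$ whose $\Fm$-span has dimension at most $w-1$; placing that span in a hyperplane $H$ gives $\sum_i\dim_{\Fq}(\U_i\cap H)\ge w$, which contradicts the $(k-1)$-design property of Corollary~\ref{teo:designMSRD} only when $w=k$. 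For $w<k$ one must in addition enlarge the $W_i$ inside the $\U_i$ to total $\Fq$-dimension $k$ while keeping the span inside a hyperplane --- exactly the bookkeeping you postponed --- and one must also verify that an MSRD code is nondegenerate before invoking the geometric correspondence at all; moreover, the regime $n_1=\dots=n_t=n\ge m$ is never addressed in this part of your argument. The fallback via ``sum-rank MacWilliams identities'' is worse than incomplete, because it rests on a false premise: a central negative result of \cite{byrne2021fundamental} (the very reference for this theorem) is that the sum-rank weight enumerator of a code does \emph{not} determine that of its dual, so there is no MacWilliams transform available to apply, and Theorem~\ref{th:weightdistrMSRD} cannot substitute for it since it yields the dual's distribution only once the dual is already known to be MSRD --- which is what you are trying to prove. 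For the record, the cleanest repair inside the paper's own toolkit (for $m\ge n_1$) goes through generalized weights: if $d_1^{srk}(\C)=N-k+1$, strict monotonicity (Proposition~\ref{prop:monotonweight}) forces $d_r^{srk}(\C)=N-k+r$ for every $r$, and Wei-type duality (Theorem~\ref{th:weitype}) then forces $\{d_j^{srk}(\C^{\perp})\colon 1\le j\le N-k\}=\{k+1,\dots,N\}$, i.e.\ $d(\C^{\perp})=k+1$; this is essentially the argument of \cite{martinez2019theory}, while the case $n\ge m$ requires a separate treatment as in \cite{byrne2021fundamental}.
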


Now, we recall the definition of equivalence between sum-rank metric codes. We will use the $\Fm$-linear isometries of the whole ambient space in order to define the equivalence of sum-rank metric codes. The $\Fm$-linear isometries of the space $\F_{q^m}^\bfn$ endowed with the sum-rank distance are classified in \cite[Theorem 3.7]{alfarano2021sum}. The case $\bfn=(n,\ldots, n)$ was already proved in \cite[Theorem 2]{martinezpenas2021hamming}.

In this way, we say that two $\Fmnk$ codes $\cC_1, \cC_2$ are \textbf{equivalent} if there is an $\Fm$-linear isometry mapping $\cC_1$ in $ \cC_2$. More precisely, $\cC_1, \cC_2$ are equivalent if and only if there exist an element $(A_1,\ldots,A_t) \in \GL(n_1,\fq) \times \ldots \times \GL(n_t,\fq)$, an element $(a_1,\ldots,a_t) \in (\F_{q^m}^*)^t$ and a permutation $\pi \in \mathcal{S}_t$ such that
\[
\C_2=\left\{(a_1 x_{\pi(1)}A_1 \mid \ldots \mid a_tx_{\pi(t)}A_t) \colon (x_1\mid \ldots \mid x_t) \in \C_1\right\}.
\]

Finally, we recall the notion of degeneracy of a sum-rank metric code.
Let $\C$ be a $\Fmnk$ code. A generator matrix $G$ for $\C$ is naturally partitioned as $G=(G_1 \,|\, \ldots \,|\, G_t)$, with $G_i \in \Fm^{k \times n_i}$. In this way, $\C$ is said to be \textbf{sum-rank nondegenerate} if for any generator matrix $G=(G_1 \,|\, \ldots \,|\, G_t)$, with $G_i \in \Fm^{k \times n_i}$, the columns of each $G_i$ are $\Fq$-linear independent. This property can be characterize as follows.
\begin{proposition} [see \textnormal{\cite[Proposition 2.13]{neri2023geometry}}]\label{prop:charact_nondegenarate}
Let $\C$ be a $\Fmnk$ code with $k<N$. The following are equivalent.
\begin{enumerate}
    \item $\C$ is sum-rank nondegenerate.
    \item $\dd(\C^\perp) \geq 2$.
\end{enumerate}
\end{proposition}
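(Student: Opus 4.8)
The plan is to prove the equivalent statement obtained by negating both conditions, namely that $\C$ fails to be sum-rank nondegenerate if and only if $\dd(\C^\perp)=1$. Since $k<N$, the dual code $\C^\perp$ has $\Fm$-dimension $N-k\geq 1$ and is therefore nonzero, so the minimum distance $\dd(\C^\perp)$ is a well-defined positive integer and condition (2) fails precisely when $\C^\perp$ contains a codeword of sum-rank weight exactly $1$. Thus the whole argument reduces to detecting sum-rank weight-$1$ codewords of $\C^\perp$ and relating their existence to the column structure of a generator matrix of $\C$.

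First I would characterise the weight-$1$ codewords. A nonzero $y=(y_1\mid\ldots\mid y_t)\in\Fm^\bfn$ satisfies $\ww(y)=1$ if and only if there is a single index $i$ with $\rk_q(y_i)=1$ while $y_j=0$ for all $j\neq i$; and $\rk_q(y_i)=1$ means exactly that $y_i=\lambda v$ for some $\lambda\in\Fm^*$ and some nonzero $v\in\Fq^{n_i}$. Fixing a generator matrix $G=(G_1\mid\ldots\mid G_t)$, so that the codewords of $\C$ are the $uG$ with $u\in\Fm^k$, a word $y$ of this shape lies in $\C^\perp$ if and only if $x_i\cdot y_i=0$ for every $x=uG\in\C$, i.e.\ $\lambda\, u\,G_i v^\top=0$ for every $u\in\Fm^k$, i.e.\ $G_i v^\top=0$ (the scalar $\lambda\in\Fm^*$ plays no role). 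Hence $\C^\perp$ contains a weight-$1$ word supported on block $i$ if and only if there is a nonzero $v\in\Fq^{n_i}$ with $G_i v^\top=0$, which is precisely the assertion that the columns of $G_i$ are $\Fq$-linearly dependent.

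It remains to reconcile this with the definition of nondegeneracy, which quantifies over \emph{every} generator matrix. For this I would note that any two generator matrices of $\C$ differ by left multiplication by an invertible $M\in\GL(k,\Fm)$, under which the $i$-th block becomes $MG_i$; since $M$ is $\Fm$-linear and invertible, an $\Fq$-linear dependence $\sum_j c_j$(column$_j$)$=0$ with $c_j\in\Fq$ among the columns of $G_i$ holds if and only if the same dependence holds among the columns of $MG_i$. Therefore the property ``the columns of $G_i$ are $\Fq$-linearly independent'' is independent of the chosen generator matrix, so in the definition ``for any $G$'' may be read as ``for some $G$''. Combining the three steps gives $\dd(\C^\perp)=1 \iff$ some block $G_i$ has $\Fq$-dependent columns $\iff \C$ is not sum-rank nondegenerate, which is the required equivalence.

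I expect the only genuinely delicate point to be the generator-matrix independence established in the last paragraph, since without it the two sides of the claimed equivalence are not obviously comparable (one side is stated for all generator matrices, the other is intrinsic to $\C$); the remaining steps are a direct unwinding of the definitions of sum-rank weight and of the dual inner product. One should also take minor care that the convention for the inner product $\sum_i x_i\cdot y_i$ and for the side on which coordinate vectors act matches the paper, but this only affects the placement of transposes and not the substance of the argument.
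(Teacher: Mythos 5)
Your proof is correct and complete. The paper itself does not prove this proposition---it imports it by citation from \cite[Proposition 2.13]{neri2023geometry}---and your argument is essentially the standard one used there: weight-$1$ codewords of $\C^\perp$ are exactly the words $\lambda v$ supported on a single block with $v\in\Fq^{n_i}$ nonzero, their membership in $\C^\perp$ is equivalent to $G_i v^\top=0$, and the $\GL(k,\Fm)$-invariance of $\Fq$-column dependencies makes the condition independent of the chosen generator matrix, so the two negated conditions coincide. No gaps.
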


We denote the set of equivalence classes of nondegenerate $\Fmnkd$ codes by $\mathfrak{C}\Fmnkd$.

The interested reader is referred to \cite{Martinez2018skew,martinez2019theory,neri2022twisted,ott2021bounds} for a more detailed description of this setting.

\subsection{Geometry and Delsarte dual of a system}

Recently in \cite{neri2023geometry}, a geometric characterization of sum-rank metric codes  has been provided. 
This is very useful to the aim of studying metric properties of sum-rank metric codes. We will indeed use this geometric characterization later with the principal goal of constructing optimal codes in this metric.
Let start by introducing the systems, that are the geometric counterpart of sum-rank metric codes.

\begin{definition}
An \textbf{$\Fmnkd$ system} $\mathcal{U}$ is an ordered set $(\mathcal{U}_1,\ldots,\mathcal{U}_t)$, where $\mathcal{U}_i$ is an $\F_q$-subspace of $\F_{q^m}^k$ of dimension $n_i$, for any $i\in [t]$, such that
$ \langle \mathcal{U}_1, \ldots, \mathcal{U}_t \rangle_{\F_{q^m}}=\F_{q^m}^k$ and $d$ is defined as
$$ d\coloneqq N-\max\left\{\sum_{i=1}^t\dim_{\F_q}(\mathcal{U}_i\cap \mathcal H) \st \mathcal H \textnormal{ is an $\F_{q^m}$-hyperplane of }\F_{q^m}^k\right\}.$$
\end{definition}

We say that two $\Fmnkd$ systems $(\mathcal{U}_1,\ldots,\mathcal{U}_t)$ and $(\mathcal{V}_1,\ldots, \mathcal{V}_t)$ are \textbf{equivalent} if there exists a matrix $A\in\GL(k,\F_{q^m})$, an element $\mathbf{a}=(a_1,\ldots,a_t)\in (\Fm^*)^t$ and a permutation $\sigma\in\mathcal{S}_t$, such that for every $i\in[t]$
$$ \mathcal{U}_i \cdot A = a_i\mathcal{V}_{\sigma(i)}.$$
We denote the set of equivalence classes of $\Fmnkd$ systems by $\mathfrak{U}\Fmnkd$.

\medskip 
Moreover, we recall this fundamental result. 
\begin{theorem} [see \textnormal{\cite[Theorem 3.1]{neri2023geometry}}] \label{th:connection}
Let $\C$ be a non-degenerate $\Fmnkd$ code with generator matrix $G=(G_1|\ldots|G_t)$.
Let $\mathcal{U}_i$ be the $\F_q$-span of the columns of $G_i$, for $i\in [t]$. Then, for every $v\in \Fm^k$ and $i \in [t]$ we have
%\begin{equation} %\label{eq:weight_dimension}
\[
\ww(v G) = N - \sum_{i=1}^t \dim_{\fq}(\mathcal{U}_i \cap v^{\perp}).
\]
%\end{equation} 
In particular,
$$ d= N-\max\left\{\sum_{i=1}^t\dim_{\F_q}(\mathcal{U}_i\cap \mathcal H) \st \mathcal H \textnormal{ is an $\F_{q^m}$-hyperplane of }\F_{q^m}^k\right\}.$$
\end{theorem}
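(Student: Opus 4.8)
The plan is to connect the sum-rank weight of a codeword $vG$ with the intersection dimensions of the $\F_q$-subspaces $\mathcal{U}_i$ with the hyperplane $v^\perp = \{u \in \Fm^k : u \cdot v = 0\}$, by analysing each block separately. Write $G = (G_1 \mid \ldots \mid G_t)$ and recall that $vG = (vG_1 \mid \ldots \mid vG_t)$, so by definition $\ww(vG) = \sum_{i=1}^t \rk_q(vG_i)$. The strategy is to prove, for each single block, the identity
\[
\rk_q(vG_i) = n_i - \dim_{\F_q}(\mathcal{U}_i \cap v^\perp),
\]
and then sum over $i$; since $N = \sum_{i=1}^t n_i$, summing yields the stated formula for $\ww(vG)$ directly.

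First I would fix $i$ and let $c_1,\ldots,c_{n_i} \in \Fm^k$ be the columns of $G_i$, so that $\mathcal{U}_i = \langle c_1,\ldots,c_{n_i}\rangle_{\F_q}$, and these columns are $\F_q$-linearly independent because $\C$ is nondegenerate. The entries of the row vector $vG_i$ are precisely the scalars $v \cdot c_1, \ldots, v \cdot c_{n_i} \in \Fm$, so $\rk_q(vG_i) = \dim_{\F_q}\langle v\cdot c_1, \ldots, v\cdot c_{n_i}\rangle_{\F_q}$. The key observation is to consider the $\F_q$-linear map $\varphi_v : \mathcal{U}_i \to \Fm$ defined by $\varphi_v(u) = v \cdot u$ (the restriction to $\mathcal{U}_i$ of the pairing with $v$). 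Its image is exactly $\langle v \cdot c_1, \ldots, v\cdot c_{n_i}\rangle_{\F_q}$, whose $\F_q$-dimension is $\rk_q(vG_i)$, while its kernel is $\mathcal{U}_i \cap v^\perp$. The rank-nullity theorem for $\varphi_v$ over $\F_q$, using $\dim_{\F_q}\mathcal{U}_i = n_i$, then gives the per-block identity above.

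Summing the per-block identity over $i \in [t]$ and using $N = \sum_i n_i$ establishes the weight formula. For the final clause on $d$, I would recall from Theorem \ref{th:Singletonbound} and the definition of minimum distance that $d = \min\{\ww(vG) : v \in \Fm^k, v \neq 0\}$, since every nonzero codeword has the form $vG$ for a unique nonzero $v$ (as $G$ has full rank $k$). Substituting the weight formula converts the minimum of $\ww(vG)$ into $N$ minus the maximum of $\sum_i \dim_{\F_q}(\mathcal{U}_i \cap v^\perp)$ over nonzero $v$; since the hyperplanes of $\F_{q^m}^k$ are exactly the subspaces $v^\perp$ for nonzero $v$ (up to the $\Fm$-scalar, which does not change $v^\perp$), this maximum ranges over all $\Fm$-hyperplanes $\mathcal H$, yielding the claimed expression for $d$.

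The main subtlety I expect is bookkeeping the two field structures: $\varphi_v$ must be treated as an $\F_q$-linear map (not $\Fm$-linear) so that $\dim_{\F_q}$ of image and kernel are the relevant quantities, and one must be careful that $v^\perp$ is an $\Fm$-hyperplane while $\mathcal{U}_i$ is only an $\F_q$-subspace, so the intersection $\mathcal{U}_i \cap v^\perp$ is genuinely an $\F_q$-subspace of dimension at most $n_i$. The nondegeneracy hypothesis guarantees the columns of each $G_i$ are $\F_q$-independent, so that $\dim_{\F_q}\mathcal{U}_i = n_i$ exactly, which is what makes rank-nullity produce the clean formula; without it the dimension of $\mathcal{U}_i$ could be smaller and the identity would fail.
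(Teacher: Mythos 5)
Your proof is correct: the per-block rank--nullity argument for the $\F_q$-linear evaluation map $u \mapsto v\cdot u$ on $\mathcal{U}_i$ (with kernel $\mathcal{U}_i \cap v^{\perp}$ and nondegeneracy guaranteeing $\dim_{\F_q}\mathcal{U}_i = n_i$), followed by identifying hyperplanes with the spaces $v^{\perp}$ for $v \neq 0$, is exactly the standard argument. The paper itself does not reprove this statement but quotes it from \cite[Theorem 3.1]{neri2023geometry}, whose proof proceeds along the same lines as yours; the only cosmetic slip is your reference to Theorem \ref{th:Singletonbound} for the identity $d = \min\{\ww(vG) \st v \neq 0\}$, which is just the definition of minimum distance rather than the Singleton bound.
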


We are ready to describe the  1-to-1  correspondence  between  equivalence  classes  of  sum-rank nondegenerate $\Fmnkd$ code and equivalence classes of $\Fmnkd$ systems.

Let $[\C]\in\mathfrak{C}\Fmnkd$ where $\C$ is a nondegenerate $\Fmnk$ code. Let $G=(G_1 \,\mid\, \ldots \,\mid\, G_t)$ be a generator matrix for $\cC$ and consider for each $i\in[t]$, $\mathcal{U}_i$ to be the $\Fq$-span of the columns of $G_i$. Define $\Psi([\C])$ as the equivalence class of $\Fmnkd$ systems $[\mathcal{U}]$, where $\mathcal{U}=(\mathcal{U}_1,\ldots,\mathcal{U}_t)$. In this case $\mathcal{U}$ is also called a \textbf{system associated with} $\cC$. Viceversa, let $[(\mathcal{U}_1,\ldots,\mathcal{U}_t)]\in\mathfrak{U}\Fmnkd$. For every $i\in[t]$, fix an $\F_q$-basis $\{g_{i,1}, \ldots, g_{i,n_i}\}$ of $\mathcal{U}_i$. Define $G_i$ as the matrix whose columns are $\{g_{i,1}, \ldots, g_{i,n_i}\}$ and let $\Phi([(\mathcal{U}_1,\ldots,\mathcal{U}_t)])$ be the equivalence class of the sum-rank metric code generated by $G=(G_1 \,\mid\, \ldots \,\mid\, G_t)$. 

Taking into account the notion of equivalence for $\Fmnkd$ codes and $\Fmnkd$ system and Theorem \ref{th:connection}, the following maps
\begin{align*}
    \Psi :  \mathfrak{C}\Fmnkd &\to\mathfrak{U}\Fmnkd \\
    \Phi : \mathfrak{U}\Fmnkd &\to \mathfrak{C}\Fmnkd,
\end{align*}
are defined.

\begin{theorem} [see \textnormal{\cite[Theorem 3.7]{neri2023geometry}}] %\label{thm:1-1corr} 
$\Psi$ and $\Phi$ are well-defined and are the inverse of each other.
\end{theorem}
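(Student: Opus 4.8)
```latex
The plan is to show that $\Psi$ and $\Phi$ are well-defined maps on equivalence classes and that they are mutually inverse bijections. First I would verify that $\Psi$ is well-defined, i.e. independent of the choice of generator matrix and of the representative code within its equivalence class. If $G$ and $G'$ are two generator matrices for the same nondegenerate code $\C$, then $G' = MG$ for some $M \in \GL(k,\Fm)$; writing $G = (G_1 \mid \ldots \mid G_t)$ and $G' = (G_1' \mid \ldots \mid G_t')$, we get $G_i' = MG_i$ for each $i$, so the column spans satisfy $\mathcal{U}_i' = \mathcal{U}_i \cdot M^{\top}$ (up to transpose conventions), a simultaneous $\GL(k,\Fm)$-action, which is exactly one of the moves allowed in the system equivalence. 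Next, if $\C_2$ is equivalent to $\C_1$ via $(A_1,\ldots,A_t)$, $(a_1,\ldots,a_t)$ and $\pi \in \mathcal{S}_t$, one checks that a generator matrix for $\C_2$ is obtained from one for $\C_1$ by permuting the blocks, scaling the $i$-th block by $a_i$, and right-multiplying by the $A_i$. Since the $A_i \in \GL(n_i,\fq)$ only perform $\Fq$-basis changes within each block, the column span $\mathcal{U}_i$ is unchanged by $A_i$; the scalars $a_i$ and the permutation $\pi$ translate precisely into the $a_i$ and $\sigma$ appearing in the definition of system equivalence. Hence $\Psi([\C])$ is a well-defined equivalence class of systems.

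Second I would check that $\Phi$ is well-defined. Given a system $(\mathcal{U}_1,\ldots,\mathcal{U}_t)$, a different choice of $\Fq$-basis for each $\mathcal{U}_i$ changes $G_i$ to $G_i B_i$ with $B_i \in \GL(n_i,\fq)$, which by the code equivalence relation (taking $A_i = B_i$, $a_i = 1$, $\pi = \mathrm{id}$) yields an equivalent code. If two systems are equivalent via $A \in \GL(k,\Fm)$, $\mathbf{a}$ and $\sigma$, then choosing bases and forming generator matrices shows the resulting codes differ by the $\Fm$-linear map $A$ (which acts as a change of generator matrix and so does not change the code's equivalence class), together with the scalars $a_i$ and permutation $\sigma$, all of which are moves in the code equivalence relation. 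So $\Phi$ descends to equivalence classes.

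Third I would verify that $\Psi$ and $\Phi$ are mutually inverse. Starting from a system $\mathcal{U}$, applying $\Phi$ produces a code whose generator matrix has $G_i$ with columns an $\Fq$-basis of $\mathcal{U}_i$; applying $\Psi$ recovers the column span of each $G_i$, which is exactly $\mathcal{U}_i$, so $\Psi \circ \Phi = \mathrm{id}$. Conversely, starting from a code $\C$, applying $\Psi$ records the column spans $\mathcal{U}_i$, and applying $\Phi$ reconstructs a code generated by a matrix whose blocks span the same $\mathcal{U}_i$; such a matrix is a valid generator matrix for $\C$ (its columns in each block form an $\Fq$-basis of the same span), so $\Phi \circ \Psi = \mathrm{id}$. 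The only nontrivial point requiring care is that the number $d$ attached to a system coincides with the minimum distance of the associated code, but this is precisely the content of Theorem \ref{th:connection}, which guarantees that the parameter $d$ is preserved under both maps; thus the maps respect the full parameter set $[\bfn,k,d]_{q^m/q}$.

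The main obstacle I expect is bookkeeping rather than conceptual difficulty: one must carefully match the three types of moves in each equivalence relation (the $\GL$-action, the diagonal scalars $a_i$, and the permutation) across the two settings, and keep track of transpose/left-versus-right-action conventions so that the simultaneous $\GL(k,\Fm)$ change of generator matrix on the code side corresponds correctly to the single matrix $A$ acting on all $\mathcal{U}_i$ on the system side. The genuinely substantive input, namely that the geometric formula for $d$ agrees with the sum-rank minimum distance, is already supplied by Theorem \ref{th:connection}, so the remainder is a verification that the two equivalence relations are carried to one another.
```
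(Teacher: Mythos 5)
Your proposal is correct and follows essentially the same route as the proof this paper relies on (the statement is quoted from \cite{neri2023geometry} without a proof being reproduced here): well-definedness of $\Psi$ under change of generator matrix and under code equivalence, well-definedness of $\Phi$ under change of $\F_q$-bases and system equivalence, the mutual-inverse check, and Theorem \ref{th:connection} supplying the preservation of the parameter $d$. The only points to tighten are routine codomain checks that you leave implicit — nondegeneracy of $\C$ is precisely what guarantees $\dim_{\F_q}(\mathcal{U}_i)=n_i$ under $\Psi$, and $\Phi$ must be seen to output a nondegenerate code of dimension $k$ — and, in your $\Phi\circ\Psi$ step, the reconstructed matrix generates a code equivalent to $\C$ (differing by right multiplication by $\diag(A_1,\ldots,A_t)$ with $A_i\in\GL(n_i,\F_q)$) rather than being a generator matrix for $\C$ itself, which is all that is needed since the maps act on equivalence classes.
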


\medskip

\begin{remark} [see \textnormal{\cite[Remark 3.9]{neri2023geometry}}]
Note that when $t=1$, the notion of $\Fmnkd$ system coincides with the notion of $[n,k,d]_{q^m/q}$ system introduced in \cite{Randrianarisoa2020ageometric}; see also \cite{alfarano2022linear}. In the case where $n_1=\ldots=n_t=1$, this notion coincides with the notion of projective systems. Consequently, the above geometric characterization extend in the sum-rank metric the well-known one-to-one correspondence between equivalence classes of Hamming-metric nondegenerate codes and equivalence classes of projective systems; see \cite[Theorem 1.1.6]{vladut2007algebraic} and the one-to-one correspondence between classes of nondegenerate rank-metric codes and equivalence classes of $\Fmkd$-systems, estabilished in \cite{sheekeysurvey,Randrianarisoa2020ageometric}, see also \cite{alfarano2022linear}.
\end{remark}

Finally, we recall a duality introduced in \cite{santonastaso2022subspace} for $\Fmnkd$ systems by using their connection with sum-rank metric codes.

Let $\mathcal{U}=(\mathcal{U}_1,\ldots,\mathcal{U}_t)$ be an $\Fmnkd$ system and suppose that $d \geq 2$. Let $[\cC]=\Phi([\mathcal{U}])$. Since $d \geq 2$, by Proposition \ref{prop:charact_nondegenarate}, the dual codes of the codes in $[\cC]$ are nondegenerate and taking into account that two sum-rank metric codes $\C_1$ and $\C_2$ are equivalent if and only if their dual codes $\C_1^{\perp}$ and $\C_2^{\perp}$ are equivalent, we can consider the equivalence class $[\cC^{\perp}]$ of $[\bfn,N-k]_{q^m/q}$ codes that are the dual of the codes in $[\cC]$. We call $[\mathcal{U}^{\perp}]=\Psi([\cC^{\perp}])$ the \textbf{Delsarte class dual} of $[\mathcal{U}]$ and any $[\bfn,N-k]_{q^m/q}$ system in $[\mathcal{U}^{\perp}]$ a \textbf{Delsarte dual system} of $\mathcal{U}$.

\subsection{Matrix sum-rank metric codes and weight distributions of MSRD codes}
We can deal with sum-rank metric codes also in a matrix setting.
We consider the direct sum of $t$ matrix spaces
\[
\Mat(\bfn,m,\F_q):=\bigoplus_{i=1}^t\F_q^{n_i \times m}.
\]
We define the \textbf{sum-rank weight} of an element $X=(X_1|\ldots|X_t) \in \Mat(\bfn,m,\F_q)$ as 
$$w(X):=\sum_{i=1}^t \mathrm{rk}(X_i).$$
Consequently, the \textbf{sum-rank distance} is defined as %the function 
%\[
%d : \Mat(\bfn,m,\F_q) \times \Mat(\bfn,m,\F_q) \longrightarrow \mathbb{N}_0  
%\]
\[
d(X,Y) := w(X-Y)= \sum_{i=1}^t \mathrm{rk}(X_i - Y_i),
\]
for any $X=(X_1 \lvert \ldots \lvert X_t)$, $Y=(Y_1 \lvert \ldots \lvert Y_t) \in \Mat(\bfn,m,\F_q)$, with $X_i,Y_i \in \F_q^{n_i \times m}$. 

A \textbf{(matrix) sum-rank metric code} $\mathrm{C}$ is an $\F_q$-linear subspace of $\Mat(\bfn,m,\F_q)$ endowed with the sum-rank distance and its minimum sum-rank distance is defined as $$d(\mathrm{C})=\min\{w(X): X \in \mathrm{C}, X \neq 0\}.$$ The matrix and vector settings for the sum-rank metric described above are related as following.
Let $\Gamma_r=(\gamma_1^{(r)},\ldots,\gamma_m^{(r)})$ be an ordered $\fq$-basis of $\F_{q^m}$, for each $r \in \{1,\ldots,t\}$ and define $\Gamma=(\Gamma_1,\ldots,\Gamma_t)$. For an element $x=(x_1, \ldots ,x_t) \in \F_{q^m}^\bfn$, with $x_i \in \F_{q^m}^{n_i}$, define the element $\Gamma(x)=(\Gamma_1(x_1), \ldots, \Gamma_t(x_t)) \in \Mat(\bfn,m,\F_q)$, where
$$(x_{r})_i = \sum_{j=1}^m (\Gamma_r (x_r))_{ij}\gamma_j^{(r)}, \qquad \mbox{ for each } i \in \{1,\ldots,n_r\}.$$
Equivalently, the $r$-th block of $\Gamma(x)$ is the matrix expansion of the vector $x_r$ with respect to the $\fq$-basis $\Gamma_r$ of $\F_{q^m}$.
As already stated in \cite[Theorem 2.7]{neri2023geometry}, the map

    \begin{equation} \label{eq:isometrymatrixvector}
    \Gamma: \F_{q^m}^\bfn \longrightarrow \Mat(\bfn,m,\F_q)
    \end{equation}
    
is an $\fq$-linear isometry between the metric spaces $(\F_{q^m}^\bfn, \mathrm{d})$ and $(\Mat(\bfn,m,\F_q),d)$. Also in the matrix framework, we can deal with optimal codes. Indeed, for a sum-rank metric code in $\Mat(\bfn,m,\F_q)$ having dimension $k$ and with minimum distance $d$, a Singleton-like bound holds and if its parameters attains the bound in Theorem \ref{th:Singletonbound}, it is called \textbf{Maximum Sum-Rank Distance code} (or shortly \textbf{MSRD code}) (see \cite[Theorem III.2]{byrne2021fundamental}). Clearly, if $\C$ is an MSRD $\Fmnkd$ code then $\Gamma(\C)$ is an MSRD code in $\Mat(\bfn,m,\F_q)$, as well.
     \begin{remark}
Actually, we can consider a more general ambient space for the matrix setting, i.e. the direct sum of $t$ matrix spaces $\F_q^{n_i \times m_i}$ in which not all the $m_i$'s are necessarily equal. But for our aim, we will restrict the study to $(\Mat(\bfn,m,\F_q),d)$. 
  \end{remark}
For more details on the matrix setting and its connection with the vector setting, see \cite{byrne2021fundamental} and \cite{neri2022twisted}. \\

\subsection{Linearized and Twisted linearized Reed-solomon codes}
The first construction of MSRD codes was introduced in \cite{Martinez2018skew} and it is known as \emph{linearized Reed-Solomon codes}. These codes are the analogue in the sum-rank metric of Gabidulin codes in the rank metric and the Reed-Solomon codes in the Hamming metric. See also \cite{neri2022twisted}, for a description of these codes in a skew polynomial setting.
\\

In order to describe such codes, we recall the notion of skew polynomials, which were originally presented in the seminal paper by Ore \cite{ore1933theory}, in a
more general setting.

Let $\sigma$ be a generator of $\Gal(\F_{q^m}/\F_q)$.
The ring of skew polynomials $\F_{q^m}[x;\sigma]$ is a set of formal polynomials
\[
\left\{f=\sum_{i=0}^d f_i x^i \st f_i \in \F_{q^m}, d \in \Z_{\geq 0} \right\}
\]
equipped with the ordinary addition 
$
\sum_{i \geq 0}f_ix^i+\sum_{i \geq 0}g_ix^i=\sum_{i \geq 0}(f_i+g_i)x^i
$
and the multiplication rule 
\[
x \cdot a=\sigma(a) \cdot x, \mbox{ for any } a \in \F_{q^m},
\]
extended to polynomials by associativity and distributivity.

 We say that the degree of a nonzero skew polynomial $\sum_{i=0}^d f_i x^i$ is $d$ if $f_d \neq 0$ and we assume that the zero polynomials has degree $-\infty$. For an integer $v$, we denote by $\F_{q^m}[x; \sigma]_{<v}$ the set of skew polynomials of degree smaller than $v$.

%An element $\beta \in \F_{q^m}$ is a root of a skew polynomial $f=\sum_{i=0}^d f_i x^i$, if $f(\beta):=\sum_{i=0}^d f_i \sigma^i(\beta)=0$. The set of roots of $f$ over $\F_{q^m}$ is denoted by $\ker(f)$ and it turns out to be an $\fq$-subspace of $\F_{q^m}$ having dimension at most $d$, see \cite{lidl1997finite} and \cite[Lemma 3.2]{guralnick1994invertible}.We are interested in studying sum-rank metric codes in the vectorial setting (as in Subsection \ref{sub:vectorsetting}). 
We recall the \emph{generalized operator evaluation} introduced in \cite{leroy1995pseudolinear}.
Let $f=\sum_{i=0}^d f_i x^i \in \F_{q^m}[x; \sigma]$ and $\mu \in \F_{q^m}$. Define the map
\[
\begin{array}{rccl}
f(\cdot)_{\mu}: & \F_{q^m}  & \longrightarrow & \F_{q^m} \\
& \beta & \longmapsto & \sum\limits_{i =0}^d f_i \sigma^i(\beta)\N_i(\mu),
\end{array} 
\]
where $\N_i(\mu)\coloneqq\prod_{j=0}^{i-1}\sigma^j(\mu)$. 

For $\boldsymbol{\mu}=(\mu_1,\ldots,\mu_t) \in \F_{q^m}^t$ and $\boldsymbol{\beta}=(\beta_1,\ldots,\beta_n) \in \F_{q^m}^n$, we can define the multi-point evaluation map as 

\[
\begin{array}{rccl}
\mathrm{ev}_{\muu,\boldsymbol{\beta}}: & \F_{q^m}[x;\sigma] & \longrightarrow & \F_{q^m}^\bfn \\
&f  & \longmapsto &  (f(\beta_1)_{\mu_1},\ldots,f(\beta_n)_{\mu_1} \lvert   \cdots \lvert  f(\beta_1)_{\mu_t},\ldots, f(\beta_n)_{\mu_t}).
\end{array} 
\]

The properties of the map $\mathrm{ev}_{\muu,\boldsymbol{\beta}}$ have been studied in \cite{Martinez2018skew} and they lead to the construction of the the following class of MSRD codes.

\begin{definition}[see \textnormal{\cite[Definition 31]{Martinez2018skew}}]\label{def:linRScodes}
Let $n \leq m$, $t \leq q-1$ and $1 \leq k \leq tn$. Let $\boldsymbol{\mu}=(\mu_1,\ldots,\mu_t) \in \F_{q^m}^t$ and $\boldsymbol{\beta}=(\beta_1,\ldots,\beta_n) \in \F_{q^m}^n$, such that the elements of $\boldsymbol{\beta}$ are linearly independent over $\F_q$ and assume that $\N_{q^m/q}(\mu_i)\neq \N_{q^m/q}(\mu_j)$, if $i \neq j$. \\
The $[(n,\ldots,n),k]_{q^m/q}$ code
\[
\C_{k,\muu,\boldsymbol{\beta}}=\{ \mathrm{ev}_{\muu,\boldsymbol{\beta}}(f) \st f \in \F_{q^m}[x;\sigma]_{<k} \}, 
\]
is called \textbf{linearized Reed-Solomon code}.
\end{definition}

In \cite[Theorem 4]{Martinez2018skew}, it has been shown that the linearized Reed-Solomon codes $\mathcal{C}_{k,\muu,\boldsymbol{\beta}}$ are MSRD codes.

Recently, in \cite{neri2022twisted} a new family of MSRD codes has been introduced.

\begin{definition} %\label{def:twistlinRScodes}
Let $n \leq m$ and $1 \leq k \leq tn$. Let $\boldsymbol{\mu}=(\mu_1,\ldots,\mu_t) \in \F_{q^m}^t$ and $\boldsymbol{\beta}=(\beta_1,\ldots,\beta_n) \in \F_{q^m}^n$, such that the elements of $\boldsymbol{\beta}$ are linearly independent over $\F_q$ and let $\N_{q^m/q}(\mu_i) = \lambda_i$ and suppose $\lambda_i \neq \lambda_j$, if $i \neq j$. Let $\eta \in \F_{q^m}$ with the property that $\N_{q^m/q}(\eta)(-1)^{km} \notin G$, with $G$ the multiplicative subgroup of $\F_{q}^*$ generated by $\{\lambda_1,\ldots,\lambda_t\}$.
The $[(n,\ldots,n),k]_{q^m/q}$ code
\[
\C_{k,\muu,\boldsymbol{\beta}}(\eta)=\{ \mathrm{ev}_{\muu,\boldsymbol{\beta}}(f) \st f(x)=f_0+\ldots+f_{k-1}x^{k-1}+\eta f_0 X^k, f_i \in \F_{q^m} \}, 
\]
is called \textbf{twisted linearized Reed-Solomon code}.
\end{definition}

The twisted linearized Reed-Solomon code can be considered as an extension in the sum-rank metric of twisted Gabidulin code in the rank metric \cite{sheekey2016new, lunardon2018generalized} and the twisted Reed-Solomon code in the Hamming metric \cite{beelen2017twisted}. In \cite[Theorem 6.3]{neri2022twisted}, it has been shown that such codes are MSRD codes. Note that when $\eta=0$, these codes coincide with the linearized Reed-Solomon codes, i.e. $\C_{k,\muu,\boldsymbol{\beta}}(0)=\C_{k,\muu,\boldsymbol{\beta}}$.

\subsection{Weight distribution of MSRD codes} In \cite{byrne2021fundamental} the weight distribution of MSRD codes is also determined, as we will describe in the following. Assume that $n=n_1=\ldots=n_t$ and let $m'=\max\{m,n\}$ and $n'=\min\{m,n\}$. For $i=\{1,\ldots,t\}$, let $\mathcal{L}=\mathcal{L}'\times \ldots \times \mathcal{L}'$ be the product lattice, where $\mathcal{L}'$ is the lattice of $\F_q$-subspaces of $\F_{q}^{n'}$. This means that an element of $\mathcal{L}$ is of the form $\mathbf{W}=(W_1,\ldots,W_t)$, where $W_i$ is an $\F_q$-subspace of $\F_{q}^{n'}$, for each $i$ and define $\dim_{\F_q}(\mathbf{W})=(\dim_{\F_q}(W_1), \ldots, \dim_{\F_q}(W_t))$. The rank function on $\mathcal{L}$ is defined by $rk_{\mathcal{L}}(\mathbf{W})=\sum_{i=1}^t \dim_{\F_q}(W_i)$ , for all $\mathbf{W}=(W_1,\ldots,W_t) \in \mathcal{L}$. Moreover,  we can define the partial order $\leq$ on $\mathbb{N}_0^t$ by
\[
\mathbf{v} \leq \mathbf{u} \Longleftrightarrow v_i \leq u_i, \mbox{for each }i \in \{1,\ldots,t\}
\]
for any $\mathbf{u}=(u_1,\ldots,u_t),\mathbf{v}=(v_1,\ldots,v_t) \in \mathbb{N}_0^t$. Moreover, define
\[
f_{\ell}(\mathbf{u})=\sum\limits_{\substack{\mathbf{v} \leq \mathbf{u}\\ \lvert \mathbf{v} \rvert =\ell}} \prod_{i=1}^t(-1)^{u_i-v_i}q^{\binom{u_i-v_i}{2}} \qbin{u_i}{v_i}{q},
\]
where $\mathbf{u}=(u_1,\ldots,u_t)$, $\ell \in \mathbb{N}_0$ and $\lvert \mathbf{v} \rvert:=\sum_{i=1}^t v_i$ if  $\mathbf{v}=(v_1,\ldots,v_t)$.

For a sum-rank metric code $\mathrm{C}$ in $\Mat(\bfn,m,\F_q)$, let denote by $W_r(\mathrm{C})$ the number of codewords having sum-rank weight $r$.
\begin{theorem} [see \textnormal{\cite[Theorem VI.5, Remark VI.11.]{byrne2021fundamental}}] \label{th:weightdistrMSRD}
Using the above notation, if $\mathrm{C}$ is an MSRD code in $\Mat(\bfn,m,\F_q)$ having minimum distance $d$, we have
    \[
    W_r(\mathrm{C})= \sum_{\mathbf{W} \in \mathcal{L}, rk_{\mathcal{L}}(\mathbf{W})=r} \sum_{\ell=d}^{r}(q^{m'(\ell-d+1)}-1)f_{\ell}(\dim_{\F_q}(\mathbf{W})).
    \]
\end{theorem}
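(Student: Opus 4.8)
The plan is to reduce the computation to counting, for each ``column-space profile'', the number of codewords realising it, and then to invert a Möbius relation on a product of subspace lattices; the MSRD hypothesis will enter only through one exact dimension count for the subcodes supported on a fixed profile. After transposing each block if necessary (which preserves both the sum-rank weight and the MSRD property), I may assume $n \le m$, so that $n'=n$, $m'=m$, and to a block $X_i \in \Fq^{n\times m}$ I attach its column space $\colsp(X_i)\le \Fq^{n}$, whose dimension is $\rk(X_i)$. For $\mathbf{W}=(W_1,\dots,W_t)\in\mathcal{L}$ I set
\[
S_{\mathbf W}\coloneqq\{X=(X_1\mid\dots\mid X_t)\in\Mat(\bfn,m,\Fq)\st \colsp(X_i)\subseteq W_i\text{ for all }i\},
\]
and write $A(\mathbf W)\coloneqq|\mathrm{C}\cap S_{\mathbf W}|$ and $B(\mathbf W)\coloneqq|\{X\in\mathrm{C}\st \colsp(X_i)=W_i\text{ for all }i\}|$. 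Each codeword has a unique profile, with $rk_{\mathcal L}(\mathbf W)=w(X)$, so that $W_r(\mathrm{C})=\sum_{rk_{\mathcal L}(\mathbf W)=r}B(\mathbf W)$, while $A(\mathbf W)=\sum_{\mathbf V\le\mathbf W}B(\mathbf V)$. Since $\mathcal L$ is a product of subspace lattices, Möbius inversion gives $B(\mathbf W)=\sum_{\mathbf V\le\mathbf W}\prod_{i=1}^{t}(-1)^{u_i-v_i}q^{\binom{u_i-v_i}{2}}A(\mathbf V)$, where $\mathbf u=\dim_{\Fq}(\mathbf W)$ and $\mathbf v=\dim_{\Fq}(\mathbf V)$.

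The crucial step, and the main obstacle, is to show that $A(\mathbf V)$ depends only on $|\mathbf v|=rk_{\mathcal L}(\mathbf V)$, namely $A(\mathbf V)=q^{m\max\{0,\,|\mathbf v|-d+1\}}$. Since every nonzero $X\in S_{\mathbf V}$ satisfies $w(X)\le|\mathbf v|$, for $|\mathbf v|\le d-1$ the intersection $\mathrm{C}\cap S_{\mathbf V}$ is trivial and $A(\mathbf V)=1$. For $|\mathbf v|\ge d-1$ I would argue by duality. The space $S_{\mathbf V}$ has $\Fq$-dimension $m|\mathbf v|$, and under the trace form $\langle X,Y\rangle=\sum_i\Tr(X_iY_i^{\top})$ its orthogonal complement is $S_{\mathbf V^\perp}$, where $\mathbf V^\perp=(V_1^\perp,\dots,V_t^\perp)$ has total dimension $N-|\mathbf v|$. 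Hence
\[
\dim_{\Fq}(\mathrm{C}\cap S_{\mathbf V})=\dim_{\Fq}\mathrm{C}+m|\mathbf v|-\bigl(Nm-\dim_{\Fq}(\mathrm{C}^\perp\cap S_{\mathbf V^\perp})\bigr).
\]
By Theorem \ref{th:dualMSRD} the dual $\mathrm{C}^\perp$ is again MSRD with minimum distance $N-d+2$; as every nonzero element of $S_{\mathbf V^\perp}$ has weight at most $N-|\mathbf v|$, and $N-|\mathbf v|<N-d+2$ precisely when $|\mathbf v|\ge d-1$, the intersection $\mathrm{C}^\perp\cap S_{\mathbf V^\perp}$ is trivial. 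Using $\dim_{\Fq}\mathrm{C}=m(N-d+1)$ (equivalently $k=N-d+1$ from Theorem \ref{th:Singletonbound}), the identity collapses to $\dim_{\Fq}(\mathrm{C}\cap S_{\mathbf V})=m(|\mathbf v|-d+1)$, as claimed.

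Substituting this value and grouping the inner sum by $\ell=|\mathbf v|$, the bracketed factor becomes exactly $f_\ell(\mathbf u)$, so $B(\mathbf W)=\sum_{\ell}q^{m\max\{0,\ell-d+1\}}f_\ell(\mathbf u)$. To recover the stated form I would split off the terms with $\ell\le d-1$, where the exponent vanishes, and invoke the $q$-binomial identity
\[
\sum_{\ell}f_\ell(\mathbf u)=\prod_{i=1}^{t}\sum_{v_i=0}^{u_i}(-1)^{u_i-v_i}q^{\binom{u_i-v_i}{2}}\qbin{u_i}{v_i}{q}=0\qquad(\mathbf u\neq\mathbf 0),
\]
which lets me replace the constant $1$ on the low-$\ell$ terms by $-\sum_{\ell\ge d}f_\ell(\mathbf u)$. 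This yields $B(\mathbf W)=\sum_{\ell=d}^{r}(q^{m(\ell-d+1)}-1)f_\ell(\mathbf u)$ with $r=rk_{\mathcal L}(\mathbf W)$, and summing over all $\mathbf W$ of rank $r$ (and recalling $m=m'$ after the reduction) gives the theorem. The combinatorial manipulations are routine; the heart of the matter is the exact evaluation of $A(\mathbf V)$, where both the MSRD property of $\mathrm{C}$ and of its dual are used.
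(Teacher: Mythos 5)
The paper itself contains no proof of Theorem \ref{th:weightdistrMSRD}: it is recalled verbatim from \cite[Theorem VI.5, Remark VI.11]{byrne2021fundamental}, so there is no in-paper argument to compare against; your proposal must be judged against that reference. Your proof is correct, and it essentially reconstructs the argument used there: classify codewords by their column-space profile in the product lattice $\mathcal{L}$, evaluate exactly the size of the shortened subcode $A(\mathbf{V})=\lvert \mathrm{C}\cap S_{\mathbf{V}}\rvert = q^{m'\max\{0,\,\lvert\mathbf{v}\rvert-d+1\}}$ using Singleton equality for $\mathrm{C}$ and the MSRD property of the dual, and then recover $B(\mathbf{W})$ by M\"obius inversion together with the vanishing identity $\sum_{\ell}f_\ell(\mathbf{u})=0$ for $\mathbf{u}\neq\mathbf{0}$; all of these steps check out (including the identification $S_{\mathbf{V}}^{\perp}=S_{\mathbf{V}^{\perp}}$ under the trace form and the dimension count $\dim_{\Fq}S_{\mathbf{V}}=m'\lvert\mathbf{v}\rvert$). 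Two caveats, neither of which is a genuine gap. First, your appeal to Theorem \ref{th:dualMSRD} is formally a misquotation: that theorem concerns $\Fm$-linear vector codes and the $\Fm$-bilinear form, whereas your $\mathrm{C}$ is merely an $\Fq$-linear subspace of $\Mat(\bfn,m,\Fq)$ and its dual is taken with respect to the trace form $\langle X,Y\rangle=\sum_i\Tr(X_iY_i^{\top})$; the result you actually need is the matrix-setting duality theorem of \cite{byrne2021fundamental} (which is what the paper's citation for Theorem \ref{th:dualMSRD} points to), so this is a citation imprecision rather than a logical hole. Second, the opening claim that blockwise transposition preserves the MSRD property deserves a line of justification: transposition is an $\Fq$-linear isometry, and the Singleton bound of Theorem \ref{th:Singletonbound} is symmetric under exchanging the row and column dimensions (it depends only on $\min\{m,n\}$ and $\max\{m,n\}$ when $n_1=\ldots=n_t$), so cardinality, distance, and the bound itself are all unchanged. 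With these two remarks supplied, your argument is complete and matches the route of the original source.
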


 By the above theorem, we note that all MSRD codes with the same parameters have the same sum-rank distribution. For this reason, we will also denote $W_r(\mathrm{C}) $ by 
 \begin{equation} \label{eq:numberweightrMSRD}
     W_r(n',m',t):=\sum_{\mathbf{W} \in \mathcal{L}, rk_{\mathcal{L}}(\mathbf{W})=r} \sum_{\ell=d}^{r}(q^{m'(\ell-d+1)}-1)f_{\ell}(\dim_{\F_q}(\mathbf{W})),
\end{equation}
 if $\mathrm{C}$ is an MSRD code in $\Mat(\bfn,m,\F_q)$, with $n=n_1=\ldots=n_t$ and $m'=\max\{m,n\}$, $n'=\min\{m,n\}$.

We conclude this section with some results that will be useful in the next section. In particular we prove that every MSRD sum-rank metric code in $\Mat(\bfn,m,\F_q)$ with $n=n_1=\ldots=n_t$, having minimum distance $d$ and $t\leq q-1$, has at least one codeword of weight $j$, for any $d \leq j \leq N$. We start with an auxiliary lemma.
 
\begin{lemma} \label{lem:weightdistrreedsolomon}
    Let $\C_{k,\muu,\boldsymbol{\beta}}$ be an $\Fmnkd$ code as in Definition \ref{def:linRScodes}, with $d=tn-k+1$. Then for each $k \leq tn$ and any $r \in \{d,\ldots,tn\}$, there exists an element $x \in \C_{k,\muu,\boldsymbol{\beta}}$ such that
$ \ww(x)=r$.
\end{lemma}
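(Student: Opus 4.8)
The plan is to realise each weight $r\in\{d,\dots,tn\}$ as the weight of an explicit evaluation codeword, by reformulating the problem in terms of rank deficiencies. Write $B=\langle \beta_1,\dots,\beta_n\rangle_{\F_q}$, an $n$-dimensional $\F_q$-subspace of $\F_{q^m}$, and for $\mu\in\F_{q^m}$ put $\phi_\mu=\mu\sigma\in\End_{\F_q}(\F_{q^m})$. First I would record the operator reformulation of the generalised operator evaluation: since $\phi_\mu^{\,j}(\beta)=\N_j(\mu)\sigma^j(\beta)$, one has $f(\beta)_\mu=f(\phi_\mu)(\beta)$, and the assignment $f\mapsto f(\phi_\mu)$ is a ring homomorphism $\F_{q^m}[x;\sigma]\to\End_{\F_q}(\F_{q^m})$ (the relation $\phi_\mu\cdot a=\sigma(a)\cdot\phi_\mu$ matches $x\cdot a=\sigma(a)x$). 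Consequently the $i$-th block of $\mathrm{ev}_{\muu,\boldsymbol{\beta}}(f)$ is $(f(\phi_{\mu_i})(\beta_1),\dots,f(\phi_{\mu_i})(\beta_n))$, whose $\F_q$-rank equals $n-\dim_{\F_q}(\ker f(\phi_{\mu_i})\cap B)$, so that
\[
\ww(\mathrm{ev}_{\muu,\boldsymbol{\beta}}(f))=tn-\sum_{i=1}^t\dim_{\F_q}\big(\ker f(\phi_{\mu_i})\cap B\big).
\]
Thus, setting $s:=tn-r\in\{0,\dots,k-1\}$, it suffices to produce $f\in\F_{q^m}[x;\sigma]_{<k}$ with $\sum_{i=1}^t\dim_{\F_q}(\ker f(\phi_{\mu_i})\cap B)=s$.

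Next I would construct such an $f$. Since $s\le k-1\le tn$, I can write $s=s_1+\dots+s_t$ with $0\le s_i\le n$, and choose $\F_q$-subspaces $W_i\subseteq B$ with $\dim_{\F_q}W_i=s_i$. For each $i$ let $g_i\in\F_{q^m}[x;\sigma]$ be the minimal skew polynomial with $g_i(\phi_{\mu_i})|_{W_i}=0$; because every $\N_j(\mu_i)$ is nonzero, operator evaluation at $\phi_{\mu_i}$ realises an arbitrary $\sigma$-linearised map, so $g_i$ is exactly the annihilator polynomial of $W_i$ in this class, of $\sigma$-degree $s_i$ and with $\ker g_i(\phi_{\mu_i})=W_i$. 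I then set $f:=\mathrm{lclm}(g_1,\dots,g_t)$, the least common left multiple.

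Finally I would read off the weight. The roots of $g_i$ lie in the conjugacy class determined by $\lambda_i=\N_{q^m/q}(\mu_i)$, and these classes are pairwise distinct by hypothesis; hence the $g_i$ are pairwise right-coprime and $\deg f=\sum_i\deg g_i=s\le k-1$, so $f$ is a nonzero element of $\F_{q^m}[x;\sigma]_{<k}$. Each $g_i$ is a right divisor of $f$, so $\ker f(\phi_{\mu_i})\supseteq\ker g_i(\phi_{\mu_i})=W_i\subseteq B$, giving $\dim_{\F_q}(\ker f(\phi_{\mu_i})\cap B)\ge s_i$ and hence total deficiency $\ge s$. For the reverse inequality I would invoke the root-counting estimate for evaluation at operators with distinct norms, namely $\sum_{i=1}^t\dim_{\F_q}\ker f(\phi_{\mu_i})\le\deg f$ (the independence of roots across distinct conjugacy classes established in \cite{Martinez2018skew}, which is precisely what makes $\C_{k,\muu,\boldsymbol{\beta}}$ MSRD). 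Combining the two bounds forces equality, so $\ww(\mathrm{ev}_{\muu,\boldsymbol{\beta}}(f))=tn-s=r$, as required; the degenerate case $s=0$ is covered by $f=1$, whose evaluation has every block of full rank.

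The main obstacle is this last upper bound: a priori the kernels $\ker f(\phi_{\mu_i})$ could reinforce one another, and controlling their total dimension by $\deg f$ is exactly the conjugacy-class independence phenomenon (Lam--Leroy / Mart\'inez-Pe\~nas) that distinguishes the sum-rank setting from a naive product of $t$ rank-metric codes. Its companion, the additivity $\deg\mathrm{lclm}(g_1,\dots,g_t)=\sum_i\deg g_i$ under right-coprimality, is what keeps $\deg f$ as small as $s$ so that $f$ stays within $\F_{q^m}[x;\sigma]_{<k}$; everything else is routine annihilator-polynomial bookkeeping together with the homomorphism property of operator evaluation.
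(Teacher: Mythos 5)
Your proof is correct, but it takes a genuinely different route from the paper's. The paper argues by induction on $k$, using only two black-box facts: the nesting $\C_{k,\muu,\boldsymbol{\beta}}\subseteq\C_{k+1,\muu,\boldsymbol{\beta}}$ (a polynomial of degree $<k$ also has degree $<k+1$) shows that every weight realised in $\C_{k,\muu,\boldsymbol{\beta}}$ is realised in $\C_{k+1,\muu,\boldsymbol{\beta}}$, and the MSRD property of $\C_{k+1,\muu,\boldsymbol{\beta}}$ supplies the single new weight $tn-k$, since the minimum distance of a code is always attained by some codeword. You instead construct, for each target weight $r$, an explicit polynomial $f=\mathrm{lclm}(g_1,\dots,g_t)$ whose evaluation has weight exactly $r$, using annihilator polynomials of prescribed subspaces $W_i\subseteq B$ together with the Lam--Leroy/Mart\'inez-Pe\~nas bound $\sum_{i}\dim_{\F_q}\ker f(\phi_{\mu_i})\le\deg f$ across distinct conjugacy classes. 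What the paper's route buys is brevity: the deep skew-polynomial machinery enters only through the already-cited theorem that linearized Reed--Solomon codes are MSRD. What your route buys is strictly more information: you show that every blockwise deficiency pattern $(s_1,\dots,s_t)$ with $0\le s_i\le n$ and $\sum_i s_i\le k-1$, and indeed every choice of kernels $W_i\subseteq B$ of those dimensions, is realised by a codeword, which is a refinement of the lemma that could be useful elsewhere (e.g.\ for finer weight-distribution statements). Two points worth tightening: you should say explicitly that each $\mu_i$ is nonzero (so that $\phi_{\mu_i}$ is invertible and the minimal annihilator of an $s_i$-dimensional space has degree exactly $s_i$ with kernel exactly $W_i$); and the right-coprimality argument giving $\deg f=s$ is not actually needed, since $\deg\mathrm{lclm}(g_1,\dots,g_t)\le\sum_i\deg g_i$ always holds, and the sandwich
\[
s\;\le\;\sum_{i=1}^t\dim_{\F_q}\bigl(\ker f(\phi_{\mu_i})\cap B\bigr)\;\le\;\deg f\;\le\;\sum_{i=1}^t\deg g_i\;=\;s
\]
already forces all the equalities you need.
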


\begin{proof}
     We proceed by induction on $k$. If $k=1$, all the elements in $\C_{k,\muu,\boldsymbol{\beta}}$ have sum-rank weight $tn$. So assume that the statement holds for $\C_{k,\muu,\boldsymbol{\beta}}$. Since $\C_{k,\muu,\boldsymbol{\beta}} \subseteq \C_{k+1,\muu,\boldsymbol{\beta}}$, there exists at least one element of sum-rank weight $r$ in $\C_{k+1,\muu,\boldsymbol{\beta}}$, for each $r \in \{tn-k+1,\ldots,tn-1,tn\}$. On the other side, $\C_{k+1,\muu,\boldsymbol{\beta}}$ is an MSRD code that implies that there exist at least one of its element having sum-rank weight $tn-k$. Hence, the statement holds also for $\C_{k+1,\muu,\boldsymbol{\beta}}$ that concludes our proof.
\end{proof}

\begin{proposition} \label{prop:numberweightMSRD}
    Assume that $n:=n_1=\ldots=n_t$, $t\leq q-1$ and $m \geq n$. Let $\mathrm{C}$ be an MSRD code in $\Mat(\bfn,m,\F_q)$ having minimum distance $d$, with $n=n_1=\ldots=n_t$.  For any $0 \leq i \leq tn-d$, there always exists at least one codeword $X \in \mathrm{C}$ having sum-rank weight $d+i$, i.e.
    \[
    W_r(n,m,t)>0,
    \]
    for each $d \leq r \leq tn$.  
\end{proposition}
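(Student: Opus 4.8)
The plan is to reduce the statement to the already-established weight enumeration of the linearized Reed-Solomon codes, exploiting the fact (Theorem \ref{th:weightdistrMSRD}) that all MSRD codes with the same parameters share the same weight distribution. First I would record the parameter normalisation: since $m \geq n$ we have $m'=\max\{m,n\}=m$ and $n'=\min\{m,n\}=n$, so the quantity $W_r(n,m,t)$ of \eqref{eq:numberweightrMSRD} is precisely the common value of $W_r(\mathrm{C})$ taken over all MSRD codes $\mathrm{C}$ in $\Mat(\bfn,m,\F_q)$ with $\bfn=(n,\ldots,n)$ and minimum distance $d$. Because $\mathrm{C}$ is MSRD with $m \geq n_1$, the Singleton-like bound of Theorem \ref{th:Singletonbound} forces $k=tn-d+1$, equivalently $d=tn-k+1$.

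Next I would exhibit one concrete MSRD code realising these parameters in which every admissible weight actually occurs. Since $m \geq n$ and $t \leq q-1$, the hypotheses of Definition \ref{def:linRScodes} are met, so I may fix a linearized Reed-Solomon code $\C_{k,\muu,\boldsymbol{\beta}}$ of dimension $k=tn-d+1$; this is an MSRD $[(n,\ldots,n),k,d]_{q^m/q}$ code with $d=tn-k+1$. By Lemma \ref{lem:weightdistrreedsolomon}, for every $r \in \{d,\ldots,tn\}$ there is a vector $x \in \C_{k,\muu,\boldsymbol{\beta}}$ with $\ww(x)=r$. Applying the $\fq$-linear sum-rank isometry $\Gamma$ of \eqref{eq:isometrymatrixvector}, the image $\Gamma(\C_{k,\muu,\boldsymbol{\beta}})$ is an MSRD code in $\Mat(\bfn,m,\F_q)$ with the same parameters, and as $\Gamma$ preserves sum-rank weight it contains a codeword of weight $r$ for each $r \in \{d,\ldots,tn\}$. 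Hence $W_r(n,m,t)>0$ for every such $r$.

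Finally I would invoke Theorem \ref{th:weightdistrMSRD}: the weight distribution of an MSRD code in $\Mat(\bfn,m,\F_q)$ depends only on $(n,m,t)$ and $d$, so the arbitrary code $\mathrm{C}$ in the statement has exactly the nonzero values $W_r(n,m,t)$ just computed for the linearized Reed-Solomon representative. This yields a codeword of each weight $d+i$ for $0 \leq i \leq tn-d$, which is the claim. I expect no genuine obstacle here: the only point demanding care is the exact matching of parameters (that $\C_{k,\muu,\boldsymbol{\beta}}$ has the same $n$, $m$, $t$, and $d$ as $\mathrm{C}$), while the real combinatorial content — that no weight in the range $\{d,\ldots,tn\}$ is skipped — is already carried by the inductive Lemma \ref{lem:weightdistrreedsolomon}. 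The hypothesis $t \leq q-1$ enters precisely to guarantee the existence of a linearized Reed-Solomon code with the required parameters.
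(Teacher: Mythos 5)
Your proposal is correct and follows essentially the same route as the paper: take a linearized Reed--Solomon code with matching parameters, transfer it to the matrix setting via the isometry $\Gamma$, invoke Theorem \ref{th:weightdistrMSRD} to conclude all MSRD codes with these parameters share the same weight distribution, and then apply Lemma \ref{lem:weightdistrreedsolomon} to see that every weight in $\{d,\ldots,tn\}$ occurs. Your write-up is in fact slightly more careful than the paper's, since you make explicit the parameter matching $k=tn-d+1$ forced by the Singleton-like bound, which the paper leaves implicit.
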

\begin{proof}
Let consider the $\Fmnk$ code $\C_{k,\muu,\boldsymbol{\beta}}$ as in Definition \ref{def:linRScodes}. Let $\mathrm{C}_{k,\muu,\boldsymbol{\beta}}= \Gamma(\C_{k,\muu,\boldsymbol{\beta}})$, where $\Gamma$ is as in \eqref{eq:isometrymatrixvector}. Since $\C_{k,\muu,\boldsymbol{\beta}}$ is an MSRD code then $\mathrm{C}_{k,\muu,\boldsymbol{\beta}}$ is an MSRD code in $\Mat(\bfn,m,\F_q)$ as well. By Theorem \ref{th:weightdistrMSRD}, we know that all MSRD codes with the same parameters have the same sum-rank distribution. This allow us to look only to $\mathrm{C}_{k,\muu,\boldsymbol{\beta}}$. Then by Lemma \ref{lem:weightdistrreedsolomon}, we get the assertion.
\end{proof}

For the vectorial setting, we have the following.

\begin{corollary} \label{cor:numberweightMSRDvector}
    Assume that $n=n_1=\ldots=n_t$ and $t\leq q-1$. Let $\C$ be an MSRD $\Fmnkd$ code.  For any $0 \leq i \leq t\min\{m,n\}-d$, we have $W_{d+i}(\C)>0$.
\end{corollary}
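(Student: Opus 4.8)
The plan is to reduce the vectorial statement of Corollary \ref{cor:numberweightMSRDvector} to the matrix statement of Proposition \ref{prop:numberweightMSRD} via the isometry $\Gamma$, handling the two regimes $m \geq n$ and $m < n$ separately, since the roles of $m$ and $n$ are symmetric in the matrix setting but the quantity $\min\{m,n\}$ appearing in the bound $t\min\{m,n\}-d$ treats them asymmetrically.

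First I would observe that by the isometry $\Gamma$ of \eqref{eq:isometrymatrixvector}, a codeword of $\C$ of sum-rank weight $r$ corresponds bijectively to a codeword of the matrix code $\Gamma(\C) \subseteq \Mat(\bfn,m,\F_q)$ of the same sum-rank weight $r$, and $\Gamma(\C)$ is again MSRD. Hence $W_r(\C) = W_r(\Gamma(\C))$ for every $r$, and it suffices to prove $W_r(\Gamma(\C)) > 0$ for each $d \leq r \leq t\min\{m,n\}$. In the case $m \geq n$, we have $\min\{m,n\} = n$, so the desired range is $d \leq r \leq tn$, which is exactly the conclusion of Proposition \ref{prop:numberweightMSRD} applied to the MSRD matrix code $\Gamma(\C)$; this case is therefore immediate.

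The remaining case is $m < n$, where $\min\{m,n\} = m$ and the target range is $d \leq r \leq tm$. Here I would exploit the fact that the matrix sum-rank metric is manifestly symmetric under transposition of each block: the map sending $(X_1\mid\ldots\mid X_t)$ to $(X_1^{\top}\mid\ldots\mid X_t^{\top})$ is a weight-preserving bijection carrying $\Mat(\bfn,m,\F_q)$ to $\Mat((m,\ldots,m),n,\F_q)$ (with $t$ blocks of size $m \times n$). Since rank is transposition-invariant, this identification shows that the weight distribution of an MSRD code in $\Mat((n,\ldots,n),m,\F_q)$ with $m < n$ equals that of an MSRD code in $\Mat((m,\ldots,m),n,\F_q)$ with $n > m$, which now falls into the regime covered by Proposition \ref{prop:numberweightMSRD} (the block size $m$ plays the role of the former $n$, and the extension degree $n$ satisfies $n \geq m$). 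Concretely, $W_r(n',m',t)$ depends only on $n' = \min\{m,n\}$ and $m' = \max\{m,n\}$ as recorded in \eqref{eq:numberweightrMSRD}, so in either regime the relevant count is $W_r(\min\{m,n\},\max\{m,n\},t)$, and Proposition \ref{prop:numberweightMSRD} guarantees its positivity for $d \leq r \leq t\min\{m,n\}$.

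The main obstacle, such as it is, lies in making the $m < n$ reduction rigorous: one must verify that transposing each block genuinely yields an MSRD code with $t \leq q-1$ blocks in the transposed ambient space so that Proposition \ref{prop:numberweightMSRD} applies, and that the hypothesis $t \leq q-1$ is preserved (it is, being unaffected by transposition). I would phrase this cleanly by invoking \eqref{eq:numberweightrMSRD}, noting that the formula for $W_r(n',m',t)$ is symmetric in the sense that it is written purely in terms of $n' = \min\{m,n\}$ and $m' = \max\{m,n\}$; thus the positivity established in Proposition \ref{prop:numberweightMSRD} for the range $d \leq r \leq tn'$ transfers verbatim, giving $W_{d+i}(\C) > 0$ for all $0 \leq i \leq t\min\{m,n\}-d$ as claimed.
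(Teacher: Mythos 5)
Your proof is correct and follows essentially the same route as the paper: transfer to the matrix setting via $\Gamma$, observe that the weight distribution of an MSRD code depends only on the parameters $n'=\min\{m,n\}$, $m'=\max\{m,n\}$, $t$ and $d$, and invoke Proposition \ref{prop:numberweightMSRD}. The only difference is that you spell out the case $m<n$ explicitly via block transposition, whereas the paper handles it implicitly by writing $W_r(n',m',t)$ and applying the proposition with the roles of block size and extension degree already swapped.
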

\begin{proof}
 Let $\mathrm{C}= \Gamma(\C)$, where $\Gamma$ is as in \eqref{eq:isometrymatrixvector}. Since $\C$ is an MSRD code then also $\mathrm{C}$ is an MSRD code as well. Moreover,
 \[
 W_r(\C)=W_r(\mathrm{C})=W_r(n',m',t),
 \]
 for each $r$, where $m'=\max\{m,n\}$, $n'=\min\{m,n\}$. By Proposition \ref{prop:numberweightMSRD}, we have that \[W_r(n',m',t)>0,\] from which it follows the assertion. 
\end{proof}

\section{MSRD codes and subspace designs}\label{sec:designs}

%\textcolor{red}{I changed the title of this section}

In this section, we will use the geometric point of view described in the above section to give a characterization of the generalized weights of a sum-rank metric code via the associated system. Consequently, we will use relations on generalized sum-rank weights to estabilish 
a correspondence between MSRD codes and systems having special properties of intersection with respect to subspaces of fixed dimension, i.e. $s$-designs.

\subsection{Subspace designs}

The notion of subspace design was first given by Guruswami and Xing in \cite{guruswami2016explicit}. Subspace designs with good parameters were very useful in order to construct the first explicit families of rank metric codes and subspace codes which are efficiently list-decodable, see also \cite{guruswami2011linear}. Moreover, these objects were later used also to obtain
algebraic-geometric codes efficiently list-decodable \cite{guruswami2017optimal} and to give explicit constant degree dimension expanders over large fields \cite{guruswami2021lossless}.

\begin{definition}
An ordered set $\mathcal{U}=(\U_1,\U_2,\ldots,\U_t)$, where $\U_i$ is an $\F_q$-subspace of $V=V(k,q^m)$, for any $i \in \{1,\ldots,t\}$, is called a \textbf{(non-degenerate) $(h,r)$-subspace design} in $V$  if \newline $\dim_{\F_{q^m}} \langle \U_1,\ldots,\U_t \rangle_{\F_{q^m}} =V$ and  
$$\sum_{i=1}^t \dim_{\fq}(\U_i \cap W) \leq r,$$
for every $\F_{q^m}$-subspace $W \subseteq V$ of dimension $h$.
\end{definition}

Note that, when $t = 1$, the notion of $(s, A)$-subspace designs coincide with that of $(s, A)$-evasive subspaces. Evasive subspaces can be considered as the $q$-analogues of evasive sets originally introduced in \cite{pudlak2004pseudorandom} by Pudl\'ak and R\"{o}dl in order to
construct explicit Ramsey graphs. Such subpaces with good parameters were also later constructed and studied in \cite{dvir2012subspace} and \cite{bartoli2021evasive} and applied in several contexts, see e.g. \cite{guruswami2021lossless,guruswami2016explicit}.

In the next, we recall some main properties of a subspace design.

\begin{proposition} [see \textnormal{\cite[Proposition 1.3, Proposition 4.1]{santonastaso2022subspace}}]
\label{prop:diminuzione} 
If $(\U_1,\ldots,\U_t)$ is an $(h,r)$-subspace design in $V=V(k,q^m)$, then
\begin{enumerate}
    \item $r \geq h$;
    \item $(\U_1,\ldots,\U_t)$ is also an $(h-h',r-h')$-subspace design for any $h'\in \{0,\ldots,h-1\}$.
\end{enumerate}
\end{proposition}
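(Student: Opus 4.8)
The plan is to prove both statements by elementary linear algebra over the tower $\fq \subseteq \F_{q^m}$, using only the non-degeneracy condition $\langle \U_1,\ldots,\U_t\rangle_{\F_{q^m}}=V$ together with the fact that an $\F_{q^m}$-linearly independent family of vectors is automatically $\fq$-linearly independent. Throughout I tacitly assume $h \leq k$, which is needed for $h$-dimensional subspaces of $V=V(k,q^m)$ to exist and is implicit in speaking of an $(h,r)$-subspace design.

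For part (1) I would exhibit a single $h$-dimensional $\F_{q^m}$-subspace $W$ for which the intersection sum already reaches $h$; since the design inequality bounds that sum by $r$, this forces $r\geq h$. Concretely, because the $\U_i$ span $V$ over $\F_{q^m}$ and $h\leq k=\dim_{\F_{q^m}}V$, the union $\bigcup_i \U_i$ spans $V$ and hence contains an $\F_{q^m}$-basis of $V$. I would pick $h$ of these basis vectors $w_1,\ldots,w_h$, each lying in some $\U_{i_j}$, and set $W=\langle w_1,\ldots,w_h\rangle_{\F_{q^m}}$, so that $\dim_{\F_{q^m}}W=h$. For each fixed index $i$, the selected vectors belonging to $\U_i$ lie in $\U_i\cap W$ and, being a subset of an $\F_{q^m}$-linearly independent set, are $\fq$-linearly independent; this gives $\dim_{\fq}(\U_i\cap W)\geq \#\{j : i_j=i\}$, and summing over $i$ yields $\sum_{i=1}^t\dim_{\fq}(\U_i\cap W)\geq h$, whence $r\geq h$.

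For part (2) I would reduce to the single step $h'=1$ and iterate. The key claim is that an $(h,r)$-subspace design with $h\geq 1$ is also an $(h-1,r-1)$-subspace design. Given an arbitrary $\F_{q^m}$-subspace $W'$ of dimension $h-1$, I would enlarge it to an $h$-dimensional $W$ so that the intersection sum strictly increases. Since $\dim_{\F_{q^m}}W'=h-1<k$, we have $W'\neq V$, so by non-degeneracy at least one $\U_i$ is not contained in $W'$; choosing $v\in \U_i\setminus W'$ and setting $W=W'+\langle v\rangle_{\F_{q^m}}$ gives $\dim_{\F_{q^m}}W=h$. Then $\U_j\cap W\supseteq \U_j\cap W'$ for every $j$, while for the distinguished index $i$ the containment $\U_i\cap W'\subsetneq \U_i\cap W$ is proper because $v\in(\U_i\cap W)\setminus W'$, so $\dim_{\fq}(\U_i\cap W)\geq \dim_{\fq}(\U_i\cap W')+1$. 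Summing gives $\sum_{j}\dim_{\fq}(\U_j\cap W')+1\leq \sum_{j}\dim_{\fq}(\U_j\cap W)\leq r$, hence $\sum_{j}\dim_{\fq}(\U_j\cap W')\leq r-1$, proving the claim. Applying it successively to the $(h,r)$-, $(h-1,r-1)$-, $\ldots$ designs (the spanning condition is unchanged, so non-degeneracy is preserved at each step) then yields the $(h-h',r-h')$-subspace design property for all $h'\in\{0,\ldots,h-1\}$.

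I do not expect a serious obstacle, since both arguments are short; the one point to get right is that the extending vector $v$ in part (2) must be taken inside one of the subspaces $\U_i$ rather than being an arbitrary vector of $V\setminus W'$, so that raising $\dim_{\F_{q^m}}W$ by one simultaneously contributes a genuinely new $\fq$-dimension to $\U_i\cap W$. This is precisely where the non-degeneracy hypothesis enters, and it is the same mechanism that underlies the witness construction in part (1).
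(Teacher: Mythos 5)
Your proof is correct and complete, and since the paper does not prove this proposition at all—it is recalled from the cited reference \cite{santonastaso2022subspace}—there is no in-paper argument to compare against; your write-up supplies the proof the paper omits. Both steps are the natural ones and use the non-degeneracy hypothesis exactly where it is needed: in part (1) to extract an $\F_{q^m}$-basis of $V$ from $\bigcup_{i}\U_i$ (whose members, being $\F_{q^m}$-independent, are a fortiori $\fq$-independent inside the $\U_i\cap W$), and in part (2) to find some $\U_i\not\subseteq W'$ so that the one-dimensional extension $W=W'+\langle v\rangle_{\F_{q^m}}$ with $v\in\U_i\setminus W'$ strictly increases the intersection sum; the iteration is valid since each intermediate design still has positive first parameter and the spanning condition never changes.
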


When $r$ is the smallest possible, i.e. $r=h$, an $(h,h)$-subspace design in $V$ is also called to be an $h$-\textbf{design} in $V$. These objects were introduced and studied in \cite{santonastaso2022subspace}.

 When $V=V(k,q^m)=\F_{q^m}^k$ and $\U=(\U_1,\ldots,\U_t)$ is $(h,r)$-subspace design (or $h$-design) in $V$, such that $\dim_{\fq}(\U_i)=n_i$, we also write that $\U$ is an $(h,r)$-subspace design (or $h$-design) $[\bfn,k]_{q^m/q}$ system.

In the case that $t = 1$, the notion of $h$-design coincide with the notion of \textbf{$h$-scattered subspace}.
Scattered subspaces were first introduced by Blokhuis and Lavrauw in \cite{blokhuis2000scattered} for
$h =1$, then generalized independently by Lunardon in \cite{lunardon2017mrd} and by Sheekey and Van de Voorde in \cite{sheekeyVdV} for $h = k - 1$, and then by Csajb{\'o}k, Marino, Polverino and Zullo in \cite{csajbok2021generalising} for general $h$. These subspaces have attracted a lot of attention especially because their connection with maximum
rank distance codes, see \cite{polverino2020connections,marino2022evasive,sheekey2016new,zini2021scattered}.

\begin{remark} \label{rk:hdesimpliesh'}
By Proposition \ref{prop:diminuzione}, it follows immediately that if $(\U_1,\ldots,\U_t)$ is an $h$-design in $V(k,q^m)$ it is also an $i$-design in $V(k,q^m)$ for any $i \leq h$.
\end{remark}

Moreover, gluing together subspace designs we are able to get subspace designs in a bigger space as following.

\begin{proposition} [see \textnormal{\cite[Corollary 4.6]{santonastaso2022subspace}}]
\label{cor:directsum}
Let $V=V_1 \oplus \cdots \oplus V_\ell$ where $V_i=V(k_i,q^m)$ and $V=V(k,q^m)$. If $(\U_{i,1},\ldots,\U_{i,t})$ is an $(h,r_i)$-subspace design in $V_i$, for every $i\in \{1,\ldots,\ell\}$, then $(\U_1,\ldots,\U_t)$, where $\U_i=\U_{1,i} \oplus \cdots \oplus \U_{t,i}$, is an $(h,\sum_{i=1}^{\ell} r_i-(\ell-1)h)_q$-subspace design in $V$. 
\end{proposition}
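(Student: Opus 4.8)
The plan is to reduce the general statement to the case $\ell = 2$ and then iterate. Writing $V = (V_1 \oplus \cdots \oplus V_{\ell-1}) \oplus V_\ell$ and combining the $\ell=2$ case with an induction on $\ell$, the parameter accumulates as $\big(\sum_{i=1}^{\ell-1} r_i - (\ell-2)h\big) + r_\ell - h = \sum_{i=1}^{\ell} r_i - (\ell - 1)h$, which is exactly the claimed bound; the base case $\ell = 1$ is the hypothesis itself. Non-degeneracy is immediate at every step: since $\U_j = \U_{1,j}\oplus\cdots\oplus\U_{\ell,j} \supseteq \U_{i,j}$ for each $i$, we get $\langle \U_1,\ldots,\U_t\rangle_{\F_{q^m}} \supseteq \langle \U_{i,1},\ldots,\U_{i,t}\rangle_{\F_{q^m}} = V_i$ for all $i$, hence it equals $V$. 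So the whole content lies in the intersection bound for $\ell = 2$.

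For $\ell = 2$, write $V = V_1 \oplus V_2$, let $\pi_1 \colon V \to V_1$ be the projection along $V_2$, and fix an arbitrary $\F_{q^m}$-subspace $W \subseteq V$ with $\dim_{\F_{q^m}} W = h$. Set $W_1 := \pi_1(W)$ and $h_1 := \dim_{\F_{q^m}} W_1$. Since $\ker(\pi_1|_W) = W \cap V_2$, rank--nullity over $\F_{q^m}$ gives $\dim_{\F_{q^m}}(W \cap V_2) = h - h_1$. This splitting of the budget $h = h_1 + (h - h_1)$ is the key to the cancellation at the end.

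Next I would estimate $\dim_{\F_q}(\U_j \cap W)$ for each fixed $j$ by applying rank--nullity over $\F_q$ to the restriction of $\pi_1$ to the $\F_q$-space $\U_j \cap W$. Its kernel is $\U_j \cap W \cap V_2$; since $\U_j \cap V_2 = \U_{2,j}$ (because $\U_j = \U_{1,j}\oplus\U_{2,j}$ and $V_1 \cap V_2 = 0$), this kernel equals $\U_{2,j} \cap (W\cap V_2)$. Its image lies in $\U_{1,j} \cap W_1$, since $\pi_1$ maps $\U_j$ into $\U_{1,j}$ and $W$ into $W_1$. Hence
\[
\dim_{\F_q}(\U_j \cap W) \leq \dim_{\F_q}\big(\U_{2,j}\cap(W\cap V_2)\big) + \dim_{\F_q}(\U_{1,j}\cap W_1).
\]
Summing over $j$ and invoking Proposition \ref{prop:diminuzione}(2): the design $(\U_{2,1},\ldots,\U_{2,t})$ in $V_2$ is also an $(h-h_1,\,r_2-h_1)$-subspace design, applied to the $(h-h_1)$-dimensional space $W \cap V_2$, giving $\sum_j \dim_{\F_q}(\U_{2,j}\cap(W\cap V_2)) \leq r_2 - h_1$; likewise $(\U_{1,1},\ldots,\U_{1,t})$ is an $(h_1,\, r_1 - (h-h_1))$-subspace design, applied to $W_1$, giving $\sum_j \dim_{\F_q}(\U_{1,j}\cap W_1) \leq r_1 - h + h_1$. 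Adding these, the $h_1$ terms cancel and the total is $r_1 + r_2 - h$, as required.

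The main obstacle is precisely that $W$ need not respect the decomposition $V = V_1 \oplus V_2$, so $\U_j \cap W$ may contain diagonal vectors mixing the two summands; the projection-plus-rank--nullity device handles this by trading $\U_j \cap W$ for a piece living purely in $V_2$ and a piece controlled inside $W_1 \subseteq V_1$. A secondary point requiring care is the legitimacy of the dimension-reduction step: Proposition \ref{prop:diminuzione}(2) supplies designs only for reduction amounts in $\{0,\ldots,h-1\}$, so the boundary cases $h_1 = 0$ and $h_1 = h$ must be treated separately. In those cases one of the two intersection sums is $0$, and the inequality reduces to $0 \le r_i - h$, which holds by Proposition \ref{prop:diminuzione}(1). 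Once the $\ell = 2$ case and these edge cases are settled, the induction closes the argument.
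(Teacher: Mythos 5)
Your proof is correct: the reduction to $\ell=2$ by induction, the projection/rank--nullity estimate $\dim_{\F_q}(\U_j\cap W)\le \dim_{\F_q}\bigl(\U_{2,j}\cap(W\cap V_2)\bigr)+\dim_{\F_q}(\U_{1,j}\cap W_1)$, the cancellation of $h_1$ after invoking Proposition \ref{prop:diminuzione}(2), the separate treatment of the boundary cases $h_1\in\{0,h\}$ via Proposition \ref{prop:diminuzione}(1), and the verification of the spanning condition all check out. Note that the paper itself gives no proof of this proposition (it is quoted from \cite{santonastaso2022subspace}); your argument is the natural self-contained one and follows essentially the same route as that reference.
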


Moreover, the following bound on the dimension of the elements of a subspace design holds.

\begin{theorem} [see \textnormal{\cite[Theorem 4.8]{santonastaso2022subspace}}] %\label{th:boundelementsdesign} 
If $(\U_1,\ldots,\U_t)$ is an $h$-design in $V=V(k,q^m)$, then
\[ \dim_{\fq}(\U_i)\leq \begin{cases}
k, & \mbox{ if } m < h+1,\\
\frac{mk}{h+1}, & \mbox{ if } m \geq h+1,
\end{cases} \]
for every $i \in \{1,\ldots,t\}$.
\end{theorem}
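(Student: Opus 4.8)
The plan is to reduce everything to the single-component case and prove a sharper structural fact, from which both bounds drop out. First I would fix an index $i$ and record the only property of $\U_i$ I will use: for every $\F_{q^m}$-subspace $W$ of dimension $r\le h$, domination in the defining inequality gives
$\dim_{\fq}(\U_i\cap W)\le \sum_{j=1}^t\dim_{\fq}(\U_j\cap W)\le r$,
where the last step uses that $(\U_1,\ldots,\U_t)$ is also an $r$-design by Remark \ref{rk:hdesimpliesh'} (a consequence of Proposition \ref{prop:diminuzione}). Thus $\U_i$ satisfies $\dim_{\fq}(\U_i\cap W)\le \dim_{\F_{q^m}}W$ for every $\F_{q^m}$-subspace $W$ with $\dim_{\F_{q^m}}W\le h$; it is an $h$-scattered subspace, and this is all I invoke below.

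The key step is the following claim: if $\lambda_1,\ldots,\lambda_{h+1}\in\F_{q^m}$ are linearly independent over $\fq$, then the sum $\lambda_1\U_i+\cdots+\lambda_{h+1}\U_i$ is direct. I would prove it by contradiction. A failure of directness yields a nontrivial relation $\sum_j\lambda_j u_j=0$ with $u_j\in\U_i$ not all zero; let $S=\{j:u_j\neq0\}$, so $|S|\ge 2$, and set $W:=\langle u_j:j\in S\rangle_{\F_{q^m}}$. Since the relation is a nontrivial $\F_{q^m}$-dependence among the $u_j$, the space $W$ has dimension $r\le |S|-1\le h$. The $\fq$-span $A$ of $\{u_j\}_{j\in S}$ lies in $\U_i\cap W$ and spans $W$ over $\F_{q^m}$, so $\dim_{\fq}A\ge r$; on the other hand the reduction step forces $\dim_{\fq}(\U_i\cap W)\le r$, whence $\dim_{\fq}A=r$ and any $\fq$-basis $a_1,\ldots,a_r$ of $A$ is simultaneously an $\F_{q^m}$-basis of $W$. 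Expanding $u_j=\sum_l m_{jl}a_l$ with $m_{jl}\in\fq$ and substituting into the relation gives $\sum_{j\in S}\lambda_j m_{jl}=0$ for every $l$, because the $a_l$ are $\F_{q^m}$-independent; the $\fq$-independence of the $\lambda_j$ then forces every $m_{jl}=0$, i.e.\ all $u_j=0$, a contradiction.

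Granting the claim, the bound is immediate. When $m\ge h+1$ the field $\F_{q^m}$ contains $h+1$ elements that are $\fq$-independent, so $\lambda_1\U_i\oplus\cdots\oplus\lambda_{h+1}\U_i$ is an $\fq$-subspace of $V$ of dimension $(h+1)\dim_{\fq}\U_i$, giving $(h+1)\dim_{\fq}\U_i\le \dim_{\fq}V=mk$ and hence $\dim_{\fq}\U_i\le \frac{mk}{h+1}$. When $m<h+1$, i.e.\ $m\le h$, I would instead apply the claim with $h$ replaced by $m-1$: by Remark \ref{rk:hdesimpliesh'} the subspace $\U_i$ still has the required intersection property for $\F_{q^m}$-subspaces of dimension at most $m-1$, a full $\fq$-basis of $\F_{q^m}$ supplies $m$ independent scalars, and the same direct-sum estimate yields $m\dim_{\fq}\U_i\le mk$, that is $\dim_{\fq}\U_i\le k$. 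The main obstacle is the claim itself: a naive incidence count of $h$-dimensional subspaces against $\U_i$ is far too lossy, and the genuine content is the algebraic observation that the $\fq$-independence of the scalars $\lambda_j$ is exactly what excludes the extremal configuration $\dim_{\fq}(\U_i\cap W)=\dim_{\F_{q^m}}W$.
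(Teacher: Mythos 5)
Your proof is correct; I checked each step and found no gaps. The reduction via Proposition \ref{prop:diminuzione} (each $\U_i$ inherits $\dim_{\fq}(\U_i\cap W)\le \dim_{\F_{q^m}}(W)$ for all $\F_{q^m}$-subspaces $W$ of dimension at most $h$, i.e.\ $\U_i$ is $(h,h)$-evasive --- strictly speaking ``evasive'' rather than ``$h$-scattered'', since spanning plays no role, but that is only terminology), the direct-sum claim, and both dimension counts $(h+1)\dim_{\fq}(\U_i)\le mk$ and $m\dim_{\fq}(\U_i)\le mk$ all hold as written. One point to note for the comparison you were asked about: this paper contains no proof of the statement --- it is recalled verbatim from \cite[Theorem 4.8]{santonastaso2022subspace} --- so there is no internal argument to measure yours against; the relevant comparison is with the cited literature, where the bound is obtained by reducing, exactly as you do, to the corresponding bound for a single $h$-scattered/evasive subspace (cf.\ \cite[Theorem 2.3]{csajbok2021generalising}), and the mechanism behind that bound is the same direct-sum lemma you prove: for $\fq$-independent $\lambda_1,\ldots,\lambda_{h+1}$ the sum $\lambda_1\U_i+\cdots+\lambda_{h+1}\U_i$ is direct, with the contradiction coming from an $\fq$-basis of $\U_i\cap W$ being forced to be an $\F_{q^m}$-basis of $W$. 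Your claim specializes at $h=1$ to the classical one-line proof of the Blokhuis--Lavrauw bound (a scattered $\U$ satisfies $\U\cap\lambda\U=\{0\}$ for $\lambda\notin\fq$, hence $2\dim_{\fq}(\U)\le mk$), and your handling of the case $m<h+1$ by rerunning the argument with $h$ replaced by $m-1$ and a full $\fq$-basis of $\F_{q^m}$ is the natural and correct way to get the bound $\dim_{\fq}(\U_i)\le k$ there. In short: correct, self-contained, and in line with the route taken by the source the paper cites.
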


\begin{definition} \cite[Definition 4.12]{santonastaso2022subspace}
Let $(\U_1,\ldots,\U_t)$ be an $h$-design in $V=V(k,q^m)$ such that $m\geq h+1$. If
\begin{equation*} 
 \dim_{\fq}(\U_i)= \frac{mk}{h+1},
 \end{equation*}
 for each $i \in \{1,\ldots,t\} $, then $(\U_1,\ldots,\U_t)$ will be called a \textbf{maximum $h$-design}.
\end{definition}

%Since each subspace of a maximum $s$-design has dimension $\frac{mk}{s+1}$, from Proposition \ref{prop:spanentirespace} we get the following.

%\begin{corollary}
%    If $(\U_{i,1},\ldots,\U_{i,t})$ is an maximum $s$-design in $V_i$ for every $i$, then $(U_1,\ldots,U_t)$ is a maximum $s$-design in $V$.
%\end{corollary}

\subsection{Generalized weights}
Let $\C$ be an $\Fmnk$ code and $G=(G_1 \,|\, \ldots \,|\, G_t)\in \F_{q^m}^{k \times N}$ be an its generator matrix, with $G_i \in \Fm^{k \times n_i}$. Let denote by $(G_1 | \ldots  | G_t)_{j}$ the $j$-th column of $(G_1 | \ldots  | G_t)_{j}$ for $j=1,\ldots,N$. The elements of $\F_{q^m}^{\bfn}$ can be naturally seen as long vectors in $\F_{q^m}^N$. In this way, by considering $\C$ also as a subset in $\F_{q^m}^N$ and we can endow it with the Hamming metric. 

We recall that the \textbf{$r$-th generalized Hamming weight} of $\C$ is  defined as 
\begin{equation} \label{eq:generalizedhamming}
\begin{array}{rl} 
d_r^H(\C) & :=\min \left\{N- \left\lvert \{j: (G_1 | \ldots  | G_t)_{j} \in \mathcal{H} \}  \right\rvert \colon  \mathcal{H} \mbox{ is a } (k-r)\mbox{-dimensional } \F_{q^m}\mbox{-subspace of } \F_{q^m}^k \right\} \\
& = N- \max \left\{\left\lvert \{j: (G_1 | \ldots  | G_t)_{j} \in \mathcal{H} \}  \right\rvert \colon  \mathcal{H} \mbox{ is a } (k-r)\mbox{-dimensional } \F_{q^m}\mbox{-subspace of } \F_{q^m}^k \right\},
\end{array}
\end{equation}
for each $r \in \{1,\ldots,k\}$, see
\cite[Theorem 1.1.14]{vladut2007algebraic}.

The notion of generalized weight for sum-rank metric was introduced in \cite{martinez2019theory}.
Since the sum-rank metric extends both the Hamming and rank metrics, we can also see that generalized sum-rank weights specialize to
generalized Hamming weights \cite{helleseth1977weight,luo2005some,wei1991generalized} and generalized rank weights \cite{kurihara2015relative,oggier2012existence} when $n_i = 1$
and $t = 1$, respectively.
In \cite[Section 4]{martinez2019theory}, several characterization of generalized sum-rank weight are given, and in this paper we will assume as definition one of these characterization \cite[Proposition 4]{martinez2019theory} that is more useful for our aims.

\begin{definition}
Let $\C$ be a $\Fmnkd$ code. For each $r \in \{1,\ldots,k \}$, the non negative integer
\[
d_r^{srk}(\C):= \min \{d^H_r(\C \cdot A) \colon A=diag(A_1,\ldots,A_t), A_i \in \GL(n_i,q), \mbox{ for } i=1,\ldots,t  \} 
\]
is called the \textbf{$r$-th generalized sum-rank weight} of $\C$.
\end{definition}

Using the geometric description of $\Fmnkd$, we are able to give a geometric characterization of generalized sum-rank weights as following.

\begin{proposition} \label{prop:geogeneralizedweights}
Let $\C$ be a nondegenerate $\Fmnkd$ code and let $\mathcal{U}=(\mathcal{U}_1,\ldots, \mathcal{U}_t) \in \Psi([\C])$. It holds that
\[
\ds_r(\C)= N- \max \left\{ \sum_{i=1}^t \dim_{\F_q}(U_i \cap \mathcal{H}) \colon \mathcal{H} \mbox{ is a } (k-r)\mbox{-dimensional } \F_{q^m}\mbox{-subspace of } \F_{q^m}^k \right\}. 
\]

\end{proposition}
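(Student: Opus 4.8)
The plan is to unwind both definitions and reduce everything to a count of columns of a generator matrix lying in a prescribed $\Fm$-subspace. Fix a representative: let $G=(G_1\mid\ldots\mid G_t)$ be a generator matrix of $\C$ whose blockwise column spans are the $\cU_i$, so that $\cU=(\cU_1,\ldots,\cU_t)\in\Psi([\C])$. Since $\C$ is nondegenerate, the columns of each $G_i$ are $\Fq$-linearly independent and hence form an $\Fq$-basis of $\cU_i$. For $A=\diag(A_1,\ldots,A_t)$ with $A_i\in\GL(n_i,q)$, the code $\C\cdot A$ has generator matrix $GA=(G_1A_1\mid\ldots\mid G_tA_t)$, and because each $A_i$ is invertible over $\Fq$ the columns of $G_iA_i$ again form an $\Fq$-basis of $\cU_i$.

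First I would rewrite the defining minimum by inserting the formula \eqref{eq:generalizedhamming} for $d_r^H$. Writing $\cH$ for a $(k-r)$-dimensional $\Fm$-subspace, and noting that the minus sign turns the outer minimum over $A$ into a maximum, this gives
\[
\ds_r(\C)=\min_A d_r^H(\C\cdot A)=N-\max_A\ \max_\cH\ \sum_{i=1}^t\bigl\lvert\{\,\text{columns of }G_iA_i\text{ lying in }\cH\,\}\bigr\rvert.
\]
The heart of the argument is the following per-block claim, for a fixed subspace $\cH$, which reduces to each block since $A$ decomposes into independent $A_i$:
\[
\max_{A}\ \sum_{i=1}^t\bigl\lvert\{\,\text{columns of }G_iA_i\text{ in }\cH\,\}\bigr\rvert=\sum_{i=1}^t\dim_{\Fq}(\cU_i\cap\cH).
\]
For the upper bound, the columns of $G_iA_i$ are $\Fq$-linearly independent elements of $\cU_i$, so those lying in $\cH$ are linearly independent vectors of $\cU_i\cap\cH$, whence there are at most $\dim_{\Fq}(\cU_i\cap\cH)$ of them. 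For the matching lower bound, I would choose for each $i$ an $\Fq$-basis of $\cU_i$ that extends a fixed basis of $\cU_i\cap\cH$; the invertible matrix $A_i$ whose columns are the coordinates of this basis with respect to the columns of $G_i$ then places $\dim_{\Fq}(\cU_i\cap\cH)$ columns of $G_iA_i$ inside $\cH$.

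Finally I would interchange the two independent maxima, $\max_A\max_\cH=\max_\cH\max_A$, evaluate the inner maximum by the claim, and conclude
\[
\ds_r(\C)=N-\max_\cH\ \sum_{i=1}^t\dim_{\Fq}(\cU_i\cap\cH),
\]
which is the assertion. The only point needing care—the main (mild) obstacle—is that the optimal isometry $A$ depends on $\cH$; this is precisely why the interchange of maxima is what makes the argument go through, and why one cannot fix a single $A$ in advance. Invariance under the choice of representative $\cU\in\Psi([\C])$ is automatic, since $\ds_r(\C)$ is a code invariant and the right-hand side is invariant under system equivalence.
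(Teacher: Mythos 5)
Your proof is correct and follows essentially the same route as the paper's: rewrite $d_r^H(\C\cdot A)$ via counts of columns of $G_iA_i$ lying in a $(k-r)$-dimensional $\Fm$-subspace, observe that as $A_i$ ranges over $\GL(n_i,q)$ the columns of $G_iA_i$ range over all $\Fq$-bases of $\cU_i$, and conclude that the maximal count per block equals $\dim_{\Fq}(\cU_i\cap\cH)$. The only difference is that you spell out the per-block upper and lower bounds (linear independence, basis extension) that the paper leaves implicit, which is a harmless elaboration rather than a different argument.
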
 
\begin{proof}
Let $G=(G_1 | \ldots |G_t)$ be a generator matrix for $\C$ such that the $\F_q$-span of the columns of $G_i$ is $\mathcal{U}_i$. If $A=diag(A_1,\ldots,A_t),$ with $ A_i \in \GL(n_i,q), \mbox{ for } i=1,\ldots,t$, then a generator matrix for $ \C \cdot A$ is the matrix $(G_1A_1 | \ldots | G_t A_t)$. So, using \eqref{eq:generalizedhamming}, we have that
\[
\begin{array}{rl}

\ds_r(\C) = &  N-\max \{ 
\lvert \{j:   (G_1A_1 | \ldots  | G_t A_t)_{j} \in \mathcal{H} \} \rvert  \colon  A_i \in \GL(n_i,q), 
\\ & \hskip 2 cm \mbox{and } \mathcal{H} \mbox{ is a }(k-r)\mbox{-dimensional } \F_{q^m}\mbox{-subspace of } \F_{q^m}^k \}
 \\
= &  N-\max \left\{ 
 \sum\limits_{i=1}^t \lvert \{j:   (G_iA_i)_{j} \in \mathcal{H} \} \rvert  \colon  A_i \in \GL(n_i,q), 
 \right. \\ &  \hskip 2 cm \left. 
 \mbox{and } \mathcal{H} \mbox{ is a }(k-r)\mbox{-dimensional } \F_{q^m}\mbox{-subspace of } \F_{q^m}^k  \right\}
 \\
\end{array}
\]
By noting that the $G_iA_i$'s , when $A_i$ runs in $\GL(n_i,q)$, give all the matrices whose columns is an $\F_q$-basis of $\mathcal{U}_i$, we get the assertion.
\end{proof}

\begin{remark}
    When $t=1$, the above characterization of generalized sum-rank weights corresponds exactly to the one proved in \cite{Randrianarisoa2020ageometric} given for generalized rank weights.
\end{remark}

Finally, we recall important properties of generalized sum-rank weights that we will use later. We start by giving the monotonicity property of generalized sum-rank weights.

\begin{proposition} [see \textnormal{\cite[Lemma 4]{martinez2019theory}}] \label{prop:monotonweight}
    Let $\C$ be a nondegenerate $\Fmnkd$ code. Then 
    \[
    d=d_1^{srk}(\C)<d_2^{srk}(\C)<\ldots<d_k^{srk}(\C)=N.
    \]
\end{proposition}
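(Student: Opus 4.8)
The plan is to work entirely with the geometric characterization of Proposition \ref{prop:geogeneralizedweights}. Fix a system $\mathcal{U}=(\mathcal{U}_1,\ldots,\mathcal{U}_t) \in \Psi([\C])$ and, for $0 \le r \le k$, set
\[
M_r \coloneqq \max\left\{ \sum_{i=1}^t \dim_{\F_q}(\mathcal{U}_i \cap \mathcal{H}) \colon \mathcal{H} \text{ is a } (k-r)\text{-dimensional } \Fm\text{-subspace of } \Fm^k \right\},
\]
so that $\ds_r(\C) = N - M_r$. The whole statement then reduces to showing that $M_r$ is strictly decreasing in $r$, together with the two boundary evaluations.

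First I would dispose of the endpoints. For $r=k$ the only admissible subspace is $\mathcal{H}=\{0\}$, whence $M_k=0$ and $\ds_k(\C)=N$; for $r=1$ the admissible subspaces are exactly the $\Fm$-hyperplanes, so $M_1$ coincides with the quantity appearing in the definition of a system, giving $\ds_1(\C)=N-M_1=d$.

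The heart of the argument is the strict inequality $M_{r+1} < M_r$ for $1 \le r \le k-1$, which I would prove by a one-step extension. Let $\mathcal{H}$ be a $(k-r-1)$-dimensional subspace realizing $M_{r+1}$. Since $\dim_{\Fm}\mathcal{H} < k$ and the system is non-degenerate, i.e. $\langle \mathcal{U}_1,\ldots,\mathcal{U}_t\rangle_{\Fm}=\Fm^k$, not all the $\mathcal{U}_i$ can be contained in $\mathcal{H}$; so I pick an index $j$ and a vector $v \in \mathcal{U}_j$ with $v \notin \mathcal{H}$, and set $\mathcal{H}' = \mathcal{H} + \langle v\rangle_{\Fm}$, a subspace of dimension $k-r$. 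Because $\mathcal{H}\subseteq\mathcal{H}'$ every intersection can only grow, while for the index $j$ we have $v \in \mathcal{U}_j \cap \mathcal{H}'$ but $v \notin \mathcal{U}_j \cap \mathcal{H}$, so $\dim_{\F_q}(\mathcal{U}_j\cap\mathcal{H}') \ge \dim_{\F_q}(\mathcal{U}_j\cap\mathcal{H})+1$. Summing over $i$ yields $\sum_{i=1}^t \dim_{\F_q}(\mathcal{U}_i\cap\mathcal{H}') \ge M_{r+1}+1$, and since $\mathcal{H}'$ is admissible for $M_r$ I conclude $M_r \ge M_{r+1}+1$, i.e. $\ds_r(\C) < \ds_{r+1}(\C)$.

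The main obstacle — really the only nontrivial point — is guaranteeing the existence of the increment direction $v$: one must produce a vector lying in some $\mathcal{U}_j$ but outside the optimal $\mathcal{H}$. This is exactly where non-degeneracy is indispensable, since without the spanning hypothesis $\langle \mathcal{U}_1,\ldots,\mathcal{U}_t\rangle_{\Fm}=\Fm^k$ all the $\mathcal{U}_i$ could conceivably lie inside a proper subspace and the strict growth would fail. Everything else is routine bookkeeping with the monotonicity of intersection dimensions under subspace inclusion.
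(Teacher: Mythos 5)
Your proof is correct. Note, however, that the paper itself offers no proof of this proposition: it is imported verbatim from the literature (cited as Lemma 4 of the reference \cite{martinez2019theory}), where it is established algebraically via the theory of sum-rank supports, with the generalized weights defined through supports of subcodes and monotonicity deduced from the structure of the support lattice. Your route is genuinely different: you leverage the geometric characterization of Proposition \ref{prop:geogeneralizedweights} (which precedes this statement in the paper and whose proof does not rely on monotonicity, so there is no circularity) and run the classical one-step extension argument familiar from generalized Hamming weights and projective systems: extend an optimal $(k-r-1)$-dimensional subspace by a direction $v \in \mathcal{U}_j \setminus \mathcal{H}$, which exists because $\langle \mathcal{U}_1,\ldots,\mathcal{U}_t\rangle_{\Fm}=\Fm^k$, and observe that the intersection sum strictly increases. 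This buys a short, self-contained proof within the paper's geometric framework; the original support-theoretic proof, by contrast, works directly with codes and does not require passing through the nondegeneracy hypothesis and the correspondence $\Psi$. One small correction of attribution: the spanning condition you invoke is part of the definition of the associated system and holds for any code of full dimension $k$ (its generator matrix has rank $k$); nondegeneracy of the code proper (equivalently $\dd(\C^\perp)\geq 2$) is instead what guarantees $\dim_{\F_q}\mathcal{U}_i = n_i$, which is needed for the formula $\ds_r(\C)=N-M_r$ with $N=\sum_i n_i$ to be valid. So nondegeneracy is indeed indispensable, but for the validity of the geometric dictionary rather than for the existence of the increment direction $v$.
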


 We also recall the Wei's duality theorem for sum-rank metric codes that establish a relation between generalized sum-rank weight of a codes and of its dual.

\begin{theorem} [see \textnormal{\cite[Theorem 2]{martinez2019theory}}] \label{th:weitype}
    Let $\C$ be a nondegenerate $\Fmnkd$ code. Then it holds
    \[
    \{1,2,\ldots,N\}=\{d_i^{srk}(\C) \colon 1 \leq i \leq k \} \cup \{N+1-d_i^{srk}(\C^{\perp}) \colon 1 \leq i \leq N-k \},
    \]
    where the union is disjoint and $N=\sum_{i=1}^t n_i$. In particular, the generalized sum-rank weights of $\C$ uniquely determine those of $\C^{\perp}$.
\end{theorem}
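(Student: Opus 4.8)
The plan is to deduce the statement from a single counting identity relating the two weight hierarchies, which in turn follows from an elementary orthogonality argument for ``support subspaces'', thereby avoiding any explicit description of the Delsarte dual.

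First I would set up the support framework. For a tuple $J=(J_1,\ldots,J_t)$ of $\Fq$-subspaces $J_i\subseteq\Fq^{n_i}$, put $w(J)=\sum_{i=1}^t\dim_{\Fq}J_i$, let $W_J=\bigoplus_{i=1}^t(J_i\otimes_{\Fq}\Fm)\subseteq\Fm^{\bfn}$ (an $\Fm$-subspace of dimension $w(J)$), and set $\C(J)=\C\cap W_J$. Using the support characterization of generalized sum-rank weights from \cite{martinez2019theory}, one has $\ds_r(\C)=\min\{w(J)\st\dim_{\Fm}\C(J)\geq r\}$; consequently, writing $a_s(\C)=\lvert\{i\st\ds_i(\C)\leq s\}\rvert$ for $0\leq s\leq N$, the strict monotonicity of Proposition \ref{prop:monotonweight} gives $a_s(\C)=\max\{\dim_{\Fm}\C(J)\st w(J)=s\}$ (enlarging $J$ only enlarges $\C(J)$). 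The two families in the statement have sizes $k$ and $N-k$ and consist of distinct elements of $\{1,\ldots,N\}$; since $k+(N-k)=N$, it suffices to prove that they are disjoint, and a routine complementary count shows that this is equivalent to the identity
\[
a_s(\C)=a_{N-s}(\C^{\perp})+s-(N-k)\qquad(0\leq s\leq N).
\]

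The heart of the argument is a dimension identity for shortenings. For $J$ as above let $J^{\perp}=(J_1^{\perp},\ldots,J_t^{\perp})$, where $J_i^{\perp}\subseteq\Fq^{n_i}$ is the orthogonal complement with respect to the standard form, so that $w(J^{\perp})=N-w(J)$. The key observation is that $W_{J^{\perp}}=W_J^{\perp}$, the orthogonal complement of $W_J$ in $\Fm^{\bfn}$ for the sum-rank bilinear form, since orthogonality of $\Fq$-subspaces of $\Fq^{n_i}$ is preserved under extension of scalars to $\Fm$. Hence $\C^{\perp}(J^{\perp})=\C^{\perp}\cap W_J^{\perp}$, and combining the Grassmann formula $\dim_{\Fm}(\C\cap W_J)=k+w(J)-\dim_{\Fm}(\C+W_J)$ with the orthogonality relation $(\C+W_J)^{\perp}=\C^{\perp}\cap W_J^{\perp}$ yields
\[
\dim_{\Fm}\C(J)-\dim_{\Fm}\C^{\perp}(J^{\perp})=w(J)-(N-k).
\]
Maximizing over all $J$ of weight $s$ and using that $J\mapsto J^{\perp}$ is a weight-reversing bijection between supports of weight $s$ and supports of weight $N-s$ then gives exactly the required identity, and hence the disjoint union.

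I expect the main obstacle to be bookkeeping rather than depth: one must justify the support characterization $\ds_r(\C)=\min\{w(J)\st\dim_{\Fm}\C(J)\geq r\}$ (equivalently, reconcile it with the hyperplane formulation of Proposition \ref{prop:geogeneralizedweights}), and verify carefully that the sum-rank dual $\C^{\perp}$ is matched with the orthogonal complement $J^{\perp}$ of supports under the same bilinear form; once the support subspace $W_J$ and its complement $W_{J^{\perp}}=W_J^{\perp}$ are correctly identified, the dimension identity and the counting are elementary. Finally, the last assertion is immediate, as the partition expresses $\{\ds_j(\C^{\perp})\st 1\leq j\leq N-k\}$ as $\{N+1-v\st v\in\{1,\ldots,N\}\setminus\{\ds_i(\C)\st 1\leq i\leq k\}\}$, so that the weights of $\C$ determine those of $\C^{\perp}$.
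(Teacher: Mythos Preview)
The paper does not supply its own proof of this theorem: it is quoted verbatim as a known result from \cite[Theorem 2]{martinez2019theory}, so there is nothing in the paper to compare your argument against directly. Your proposal is correct and is precisely the classical Wei-duality argument transported to the sum-rank setting via support subspaces; the dimension identity $\dim_{\Fm}\C(J)-\dim_{\Fm}\C^{\perp}(J^{\perp})=w(J)-(N-k)$ and the counting function $a_s$ are exactly the ingredients used in the cited reference (and in Wei's original proof for the Hamming metric), so your approach coincides with the one the paper defers to.
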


\subsection{Connection with MSRD codes}

From Theorem \ref{th:connection}, we can rephrase the MSRD property of a code as geometric properties of the system.

\begin{corollary} [see \textnormal{\cite[Corollary 3.10]{neri2023geometry}, \cite[Theorem 6.10]{santonastaso2022subspace}}]\label{teo:designMSRD} 
Let $\C$ be a nondegenerate $\Fmnkd$ code and let $[(\mathcal{U}_1,\ldots,\mathcal{U}_t)]=\Psi([\mathcal{C}])$. $\mathcal{C}$ is an MSRD code with minimum distance \[
	d = \begin{cases} N-k+1 &  \mbox{if }m \geq n_1
		 \\
 tm-\frac{mk}{n}+1 &\mbox{ if }m<n=n_1=\ldots=n_t
	\end{cases}
	\] if and only if 
\[
\max\left\{ \sum_{i=1}^t \dim_{\fq}(\mathcal{U}_i \cap \mathcal{H})  \st \mathcal{H} \mbox{ hyperplane of }\F_{q^m}^k  \right\} \leq \begin{cases}
k-1, & \mbox{ if } m \geq n_1, \\ \\
t(n-m)+\frac{m}{n}k-1, & \mbox{ if } m<n:=n_1=\ldots=n_t
\end{cases}.
\]
\end{corollary}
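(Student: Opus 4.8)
The plan is to deduce Corollary \ref{teo:designMSRD} directly from the geometric weight formula in Theorem \ref{th:connection} together with the Singleton-like bound of Theorem \ref{th:Singletonbound}. The starting observation is that Theorem \ref{th:connection} expresses the minimum distance of $\C$ as
\[
d = N - \max\left\{\sum_{i=1}^t \dim_{\fq}(\mathcal{U}_i \cap \mathcal{H}) \st \mathcal{H} \textnormal{ hyperplane of } \F_{q^m}^k\right\}.
\]
Hence the condition ``$\max\left\{\sum_i \dim_{\fq}(\mathcal{U}_i \cap \mathcal{H})\right\} \le B$'' is exactly equivalent to ``$d \ge N - B$''. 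So the whole statement reduces to checking that, in each of the two regimes, the right-hand bound $B$ satisfies $N - B = d_{\max}$, where $d_{\max}$ is the maximal value of $d$ permitted by the Singleton bound. Since the MSRD property means precisely that $d$ attains $d_{\max}$, and since $d$ can never exceed $d_{\max}$, the inequality $d \ge N - B$ with $N - B = d_{\max}$ holds if and only if $d = d_{\max}$, i.e. if and only if $\C$ is MSRD.

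First I would treat the case $m \ge n_1$. Here Theorem \ref{th:Singletonbound} gives $k \le N - d + 1$, equivalently $d \le N - k + 1$, so $d_{\max} = N - k + 1$, matching the claimed minimum distance. The bound to verify is $B = k - 1$, and indeed $N - B = N - (k-1) = N - k + 1 = d_{\max}$. Thus in this regime the equivalence is immediate: the hyperplane-intersection maximum is at most $k-1$ precisely when $d = N - k + 1$, which is the MSRD condition.

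Next I would handle the case $m < n := n_1 = \ldots = n_t$. Now Theorem \ref{th:Singletonbound} gives $mk \le n(tm - d + 1)$; solving for $d$ yields $d \le tm - \frac{mk}{n} + 1$, so $d_{\max} = tm - \frac{mk}{n} + 1$, again matching the stated distance. The claimed bound is $B = t(n-m) + \frac{m}{n}k - 1$, and here $N = tn$, so $N - B = tn - t(n-m) - \frac{m}{n}k + 1 = tm - \frac{mk}{n} + 1 = d_{\max}$, as required. The same logic as above then closes the argument.

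The main obstacle, such as it is, is purely the arithmetic bookkeeping in the second regime: one must substitute $N = tn$ and simplify $N - B$ carefully to confirm it equals $tm - \frac{mk}{n} + 1$, and one should note in passing that $\frac{mk}{n}$ is an integer (which is guaranteed for a genuine MSRD code in this regime, as is implicit in the stated distance formula) so that the bound is meaningful. Beyond this, no genuine difficulty arises: the crux is simply recognizing that Theorem \ref{th:connection} converts the geometric intersection quantity into $N - d$, after which the corollary is a restatement of ``$d \ge d_{\max}$ iff $d = d_{\max}$'' via the Singleton bound. I would therefore present the proof as a short two-case verification, emphasizing the equivalence $d \ge N - B \Leftrightarrow d = d_{\max}$ in each regime.
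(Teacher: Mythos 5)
Your proposal is correct and is essentially the paper's own route: the corollary is presented there as an immediate consequence of the geometric distance formula of Theorem \ref{th:connection}, which converts the hyperplane-intersection bound into the condition $d \geq N-B$, combined with the Singleton bound of Theorem \ref{th:Singletonbound} identifying $N-B$ with the maximal attainable distance in each regime. Your two-case arithmetic check (including the observation that $\frac{mk}{n}$ must be an integer for equality in the second regime) matches the intended argument, so nothing is missing.
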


A subspace design in $\F_{q^m}^k$ which corresponds to an MSRD code is called \textbf{optimal subspace design}.
 As a consequence of Corollary \ref{teo:designMSRD}, in the case $m\geq n_1$, we have that MSRD codes correspond to $(k-1)$-design.
\begin{theorem} [see \textnormal{\cite[Corollary 3.10]{neri2023geometry}}]
    Let $\C$ be a nondegenerate $\Fmnkd$ code. Assume that $m \geq n_1$. Let $(\mathcal{U}_1,\ldots,\mathcal{U}_t) \in \Psi([\C])$. Then $\C$ is an MSRD code if and only if $(\mathcal{U}_1,\ldots,\mathcal{U}_t)$ is a $(k-1)$-design $[\bfn,k]_{q^m/q}$ system.
\end{theorem}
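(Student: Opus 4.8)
The plan is to derive this statement directly from Corollary \ref{teo:designMSRD} together with the definition of an $h$-design. Recall that an $h$-design is by definition an $(h,h)$-subspace design, meaning that for every $\F_{q^m}$-subspace $W$ of dimension $h$ one has $\sum_{i=1}^t\dim_{\fq}(\U_i\cap W)\leq h$. Since an $\F_{q^m}$-hyperplane of $\F_{q^m}^k$ is precisely a subspace of dimension $k-1$, being a $(k-1)$-design amounts to requiring
\[
\sum_{i=1}^t\dim_{\fq}(\U_i\cap\cH)\leq k-1
\]
for every hyperplane $\cH$. This is exactly the condition appearing on the right-hand side of Corollary \ref{teo:designMSRD} in the case $m\geq n_1$. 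So the two conditions are literally the same inequality, and the theorem will follow by unwinding definitions.

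First I would invoke the hypothesis $m\geq n_1$ to select the first branch of Corollary \ref{teo:designMSRD}, which states that $\C$ is MSRD (with $d=N-k+1$) if and only if
\[
\max\left\{\sum_{i=1}^t\dim_{\fq}(\U_i\cap\cH)\st\cH\text{ hyperplane of }\F_{q^m}^k\right\}\leq k-1.
\]
Then I would translate the right-hand maximum condition into the language of subspace designs: the inequality $\sum_{i=1}^t\dim_{\fq}(\U_i\cap\cH)\leq k-1$ holding for \emph{every} hyperplane $\cH$ is exactly the defining property of a $(k-1,k-1)$-subspace design, i.e. a $(k-1)$-design, once one notes that hyperplanes are the $(k-1)$-dimensional $\F_{q^m}$-subspaces and that the bound $k-1$ is the smallest possible value of $r$ by Proposition \ref{prop:diminuzione}(1) (so that $r=h=k-1$ is the $h$-design case). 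I would also observe that since $\mathcal{U}=(\U_1,\ldots,\U_t)\in\Psi([\C])$ is the system associated with a nondegenerate code, the spanning condition $\langle\U_1,\ldots,\U_t\rangle_{\F_{q^m}}=\F_{q^m}^k$ required in the definition of subspace design is automatic from the definition of an $\Fmnkd$ system.

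Putting these together gives a chain of equivalences: $\C$ is MSRD $\iff$ the maximum intersection dimension over hyperplanes is at most $k-1$ $\iff$ $(\U_1,\ldots,\U_t)$ is a $(k-1)$-design $[\bfn,k]_{q^m/q}$ system. The argument is essentially a restatement, so there is no real computational obstacle. The only point requiring a little care is matching the value of $d$: the definition of the $\Fmnkd$ system fixes $d=N-\max\{\sum_i\dim_{\fq}(\U_i\cap\cH)\}$, so the design condition $\max\leq k-1$ is equivalent to $d\geq N-(k-1)=N-k+1$, and combined with the Singleton bound of Theorem \ref{th:Singletonbound} (which gives $d\leq N-k+1$ when $m\geq n_1$) this forces $d=N-k+1$, the MSRD value. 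I would make this equality explicit so that the stated minimum distance in the MSRD characterization is correctly identified, but this is the standard Singleton-attainment bookkeeping rather than a genuine difficulty.
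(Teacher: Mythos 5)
Your proposal is correct and follows exactly the route the paper itself takes: the paper states this theorem as an immediate consequence of Corollary \ref{teo:designMSRD} in the case $m \geq n_1$, which is precisely your deduction of unwinding the definition of a $(k-1)$-design (hyperplanes being the $(k-1)$-dimensional $\F_{q^m}$-subspaces, with the spanning condition automatic for an associated system). Your extra bookkeeping identifying $d = N-k+1$ via the Singleton bound of Theorem \ref{th:Singletonbound} is the right way to make the equivalence fully explicit.
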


\begin{corollary}
    An $\Fmnkd$ system $\mathcal{U}$, with $m \geq n_1$ is an optimal subspace design if and only if it is a $(k-1)$-design in $\F_{q^m}^k$. In such a case, it is also a maximum $(k-1)$-design if $m=n_1=\ldots=n_t$.
\end{corollary}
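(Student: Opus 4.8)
The plan is to obtain both assertions as direct consequences of the theorem stated immediately before the corollary (the one asserting, under $m\geq n_1$, that a nondegenerate code is MSRD if and only if an associated system is a $(k-1)$-design), together with the bijection between $\mathfrak{C}\Fmnkd$ and $\mathfrak{U}\Fmnkd$ given by $\Psi$ and $\Phi$. First I would unwind the definition of \emph{optimal subspace design}: an $\Fmnkd$ system $\mathcal{U}$ is an optimal subspace design exactly when the code $\Phi([\mathcal{U}])$ is MSRD, and since $\Psi$ and $\Phi$ are mutually inverse, writing $[\C]=\Phi([\mathcal{U}])$ we have $\mathcal{U}\in\Psi([\C])$. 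The cited theorem then gives, for $m\geq n_1$, that $\C$ is MSRD if and only if every associated system in $\Psi([\C])$ is a $(k-1)$-design. Chaining the two equivalences yields $\mathcal{U}$ optimal subspace design $\iff$ $\Phi([\mathcal{U}])$ MSRD $\iff$ $\mathcal{U}$ a $(k-1)$-design, which is the first assertion; no computation is needed beyond keeping the direction of the correspondence straight.

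For the second assertion I would argue by a dimension count. In the code–system dictionary, $\mathcal{U}_i$ is the $\F_q$-span of the columns of the block $G_i$ of a generator matrix, and nondegeneracy forces those columns to be $\F_q$-independent, so $\dim_{\fq}(\mathcal{U}_i)=n_i$ for each $i$. Specialising to $m=n_1=\ldots=n_t$ gives $\dim_{\fq}(\mathcal{U}_i)=m$ for every $i$. Taking $h=k-1$ in the dimension bound for $h$-designs, the extremal value is $\frac{mk}{h+1}=\frac{mk}{k}=m$, so each $\mathcal{U}_i$ attains the threshold and the $(k-1)$-design is, by definition, a maximum $(k-1)$-design.

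The step requiring the most care is a bookkeeping one rather than a genuine obstacle: the notion of a maximum $h$-design is defined only when $m\geq h+1$, which here means $m\geq k$, so the second assertion must be read in the regime $m\geq k$ where the term makes sense. In that regime the dimension computation above is conclusive. Thus the hard part is not any estimate but rather correctly invoking the bijection $\Psi,\Phi$ (so that ``optimal subspace design'' is translated into the MSRD property) and identifying $\frac{mk}{(k-1)+1}=m$ as the extremal dimension; once these are in place, both equivalences follow immediately from the preceding theorem and the definition of a maximum $(k-1)$-design.
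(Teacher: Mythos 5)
Your proof is correct and takes essentially the same route the paper intends: the corollary is stated there without proof, being an immediate consequence of the preceding theorem via the $\Psi$/$\Phi$ correspondence, the definition of optimal subspace design, and the dimension count $\frac{mk}{(k-1)+1}=m$. Your caveat that the ``maximum'' claim presupposes $m\geq k$ (so that the term is defined) is also consistent with the paper, which makes the analogous restriction explicit in Corollary \ref{cor:optimalk-1design} (``$m\geq h+1$, or equivalently $n\geq k$'').
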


In the next theorem, we investigate the case in which all the blocks of the code have the same length and this quantity is greater that $m$. To this aim, first we relate the generalized sum-rank weights of a code with geometric properties of a system associated to it, as recently done for the rank metric codes in \cite[Theorem 3.3]{marino2022evasive}.

\begin{theorem} \label{th:connectiondesignweight}
    Let $\mathcal{C}$ be a nondegenerate $\Fmnkd$ code with $k<N$ and let $\mathcal{U}$ be a $\Fmnkd$ system associated to $\cC$. Then, the following are equivalent
    \begin{enumerate}
        \item $\mathcal{U}$ is $(h,r)$-subspace design.
        \item $\ds_{k-h}(\mathcal{C}) \geq N-r$.
        \item $\ds_{r-h+1}(\mathcal{C}^{\perp}) \geq r+2$.
    \end{enumerate}
   
\end{theorem}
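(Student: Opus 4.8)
The plan is to prove (1) $\Leftrightarrow$ (2) directly from the geometric description of generalized sum-rank weights in Proposition \ref{prop:geogeneralizedweights}, and then to prove (2) $\Leftrightarrow$ (3) by a counting argument based on the Wei-type duality of Theorem \ref{th:weitype} together with the strict monotonicity of Proposition \ref{prop:monotonweight}.

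For (1) $\Leftrightarrow$ (2), I would apply Proposition \ref{prop:geogeneralizedweights} at the index $k-h$, so that the subspaces $\mathcal{H}$ appearing there have dimension $k-(k-h)=h$, yielding
\[
\ds_{k-h}(\C) = N - \max\left\{ \sum_{i=1}^t \dim_{\fq}(\U_i \cap W) : W \leq \Fm^k,\ \dim_{\Fm} W = h \right\}.
\]
Hence $\ds_{k-h}(\C) \geq N-r$ holds if and only if the maximum above is at most $r$, which is precisely the defining inequality of an $(h,r)$-subspace design. Since $\mathcal{U}$ is a system, the spanning condition $\langle \U_1,\ldots,\U_t\rangle_{\Fm} = \Fm^k$ is automatic, so (1) and (2) are equivalent. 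This step is pure unwinding of definitions.

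For (2) $\Leftrightarrow$ (3), write $a_i := \ds_i(\C)$ for $1 \leq i \leq k$ and $b_j := \ds_j(\C^{\perp})$ for $1 \leq j \leq N-k$; by Proposition \ref{prop:monotonweight} both sequences are strictly increasing with $a_k = b_{N-k} = N$. Set $A := \{a_1,\ldots,a_k\}$. By Theorem \ref{th:weitype} the complement of $A$ in $\{1,\ldots,N\}$ is exactly $A^c = \{N+1-b_j : 1 \leq j \leq N-k\}$. Because the $a_i$ are strictly increasing with $a_k=N$, condition (2) says that the top $h+1$ values $a_{k-h},\ldots,a_k$ all lie in $[N-r,N]$, i.e. $|A \cap [N-r,N]| \geq h+1$. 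As $[N-r,N]$ contains $r+1$ integers, this is equivalent to $|A^c \cap [N-r,N]| \leq r-h$. Finally $N+1-b_j \in [N-r,N]$ if and only if $b_j \leq r+1$, so $|A^c \cap [N-r,N]| = |\{j : b_j \leq r+1\}|$, and by strict monotonicity of the $b_j$ this count is at most $r-h$ if and only if $b_{r-h+1} \geq r+2$, which is condition (3). Chaining these equivalences gives (2) $\Leftrightarrow$ (3).

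I expect the only genuine difficulty to be the bookkeeping in (2) $\Leftrightarrow$ (3): one must correctly track the reindexing $x \mapsto N+1-x$ induced by Wei duality and confirm that the relevant indices lie in range. The boundary cases, for instance when $r \geq N-k+h$ so that the index $r-h+1$ exceeds $N-k$, are handled by the elementary inequalities $a_i \geq i$ and $b_j \geq j$ (immediate from strict monotonicity), which force both (2) and (3) to hold automatically. This index bookkeeping, rather than any substantive new idea, is the main obstacle.
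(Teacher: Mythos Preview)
Your proposal is correct and follows essentially the same approach as the paper: the paper invokes Proposition~\ref{prop:geogeneralizedweights}, Proposition~\ref{prop:monotonweight}, and Theorem~\ref{th:weitype} and then defers to \cite[Theorem~3.3]{marino2022evasive} for the details, which are precisely the ones you have written out. In effect you have supplied the ``mutatis mutandis'' that the paper omits.
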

\begin{proof}
Using the characterization of generalized sum-rank weight as in Proposition \ref{prop:geogeneralizedweights} and using the Proposition \ref{prop:monotonweight} and Theorem \ref{th:weitype} for them, the
proof of can be translated mutatis mutandis from that in \cite[Theorem 3.3]{marino2022evasive} replacing the notion of evasive subspace with the one of subspace design.
\end{proof}

From the above result, we are able to give a characterization of the MSRD codes in terms of system.

\begin{theorem} \label{Th:sdesignMSRD}
Let $\C$ be a nondegenerate $\Fmnkd$ code. Assume that $n:=n_1=\ldots=n_t \geq m$. Let $(\mathcal{U}_1,\ldots,\mathcal{U}_t) \in \Psi([\C])$. Then $\C$ is an MSRD code if and only if $n=\frac{mk}{h+1}$ such that  $h+1$ divides $mk$, for some positive integer $h \leq k-1$ and $(\mathcal{U}_1,\ldots,\mathcal{U}_t)$ is an $h$-design $[\bfn,k]_{q^m/q}$ system. In particular if also $n \geq k$,  $\C$ is an MSRD code if and only if $(\mathcal{U}_1,\ldots,\mathcal{U}_t)$ is a maximum $h$-design $[\bfn,k]_{q^m/q}$ system.
\end{theorem}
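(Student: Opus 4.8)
The plan is to push the whole question onto the dual code and use the equivalence $(1)\Leftrightarrow(3)$ of Theorem~\ref{th:connectiondesignweight}. Specialising that theorem to $r=h$, a system $\mathcal{U}$ associated with $\C$ is an $h$-design exactly when $\ds_1(\C^{\perp})=d(\C^{\perp})\geq h+2$ (here the index $r-h+1$ collapses to $1$, which is precisely what turns part (3) into a statement about the minimum distance of $\C^{\perp}$). So the entire equivalence reduces to showing that, in the range $n:=n_1=\cdots=n_t\geq m$ with $n(h+1)=mk$, the code $\C$ is MSRD if and only if $d(\C^{\perp})=h+2$.

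First I would do the forward implication. If $\C$ is MSRD, then since $n\geq m$ the Singleton bound of Theorem~\ref{th:Singletonbound} holds with equality, $mk=n(tm-d+1)$; in particular $n\mid mk$, so $h:=\tfrac{mk}{n}-1$ is an integer with $0\leq h\leq k-1$ (the hypothesis $n\geq m$ gives $h\leq k-1$, and in the genuinely nontrivial cases $h\geq 1$). Rewriting the distance gives $d=tm-h$. Now Theorem~\ref{th:dualMSRD} in the case $n\geq m$ says that $\C^{\perp}$ is again MSRD with $d(\C^{\perp})=tm-d+2=h+2$. Feeding $d(\C^{\perp})\geq h+2$ into Theorem~\ref{th:connectiondesignweight}$(3)\Rightarrow(1)$ with $r=h$ yields that $\mathcal{U}$ is an $h$-design, as claimed.

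For the converse, assume $n=\tfrac{mk}{h+1}$ with $1\leq h\leq k-1$ and that $\mathcal{U}$ is an $h$-design. Theorem~\ref{th:connectiondesignweight}$(1)\Rightarrow(3)$ (again $r=h$) gives $d(\C^{\perp})\geq h+2$. On the other hand $\C^{\perp}$ is a $[\bfn,N-k]_{q^m/q}$ code with $n\geq m$, so the Singleton bound of Theorem~\ref{th:Singletonbound} forces
\[
d(\C^{\perp})\leq tm-\frac{m(N-k)}{n}+1=\frac{mk}{n}+1=h+2 .
\]
Hence $d(\C^{\perp})=h+2$ attains the bound, $\C^{\perp}$ is MSRD, and therefore so is $\C=(\C^{\perp})^{\perp}$ by Theorem~\ref{th:dualMSRD}. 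The ``maximum'' refinement is then immediate: $n\geq k$ is equivalent to $m\geq h+1$, and under $m\geq h+1$ an $h$-design all of whose components have $\F_q$-dimension $n=\tfrac{mk}{h+1}$ meets the dimension bound for $h$-designs with equality, so it is precisely a maximum $h$-design.

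The work here is not conceptual but lies in the parameter bookkeeping, and that is where I would be most careful: one must check that the MSRD distance $d=tm-h$, the divisibility $n\mid mk$, the index identity $r-h+1=1$, and the dual Singleton value $h+2$ all align, and that the inequalities $1\leq h\leq k-1$ (needed both to apply Theorem~\ref{th:connectiondesignweight} and to make ``$h$-design'' meaningful) hold throughout the stated range $n\geq m$, $k<N$; the degenerate boundary $h=0$ (where $n=mk$ and $\C^{\perp}$ is merely nondegenerate) should be noted and set aside. An alternative, slightly longer route avoids part (3): one combines part (2) of Theorem~\ref{th:connectiondesignweight} with the monotonicity of Proposition~\ref{prop:monotonweight} and the Wei-type duality of Theorem~\ref{th:weitype} to show directly that the top $h+1$ generalised sum-rank weights of an MSRD code are the consecutive values $N-h,\dots,N$. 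I would keep the dual-distance argument above as the primary proof, since it is the shortest.
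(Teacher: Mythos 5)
Your proposal is correct and follows essentially the same route as the paper's own proof: both reduce the statement to the dual code via Theorem~\ref{th:connectiondesignweight} with $r=h$ (so that condition (3) becomes $d(\C^{\perp})\geq h+2$), use the Singleton equality to extract $h=tm-d$ and the divisibility $n\mid mk$, invoke Theorem~\ref{th:dualMSRD} for the forward direction, and in the converse force $d(\C^{\perp})=h+2$ from the Singleton bound so that $\C^{\perp}$, and hence $\C$, is MSRD. Your explicit evaluation of the dual Singleton bound and your remark flagging the degenerate boundary $h=0$ are minor additions the paper leaves implicit, but they do not change the argument.
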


\begin{proof}
    Suppose that $\C$ is an MSRD $\Fmnkd$ code. Then by Theorem \ref{th:Singletonbound}, we have that $mk=n(tm-d+1)$. This means that $n$ divides $mk$ and since $n \geq m$, it follows that $n=\frac{mk}{h+1}$, where $h=tm-d$ and $h \leq k-1$. Also, by Theorem \ref{th:dualMSRD}, $\C^{\perp}$ is an MSRD code, with $\ds_{1}(\C^{\perp})=d(\C^{\perp})=h+2$. Using Theorem \ref{th:connectiondesignweight}, we get that $(\mathcal{U}_1,\ldots,\mathcal{U}_t)$ is an $h$-design. If also $n \geq k$ holds, then $m \geq k+1$ and so $(\mathcal{U}_1,\ldots,\mathcal{U}_t)$ is a maximum $h$-design. \\
    Conversely, if $h$ is a positive integer such that $h+1$ divides $mk$, and $(\mathcal{U}_1,\ldots,\mathcal{U}_t)$ is an $h$-design $[\bfn,k]_{q^m/q}$ system, with $n=\frac{mk}{h+1}$, again by Theorem \ref{th:connectiondesignweight}, we have that $\C^{\perp}$ is an $[\bfn,N-k]_{q^m/q}$ code with $\ds_{1}(\C^{\perp})=d(\C^{\perp})\geq h+2$. By Theorem \ref{th:Singletonbound}, $d(\C^{\perp}) = h+2$ and so $\C^{\perp}$ is an MSRD code and as a consequence $\C$ is an MSRD code as well.
\end{proof}

\begin{corollary} \label{cor:optimalk-1design}
    An $\Fmnk$ system $\mathcal{U}$, with $n:=n_1=\ldots=n_t \geq m$ is an optimal subspace design if and only if it is a $h$-design in $\F_{q^m}^k$, $n=\frac{mk}{h+1}$ such that $h+1$ divides $mk$, for some positive integer $h \leq k-1$. In such a case, it is also a maximum $h$-design if $m \geq h+1$ (or equivalently $n \geq k$).
\end{corollary}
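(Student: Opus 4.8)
The plan is to recognise this corollary as the purely geometric restatement of Theorem \ref{Th:sdesignMSRD}, obtained by passing through the correspondence $\Psi,\Phi$ of Section 2. First I would unwind the definition of optimal subspace design: by definition, the system $\mathcal{U}=(\mathcal{U}_1,\ldots,\mathcal{U}_t)$ is an optimal subspace design precisely when the equivalence class $\Phi([\mathcal{U}])$ is the class of an MSRD code. Since $\Psi$ and $\Phi$ are mutually inverse, fixing any nondegenerate $\Fmnk$ code $\C$ with $[\C]=\Phi([\mathcal{U}])$ gives $\mathcal{U}\in\Psi([\C])$, so the pair $(\C,\mathcal{U})$ is exactly of the form to which Theorem \ref{Th:sdesignMSRD} applies (recall the standing hypothesis $n:=n_1=\ldots=n_t\geq m$).

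With this identification in place, both directions follow at once. Applying Theorem \ref{Th:sdesignMSRD} to $\C$ and $\mathcal{U}$, the code $\C$ is MSRD if and only if $n=\frac{mk}{h+1}$ with $h+1\mid mk$ for some positive integer $h\leq k-1$ and $\mathcal{U}$ is an $h$-design $[\bfn,k]_{q^m/q}$ system. As $\mathcal{U}$ being an optimal subspace design is by definition synonymous with $\C$ being MSRD, this is precisely the claimed equivalence.

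For the final clause I would first verify the parenthetical equivalence of conditions: from $n=\frac{mk}{h+1}$ one reads off $n\geq k\iff \frac{m}{h+1}\geq 1\iff m\geq h+1$. Under this condition Theorem \ref{Th:sdesignMSRD} already asserts that $\mathcal{U}$ is a maximum $h$-design, and one may also check this directly against the definition, since $\dim_{\fq}(\mathcal{U}_i)=n_i=n=\frac{mk}{h+1}$ for every $i$, which is exactly the defining dimension of a maximum $h$-design when $m\geq h+1$. Since every substantive step has already been carried out in Theorem \ref{Th:sdesignMSRD}, I expect no genuine obstacle here; the only content is the bookkeeping of confirming that ``optimal subspace design'' translates faithfully under $\Psi,\Phi$ into ``$\Phi([\mathcal{U}])$ is MSRD'', together with the harmless arithmetic equivalence $n\geq k\iff m\geq h+1$.
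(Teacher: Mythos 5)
Your proof is correct and takes essentially the same route as the paper: the paper gives no separate argument for this corollary, treating it as an immediate consequence of Theorem \ref{Th:sdesignMSRD} via the definition of optimal subspace design and the correspondence $\Psi,\Phi$, which is exactly the translation you spell out. Your additional verification of the arithmetic equivalence $n\geq k\iff m\geq h+1$ (from $n=\frac{mk}{h+1}$) and of the dimension condition defining a maximum $h$-design is precisely the implicit bookkeeping the paper leaves to the reader.
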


\begin{remark}
    Note that since $m \geq 2$ an optimal subspace design in $\F_{q^m}^k$ that is an $1$-design is always a maximum $1$-design in $\F_{q^m}^k$.
\end{remark}

From Theorem \ref{Th:sdesignMSRD}, we get a characterization of $h$-design using their property of the dimension of intersections of hyperplanes. This also improves the bounds proved in \cite[Corollary 5.11]{santonastaso2022subspace} and extend \cite[Theorem 7.3]{santonastaso2022subspace} for $h>1$.

\begin{corollary} \label{Cor:hdesignhyper}
Let $\U=(\mathcal{U}_1,\ldots,\mathcal{U}_t)$ be a $[\bfn,k]_{q^m/q}$ system, with $n:=n_1=\ldots=n_t=\frac{mk}{h+1}$ and $h \leq k-1$. Then $\U$ is an $h$-design in $\F_{q^m}^k$ if and only if
\[
t \left( n-m \right) \leq \sum_{i=1}^t \dim_{\F_q}(\mathcal{U}_i \cap H) \leq t \left(n-m \right)+h,
\]
for any $\F_{q^m}$-hyperplane $H$ of $\F_{q^m}^k$. In particular, if $\mathcal{U}$ is an $h$-design then it is also an \newline $\left(k-1, t \left(n-m \right)+h\right)$-subspace design in $\F_{q^m}^k$.
\end{corollary}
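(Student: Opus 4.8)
The plan is to reduce the statement to the established dictionary between $h$-designs, MSRD codes, and hyperplane intersections, after observing that the lower bound holds for every system and hence carries no information. I would first dispose of the left inequality, which needs no hypothesis on $\U$: viewing $\F_{q^m}^k$ as an $\F_q$-space of dimension $mk$, a hyperplane $H$ has $\dim_{\F_q}(H)=m(k-1)$, so for each $i\in[t]$ the Grassmann formula gives
\[
\dim_{\F_q}(\mathcal{U}_i \cap H) \geq \dim_{\F_q}(\mathcal{U}_i) + \dim_{\F_q}(H) - mk = n + m(k-1) - mk = n-m.
\]
Summing over $i$ yields $\sum_{i=1}^t \dim_{\F_q}(\mathcal{U}_i \cap H) \geq t(n-m)$ for every $H$, so the left inequality always holds and the claimed equivalence reduces to showing that $\U$ is an $h$-design if and only if the upper bound $\sum_{i=1}^t \dim_{\F_q}(\mathcal{U}_i \cap H) \leq t(n-m)+h$ holds for every hyperplane $H$.

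For the upper bound I would pass to an associated code $\C$, i.e. choose $\C$ with $(\mathcal{U}_1,\ldots,\mathcal{U}_t)\in\Psi([\C])$. Since $n=mk/(h+1)$ and $h\leq k-1$ force $n\geq m$, Theorem \ref{Th:sdesignMSRD} applies and shows that $\U$ is an $h$-design if and only if $\C$ is MSRD. It then remains to identify the threshold of Corollary \ref{teo:designMSRD} with $t(n-m)+h$. The key arithmetic identity is $\frac{m}{n}k=h+1$, which follows at once from $n=mk/(h+1)$; substituting it, the threshold $t(n-m)+\frac{m}{n}k-1$ of the regime $m<n$ becomes exactly $t(n-m)+h$, while in the boundary case $h=k-1$ one has $m=n$ and the threshold $k-1$ of the regime $m\geq n_1$ again equals $t(n-m)+h=k-1$. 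Hence in every case Corollary \ref{teo:designMSRD} states that $\C$ is MSRD precisely when $\max_H \sum_{i=1}^t \dim_{\F_q}(\mathcal{U}_i\cap H)\leq t(n-m)+h$; chaining the two equivalences yields the desired characterisation.

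The ``in particular'' clause is then immediate: an $h$-design satisfies the upper bound for every $(k-1)$-dimensional subspace $H$, which is by definition the statement that $\U$ is a $(k-1,\,t(n-m)+h)$-subspace design, nondegeneracy being built into the notion of a system. I do not expect a genuine obstacle; the only point demanding care is the bookkeeping between the two regimes $m<n$ and $m=n$ when invoking Corollary \ref{teo:designMSRD}, which is handled cleanly once the identity $\frac{m}{n}k=h+1$ is recorded.
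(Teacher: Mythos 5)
Your proposal is correct and follows essentially the same route as the paper: translate ``$h$-design'' into ``any associated code is MSRD'' via Theorem \ref{Th:sdesignMSRD}, then invoke the hyperplane-intersection characterisation of MSRD codes in Corollary \ref{teo:designMSRD}, with the identity $\frac{mk}{n}=h+1$ matching the thresholds in both regimes $m<n$ and $m=n$. Your explicit Grassmann-formula argument for the lower bound $t(n-m)$ (which the paper leaves implicit) and the bookkeeping for the boundary case $h=k-1$ are welcome additions, but they do not change the substance of the argument.
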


\begin{proof}
    By Theorem \ref{Th:sdesignMSRD}, we have that $(\mathcal{U}_1,\ldots,\mathcal{U}_t)$ is a $h$-design in $\F_{q^m}^k$ if and only if any code $\C \in \Psi([\mathcal{U}])$ is an MSRD $[\bfn,k]_{q^m/q}$ code. Then the assertion follows by Theorem \ref{teo:designMSRD}.
\end{proof}

{Corollary \ref{cor:numberweightMSRDvector} allows us to completely determine the properties of intersection of an $h$-design with respect to the hyperplanes, extending \cite[Theorem 5.12]{santonastaso2022subspace} to any $1 \leq h \leq k-1$.

\begin{theorem} \label{th:numberhyperplaneshdesign}
    Let $\U=(\mathcal{U}_1,\ldots,\mathcal{U}_t)$ be an $h$-design $[\bfn,k]_{q^m/q}$ system, with $n=\frac{mk}{h+1}$ and $h \leq k-1$. For any $j \in \{0,\ldots,h\}$, the number of hyperplanes of $\F_{q^m}^k$ such that
\begin{equation} \label{eq:intersectionhyperhdesign}
 \sum_{i=1}^t \dim_{\F_q}(\mathcal{U}_i \cap H) = t \left(n-m \right)+j,
\end{equation}
is equal to \[g_j=\frac{W_{tm-j}\left(m,\frac{mk}{h+1},t\right)}{q^m-1},\]
where $W_{tm-j}\left(m,\frac{mk}{h+1},t\right)$ is defined as in \eqref{eq:numberweightrMSRD}.
In particular, $g_j>0$, for each $j$.
\end{theorem}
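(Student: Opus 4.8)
The plan is to pass to an MSRD code associated with $\U$ and then translate the intersection count into a weight count via Theorem \ref{th:connection}. First I would fix a representative $\C$ with $[\C]=\Phi([\U])$, taking a generator matrix $G=(G_1\mid\ldots\mid G_t)$ whose columns of each $G_i$ form an $\F_q$-basis of $\mathcal{U}_i$, so that $\U\in\Psi([\C])$. Since $\U$ is an $h$-design with $n=\frac{mk}{h+1}\geq m$, Theorem \ref{Th:sdesignMSRD} guarantees that $\C$ is an MSRD $[\bfn,k,d]_{q^m/q}$ code, and tracing through its proof gives $d=tm-h$.

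Next I would invoke Theorem \ref{th:connection}: for every nonzero $v\in\F_{q^m}^k$,
\[
\ww(vG)=tn-\sum_{i=1}^t\dim_{\F_q}(\mathcal{U}_i\cap v^\perp),
\]
where $v^\perp$ is the hyperplane orthogonal to $v$ under the fixed bilinear form. The assignment $v\mapsto v^\perp$ descends to a bijection between the points $\langle v\rangle_{\F_{q^m}}$ of $\PG(k-1,q^m)$ and the hyperplanes of $\F_{q^m}^k$; moreover, because $\rk_q(\lambda x_i)=\rk_q(x_i)$ for every $\lambda\in\F_{q^m}^*$, the weight $\ww(vG)$ is constant on each punctured line $\langle v\rangle_{\F_{q^m}}\setminus\{0\}$ and hence depends only on $v^\perp$. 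Consequently a hyperplane $H=v^\perp$ satisfies \eqref{eq:intersectionhyperhdesign} with value $t(n-m)+j$ precisely when $\ww(vG)=tn-(t(n-m)+j)=tm-j$.

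Then I would carry out the count. Each projective point $\langle v\rangle_{\F_{q^m}}$ carries exactly $q^m-1$ nonzero representatives, all of the same sum-rank weight, and distinct points give distinct hyperplanes; therefore the number of hyperplanes satisfying \eqref{eq:intersectionhyperhdesign} equals $W_{tm-j}(\C)/(q^m-1)$. Since $n\geq m$ forces $\min\{m,n\}=m$ and $\max\{m,n\}=n=\frac{mk}{h+1}$, Theorem \ref{th:weightdistrMSRD} yields $W_{tm-j}(\C)=W_{tm-j}\!\left(m,\frac{mk}{h+1},t\right)$ as defined in \eqref{eq:numberweightrMSRD}, which is exactly $g_j$.

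Finally, for the positivity I would observe that as $j$ ranges over $\{0,\ldots,h\}$ the weight $tm-j$ ranges over $\{d,\ldots,tm\}=\{d,\ldots,d+h\}$, and Corollary \ref{cor:numberweightMSRDvector} gives $W_r(\C)>0$ for every $r$ in this range, whence $g_j>0$ for each $j$. The only points demanding care are the scalar-invariance of the sum-rank weight, which is what legitimises dividing by $q^m-1$, and the correct identification of $n'=m$ and $m'=n$ when reading off the weight-distribution formula; everything else is a direct transcription of Theorems \ref{th:connection}, \ref{Th:sdesignMSRD} and \ref{th:weightdistrMSRD}, so no genuine obstacle arises beyond this bookkeeping.
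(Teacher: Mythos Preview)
Your argument is correct and follows exactly the route of the paper: pass to an associated MSRD code via Theorem \ref{Th:sdesignMSRD} (with $d=tm-h$), convert the hyperplane intersection condition into a sum-rank weight condition via Theorem \ref{th:connection}, read off the count as $W_{tm-j}(\C)/(q^m-1)$ using Theorem \ref{th:weightdistrMSRD}, and conclude positivity from Corollary \ref{cor:numberweightMSRDvector}. Your write-up is simply a more explicit unpacking of the same steps the paper sketches in a few lines.
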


\begin{proof}
Let $\C \in \Phi([\U])$. By Theorem \ref{Th:sdesignMSRD}, we know that $\C$ is an MSRD $\Fmnkd$ code with $d=tm-h$. Using Theorem \ref{th:connection}, we get that the number of hyperplanes satisfying \eqref{eq:intersectionhyperhdesign} is $\frac{W_r(\C)}{q^m-1}$. From Theorem \ref{th:weightdistrMSRD} and Corollary \ref{prop:numberweightMSRD}, we have the assertion.
\end{proof}

\subsection{Geometric dual of subspace design and sum-rank metric codes}

We can recall also another notion of dual of a subspace design, i.e. the geometric dual subspace design.
For a nondegenerate reflexive sesquilinear form $\sigma \colon V \times V \rightarrow \F_{q^m}$ on $V(k,q^m)$  we can consider \[
\begin{array}{cccc}
    \sigma': & V \times V & \longrightarrow & \F_q  \\
     & (x,y) & \longmapsto & \mathrm{Tr}_{q^m/q} (\sigma(x,y)).
\end{array}
\] 
It results that $\sigma'$ is a nondegenerate reflexive sesquilinear form on $V(k,q^m)=V(km,q)$ seen as an $\fq$-vector space of dimension $km$. This allows to consider $\perp$ and $\perp'$ as the orthogonal complement maps defined by $\sigma$ and $\sigma'$, respectively. Moreover, it can be proved that for any $\F_{q^m}$-subspace of $V$ holds $W^{\perp'}=W^{\perp}$.
This leads to the following property. For an $\F_{q^m}$-subspace $W$ of $V$ and an $\F_q$-subspace $U$ of $V$, we have 
\begin{equation}\label{eq:dualord}
\dim_{\fq}(U^{\perp'}\cap W^{\perp})=\dim_{\fq}(U\cap W)+\dim_{\fq}(V)-\dim_{\fq}(U)-\dim_{\fq}(W).
\end{equation}
For more details see \cite[section 7]{taylor1992geometry} and \cite[Property 2.6]{polverino2010linear}.

Let consider an ordered set $(\U_1,\ldots,\U_t)$ of $\F_q$-subspaces in $V=V(k,q^m)$.
The ordered set $(\U_1^{\perp'},\ldots,\U_t^{\perp'})$ will be called the \textbf{dual subspaces} (w.r.t. $\perp'$) of $(\U_1,\ldots,\U_t)$.
Furthermore, note that the dual subspaces of $(\U_1^{\perp'},\ldots,\U_t^{\perp'})$ coincide with the ordered set $(\U_1,\ldots,\U_t)$.

From Equation \eqref{eq:dualord}, one can get the following result.

\begin{theorem} [see \textnormal{\cite[Theorem 7.1]{santonastaso2022subspace}}] \label{th:orddual}
Suppose that $(\U_1,\ldots,\U_t)$ is an $(h,r)$-subspace design in $V=V(k,q^m)$, with $\dim_{\fq} (\U_i) =n_i$ for any $i$ and suppose that $\langle \U_1^{\perp'},\ldots,\U_t^{\perp'}\rangle_{\F_{q^m}}=V$.
Then the dual subspaces $(\U_1^{\perp'},\ldots,\U_t^{\perp'})$ of $(\U_1,\ldots,\U_t)$ is an $(k-h,r+t(k-h)m-\sum_{i=1}^t n_i)$-subspace design where $\dim_{\fq} (\U_i^{\perp'})=mk-n_i$. 
\end{theorem}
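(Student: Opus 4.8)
The plan is to reduce everything to the single dimension identity \eqref{eq:dualord} and apply it to each element $\U_i$ against an arbitrary $\F_{q^m}$-subspace $W$ of $V$. The key observation is that the geometric dual operation $\perp'$ sends $\F_{q^m}$-subspaces to $\F_{q^m}$-subspaces (since $W^{\perp'}=W^{\perp}$), and that it is an inclusion-reversing bijection on the lattice of $\F_{q^m}$-subspaces of $V$. Thus $W \mapsto W^{\perp'}$ gives a bijection between $h$-dimensional $\F_{q^m}$-subspaces and $(k-h)$-dimensional $\F_{q^m}$-subspaces of $V$. This is what will convert the $(h,r)$-condition into a condition on $(k-h)$-dimensional subspaces, as required for the conclusion.

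First I would fix an arbitrary $\F_{q^m}$-subspace $W'$ of $V$ with $\dim_{\F_{q^m}}(W')=k-h$, and write $W'=W^{\perp'}$ where $W:=W'^{\perp'}$ has $\dim_{\F_{q^m}}(W)=h$. Applying \eqref{eq:dualord} with $U=\U_i$ to each index $i$ gives
\[
\dim_{\fq}(\U_i^{\perp'}\cap W') = \dim_{\fq}(\U_i \cap W) + km - n_i - hm,
\]
using $\dim_{\fq}(V)=km$ and $\dim_{\fq}(W)=hm$. Summing over $i \in [t]$ yields
\[
\sum_{i=1}^t \dim_{\fq}(\U_i^{\perp'}\cap W') = \sum_{i=1}^t \dim_{\fq}(\U_i \cap W) + t(km-hm) - \sum_{i=1}^t n_i.
\]
Now I would invoke the hypothesis that $(\U_1,\ldots,\U_t)$ is an $(h,r)$-subspace design, which bounds the first sum on the right by $r$ (since $\dim_{\F_{q^m}}(W)=h$). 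This gives the bound
\[
\sum_{i=1}^t \dim_{\fq}(\U_i^{\perp'}\cap W') \leq r + t(k-h)m - \sum_{i=1}^t n_i,
\]
valid for every $(k-h)$-dimensional $\F_{q^m}$-subspace $W'$, which is precisely the required intersection bound for a $(k-h, r+t(k-h)m-\sum_{i=1}^t n_i)$-subspace design.

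The remaining two points are the dimension count $\dim_{\fq}(\U_i^{\perp'})=mk-n_i$, which is immediate from the fact that $\perp'$ is the orthogonal complement for a nondegenerate form on the $mk$-dimensional $\fq$-space $V$, and the nondegeneracy of the dual tuple, i.e. $\langle \U_1^{\perp'},\ldots,\U_t^{\perp'}\rangle_{\F_{q^m}}=V$, which is assumed in the statement. I expect the main (though modest) obstacle to be bookkeeping: namely being careful that $\perp'$ really does restrict to a bijection between $\F_{q^m}$-subspaces of complementary $\F_{q^m}$-dimension, so that ranging $W$ over all $h$-dimensional $\F_{q^m}$-subspaces is equivalent to ranging $W'=W^{\perp'}$ over all $(k-h)$-dimensional ones. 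This rests on the stated fact $W^{\perp'}=W^{\perp}$, which guarantees $W^{\perp'}$ is again $\F_{q^m}$-linear and has the correct $\F_{q^m}$-dimension, so no subspace of the target dimension is missed.
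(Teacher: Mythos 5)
Your proposal is correct and follows exactly the route the paper indicates: the paper states this result follows "from Equation \eqref{eq:dualord}", and your argument — using the inclusion-reversing bijection $W \mapsto W^{\perp'}$ between $h$-dimensional and $(k-h)$-dimensional $\F_{q^m}$-subspaces, applying \eqref{eq:dualord} to each $\U_i$, and summing — is precisely that intended derivation. No gaps; the bookkeeping on dimensions and the use of $W^{\perp'}=W^{\perp}$ for $\F_{q^m}$-subspaces are handled correctly.
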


\begin{remark} \label{rk:geogenrank}
Note that if $(\U_1,\ldots,\U_t)$ is an $\Fmnk$ system then the dual subspaces $(\U_1^{\perp'},\ldots,\U_t^{\perp'})$ form an $[\bfn',k]_{q^m/q}$ system, where $\bfn'=(km-n_1,\ldots,km-n_t)$, if the condition $\langle \U_1^{\perp'},\ldots,\U_t^{\perp'}\rangle_{\F_{q^m}}=\F_{q^m}^k$ holds.
This condition is equivalent to the fact that $\langle \U_1^{\perp'},\ldots,\U_t^{\perp'}\rangle_{\F_{q^m}}$ is not contained in any hyperplane $H$ of $\F_{q^m}^k$, which dually corresponds to 
\[
H^{\perp} \not \subseteq \U_1 \cap \ldots \cap  \U_t,
\]
for every hyperplane $H$ of $\F_{q^m}^k$.
By Proposition \ref{prop:geogeneralizedweights}, in term of codes this is equivalent to require that \[d_{k-1}^{\mathrm{rk}}(\C)\geq N-tm+1.\]
For more details, see \cite[Remark 3.3]{borello2023geometric}.
\end{remark}

If $\U=(\U_1,\ldots,\U_t)$ is an $\Fmnk$ system such that $\U_1 \cap \ldots \cap \U_t $ does not contain any $1$-dimensional $\F_{q^m}$-subspace of $\F_{q^m}^k$, then we call the dual subspaces $(\U_1^{\perp'},\ldots,\U_t^{\perp'})$ of $\U$ \textbf{the geometric dual system} of $\U$.
The notion of geometric dual of a system leads to an operation on sum-rank metric codes called \textbf{geometric dual}, that to any element in $\mathfrak{C}[\bfn,k]_{q^m/q}$ gives another element in $\mathfrak{C}[\bfn',k]_{q^m/q}$, where $\bfn'=(km-n_1,\ldots,km-n_t)$, by preserving the dimension of the code. Taking this in mind, Remark \ref{rk:geogenrank}, we recall the following definition, see \cite{borello2023geometric}.
\begin{definition}
Let $\C$ be a non-degenerate $\Fmnk$ code and let $\U \in \Psi([\C])$.
Suppose also that $\mathrm{d}^{\mathrm{srk}}_{k-1} (\C)\geq N-tm+1$. Then a \textbf{geometric dual} $\C^{\perp_{\mathcal{G}}}$ of $\C$ (with respect to $\perp'$, where $\perp'$ is defined as above) is defined as a code $\C'$, where $\C'$ is any code associated with a dual system $\U^{\perp'}$ of $\U$.
\end{definition}

This definition is also justified by the following result.

\begin{theorem} [see \textnormal{\cite[Theorem 3.4]{borello2023geometric}}]
Let $\C$ be an $[\bfn,k,d]_{q^m/q}$ code such that $\ds_{k-1}(\C)\geq N-tm+1$, and let $\U \in \Psi([\C])$.
Then, up to equivalence, a geometric dual $\C^{\perp_{\mathcal{G}}}$ of $\C$ does not depend on the choice of the associated system and on the choice of code in $[\C]$, hence $\perp_{\mathcal{G}}$ is well-defined.
Moreover, $[\C^{\perp_{\mathcal{G}}}] \in \mathfrak{C}[\bfn',k]_{q^m/q}$, where $\bfn'=(km-n_1,\ldots,km-n_t)$.
\end{theorem}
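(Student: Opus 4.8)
The plan is to reduce everything to a single compatibility statement: that passing to dual subspaces $\perp'$ is a well-defined operation on \emph{equivalence classes} of systems. First I would dispose of the structural (``Moreover'') part. By Theorem \ref{th:orddual} each $\U_i^{\perp'}$ has $\F_q$-dimension $km-n_i$, so if the dual subspaces form a genuine system its length vector is $\bfn'=(km-n_1,\ldots,km-n_t)$ with the same $k$. The hypothesis $\ds_{k-1}(\C)\geq N-tm+1$ is, by Remark \ref{rk:geogenrank}, exactly the condition $\langle \U_1^{\perp'},\ldots,\U_t^{\perp'}\rangle_{\F_{q^m}}=\F_{q^m}^k$; hence $\U^{\perp'}$ is a $[\bfn',k]_{q^m/q}$ system, and any code associated with it is nondegenerate via the correspondence $\Psi,\Phi$. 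This already places $[\C^{\perp_{\mathcal{G}}}]$ in $\mathfrak{C}[\bfn',k]_{q^m/q}$.

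For well-definedness I would argue in two stages. The choice of code inside $[\C]$ is harmless because equivalent codes share the same associated class of systems $\Psi([\C])=[\U]$, and for a fixed representative $\U$ the associated dual code $\Phi([\U^{\perp'}])$ is determined up to code equivalence. Thus the statement collapses to: if $\U=(\U_1,\ldots,\U_t)$ and $\cV=(\cV_1,\ldots,\cV_t)$ are equivalent systems, then $(\U_1^{\perp'},\ldots,\U_t^{\perp'})$ and $(\cV_1^{\perp'},\ldots,\cV_t^{\perp'})$ are equivalent. Write the equivalence as $\U_i A=a_i\cV_{\sigma(i)}$ with $A\in\GL(k,\F_{q^m})$, $a_i\in\F_{q^m}^*$, $\sigma\in\mathcal{S}_t$, and apply $\perp'$ to both sides. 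The key computation is that, because $\sigma'(x,y)=\Tr_{q^m/q}(\sigma(x,y))$ and $A$ is $\F_{q^m}$-linear, the $\sigma'$-adjoint of the map $v\mapsto vA$ coincides with its $\sigma$-adjoint $A^{\star}\in\GL(k,\F_{q^m})$; from $\sigma(xA,y)=\sigma(x,yA^{\star})$ one obtains
\[
(UA)^{\perp'}=U^{\perp'}(A^{\star})^{-1}\qquad\text{and}\qquad (aU)^{\perp'}=a^{\ast}U^{\perp'}
\]
for every $\F_q$-subspace $U$, where $a^{\ast}\in\F_{q^m}^*$ is the scalar produced by moving $a$ across the form. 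Substituting gives $\U_i^{\perp'}(A^{\star})^{-1}=a_i^{\ast}\,\cV_{\sigma(i)}^{\perp'}$, which is precisely a system equivalence between the two dual systems, realised by the data $\bigl((A^{\star})^{-1},(a_i^{\ast})_i,\sigma\bigr)$.

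The main obstacle is this adjoint identity, and in particular checking that the transformed data remains inside the group of admissible system equivalences. The subtle point is that a general nondegenerate reflexive sesquilinear form may be Hermitian, so that $A^{\star}$ involves conjugating the entries of $A$; I would stress that this still leaves $v\mapsto v(A^{\star})^{-1}$ an honest $\F_{q^m}$-\emph{linear} map (the conjugation acts on the fixed matrix, not on the vector argument), so the resulting transformation is a legitimate element of $\GL(k,\F_{q^m})$ and the scalars $a_i^{\ast}$ remain in $\F_{q^m}^*$. Once this is in place, combining the two stages shows that $[\C^{\perp_{\mathcal{G}}}]$ depends only on $[\C]$, so $\perp_{\mathcal{G}}$ is well-defined as a map $\mathfrak{C}[\bfn,k]_{q^m/q}\to\mathfrak{C}[\bfn',k]_{q^m/q}$.
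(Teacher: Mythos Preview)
The paper does not supply its own proof of this theorem; it is quoted from \cite[Theorem 3.4]{borello2023geometric} without argument, so there is nothing here to compare your proposal against directly.

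That said, your argument is sound and is the natural one. Reducing well-definedness to the statement that $\U\mapsto\U^{\perp'}$ respects equivalence of systems is exactly right, and your adjoint computation $(\U A)^{\perp'}=\U^{\perp'}(A^{\star})^{-1}$, $(a\U)^{\perp'}=a^{\ast}\U^{\perp'}$ is correct; in the bilinear case one simply has $a^{\ast}=a^{-1}$, and in the Hermitian case $a^{\ast}=\theta(a)^{-1}$ for the companion automorphism $\theta$, so in either case $a^{\ast}\in\F_{q^m}^*$. Your remark that $v\mapsto v(A^{\star})^{-1}$ remains $\F_{q^m}$-linear even in the Hermitian setting is the only point that needs any care, and you handle it correctly. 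The ``Moreover'' clause is immediate from Theorem~\ref{th:orddual} and Remark~\ref{rk:geogenrank}, as you indicate.
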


First we can characterize $h$-design using geometric property of their geometric dual.

\begin{proposition} \label{prop:geometricdualhdesign}
    Let $\U=(\U_1,\ldots,\U_t)$ an ordered set of $\F_q$-subspaces of $V=V(k,q^m)$ such that $\dim_{\F_q}(\U_i)=\frac{mk}{h+1}$, for each $i$ and $\U_1 \cap \ldots \cap \U_t $ does not contain any $1$-dimensional $\F_{q^m}$-subspace of $\F_{q^m}^k$. Then $\U$ is an $h$-design if and only if the dual subspaces $(\U_1^{\perp'},\ldots,\U_t^{\perp'})$ of $\U$ form an $(1,h)$-subspace design such that $\dim_{\F_q}(\U_i^{\perp'})=\frac{hmk}{h+1}$, for each $i$.
\end{proposition}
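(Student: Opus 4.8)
The plan is to reduce the claim to the geometric dual theorem (Theorem \ref{th:orddual}) together with the dimension bookkeeping recorded in Equation \eqref{eq:dualord}. First I would fix notation: set $V=V(k,q^m)$, so that $\dim_{\fq}(V)=mk$, and assume throughout that each $\dim_{\fq}(\U_i)=\frac{mk}{h+1}$ and that $\U_1 \cap \ldots \cap \U_t$ contains no $1$-dimensional $\F_{q^m}$-subspace, which is exactly the hypothesis guaranteeing that the geometric dual is well-defined (the condition $\langle \U_1^{\perp'},\ldots,\U_t^{\perp'}\rangle_{\F_{q^m}}=V$ of Theorem \ref{th:orddual}, via Remark \ref{rk:geogenrank}). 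The first bookkeeping observation is that $\dim_{\fq}(\U_i^{\perp'})=mk-\frac{mk}{h+1}=\frac{hmk}{h+1}$ for each $i$, which already establishes the claimed dimension of the dual subspaces in both directions and is independent of the design property.

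Next I would prove the forward implication. Assuming $\U$ is an $h$-design, i.e. an $(h,h)$-subspace design, I apply Theorem \ref{th:orddual} with $r=h$ and $n_i=\frac{mk}{h+1}$. The theorem tells me that $(\U_1^{\perp'},\ldots,\U_t^{\perp'})$ is a $(k-h, r')$-subspace design with second parameter $r'=h+t(k-h)m-\sum_{i=1}^t \frac{mk}{h+1}=h+t(k-h)m-\frac{tmk}{h+1}$. The point here, however, is that I want a $(1,h)$-design on the dual side, so I must instead invoke Theorem \ref{th:orddual} in the shape that produces an $h$-subspace on $W$ of dimension $1$. The cleaner route is to read off the intersection dimensions directly: for a $1$-dimensional $\F_{q^m}$-subspace $P$ of $V$, apply \eqref{eq:dualord} with $U=\U_i$ and $W=P^{\perp}$ (an $(k-1)$-dimensional $\F_{q^m}$-space, hence $\dim_{\fq}(W)=(k-1)m$), giving
\[
\dim_{\fq}(\U_i^{\perp'}\cap P)=\dim_{\fq}(\U_i\cap P^{\perp})+mk-\frac{mk}{h+1}-(k-1)m.
\]
Summing over $i$ and using that $\U$ is an $(k-1,\,t(n-m)+h)$-subspace design (the ``in particular'' part of Corollary \ref{Cor:hdesignhyper}, valid since $\U$ is an $h$-design), the right-hand side collapses: the constant terms cancel against $t(n-m)$ with $n=\frac{mk}{h+1}$, leaving $\sum_i \dim_{\fq}(\U_i^{\perp'}\cap P)\leq h$ for every line $P$. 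Since a $1$-design has this sum equal to $1$ at worst, the bound $\le h$ is exactly the $(1,h)$-subspace design condition.

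For the converse I would run the same dimension identity \eqref{eq:dualord} in reverse. Assuming $(\U_1^{\perp'},\ldots,\U_t^{\perp'})$ is a $(1,h)$-subspace design with the stated dimensions, applying the same substitution shows $\sum_{i=1}^t\dim_{\fq}(\U_i\cap H)\leq t(n-m)+h$ for every hyperplane $H$ of $\F_{q^m}^k$, and the lower bound $\sum_i\dim_{\fq}(\U_i\cap H)\geq t(n-m)$ holds automatically from $\dim_{\fq}(\U_i\cap H)\geq \dim_{\fq}(\U_i)-m=n-m$. By Corollary \ref{Cor:hdesignhyper} this pair of inequalities is equivalent to $\U$ being an $h$-design, completing the equivalence. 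The main obstacle I anticipate is getting the arithmetic in \eqref{eq:dualord} exactly right and confirming that the nondegeneracy hypothesis ($\U_1\cap\ldots\cap\U_t$ containing no $\F_{q^m}$-line) is precisely what licenses both the existence of the geometric dual and the symmetric reuse of the identity in the converse direction; once the constants are verified to cancel as claimed, the rest is a direct translation through Corollary \ref{Cor:hdesignhyper}.
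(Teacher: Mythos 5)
Your proposal is correct and follows essentially the same route as the paper: both directions rest on Corollary \ref{Cor:hdesignhyper} (the $h$-design property is equivalent to the hyperplane bounds $t(n-m)\le \sum_{i=1}^t\dim_{\fq}(\U_i\cap H)\le t(n-m)+h$) followed by dualizing hyperplanes to points. The only difference is cosmetic: the paper performs the dualization step by citing Theorem \ref{th:orddual} applied to the $(k-1,\,t(n-m)+h)$-subspace design, whereas you inline that step by computing directly with Equation \eqref{eq:dualord} (after correctly observing that applying Theorem \ref{th:orddual} to the $(h,h)$-design itself gives the wrong shape) --- which is precisely how Theorem \ref{th:orddual} is itself proved.
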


\begin{proof}
    By Corollary \ref{Cor:hdesignhyper} $\U$ is an $h$-design if and only if it is an $\left(k-1, t \left(n-m \right)+h\right)$-subspace design. Then the assertion follows by Theorem \ref{th:orddual}.
\end{proof}

\begin{remark} \label{rk:dualstrictlyhdesign}
 Note that, when $\U_1,\ldots,\U_t$ is a maximum $h$-design and so the dual subspaces $(\U_1^{\perp'},\ldots,\U_t^{\perp'})$ of $\U$ form an $(1,h)$-subspace design such that $\dim_{\F_q}(\U_i^{\perp'})=\frac{hmk}{h+1}$, for each $i$ by Theorem \ref{th:numberhyperplaneshdesign}, we have also that, for any $j \in \{0,\ldots,h\}$, there exists at least $1$-dimension $\F_{q^m}$-subspace $\langle v \rangle_{\F_{q^m}} $such that $\sum_{i=1}^t \dim_{\F_q}(\U_i \cap \langle v \rangle_{\F_{q^m}})=j$. In particular, each of the $\U_i^{\perp'}$ is a $(\mathcal{D},h)$-scattered, where $\mathcal{D}$ is the Desarguesian spread; see \cite{gruica2022generalised}.
\end{remark}

In terms of code, we have a characterization of MSRD codes in terms of their geometric dual. First we observe that we can always deal with a geometric dual of an MSRD code.

\begin{lemma}
    Let $\C$ be a nondegenerate $\Fmnkd$ code. If $\C$ is an MSRD code then $\ds_{k-1}(\C) \geq N-1 \geq N-tm+1$. 
\end{lemma}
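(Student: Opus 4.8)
The plan is to compute $\ds_{k-1}(\C)$ geometrically and reduce the claimed inequality to a $1$-design property of an associated system. Fix $\mathcal{U}=(\mathcal{U}_1,\ldots,\mathcal{U}_t)\in\Psi([\C])$. Applying Proposition \ref{prop:geogeneralizedweights} with $r=k-1$, the relevant subspaces have dimension $k-(k-1)=1$, so
\[
\ds_{k-1}(\C)= N-\max\left\{\sum_{i=1}^t\dim_{\F_q}(\mathcal{U}_i\cap\langle v\rangle_{\F_{q^m}})\st v\in\F_{q^m}^k\setminus\{0\}\right\}.
\]
Hence $\ds_{k-1}(\C)\geq N-1$ is \emph{equivalent} to the assertion that $\sum_{i=1}^t\dim_{\F_q}(\mathcal{U}_i\cap W)\leq 1$ for every $1$-dimensional $\F_{q^m}$-subspace $W$, that is, to $\mathcal{U}$ being a $1$-design. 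I would then assume $k\geq 2$, the value $\ds_{k-1}$ being degenerate for $k=1$.

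First I would show that the system associated to an MSRD code is always at least a $1$-design, distinguishing the two regimes in which the MSRD property is characterised geometrically. If $m\geq n_1$, then by the characterisation stated right after Corollary \ref{teo:designMSRD} the system $\mathcal{U}$ is a $(k-1)$-design; since $k\geq 2$, Remark \ref{rk:hdesimpliesh'} shows it is also a $1$-design. If instead $n:=n_1=\ldots=n_t\geq m$, then by Theorem \ref{Th:sdesignMSRD} the system $\mathcal{U}$ is an $h$-design with $h$ a positive integer, so $h\geq 1$ and again Remark \ref{rk:hdesimpliesh'} gives that $\mathcal{U}$ is a $1$-design. In either case the maximum in the display above is at most $1$, whence $\ds_{k-1}(\C)\geq N-1$; the final inequality $N-1\geq N-tm+1$ is just $tm\geq 2$, which holds in our setting, concluding the argument.

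An alternative route bypasses the geometric description and uses Wei duality (Theorem \ref{th:weitype}). Since $\C$ is MSRD, so is $\C^\perp$ by Theorem \ref{th:dualMSRD}, and computing its minimum distance from the same theorem gives $d(\C^\perp)=k+1$ when $m\geq n_1$ and $d(\C^\perp)=\tfrac{mk}{n}+1=h+2$ when $n\geq m$; in either case $d(\C^\perp)\geq 3$ for $k\geq 2$. By Proposition \ref{prop:monotonweight} this forces $2\notin\{\ds_j(\C^\perp)\st 1\leq j\leq N-k\}$, so $N-1=N+1-2$ lies in $\{\ds_i(\C)\st 1\leq i\leq k\}$; combined with the strict monotonicity and $\ds_k(\C)=N$, this yields $\ds_{k-1}(\C)=N-1$. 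I expect the only genuinely delicate point in either approach to be securing the strict inequality $d(\C^\perp)\geq 3$ (equivalently, the $1$-design property): this is exactly where the MSRD hypothesis is indispensable, since nondegeneracy alone only guarantees $d(\C^\perp)\geq 2$ by Proposition \ref{prop:charact_nondegenarate}, which would leave open the possibility $\ds_{k-1}(\C)=N-2$.
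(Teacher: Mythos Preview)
Your first approach is essentially the paper's own proof. The paper handles the two regimes in the same way: for $n>m$ it invokes Theorem \ref{Th:sdesignMSRD} to get an $h$-design, then Remark \ref{rk:hdesimpliesh'} to descend to a $1$-design, and concludes via Proposition \ref{prop:geogeneralizedweights}; for $m\geq n$ it simply says the statement ``clearly holds'', and your explicit use of the $(k-1)$-design characterisation is precisely what fills that in.

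Your second route via Wei duality is a genuinely different argument, and in one respect a sharper one: it yields the equality $\ds_{k-1}(\C)=N-1$ rather than just the inequality. Both approaches ultimately hinge on $d(\C^\perp)\geq 3$ (equivalently, the $1$-design property, cf.\ Theorem \ref{th:connectiondesignweight}), but the Wei-duality version reaches the conclusion without passing through the geometric description of the associated system, at the modest cost of invoking the partition in Theorem \ref{th:weitype}. The paper's approach has the advantage of staying within the geometric framework it is developing, which is natural given that the lemma is there precisely to justify taking the geometric dual.
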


\begin{proof}
    The statement clearly holds if $m \geq n$. So, suppose that $n >m$. Since $\C$ is an MSRD code, by Theorem \ref{Th:sdesignMSRD}, we have that a system $(\mathcal{U}_1,\ldots,\mathcal{U}_t)$ associated with $\C$ is an $h$-design $[\bfn,k]_{q^m/q}$ system. This implies that $(\mathcal{U}_1,\ldots,\mathcal{U}_t)$ is also an $1$-design as well, see Remark \ref{rk:hdesimpliesh'}. Using Proposition \ref{prop:geogeneralizedweights}, this means that $d_{k-1}(\C) \geq N-1 \geq N-tm+1$.
\end{proof}

\begin{proposition} %\label{prop:charactMSRDgeometricdual}
Let $\C$ be a nondegenerate $\Fmnkd$ code. Assume that $n \geq m$. Let $(\mathcal{U}_1,\ldots,\mathcal{U}_t) \in \Psi([\C])$. Then $\C$ is an MSRD code if and only if an its geometric dual $\C^{\perp_{\mathcal{G}}}$ is an $\left[\left(\frac{hmk}{h+1},\ldots,\frac{hmk}{h+1} \right),k \right]_{q^m/q}$ code such that $\ds_{k-1}(\C^{\perp_{\mathcal{G}}}) \geq N-h$.
\end{proposition}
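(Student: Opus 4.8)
The plan is to characterize the MSRD property of $\C$ through Theorem \ref{Th:sdesignMSRD}, which tells us that (under the hypothesis $n \geq m$) the code $\C$ is MSRD if and only if an associated system $\U=(\mathcal{U}_1,\ldots,\mathcal{U}_t)$ is an $h$-design with $n=\frac{mk}{h+1}$ and $h+1 \mid mk$, $h \leq k-1$. So the whole statement reduces to translating the $h$-design property of $\U$ into a property of its geometric dual $\C^{\perp_{\mathcal{G}}}$, which by definition is any code associated to the dual system $(\mathcal{U}_1^{\perp'},\ldots,\mathcal{U}_t^{\perp'})$. The preceding lemma guarantees $\ds_{k-1}(\C) \geq N-tm+1$, so the geometric dual is well-defined whenever $\C$ is MSRD, and thus the construction makes sense in at least one direction.

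First I would establish the block-length parameters of the geometric dual. By the theorem preceding this statement, $[\C^{\perp_{\mathcal{G}}}] \in \mathfrak{C}[\bfn',k]_{q^m/q}$ with $\bfn'=(km-n_1,\ldots,km-n_t)$. Using $n_i = n = \frac{mk}{h+1}$ for all $i$, each entry becomes $km - \frac{mk}{h+1} = \frac{mk(h+1)-mk}{h+1} = \frac{hmk}{h+1}$, which is exactly the claimed block length $\frac{hmk}{h+1}$. So the block-length tuple is correct automatically, independent of the minimum-distance condition.

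Next I would handle the equivalence of the two conditions. The forward direction: if $\C$ is MSRD, then $\U$ is an $h$-design by Theorem \ref{Th:sdesignMSRD}, hence by Proposition \ref{prop:geometricdualhdesign} the dual subspaces $(\mathcal{U}_1^{\perp'},\ldots,\mathcal{U}_t^{\perp'})$ form a $(1,h)$-subspace design; I would then invoke Theorem \ref{th:connectiondesignweight} applied to the geometric dual code $\C^{\perp_{\mathcal{G}}}$ (whose associated system is precisely the dual system) with parameters $h=1$ and $r=h$, giving $\ds_{k-1}(\C^{\perp_{\mathcal{G}}}) \geq N-h$ — noting that for the geometric dual the relevant ambient sum $N'=\sum_i (km-n_i)$ should be checked, but the $(1,h)$-design translates via part (2) of Theorem \ref{th:connectiondesignweight} into $\ds_{k-h'}$ with $h'=1$, i.e. $\ds_{k-1}(\C^{\perp_{\mathcal{G}}}) \geq N'-h$. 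For the converse I would reverse this chain: the condition $\ds_{k-1}(\C^{\perp_{\mathcal{G}}}) \geq N'-h$ together with the correct block lengths forces $(\mathcal{U}_1^{\perp'},\ldots,\mathcal{U}_t^{\perp'})$ to be a $(1,h)$-subspace design via Theorem \ref{th:connectiondesignweight}, and then Proposition \ref{prop:geometricdualhdesign} returns that $\U$ is an $h$-design, whence $\C$ is MSRD by Theorem \ref{Th:sdesignMSRD}.

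The main obstacle I anticipate is bookkeeping the ambient dimension $N$ versus $N'=\sum_i(km-n_i)=t\cdot\frac{hmk}{h+1}$ correctly: the statement writes $\ds_{k-1}(\C^{\perp_{\mathcal{G}}}) \geq N-h$ using the original $N=t\cdot\frac{mk}{h+1}$, so I must verify whether the intended bound is in terms of $N$ or $N'$, and reconcile the $(1,h)$-design translation through Theorem \ref{th:connectiondesignweight} (which gives $N'-h$) with the claimed $N-h$. I would resolve this by carefully tracking which ambient length enters each application of Theorem \ref{th:connectiondesignweight}, checking the hypothesis $\langle \mathcal{U}_1^{\perp'},\ldots,\mathcal{U}_t^{\perp'}\rangle_{\F_{q^m}}=V$ needed for the dual system to span (guaranteed by the preceding lemma and Remark \ref{rk:geogenrank} since $\C$ MSRD gives $\ds_{k-1}(\C)\geq N-tm+1$), and confirming the dimension arithmetic via Equation \eqref{eq:dualord}. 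Everything else is a direct composition of the cited results, so once the $N$-versus-$N'$ indexing is pinned down the proof is essentially mechanical.
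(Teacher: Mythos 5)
Your proposal is correct and essentially reproduces the paper's own proof: the paper likewise combines Theorem \ref{Th:sdesignMSRD}, Proposition \ref{prop:geometricdualhdesign}, and the fact that a system associated with $\C^{\perp_{\mathcal{G}}}$ is a geometric dual of a system associated with $\C$, leaving implicit only the translation (your step via Theorem \ref{th:connectiondesignweight}, equivalently Proposition \ref{prop:geogeneralizedweights}) between the $(1,h)$-subspace design property of the dual subspaces and the bound on $\ds_{k-1}$. Your resolution of the $N$-versus-$N'$ ambiguity is also the intended one: the bound $N-h$ in the statement must be read with respect to the block-length sum $N'=t\cdot\frac{hmk}{h+1}$ of the geometric dual code, exactly as your chain of equivalences produces, since with the original $N=t\cdot\frac{mk}{h+1}$ the condition would be too weak to force the $(1,h)$-design property in the converse direction when $h>1$.
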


\begin{proof}
    By Theorem \ref{Th:sdesignMSRD}, we know that $\C$ is an MSRD code if and only if $n=\frac{mk}{h+1}$, for some positive integer $h \leq k-1$ such that $h+1$ divides $mk$, and $(\mathcal{U}_1,\ldots,\mathcal{U}_t)$ is an $h$-design $[\bfn,k]_{q^m/q}$ system. So using Proposition \ref{prop:geometricdualhdesign} and taking into account that a system associated to a geometric dual is equivalent to the geometric dual of a system associated to $\C$, we get the assertion.
\end{proof}

%{\color{red}John: Remark that the geometric dual is not in general MSRD}

Note that the geometric dual of an MSRD code is not in general an MSRD code, as evidenced by the following proposition.

%{Paolo: see below 
\begin{proposition}
    Let $\C$ be a nondegenerate MSRD $\Fmnkd$ code. Assume that $n \geq m$, with $n=\frac{mk}{h+1}$ for some positive integer $h \leq k-1$ such that $h+1$ divides $mk$.  Then the geometric dual $\C^{\perp_{\mathcal{G}}}$ of $\C$ is an MSRD code if and only if $h=1$.
\end{proposition}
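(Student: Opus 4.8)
The plan is to push everything through Theorem \ref{Th:sdesignMSRD} and Proposition \ref{prop:geometricdualhdesign}, reducing the claim to an elementary divisibility computation. First I would record the parameters of the geometric dual. Since $\C$ is a nondegenerate MSRD code with $n \geq m$ and $n=\frac{mk}{h+1}$, Theorem \ref{Th:sdesignMSRD} guarantees that a system $\U=(\mathcal{U}_1,\ldots,\mathcal{U}_t)\in\Psi([\C])$ is an $h$-design. By definition $\C^{\perp_{\mathcal{G}}}$ is a code associated with the dual system $\U^{\perp'}=(\mathcal{U}_1^{\perp'},\ldots,\mathcal{U}_t^{\perp'})$, whose blocks have $\F_q$-dimension
\[
n' = mk - n = mk - \frac{mk}{h+1} = \frac{hmk}{h+1}.
\]
I would note that $\U^{\perp'}$ spans $\F_{q^m}^k$, so that $\C^{\perp_{\mathcal{G}}}$ is nondegenerate: the preceding lemma gives $\ds_{k-1}(\C)\geq N-tm+1$, which by Remark \ref{rk:geogenrank} is precisely the required spanning condition. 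A one-line check shows $n'\geq m$, since $\frac{hmk}{h+1}\geq m \iff hk \geq h+1$, which holds because $k\geq h+1\geq 2$ and $h\geq 1$. Thus $\C^{\perp_{\mathcal{G}}}$ is a nondegenerate $[\bfn',k]_{q^m/q}$ code with all blocks of equal length $n'\geq m$, so Theorem \ref{Th:sdesignMSRD} is applicable to it.

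For the forward implication I would argue by divisibility. By the forward (necessary) direction of Theorem \ref{Th:sdesignMSRD}, if $\C^{\perp_{\mathcal{G}}}$ is MSRD then its block length $n'$ must divide $mk$. But
\[
\frac{mk}{n'} = \frac{mk}{\tfrac{hmk}{h+1}} = \frac{h+1}{h} = 1 + \frac{1}{h},
\]
which is an integer if and only if $h=1$. Hence for $h\geq 2$ the block length $n'$ does not divide $mk$, and $\C^{\perp_{\mathcal{G}}}$ cannot be MSRD.

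For the converse, when $h=1$ I would verify the hypotheses of Theorem \ref{Th:sdesignMSRD} for $\C^{\perp_{\mathcal{G}}}$ with design parameter $h'=1$: here $n'=\frac{mk}{2}=\frac{mk}{h'+1}$, and Proposition \ref{prop:geometricdualhdesign} (with $h=1$) says exactly that $\U^{\perp'}$ is a $(1,1)$-subspace design, that is, a $1$-design $[\bfn',k]_{q^m/q}$ system. Applying Theorem \ref{Th:sdesignMSRD} in the reverse direction then yields that $\C^{\perp_{\mathcal{G}}}$ is MSRD, completing the equivalence. The computation itself is routine; the only point requiring genuine care is checking that the dual system really meets the standing hypotheses of Theorem \ref{Th:sdesignMSRD} (equal block lengths, each $\geq m$, and spanning), so that the theorem may legitimately be invoked for $\C^{\perp_{\mathcal{G}}}$ rather than merely for $\C$.
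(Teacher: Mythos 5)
Your proof is correct, and half of it diverges from the paper in an instructive way. The backward implication is essentially the paper's own argument: both you and the authors apply Proposition \ref{prop:geometricdualhdesign} with $h=1$ to see that the dual subspaces form a $(1,1)$-subspace design of dimension $\frac{mk}{2}$, i.e.\ a maximum $1$-design, and then invoke Theorem \ref{Th:sdesignMSRD} to conclude that $\C^{\perp_{\mathcal{G}}}$ is MSRD. For the forward implication, however, the paper argues geometrically: if $\C^{\perp_{\mathcal{G}}}$ were MSRD with $h>1$, Theorem \ref{Th:sdesignMSRD} would make the dual system a maximum $h'$-design, hence a $1$-design by Proposition \ref{prop:diminuzione}, contradicting Remark \ref{rk:dualstrictlyhdesign} --- which in turn rests on the weight distribution of MSRD codes (Theorem \ref{th:numberhyperplaneshdesign}) to produce a point of total weight at least $2$ with respect to the dual system. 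You replace all of that machinery by a parameter count: if $\C^{\perp_{\mathcal{G}}}$ is MSRD with equal block lengths $n'\geq m$, the Singleton equality forces $n' \mid mk$, and
\[
\frac{mk}{n'}=\frac{h+1}{h}
\]
is an integer only for $h=1$. Your route is more elementary, needing nothing beyond Theorem \ref{th:Singletonbound} once the well-definedness checks are done (and you do carry these out: the spanning condition via the lemma preceding the proposition together with Remark \ref{rk:geogenrank}, the equality $n'=\frac{hmk}{h+1}$, and $n'\geq m$); it also makes transparent that for $h\geq 2$ the obstruction is purely arithmetic --- no code whatsoever with the dual's parameters can be MSRD, regardless of its structure. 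What the paper's argument buys in exchange is a structural byproduct: it shows that the dual system genuinely fails to be a $1$-design (some point meets the dual subspaces in total weight $\geq 2$), which is finer information than parameter incompatibility and connects to the paper's broader discussion of duals of maximum $h$-designs.
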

\begin{proof}
Let $(\mathcal{U}_1,\ldots,\mathcal{U}_t) \in \Psi([\C])$. By Theorem \ref{teo:designMSRD}, we know that $(\mathcal{U}_1,\ldots,\mathcal{U}_t)$ is a maximum $h$-design $[\bfn,k]_{q^m/q}$ system. Using Proposition \ref{prop:geometricdualhdesign}, we have that the dual subspaces $(\U_1^{\perp'},\ldots,\U_t^{\perp'})$ of $\U$ form an $(1,h)$-subspace design such that $\dim_{\F_q}(\U_i^{\perp'})=\frac{hmk}{h+1}$, for each $i$. So, if $h=1$, then $(\U_1^{\perp'},\ldots,\U_t^{\perp'})$ is a maximum $1$-design and so $\C^{\perp_{\mathcal{G}}}$ of $\C$ is an MSRD code, by Theorem \ref{Th:sdesignMSRD}. While, if $h>1$, then we know
 $\C^{\perp_{\mathcal{G}}}$ is an MSRD if   $(\U_1^{\perp'},\ldots,\U_t^{\perp'})$ is a maximum $h'$-design for some $h'$. In particular, by Proposition \ref{prop:diminuzione}, $(\U_1^{\perp'},\ldots,\U_t^{\perp'})$ would be also a $1$-design, a contradiction to Remark \ref{rk:dualstrictlyhdesign}. 
\end{proof}

\section{Disjoint maximum scattered linear sets and 
maximum 1-design}

In this section, we will provide constructions of maximum $1$-design $[\bfn,k]_{q^m/q}$ system, for any value of $m,k$ such that $mk$ is even. This will solve the problem the problem of existence for optimal sum-rank metric codes having at least two blocks and all the blocks of length $mk/2$.

\medskip 

\subsection{Scattered linear sets.} 

%{\color{red}John: Should these definitions be in the preliminaries? no?}

First, we recall the projective version of the system, i.e. linear sets.
Let $V$ be a $k$-dimensional $\F_{q^m}$-vector space. By $\Lambda=\PG(V,\F_{q^m})=\PG(k-1,q^m)$ we mean the projective space defined by $V$ over $\F_{q^m}$. For $\U $ an $\fq$-subspace of $V$ of dimension $n$, the set of points
\[ L_{\U}=\{\la {u} \ra_{\mathbb{F}_{q^m}} : {u}\in \U\setminus \{{ 0} \}\}\subseteq \Lambda \]
is called an $\fq$-\textbf{linear set of rank $n$}. Note that the rank of a linear set $L_{\U}$
is defined by the underlying subspace $\U$ that defines it. So, we have to keep in mind both the set of points defined by $U$ and the subspace $\U$.

The \textbf{weight} of a subspace $\mathcal{S}=\PG(W,\F_{q^m})\subseteq \Lambda$ in $L_{\U}$ is defined as 
\[ w_{L_U}(\mathcal{S})=\dim_{\fq}(\U\cap W). \]

For an $\F_q$-linear set $L_U$ of rank $n$ we have that
\[
    |L_{\U}| \leq \frac{q^n-1}{q-1}.
\]
When $|L_{\U}|$ satisfies the equality, $L_{\U}$ is called \textbf{1-scattered} (or simply \textbf{scattered}), or equivalently, if all the points of $L_{\U}$ have weight one, i.e. 
\[
w_{L_{\U}}(P) \leq 1,
\]
for every point $P \in \Lambda$.
Note that $L_{\U}$ is scattered if and only if $\U$ is $(1,1)_q$-evasive subspace in $V$, that is a $1$-scattered subspace without the assumption that it spans the entire space.
Finally, a canonical subgeometry is an $\fq$-linear set $L_{\U}$ of $\PG(k-1,q^m)=\PG(V,\F_{q^m})$ having rank $k$ and such that $\langle \U \rangle_{\F_{q^m}}=V$.
In particular, canonical subgeometries are scattered linear sets of rank $k$.
When $m=2$, canonical subgeometries are also called Baer subgeometries.

We refer to \cite{lavrauw2015field} and \cite{polverino2010linear} for comprehensive references on linear sets and their applications.

\medskip

Note that, for an $n$-dimensional 1-scattered subspace in $V(k,q^m)$, and for a scattered linear set of $\PG(k-1,q^m)$ having rank $n$, it holds that $n \leq km/2$. If the equality holds it is called \textbf{maximum 1-scattered subspace} and \textbf{maximum scattered linear set}, respectively.

The existence of maximum scattered linear sets has been proved in \cite{blokhuis2000scattered} for $k$ even, in \cite{bartoli2018maximum} for most of cases, and then the problem was completely solved in \cite{csajbok2017maximum} for the remaining cases.

Here we recall some constructions in the case that $k$ is odd.

\begin{theorem} [see \textnormal{\cite[Theorem 2.2]{bartoli2018maximum}}] \label{th:masxcattkodd1}
Let $\omega \in \F_{q^{2m}} \setminus \F_{q^{m}}$ and $a \in \F_{q^{km}}^*$. Assume $k \geq 3$ odd and define
\[
\U=\{ ax^{q^s}+x\omega \colon x \in  \F_{q^{km}}\} \subseteq \F_{q^{2km}},
\]
for some positive integer $s$.
If 
   \begin{itemize}
       \item $\gcd(m,k)$=1,
       \item $\gcd(s,2m)=1$ and $\gcd(s,km)=r$,
       \item $N_{q^{km}/q^r}(a) \notin \F_q$
   \end{itemize}

then $L_{\U}$ is an $\F_q$-linear set of rank $km$ of $\PG\left(\F_{q^{2km}},\F_{q^{2m}}\right)=\PG(k-1,q^{2m})$ that is maximum $1$-scattered.    
\end{theorem}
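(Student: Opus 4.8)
The plan is to analyze when the $\Fq$-subspace $\U=\{ax^{q^s}+x\omega : x \in \F_{q^{km}}\}$ of $\F_{q^{2km}}$ is scattered as a linear set in $\PG(k-1,q^{2m})$, where $\F_{q^{2km}}$ is viewed as a $k$-dimensional vector space over $\F_{q^{2m}}$. First I would record the basic dimension count: $\U$ has $\F_q$-dimension $km$ (since the map $x \mapsto ax^{q^s}+x\omega$ is $\Fq$-linear and injective — injectivity because $\omega \notin \F_{q^m}$ forces the kernel to be trivial, as $ax^{q^s}=-x\omega$ would put $\omega$ in $\F_{q^m}$ for nonzero $x$). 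This confirms the rank is $km = k(2m)/2$, which is exactly the maximum possible rank $km/2$ in the notation where the field is $\F_{q^{2m}}$ and the dimension is $k$; thus scatteredness will be equivalent to being maximum $1$-scattered.

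The core of the proof is to show $\U$ is scattered, i.e.\ that no point of $\PG(k-1,q^{2m})$ has weight greater than one. Concretely, I would suppose that two elements $ax^{q^s}+x\omega$ and $ay^{q^s}+y\omega$ lie in the same $\F_{q^{2m}}$-subspace $\langle v\rangle_{\F_{q^{2m}}}$, i.e.\ that there exists $\lambda \in \F_{q^{2m}}$ with
\[
a y^{q^s}+y\omega = \lambda\left(a x^{q^s}+x\omega\right).
\]
Writing $\lambda = \lambda_0 + \lambda_1 \omega$ with $\lambda_0,\lambda_1 \in \F_{q^m}$ (using that $\{1,\omega\}$ is an $\F_{q^m}$-basis of $\F_{q^{2m}}$ since $\omega \notin \F_{q^m}$), and using that $a \in \F_{q^{km}}^*$ together with $x,y \in \F_{q^{km}}$, I would separate the relation into its $\F_{q^{km}}$-part and its $\omega\F_{q^{km}}$-part. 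The relation $\omega^2 = b_0 + b_1\omega$ (the minimal polynomial of $\omega$ over $\F_{q^m}$, with coefficients that can be reduced into $\F_{q^{km}}$) lets me turn the single equation into a system of two equations over $\F_{q^{km}}$. Eliminating $y$ should produce a skew-polynomial-type equation in $x$ of the form $c\,x^{q^s} + c' x^{q^{2s}} = (\text{linear in } x)$ with coefficients built from $a$, $a^{q^s}$, $\lambda_0$, $\lambda_1$, and the $b_i$; the number of solutions $x$ is then controlled by the $\F_q$-dimension of the kernel of an $\Fq$-linearized map.

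The main obstacle — and the place where every hypothesis gets used — is showing that this linearized equation forces $\lambda \in \F_q$ (equivalently $\lambda_1 = 0$ and $\lambda_0 \in \F_q$), so that the two points coincide and the weight is one. Here the condition $\gcd(s,2m)=1$ guarantees that $x \mapsto x^{q^s}$ generates $\Gal(\F_{q^{2m}}/\F_q)$ appropriately and that the relevant Dickson/Moore matrix is nonsingular outside the scalar locus; the condition $\gcd(m,k)=1$ is what lets me pass cleanly between the towers $\F_{q^{km}}$, $\F_{q^{2m}}$, and $\F_{q^{2km}}$ and identify intersections of subfields (for instance $\F_{q^{km}}\cap\F_{q^{2m}}=\F_{q^m}$); and the norm condition $\N_{q^{km}/q^r}(a)\notin\F_q$ with $r=\gcd(s,km)$ is precisely the nondegeneracy condition ruling out the sporadic extra solutions, analogous to the classical criterion for $\{ax^{q^s}+x\}$ being scattered in the Gabidulin/twisted setting. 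I would expect the computation to reduce to showing that an equation of the shape $\N_{q^{km}/q^r}(a) = (\text{element of }\F_q)$ must hold for a nontrivial solution to exist, which the hypothesis explicitly forbids; the delicate bookkeeping is tracking how the two Frobenius twists interact across the two field towers, and this is where I would spend the most care rather than in the formal setup.
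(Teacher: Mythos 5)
Your first paragraph is fine: the rank count via injectivity of $x\mapsto ax^{q^s}+x\omega$ (note $\omega\notin\F_{q^{km}}$, and then $\F_{q^{km}}\cap\F_{q^{2m}}=\F_{q^m}$ only needs $k$ odd, not $\gcd(m,k)=1$), and the reduction of scatteredness to showing that $ay^{q^s}+y\omega=\lambda(ax^{q^s}+x\omega)$ with $x,y\in\F_{q^{km}}^*$, $\lambda\in\F_{q^{2m}}$ forces $\lambda\in\F_q$, are both correct and standard. The gap is in your core step. Writing $\lambda=\lambda_0+\lambda_1\omega$, $\omega^2=b_0+b_1\omega$, splitting over the basis $\{1,\omega\}$ and eliminating $y$ yields the single equation
\[
a^{1+q^s}\lambda_1^{q^s}\,x^{q^{2s}}+a\left[(\lambda_0+\lambda_1 b_1)^{q^s}-\lambda_0\right]x^{q^s}-\lambda_1 b_0\, x=0,
\]
a linearized \emph{trinomial} of $q^s$-degree two. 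The mechanism you invoke to finish --- ``a nontrivial solution exists only if a norm of the coefficients lies in $\F_q$'' --- is a Hilbert 90 criterion, and it governs \emph{binomial} equations $z^{q^s-1}=c$ only. For the trinomial above (equivalently, after substituting $w=x^{q^s-1}$, for the projective equation $c_2w^{q^s+1}+c_1w+c_0=0$ of Bluher type) there is no norm criterion for the existence of roots, and root counts behave quite differently ($0$, $1$, $2$ or $q^{\gcd(s,km)}+1$ solutions). Likewise, the Dickson-matrix remark does not help: showing that the associated matrix is nonsingular for every non-scalar pair $(\lambda_0,\lambda_1)$ \emph{is} the statement to be proven, with no gain in tractability. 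So the plan stalls exactly at the point you defer as ``delicate bookkeeping''; it is a missing idea, not bookkeeping.

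The proofs that work --- the cited one in \cite{bartoli2018maximum}, reproduced in all essentials in this paper's proofs of Theorems \ref{th:constructiondisjoint2}, \ref{th:constructiondisjoint3} and \ref{th:disjoint4} (which run the identical computation with extra factors $\mu_i,\mu_j$) --- keep \emph{both} variables $x,y$ and never introduce $\lambda_0,\lambda_1$: one imposes $\lambda^{q^{2m}}=\lambda$ on the ratio $\lambda=(ay^{q^s}+y\omega)/(ax^{q^s}+x\omega)$, cross-multiplies, and splits over $\{1,\omega\}$ as an $\F_{q^{km}}$-basis of $\F_{q^{2km}}$, obtaining precisely the analogues of \eqref{eq:disjointcondition2} and \eqref{eq:disjointcondition3} with $\mu_i=\mu_j=1$. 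The payoff of not eliminating $y$ is that the first component equation is a \emph{binomial} in the auxiliary quantity $u:=x^{q^{2m}}y-xy^{q^{2m}}$, namely $a^{q^{2m}+1}u^{q^s}=-Au$ (with $\omega^2=A+B\omega$), so Hilbert 90 does apply: if $u\neq 0$ then $\N_{q^{km}/q^r}(a)^{q^{2m}+1}=\N_{q^{km}/q^r}(-A)\in\F_q$ --- here is where $\gcd(m,k)=1$ really enters, since it makes norms of elements of $\F_{q^m}$ down to $\F_{q^r}$ land in $\F_q$ --- and then $\N_{q^{km}/q^r}(a)\in\F_{q^{\gcd(r,4m)}}=\F_q$ (using $r\mid k$, $k$ odd, $\gcd(k,m)=1$), contradicting hypothesis (iii). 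If $u=0$, then $x=\alpha y$ with $\alpha\in\F_{q^{km}}\cap\F_{q^{2m}}=\F_{q^m}$, and the second component equation factors as $(\alpha^{q^s}-\alpha)\bigl(a^{q^{2m}}y^{q^{s+2m}+1}-ay^{q^s+q^{2m}}\bigr)=0$: the vanishing of the first factor gives $\alpha\in\F_q$ (weight one, the allowed conclusion), while the vanishing of the second gives $ay^{q^s-1}\in\F_{q^m}$ and hence again $\N_{q^{km}/q^r}(a)\in\F_q$, a contradiction. To repair your proposal, replace the elimination of $y$ by this Frobenius fixed-point manipulation; as written, the argument cannot be completed by the tools you name.
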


\begin{theorem} [see \textnormal{\cite[Theorem 2.3]{bartoli2018maximum}}]\label{th:masxcattkodd2}
Let $\omega \in \F_{q^{2m}} \setminus \F_{q^{m}}$ and $a \in \F_{q^{km}}^*$. Assume $k \geq 3$ odd and define
\[
\U=\{ ax^{q^s}+x\omega \colon x \in  \F_{q^{km}}\} \subseteq \F_{q^{2km}},
\]
for some positive integer $s$.
If 
 \begin{itemize}
        \item $q \equiv 1 \ \mod k$,
       \item $\gcd(s,2m)=\gcd(s,km)=1$,
       \item $\left(N_{q^{km}/q}(a)\right)^{\frac{q-1}{k}} \neq 1$,
    \end{itemize}
then $L_{\U}$ is an $\F_q$-linear set of rank $km$ of $\PG\left(\F_{q^{2km}},\F_{q^{2m}}\right)=\PG(k-1,q^{2m})$ that is maximum $1$-scattered.    
\end{theorem}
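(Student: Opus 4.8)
The plan is to view $V=\F_{q^{2km}}$ as a $k$-dimensional $\F_{q^{2m}}$-vector space, so that the maximal rank of a scattered $\F_q$-linear set of $\PG(k-1,q^{2m})$ is $\tfrac12\dim_{\F_q}V=km$. First I would record that $\U$ has rank exactly $km$: the defining $\F_q$-linear map $\psi\colon x\mapsto ax^{q^s}+x\omega$ is injective, since a nonzero $x$ in its kernel would give $\omega=-ax^{q^s-1}\in\F_{q^{km}}$, whereas $k$ odd forces $\gcd(2m,km)=m$, hence $\F_{q^{2m}}\cap\F_{q^{km}}=\F_{q^m}$, which does not contain $\omega$. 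Consequently $\dim_{\F_q}\U=km$ is maximal, and it remains only to prove that $L_{\U}$ is scattered; maximality is then automatic.

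For scatteredness I would use the criterion that $L_{\U}$ is scattered if and only if, for every $v\in\U\setminus\{0\}$ and every $\lambda\in\F_{q^{2m}}$ with $\lambda v\in\U$, one has $\lambda\in\F_q$. Writing $v=ax^{q^s}+x\omega$ and $\lambda v=ay^{q^s}+y\omega$ yields the governing identity $(\star)$: $ay^{q^s}+y\omega=\lambda(ax^{q^s}+x\omega)$, with $x,y\in\F_{q^{km}}$. Applying the generator $\tau$ of $\Gal(\F_{q^{2km}}/\F_{q^{km}})$, which fixes $a,x,y$ and sends $\omega\mapsto\bar\omega:=\omega^{q^m}\neq\omega$ and $\lambda\mapsto\bar\lambda$, produces the conjugate identity, and subtracting the two gives, after a short computation,
\[ \lambda-\bar\lambda=\frac{a(\omega-\bar\omega)\bigl(x^{q^s}y-xy^{q^s}\bigr)}{v\,\bar v}, \]
where $\bar v=ax^{q^s}+x\bar\omega$. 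Since $a\neq0$ and $\omega\neq\bar\omega$, this shows $\lambda\in\F_{q^m}\iff x^{q^s}y=xy^{q^s}$; and the latter, using $\gcd(s,km)=1$ (so that the fixed field of $x\mapsto x^{q^s}$ inside $\F_{q^{km}}$ is $\F_q$), forces $x/y\in\F_q$, which substituted back into $(\star)$ yields $\lambda\in\F_q$. Thus the whole problem reduces to excluding the case $\lambda\in\F_{q^{2m}}\setminus\F_{q^m}$.

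The decisive step, and the main obstacle, is to rule out $\lambda\notin\F_{q^m}$ using the norm hypothesis. Here I would exploit that $[\F_{q^{2km}}:\F_{q^{2m}}]=k$: applying $N:=N_{\F_{q^{2km}}/\F_{q^{2m}}}$ to $\lambda=\psi(y)/\psi(x)$ gives $\lambda^{k}=N(\psi(y))/N(\psi(x))$, and since $k$ is odd the exponents $2mi$ (for $0\le i\le k-1$) run through all the $\F_{q^m}$-conjugacy shifts, so the products $N(\psi(x))$ and $N(\psi(y))$ reorganize into expressions built from $N_{\F_{q^{km}}/\F_{q^m}}(a)$. Norming once more down to $\F_q$ and using $\gcd(s,km)=1$ to make the $(q^s-1)$-th power factors disappear under $N_{q^{km}/q}$ (because $N_{q^{km}/q}(z^{q^s-1})=1$ for all $z$), the identity should collapse to $N_{q^{km}/q}(a)^{(q-1)/k}=1$, where $q\equiv1\bmod k$ is exactly what makes $(q-1)/k$ an integer and singles out the index-$k$ subgroup of $\F_q^{*}$. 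This contradicts the hypothesis $\bigl(N_{q^{km}/q}(a)\bigr)^{(q-1)/k}\neq1$, so no $\lambda\notin\F_{q^m}$ can occur, and $L_{\U}$ is scattered, hence maximum scattered.

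I expect the genuine difficulty to lie entirely in this last step: faithfully tracking the Galois action (the interaction of the $\F_{q^{2m}}$-norm with elements of $\F_{q^{km}}$, where $k$ odd gives $\gcd(2m,km)=m$), checking that the conjugate products telescope as claimed, and isolating the precise exponent $(q-1)/k$. Each hypothesis is consumed exactly here: $\gcd(s,2m)=\gcd(s,km)=1$ to force the relevant fixed fields to be $\F_q$ and to annihilate the $(q^s-1)$-th powers under the norm, and $q\equiv1\bmod k$ to give meaning to the $k$-th-power obstruction that the norm condition on $a$ then forbids. The rank computation and the reduction of the first two paragraphs are routine by comparison.
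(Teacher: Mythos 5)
Your opening reductions are correct, and in fact cleaner than the usual coordinate computation: the injectivity of $\psi\colon x\mapsto ax^{q^s}+x\omega$ (hence rank $km$), the weight-one criterion for scatteredness, and the identity $\lambda-\bar\lambda=a(\omega-\bar\omega)\bigl(x^{q^s}y-xy^{q^s}\bigr)/(v\bar v)$ are all right, and they correctly settle the case $\lambda\in\F_{q^m}$: there $x/y\in\F_{q^{\gcd(s,km)}}=\F_q$, so $\lambda\in\F_q$. But this is the easy half, in which hypotheses (i) and (iii) are never used. The theorem is not proved, because the decisive case $\lambda\in\F_{q^{2m}}\setminus\F_{q^m}$ is only conjectured: ``the identity should collapse to $\N_{q^{km}/q}(a)^{(q-1)/k}=1$'' describes the output you hope a computation would have, not a computation, and you say yourself that all the difficulty is concentrated there.

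Moreover, the plan you sketch for that case cannot succeed as stated. First, $N_{\F_{q^{2km}}/\F_{q^{2m}}}(ay^{q^s}+y\omega)$ is a product of $k$ binomials $a^{q^{2mi}}y^{q^{s+2mi}}+y^{q^{2mi}}\omega$; the best available factorization is $N_{\F_{q^{km}}/\F_{q^m}}(y)\cdot\prod_{j=0}^{k-1}\bigl((ay^{q^s-1})^{q^{jm}}+\omega\bigr)$, which does not ``reorganize into expressions built from $N_{\F_{q^{km}}/\F_{q^m}}(a)$''. Second, and decisively: every identity obtained by applying field norms to $\lambda=\psi(y)/\psi(x)$ is a consequence of the relation $(\star)$ alone, hence also holds for the weight-one solutions $y=cx$, $\lambda=c^{-1}\in\F_q^*$, which exist for \emph{every} $a$; so no chain of norm applications can ever output $\N_{q^{km}/q}(a)^{(q-1)/k}=1$. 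A contradiction must combine such an identity with the inequation $\lambda^{q^m}\neq\lambda$, and your sketch never specifies where that hypothesis enters the algebra. The actual argument (visible in this paper in the proofs of Theorem \ref{th:constructiondisjoint2} and Theorem \ref{th:constructiondisjoint3}, which adapt the cited proof) splits instead on $u:=x^{q^{2m}}y-xy^{q^{2m}}$, i.e.\ on whether $x/y\in\F_{q^m}$ rather than $x/y\in\F_q$, using the two component equations of $(\star)$ over the basis $\{1,\omega\}$ with $\omega^2=A+B\omega$: if $u\neq0$, the first component equation gives $u^{q^s-1}=-A/a^{q^{2m}+1}$, whose solvability (this is where $\gcd(s,km)=1$ is really used, via the criterion $\N_{q^{km}/q}(\cdot)=1$ for $(q^s-1)$-th powers) forces $\N_{q^{km}/q}(a)^2=(-1)^m\N_{q^m/q}(A)^k$, contradicting (iii) after raising to the power $(q-1)/k$; if $u=0$, then $x=\alpha y$ with $\alpha\in\F_{q^m}$, and the second component equation forces either $\alpha\in\F_q$ (weight one, harmless) or $ay^{q^s-1}=\beta\in\F_{q^m}$, whence $\N_{q^{km}/q}(a)=\N_{q^m/q}(\beta)^k$ and so $\N_{q^{km}/q}(a)^{(q-1)/k}=1$, again contradicting (iii). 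Your single case $\lambda\notin\F_{q^m}$ contains both of these sub-cases, which are killed by \emph{different} equations, so one uniform norm computation cannot cover them. (A smaller slip: $\N_{q^{km}/q}(z^{q^s-1})=1$ holds for every $s$ since $(q-1)\mid(q^s-1)$; $\gcd(s,km)=1$ is needed for the solvability criterion of $z^{q^s-1}=c$, not for that identity.)
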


\begin{theorem}  [see \textnormal{\cite[Proposition 2.7]{bartoli2018maximum}}] \label{th:maximumscattered3}
    Let $\omega \in \F_{q^{2m}} \setminus \F_{q^{m}}$ and define \begin{equation} \label{eq:deffsab}
f_{s,a,b}:x \in \F_{q^{3m}} \rightarrow ax^{q^s}+bx^{q^{2m+s}} \in \F_{q^{3m}}.
\end{equation}
for some $a,b \in \F_{q^{3m}}^*$ and a positive integer $s$. Let
\[\U=\{ f_{s,a,b}(x)+x\omega \colon x \in  \F_{q^{3m}}\}\subseteq \F_{q^{6m}}.\]

Suppose that $\gcd(s,2m)=1$ and
\[
\frac{f_{s,a,b}(x)}{x} \notin \F_{q^m} \mbox{ for each }x \in \F_{q^{3m}}^*.
\]

Then $L_{\U}$ is an $\F_q$-linear set of rank $3m$ of the projective plane $\PG\left(\F_{q^{6m}},\F_{q^{2m}}\right)=\PG(2,q^{2m})$ that is maximum $1$-scattered.  
\end{theorem}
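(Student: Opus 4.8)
The plan is to reduce the statement to scatteredness and then to bound the weight of every point by one. Throughout write $f:=f_{s,a,b}$ and record two elementary facts. First, $f$ is $\F_q$-linear and, for $c\in\F_{q^m}$, satisfies $f(cx)=c^{q^s}f(x)$ (since $c^{q^{2m}}=c$); thus $f$ is $\tau^s$-semilinear over $\F_{q^m}$, where $\tau\colon t\mapsto t^q$. Second, as $\omega\in\F_{q^{2m}}\setminus\F_{q^m}$ and $\F_{q^{2m}}\cap\F_{q^{3m}}=\F_{q^m}$, we have $\omega\notin\F_{q^{3m}}$, so $\{1,\omega\}$ is an $\F_{q^{3m}}$-basis of $\F_{q^{6m}}$; fix $\alpha,\beta\in\F_{q^m}$ with $\omega^2=\alpha+\beta\omega$. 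The hypothesis $f(x)/x\notin\F_{q^m}$ forces $f(x)\ne0$ for $x\ne0$, and the map $\phi\colon x\mapsto f(x)+x\omega$ is injective (if $\phi(x)=0$ with $x\ne0$ then $\omega=-f(x)/x\in\F_{q^{3m}}$, impossible), so $\dim_{\F_q}\U=3m$. Since $3m$ is the maximal rank of a scattered $\F_q$-linear set of $\PG(2,q^{2m})$, it suffices to prove $L_\U$ scattered, i.e. that for every $x\in\F_{q^{3m}}^\ast$ and $\lambda\in\F_{q^{2m}}$ one has $\lambda\phi(x)\in\U\Rightarrow\lambda\in\F_q$.

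Writing $\lambda=\lambda_0+\lambda_1\omega$ with $\lambda_0,\lambda_1\in\F_{q^m}$ and comparing $\{1,\omega\}$-components of $\lambda\phi(x)=\phi(y)$ over $\F_{q^{3m}}$ gives the system $f(y)=\lambda_0 f(x)+\lambda_1\alpha x$ and $y=\lambda_1 f(x)+(\lambda_0+\lambda_1\beta)x$. If $\lambda_1=0$ then $y=\lambda_0 x$ and the first equation becomes $\lambda_0^{q^s}f(x)=\lambda_0 f(x)$; as $f(x)\ne0$ this yields $\lambda_0^{q^s}=\lambda_0$, whence $\lambda_0\in\F_{q^{\gcd(s,m)}}=\F_q$ because $\gcd(s,m)=1$. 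Thus $\lambda\in\F_q$, disposing of this case.

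It remains to exclude $\lambda_1\ne0$, which is the heart of the matter. Substituting the second equation into the first and using the semilinearity of $f$ gives the single relation
\[
(\lambda_0-\lambda_0^{q^s})f(x)+\lambda_1\alpha x-\lambda_1^{q^s}\bigl(f(f(x))+\beta^{q^s}f(x)\bigr)=0 \qquad (\heartsuit).
\]
Its coefficients $\lambda_0-\lambda_0^{q^s},\ \lambda_1\alpha,\ \lambda_1^{q^s}$ lie in $\F_{q^m}$, so applying the generator $t\mapsto t^{q^m}$ of $\Gal(\F_{q^{3m}}/\F_{q^m})$ produces three equations with the same coefficients and with $f(x),\,x,\,f(f(x))+\beta^{q^s}f(x)$ replaced by their $\F_{q^{3m}}/\F_{q^m}$-conjugates. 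Hence the nonzero vector $(\lambda_0-\lambda_0^{q^s},\ \lambda_1\alpha,\ -\lambda_1^{q^s})$ (nonzero because $\lambda_1\ne0$) lies in the kernel of the $3\times3$ Moore matrix built from these three elements, forcing its determinant to vanish, i.e. $f(x),\,x,\,f(f(x))+\beta^{q^s}f(x)$ — equivalently $x,\,f(x),\,f(f(x))$ — are $\F_{q^m}$-linearly dependent. So everything reduces to proving that $x,f(x),f(f(x))$ are $\F_{q^m}$-independent for every $x\ne0$.

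In such a dependence the coefficient of $f(f(x))$ cannot vanish (otherwise $f(x)/x\in\F_{q^m}$, against hypothesis), so it reads $f(f(x))=\gamma_0 x+\gamma_1 f(x)$ with $\gamma_0,\gamma_1\in\F_{q^m}$; equivalently the plane $W=\la x,f(x)\ra_{\F_{q^m}}$ (of dimension $2$ precisely because $f(x)/x\notin\F_{q^m}$) is $f$-invariant. The main obstacle is to rule this out, and this is the one genuinely technical step; everything before it is formal. I would attack it by expanding $f(f(x))=a^{1+q^s}x^{q^{2s}}+(ab^{q^s}+ba^{q^{2m+s}})x^{q^{2m+2s}}+b^{1+q^{2m+s}}x^{q^{m+2s}}$ and analysing the exponents $0,s,2s,m+2s,2m+s,2m+2s$ modulo $3m$, using $\gcd(s,2m)=1$ to control how they interlock; the aim is to show the Moore determinant of $(x,f(x),f(f(x)))$ cannot vanish for any $x\ne0$. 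Equivalently, one may exploit the induced slope recurrence $f(f(x))/f(x)=\gamma_1+\gamma_0\,x/f(x)$ along the $f$-orbit of $x$ inside $W$ (a twisted continued fraction whose coefficients are Frobenius-shifted by $\tau^s$ at each step), and use $\gcd(s,2m)=1$ together with the finiteness of the orbit to contradict the existence of a $2$-dimensional $f$-invariant subspace. Either route closes the argument.
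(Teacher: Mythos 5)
Your setup and reduction are correct as far as they go: the semilinearity $f(cx)=c^{q^s}f(x)$ for $c\in\F_{q^m}$, the computation of the $\{1,\omega\}$-components, the disposal of the case $\lambda_1=0$, the derivation of $(\heartsuit)$, and the observation that a solution with $\lambda_1\neq 0$ forces a dependence $f(f(x))=\gamma_0x+\gamma_1f(x)$ with $\gamma_0,\gamma_1\in\F_{q^m}$, i.e.\ that $W=\langle x,f(x)\rangle_{\F_{q^m}}$ is a $2$-dimensional $f$-invariant subspace, are all sound. But at exactly that point the proof stops. The closing sentences (``I would attack it by expanding\dots'', ``one may exploit the induced slope recurrence\dots'') are plans, not arguments: neither is executed, the first is precisely the hard computation the theorem hinges on, and the second is too vague to assess. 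Since everything before this point is formal bookkeeping, the missing step is the entire mathematical content of the statement; as submitted this is a reduction plus a conjecture, not a proof. A further issue you should have flagged: your target claim (no $x\neq 0$ with $x,f(x),f(f(x))$ dependent over $\F_{q^m}$) is strictly stronger than scatteredness for the given $\omega$. By your own equation $(\heartsuit)$, such a dependence yields a point of weight at least $2$ only if, in addition, $\lambda_1\alpha=\lambda_1^{q^s}\gamma_0$ and $\lambda_0-\lambda_0^{q^s}=\lambda_1^{q^s}(\gamma_1+\beta^{q^s})$ admit a solution with $\lambda_1\neq 0$, which requires $\N_{q^m/q}(\gamma_0)=\N_{q^m/q}(\alpha)$ together with a trace condition; so you are attempting to prove more than the theorem asserts, and you give no reason why the stronger statement should hold at all.

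It does hold, and there is a clean way to finish that differs from both of your sketches: since $f$ is bijective and satisfies $f(cx)=c^{q^s}f(x)$ for $c\in\F_{q^m}$, it induces a collineation $\bar f$ of $\PG(\F_{q^{3m}},\F_{q^m})=\PG(2,q^m)$; the hypothesis $f(x)/x\notin\F_{q^m}$ for all $x\neq 0$ says precisely that $\bar f$ is fixed-point-free, and a $2$-dimensional $f$-invariant subspace is precisely a line fixed by $\bar f$. A collineation of a finite projective plane fixes as many lines as points (the standard incidence-matrix trace argument), so fixed-point-freeness rules out fixed lines, and your reduction closes. For comparison, the source of this statement, \cite[Proposition 2.7]{bartoli2018maximum} (the present paper quotes the result without proof; the same computation reappears in the proof of Theorem \ref{th:disjoint4}, equations \eqref{eq:disjoint2condition2} and \eqref{eq:disjoint2condition3}), is organized differently: from $\lambda\phi(x)=\phi(y)$ one imposes that $\phi(y)/\phi(x)$ is fixed by the $q^{2m}$-power Frobenius, obtains two bilinear equations in $(x,y)$ over $\F_{q^{3m}}$, collapses the first into a single linearized equation in $w=xy^{q^{2m}}-x^{q^{2m}}y$ whose only admissible root is shown to be $w=0$, and then the second equation with $y\in\F_{q^m}x$ forces $\lambda\in\F_q$. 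Your parametrization $\lambda=\lambda_0+\lambda_1\omega$ is a legitimate alternative organization, but until the invariant-plane step is actually proved it does not constitute a proof of the theorem.
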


\begin{theorem} [see \textnormal{\cite[Theorem 2.4]{csajbok2017maximum}}]
    Let $\omega \in \F_{q^{2m}} \setminus \F_{q^{m}}$. There
exist $a, b \in  \F_{q^{3m}}^*$ and an integer $1 \leq s \leq 3m-1$, with $\gcd(s,2m)=1$ such that
\[
\frac{f_{s,a,b}(x)}{x} \notin \F_{q^m} \mbox{ for each }x \in \F_{q^{3m}}^*,
\]
where $f_{s,a,b}$ is defined as in \eqref{eq:deffsab}.
As a consequence, the $\F_q$-linear set $L_{\U}$, defined by
\[
\U=\{ f_{s,a,b}(x)+x\omega \colon x \in  \F_{q^{3m}}\} \subseteq \F_{q^{6m}},
\]
is a maximum $1$-scattered $\F_q$-linear set of the projective plane $\PG(\F_{q^{6m}},\F_{q^{2m}})=\PG(2,q^{2m})$ having rank $3m$.
\end{theorem}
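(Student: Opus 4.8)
The plan is to separate the statement into the existence claim and its consequence, since the consequence is immediate. Once $a,b\in\F_{q^{3m}}^*$ and $1\le s\le 3m-1$ satisfy $\gcd(s,2m)=1$ together with $f_{s,a,b}(x)/x\notin\F_{q^m}$ for all $x\in\F_{q^{3m}}^*$, Theorem~\ref{th:maximumscattered3} yields directly that $L_\U$ is maximum $1$-scattered. Thus the entire content is the existence of such a triple, and since $\gcd(1,2m)=1$ I would fix $s=1$ and look for $a,b\in\F_{q^{3m}}^*$ with $f_{1,a,b}(x)/x\notin\F_{q^m}$ for every nonzero $x$.

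The first step is to turn the non-vanishing condition into a polynomial one. Saying $f_{1,a,b}(x)/x\in\F_{q^m}$ for some $x\ne0$ means $ax^{q}+bx^{q^{2m+1}}=\lambda x$ for some $\lambda\in\F_{q^m}$; writing $\lambda=\lambda^{q^m}$ and using $x^{q^{3m}}=x$ to eliminate $\lambda$ gives the equivalent identity
\[
a^{q^m}x^{1+q^{m+1}}+b^{q^m}x^{1+q}=ax^{q^m+q}+bx^{q^m+q^{2m+1}}.
\]
Every term scales by $c^{1+q}$ under $x\mapsto cx$ with $c\in\F_{q^m}^*$, so solvability depends only on $y=x^{q^m-1}$, which runs through the norm-one subgroup $\mu=\ker\N_{q^{3m}/q^m}$ of order $q^{2m}+q^m+1$. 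Dividing by $x^{1+q}$ rewrites the identity as
\[
a^{q^m}y^{q}+b^{q^m}=ay+by^{1+q+q^{m+1}},\qquad y\in\mu .
\]
Hence $f_{1,a,b}$ fails to be scattered precisely when this last equation has a solution $y\in\mu$, and I must produce $a,b\ne0$ for which it has none.

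For a fixed $y\in\mu$ the map $(a,b)\mapsto a^{q^m}y^{q}-ay+b^{q^m}-by^{1+q+q^{m+1}}$ is $\F_{q^m}$-linear from $\F_{q^{3m}}^2$ to $\F_{q^{3m}}$; both $a\mapsto y^{q}a^{q^m}-ay$ and $b\mapsto b^{q^m}-y^{1+q+q^{m+1}}b$ have $\F_{q^m}$-rank $2$, and a short computation with the form $\Tr_{\F_{q^{3m}}/\F_{q^m}}$, using $y^{q^{2m}+q^m+1}=1$, reveals that their images coincide for every $y$. Therefore the bad set $\mathrm{bad}_y$ of solutions is an $\F_{q^m}$-subspace of codimension $2$, of size $q^{4m}$, and there are exactly $|\mu|=q^{2m}+q^m+1$ of them. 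The naive union bound gives $\sum_{y}|\mathrm{bad}_y|=(q^{2m}+q^m+1)q^{4m}>q^{6m}$, so it just fails to guarantee an uncovered pair: the problem sits exactly at the covering threshold, and a soft count cannot conclude.

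To break the tie I would look for an argument with a square-root saving, of Hasse--Weil type: non-scatteredness should be encoded as the existence of an $\F_{q^{3m}}$-rational point on a suitable curve attached to $(a,b)$. The delicate point is that the obvious such curve has degree of order $q^m$ and hence uncontrolled genus, so one must instead exploit the Frobenius-semilinear (Moore-type) structure of the reduced equation to replace it by a model whose genus is bounded independently of $q$. Granting such a bound, Hasse--Weil shows that the bad values of $a$ omit some value for all sufficiently large $q^m$, producing a good pair $(a,b)$ with $a,b\ne0$; the finitely many small fields are then handled directly, or---whenever $\gcd(m,3)=1$ or $q\equiv1\pmod 3$---by the monomial constructions of Theorems~\ref{th:masxcattkodd1} and~\ref{th:masxcattkodd2}. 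The main obstacle is precisely this threshold phenomenon: because the number of excluded codimension-$2$ subspaces equals $q^{2m}+q^m+1$, no elementary count suffices, and the whole difficulty is concentrated in finding a bounded-genus model on which Hasse--Weil can extract the square-root gain needed to overcome the deficit.
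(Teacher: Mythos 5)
Your proposal is not a complete proof: it stalls exactly where the theorem's content lies. The preliminary work is sound and checks out in detail --- the reduction to the existence claim via Theorem \ref{th:maximumscattered3}, the identity $a^{q^m}x^{1+q^{m+1}}+b^{q^m}x^{1+q}=ax^{q^m+q}+bx^{q^m+q^{2m+1}}$, the passage to the norm-one subgroup $\mu$ of order $q^{2m}+q^m+1$, the rank-$2$ claims, and even the coincidence of the images of $a\mapsto y^qa^{q^m}-ay$ and $b\mapsto b^{q^m}-y^{1+q+q^{m+1}}b$ (their orthogonal complements under the trace form are cut out by $z^{q^{2m}-1}=y^{1-q^{2m+1}}$ and $z^{q^{2m}-1}=y^{1+q+q^{m+1}}$ respectively, which agree precisely because $y^{q(q^{2m}+q^m+1)}=1$). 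But after correctly observing that the union bound $(q^{2m}+q^m+1)\,q^{4m}>q^{6m}$ fails, you replace the missing argument with a hope: a ``bounded-genus model'' on which Hasse--Weil gives a square-root saving. No such model is constructed, no reason is given why the genus of any relevant curve should be bounded independently of $q^m$ (you yourself note the natural curve has degree of order $q^m$), and ``granting such a bound'' is exactly the assumption of what must be proved. The fallbacks are also inadequate: the monomial constructions of Theorems \ref{th:masxcattkodd1} and \ref{th:masxcattkodd2} only cover $\gcd(m,3)=1$ or $q\equiv 1\pmod 3$, so infinitely many pairs $(q,m)$ (e.g.\ $3\mid m$ with $q\not\equiv 1\pmod 3$ --- precisely the cases this theorem is needed for) rest entirely on the unproven step, and ``handle the finitely many small fields directly'' is not available without an explicit genus bound identifying which fields those are.

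For comparison: the paper does not prove this statement at all; it quotes it from Csajb\'ok--Marino--Polverino--Zullo \cite{csajbok2017maximum}, where the existence of $(a,b)$ is established by a purely combinatorial/field-theoretic counting argument that overcomes the threshold you identified --- roughly, by exploiting the structure and overlaps of the bad subspaces rather than treating them as independent in a union bound --- and is non-constructive (the paper's own remark that ``finding explicit values of $a,b$ remains an open problem'' reflects this). If you want to complete your approach along the lines you set up, the productive direction is not algebraic geometry but a finer count: show that every bad pair $(a,b)$ is bad for many values of $y\in\mu$ simultaneously (a double-counting/multiplicity bound), or control the pairwise intersections of the $q^{4m}$-element bad subspaces, so that the union is strictly smaller than the sum of the sizes.
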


In \cite[Lemma 2.8]{bartoli2018maximum} it was shown that when $q=2$ we may choose $a=1$ in the previous theorem. However  finding explicit values of $a,b$ remains an open problem.

%Problem: could we find explicitly such $a,b$ in the previous Theorem? So in this way this theorem is also constructive. When q=2 see \cite[Lemma 2.8]{bartoli2018maximum}}

\medskip
\subsection{Maximum \texorpdfstring{$1$}{Lg}-designs}

From a geometric point of view, the construction of $1$-design can be translated in terms of linear sets, by finding a certain number of pairwise disjoint scattered linear sets, as described in the next lemma.

\begin{lemma} \label{lem:designdisjoint}
Let $V$ be a $k$-dimensional $\F_{q^m}$-vector space. Define $\phi_{V}$ be an $\F_{q^m}$-isomorphism between $V$ and $\F_{q^m}^k$ and consider the collection $\U=(\mathcal{U}_1,\ldots,\U_t)$, where $\U_i$ is an $\F_q$-subspace of $V$, such that $\langle \mathcal{U}_1,\ldots,\U_t\rangle_{\F_{q^m}}=V$. Then the following are equivalent:
\begin{enumerate}
    \item $\U$ is a $1$-design in $V$;
    \item $L_{\U_1},\ldots,L_{\U_t}$ are pairwise disjoint scattered $\F_q$-linear sets in $\PG(V,\F_{q^m})$;
    \item $(\phi_V(\U_1),\ldots,\phi_V(U_t))$ is an $1$-design $\Fmnk$ system. 
\end{enumerate} 
\end{lemma}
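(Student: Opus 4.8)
The plan is to prove this as a chain of equivalences, establishing $(1)\Leftrightarrow(2)$ directly from the definitions, and then obtaining $(1)\Leftrightarrow(3)$ as an essentially formal consequence of the fact that $\phi_V$ is an $\F_{q^m}$-linear isomorphism. The heart of the argument is the equivalence $(1)\Leftrightarrow(2)$, which amounts to unwinding what it means for an ordered set of subspaces to be a $1$-design and relating this to the scattered and disjointness conditions on the associated linear sets.

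For $(1)\Leftrightarrow(2)$, first I would recall that $\U$ being a $1$-design means it is a $(1,1)$-subspace design, i.e. $\langle \U_1,\ldots,\U_t\rangle_{\F_{q^m}}=V$ (which is assumed) together with the intersection condition
\[
\sum_{i=1}^t \dim_{\F_q}(\U_i \cap W) \leq 1
\]
for every $1$-dimensional $\F_{q^m}$-subspace $W = \langle v\rangle_{\F_{q^m}}$ of $V$. The key observation is to decode this single inequality into two separate geometric facts. On one hand, for a \emph{fixed} index $i$, the condition $\dim_{\F_q}(\U_i \cap \langle v\rangle_{\F_{q^m}}) \leq 1$ for all points $\langle v\rangle_{\F_{q^m}}$ is precisely the statement that $L_{\U_i}$ is scattered, using the weight characterisation $w_{L_{\U_i}}(P)\le 1$ recalled in the excerpt. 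On the other hand, the condition forbidding the \emph{sum} over distinct indices from exceeding $1$ forces, for $i\neq j$, that no point $P$ can lie in both $L_{\U_i}$ and $L_{\U_j}$ (since such a point would contribute at least $1$ to each summand, giving a total of at least $2$), which is exactly pairwise disjointness of the linear sets. I would present both directions: the $1$-design condition on the full sum gives scatteredness (take a single index) and disjointness (compare two indices); conversely, scatteredness of each $L_{\U_i}$ bounds each summand by $1$, and disjointness ensures at most one summand is nonzero for any given point, so the total sum is at most $1$.

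For $(1)\Leftrightarrow(3)$, I would argue that since $\phi_V \colon V \to \F_{q^m}^k$ is an $\F_{q^m}$-isomorphism, it maps $h$-dimensional $\F_{q^m}$-subspaces to $h$-dimensional $\F_{q^m}$-subspaces and $\F_q$-subspaces to $\F_q$-subspaces of the same $\F_q$-dimension, while preserving all intersection dimensions: $\dim_{\F_q}(\U_i \cap W) = \dim_{\F_q}(\phi_V(\U_i)\cap \phi_V(W))$. Hence $\U$ is a $1$-design in $V$ if and only if $(\phi_V(\U_1),\ldots,\phi_V(\U_t))$ satisfies the same intersection inequality against every $\F_{q^m}$-line of $\F_{q^m}^k$, and the spanning condition transports likewise; this is exactly the statement that the image is a $1$-design $\Fmnk$ system.

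I do not anticipate a genuine obstacle here, as each implication is a direct translation between the combinatorial intersection condition and its geometric reformulation. The only point requiring mild care is the bookkeeping in $(1)\Leftrightarrow(2)$: one must be attentive that the $1$-design bound is on the \emph{sum} of weights across all $t$ subspaces, so that a single inequality simultaneously encodes both the per-block scattered property and the across-block disjointness, and I would make sure to handle the case of a point lying in a single $L_{\U_i}$ versus in two distinct ones explicitly so that no logical gap remains.
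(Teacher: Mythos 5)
Your proposal is correct: the paper states this lemma without proof, treating it as an immediate consequence of the definitions, and your argument is precisely the definitional unwinding that justifies it — the $(1,1)$-design inequality $\sum_{i=1}^t \dim_{\F_q}(\U_i \cap \langle v\rangle_{\F_{q^m}}) \leq 1$ simultaneously encodes scatteredness of each $L_{\U_i}$ (one summand) and pairwise disjointness (two summands), while the equivalence with $(3)$ follows since $\phi_V$ preserves $\F_q$-dimensions, intersections, and the lattice of $\F_{q^m}$-subspaces. No gaps; your care in splitting the converse into the cases of a point lying in one versus two of the linear sets is exactly the right bookkeeping.
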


When the dimension of $V$ is {\bf even}, we can construct maximum $1$-designs having $t$ subspaces, with $1 \leq t \leq q-1$ as following.

\begin{proposition} [see \textnormal{\cite[Proposition 5.16]{santonastaso2022subspace}}] \label{prop:ex1desnorm} 
Let $s$ be a positive integer such that $\gcd(s,m)=1$.
Let $\mu_1,\ldots,\mu_{q-1} \in \F_{q^m}^*$ such that $\N_{q^m/q}(\mu_i) \neq \N_{q^m/q}(\mu_j)$, if $i \neq j$ and define
\[\U_i=\{(x_1,\mu_ix_1^{q^s},\ldots,x_k,\mu_ix_k^{q^s})\colon x_1,\ldots,x_k \in \F_{q^m}\}\subseteq \F_{q^m}^{2k}.\] for $i\in \{1,\ldots,q-1\}$. 
Then $(\mathcal{U}_1,\ldots,\mathcal{U}_{q-1})$ is a maximum $1$-design $[km,2k]_{q^m/q}$ system.
\end{proposition}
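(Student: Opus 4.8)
The plan is to apply Lemma \ref{lem:designdisjoint} with ambient space $V = \F_{q^m}^{2k}$, which reduces the claim to three verifications: that $\langle \U_1,\ldots,\U_{q-1}\rangle_{\F_{q^m}} = \F_{q^m}^{2k}$, that each $L_{\U_i}$ is scattered, and that $L_{\U_i}$ and $L_{\U_j}$ are disjoint for $i \neq j$. Writing $u_i(\bar x) = (x_1,\mu_i x_1^{q^s},\ldots,x_k,\mu_i x_k^{q^s})$ for $\bar x = (x_1,\ldots,x_k) \in \F_{q^m}^k$, the map $\bar x \mapsto u_i(\bar x)$ is $\F_q$-linear and injective, so $\dim_{\F_q}\U_i = mk = \tfrac{m\cdot 2k}{2}$. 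Hence, once $\U=(\U_1,\ldots,\U_{q-1})$ is shown to be a $1$-design, it is automatically a \emph{maximum} $1$-design, since $2k$ is the $\F_{q^m}$-dimension of $V$ and $m \geq 2$.

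For the spanning condition I would observe that it already holds for $\U_1$ alone: setting $x_l = 0$ for $l \neq j$ shows that $\langle \U_1\rangle_{\F_{q^m}}$ contains $\{(x,\mu_1 x^{q^s}) : x \in \F_{q^m}\}$ inside the $j$-th coordinate block, and the vectors $(1,\mu_1)$ and $(\alpha,\mu_1\alpha^{q^s})$ are $\F_{q^m}$-independent for any $\alpha \in \F_{q^m}\setminus\F_q$, because the relevant determinant equals $\mu_1(\alpha^{q^s}-\alpha)\neq 0$ (the fixed field of $x\mapsto x^{q^s}$ is $\F_q$ as $\gcd(s,m)=1$). Thus every block of $\F_{q^m}^{2k}$ lies in $\langle\U_1\rangle_{\F_{q^m}}$. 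For scatteredness, $L_{\U_i}$ fails to be scattered exactly when some nonzero $u = u_i(\bar x)$ satisfies $\gamma u \in \U_i$ for some $\gamma \in \F_{q^m}^*\setminus\F_q^*$; equating $\gamma u_i(\bar x) = u_i(\bar y)$ forces $y_l = \gamma x_l$ and $\mu_i(\gamma x_l)^{q^s} = \gamma\mu_i x_l^{q^s}$, i.e. $(\gamma^{q^s}-\gamma)x_l^{q^s} = 0$ for all $l$. Choosing $l$ with $x_l \neq 0$ gives $\gamma^{q^s}=\gamma$, whence $\gamma \in \F_q$ by $\gcd(s,m)=1$, a contradiction.

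The crux is the disjointness, and this is where the hypothesis on the norms enters. A common point of $L_{\U_i}$ and $L_{\U_j}$ produces nonzero $u = u_i(\bar x) \in \U_i$ and $u' = u_j(\bar y)\in\U_j$ that are $\F_{q^m}$-proportional, say $u' = \gamma u$ with $\gamma \in \F_{q^m}^*$. Comparing the two coordinates of each block gives $y_l = \gamma x_l$ and $\mu_j(\gamma x_l)^{q^s} = \gamma\mu_i x_l^{q^s}$; choosing $l$ with $x_l\neq 0$ and cancelling $x_l^{q^s}$ yields $\mu_j/\mu_i = \gamma^{\,1-q^s}$. Applying $\N_{q^m/q}$ and using that $\N_{q^m/q}(\gamma)\in\F_q^*$ is fixed by $x\mapsto x^{q^s}$, I obtain $\N_{q^m/q}(\gamma^{1-q^s}) = \N_{q^m/q}(\gamma)^{1-q^s} = 1$, so $\N_{q^m/q}(\mu_i) = \N_{q^m/q}(\mu_j)$, contradicting the hypothesis for $i\neq j$. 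The main obstacle is precisely this norm computation: the decisive observation is that the norm of $\gamma^{1-q^s}$ is forced to be $1$, which is what converts the geometric separation of the $\U_i$ into the arithmetic distinctness of the norms $\N_{q^m/q}(\mu_i)$. With all three conditions established, Lemma \ref{lem:designdisjoint} gives that $(\U_1,\ldots,\U_{q-1})$ is a $1$-design, and the dimension count recorded above upgrades this to a maximum $1$-design.
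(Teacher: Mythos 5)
Your proof is correct: the dimension count, the spanning argument via the determinant $\mu_1(\alpha^{q^s}-\alpha)\neq 0$, the scatteredness check, and the decisive norm computation $\N_{q^m/q}(\mu_j/\mu_i)=\N_{q^m/q}(\gamma)^{1-q^s}=1$ are all sound, and Lemma~\ref{lem:designdisjoint} is applied legitimately. The paper itself states this proposition without proof (quoting \cite[Proposition 5.16]{santonastaso2022subspace}), but your argument—reduction to pairwise disjoint scattered linear sets followed by a contradiction with the distinctness of the norms $\N_{q^m/q}(\mu_i)$—is precisely the technique the paper uses for its own constructions in Theorems~\ref{th:constructiondisjoint2}, \ref{th:constructiondisjoint3} and \ref{th:disjoint4}, so it is essentially the intended approach.
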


The subspaces described before is obtained as the direct sum of special type of linear sets known as linear sets of pseudoregulus type, which were introduced in \cite{marino2007fq} and studied in \cite{lavrauw2013scattered,lunardon2014maximum}.

There exists an other construction of maximum $1$-design for $k$ odd. Indeed, under some assumptions on $k$ and $m$ we can partition $\PG(k-1,q^m)$ into canonical subgeometries. Hence we can construct $1$-designs as follows.

\begin{theorem} [see \textnormal{\cite[Theorem 4.29]{hirschfeld1998projective} \cite[Theorem 5.21]{santonastaso2022subspace}}] %\label{th:partionsubg}
There exists a partition of $\PG(k-1,q^m)$ into canonical subgeometries if and only if $\gcd(k,m)=1$. In particular, if $m=2$ and $k$ is odd, then there exists a maximum $1$-design of $V=V(k,q^2)$ consisting of $\frac{q^{k}+1}{q+1}$ subspaces.
\end{theorem}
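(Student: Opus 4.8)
The plan is to prove the equivalence first and then read off the ``in particular'' statement. Throughout I identify $V=V(k,q^m)$ with the field $\F_{q^{mk}}$, regarded as a $k$-dimensional $\F_{q^m}$-vector space. Then the points of $\PG(k-1,q^m)=\PG(V,\F_{q^m})$ are the cosets forming the cyclic group $G:=\F_{q^{mk}}^*/\F_{q^m}^*$ (a Singer cycle of order $P:=\frac{q^{mk}-1}{q^m-1}$), and multiplication by any $g\in\F_{q^{mk}}^*$ is an $\F_{q^m}$-linear bijection of $V$, hence induces a projectivity of $\PG(k-1,q^m)$ that carries canonical subgeometries to canonical subgeometries.

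For the sufficiency, assuming $\gcd(k,m)=1$, I would take $S:=\F_{q^k}\subseteq\F_{q^{mk}}$, a $k$-dimensional $\F_q$-subspace of $V$, and first check that $L_S$ is a canonical subgeometry. It is scattered because if $w_1,w_2\in S\cap\la v\ra_{\F_{q^m}}$ are nonzero then $w_2/w_1\in\F_{q^k}\cap\F_{q^m}=\F_{q^{\gcd(k,m)}}=\F_q$, so $w_1,w_2$ are $\F_q$-proportional; and it spans because $\F_{q^k}$ and $\F_{q^m}$ are linearly disjoint over $\F_q$ (as $\gcd(k,m)=1$), so any $\F_q$-basis of $S$ is an $\F_{q^m}$-basis of $V$. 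The point set of $L_S$ is precisely the subgroup $H:=\F_{q^k}^*\F_{q^m}^*/\F_{q^m}^*\cong\F_{q^k}^*/\F_q^*$ of $G$, of order $Q:=\frac{q^k-1}{q-1}$. The cosets $gH$ then partition the point set $G$, and since $gH$ is the point set of $L_{gS}=L_{g\F_{q^k}}$, the image of $L_S$ under the projectivity induced by multiplication by $g$, each class is itself a canonical subgeometry. This produces the required partition.

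For the necessity I would argue by counting: a partition of the $P$ points of $\PG(k-1,q^m)$ into canonical subgeometries, each carrying $Q$ points, forces $Q\mid P$. The crux is the arithmetic lemma that $Q\mid P$ holds if and only if $\gcd(k,m)=1$, and I expect this to be the main obstacle. I would prove it by reducing modulo $Q$: since $q^k\equiv 1\pmod Q$, writing $P=\sum_{i=0}^{k-1}q^{mi}$ and reducing each exponent modulo $k$ gives $P\equiv d\,\tfrac{q^k-1}{q^d-1}\pmod Q$, where $d=\gcd(k,m)$. Setting $R:=\frac{q^d-1}{q-1}$ and noting $R\mid Q$, one obtains $Q\mid P\iff R\mid d$; and since $R=1+q+\cdots+q^{d-1}>d$ for $d\ge 2$, this happens exactly when $d=1$. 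This closes the equivalence.

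Finally, for the ``in particular'' part I would specialise to $m=2$ with $k$ odd, so that $\gcd(k,2)=1$ and the above partition exists, with number of classes $M=P/Q=\frac{(q^{2k}-1)(q-1)}{(q^2-1)(q^k-1)}=\frac{q^k+1}{q+1}$. Each canonical subgeometry has rank $k=\frac{mk}{2}$, hence is a maximum scattered $\F_q$-linear set; distinct classes are disjoint by construction and each spans $V$. Applying Lemma \ref{lem:designdisjoint} to these $\frac{q^k+1}{q+1}$ pairwise disjoint maximum scattered linear sets then yields a maximum $1$-design of $V=V(k,q^2)$ with exactly the stated number of subspaces.
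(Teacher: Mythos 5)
Your proof is correct. The paper itself gives no proof of this statement (it is quoted from \cite[Theorem 4.29]{hirschfeld1998projective} and \cite[Theorem 5.21]{santonastaso2022subspace}), but your argument — realising the canonical subgeometry as $L_{\F_{q^k}}$ inside $\F_{q^{mk}}$, partitioning the points via the cosets of $\F_{q^k}^*\F_{q^m}^*/\F_{q^m}^*$ in the Singer group for sufficiency, deducing necessity from the divisibility $\frac{q^k-1}{q-1} \mid \frac{q^{mk}-1}{q^m-1}$ together with the $\gcd$ arithmetic, and then invoking Lemma \ref{lem:designdisjoint} for the maximum $1$-design — is exactly the classical approach underlying those references, carried out correctly.
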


In the following series of results we will solve the problem of the existence of maximum $1$-designs in $V(k,q^m)$, with $1 <t \leq q-1$, for any values of $q,k,m$ such that $km$ is even.

\begin{theorem} \label{th:constructiondisjoint2}
    Let $\omega \in \F_{q^{2m}} \setminus \F_{q^{m}}$ and $a \in \F_{q^{km}}^*$. Let $\mu_1,\ldots,\mu_{q-1} \in \F_{q^m}^*$ such that $\N_{q^m/q}(\mu_i) \neq \N_{q^m/q}(\mu_j)$, if $i \neq j$. Assume $k \geq 3$ odd and define
\[\U_i=\{ a\mu_ix^{q^s}+x\omega \colon x \in  \F_{q^{km}}\} \subseteq \F_{q^{2km}},\]
for some positive integer $s$.
If  \begin{enumerate}[(i)]
       \item $\gcd(m,k)$=1,
       \item $\gcd(s,2m)=1$ and $\gcd(s,km)=k$,
       \item $N_{q^{km}/q^k}(a) \notin \F_q$,
   \end{enumerate} 
then $L_{\U_1},\ldots,L_{\U_{q-1}}$ are pairwise disjoint maximum $1$-scattered $\F_q$-linear set of $\PG(\F_{q^{2km}},\F_{q^{2m}})=\PG(k-1,q^{2m})$ having rank $km$.
\end{theorem}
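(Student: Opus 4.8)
The plan is to prove the two assertions separately: that each $L_{\U_i}$ is maximum $1$-scattered, and that the $L_{\U_1},\ldots,L_{\U_{q-1}}$ are pairwise disjoint. For the first assertion, observe that each $\U_i$ is exactly the subspace of Theorem~\ref{th:masxcattkodd1} with $a$ replaced by $a\mu_i$ and with $r=k$: hypothesis (i) is verbatim, $\gcd(s,2m)=1$ is the first half of (ii), and $\gcd(s,km)=r$ holds with $r=k$ by the second half of (ii). The only nontrivial point is to verify $\N_{q^{km}/q^k}(a\mu_i)\notin\F_q$. Since $\mu_i\in\F_{q^m}$ and $\gcd(k,m)=1$, the map $j\mapsto kj \bmod m$ is a bijection of $\{0,\ldots,m-1\}$, so $\N_{q^{km}/q^k}(\mu_i)=\prod_{j=0}^{m-1}\mu_i^{q^{kj}}=\prod_{l=0}^{m-1}\mu_i^{q^l}=\N_{q^m/q}(\mu_i)\in\F_q^*$. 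By multiplicativity $\N_{q^{km}/q^k}(a\mu_i)=\N_{q^{km}/q^k}(a)\,\N_{q^m/q}(\mu_i)$, and multiplying the non-$\F_q$ element $\N_{q^{km}/q^k}(a)$ from (iii) by the nonzero scalar $\N_{q^m/q}(\mu_i)\in\F_q^*$ keeps it outside $\F_q$. Thus Theorem~\ref{th:masxcattkodd1} applies to each $\U_i$, giving a maximum $1$-scattered linear set of rank $km$.

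For the second assertion, suppose $L_{\U_i}$ and $L_{\U_j}$ share a point, i.e. there are $x,y\in\F_{q^{km}}^*$ and $\lambda\in\F_{q^{2m}}^*$ with
\[ a\mu_i x^{q^s}+x\omega=\lambda\,(a\mu_j y^{q^s}+y\omega). \]
Because $k$ is odd, $\gcd(2m,km)=m$, so $\F_{q^{2m}}\cap\F_{q^{km}}=\F_{q^m}$ and $\omega\notin\F_{q^{km}}$; hence $\{1,\omega\}$ is an $\F_{q^{km}}$-basis of $\F_{q^{2km}}$. Writing $\lambda=\lambda_0+\lambda_1\omega$ with $\lambda_0,\lambda_1\in\F_{q^m}$ and using $\omega^2=T\omega-N$ with $T=\Tr_{q^{2m}/q^m}(\omega)$ and $N=\N_{q^{2m}/q^m}(\omega)\in\F_{q^m}$, I would compare the two $\F_{q^{km}}$-coordinates to obtain the system
\[ a\mu_i x^{q^s}=\lambda_0 a\mu_j y^{q^s}-\lambda_1 N y, \qquad x=\lambda_1 a\mu_j y^{q^s}+(\lambda_0+\lambda_1 T)y. \]

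I would then split on whether $\lambda_1=0$. If $\lambda_1=0$, then $\lambda=\lambda_0\in\F_{q^m}$, the second equation gives $x=\lambda_0 y$, and substituting into the first yields $\mu_i\lambda_0^{q^s}=\lambda_0\mu_j$, i.e. $\mu_i/\mu_j=\lambda_0^{1-q^s}$. Applying $\N_{q^m/q}$ and using that the norm is Frobenius-invariant ($\N_{q^m/q}(\lambda_0^{q^s})=\N_{q^m/q}(\lambda_0)$) gives $\N_{q^m/q}(\mu_i)=\N_{q^m/q}(\mu_j)$, which forces $i=j$ by hypothesis. This is the step where the new ingredient $\N_{q^m/q}(\mu_i)\neq\N_{q^m/q}(\mu_j)$ is consumed, and it mirrors exactly the norm computation in the pseudoregulus construction of Proposition~\ref{prop:ex1desnorm}.

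The remaining case $\lambda_1\neq0$ (equivalently $\lambda\notin\F_{q^m}$) must be excluded, and I expect this to be the main obstacle. To rule it out I would proceed as in the proof that a single such $\U$ is scattered (\cite[Theorem~2.2]{bartoli2018maximum}): solving the two coordinate equations for $\lambda_0,\lambda_1$ by Cramer's rule, whose determinant is $\N_{q^{2km}/q^{km}}(a\mu_j y^{q^s}+y\omega)\neq0$, and then imposing the membership conditions $\lambda_0,\lambda_1\in\F_{q^m}$, i.e. invariance under the order-$k$ automorphism $z\mapsto z^{q^m}$ of $\F_{q^{km}}$, produces a $q^s$-Frobenius polynomial identity in $x$ and $y$. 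Conditions $\gcd(s,km)=k$ and $\N_{q^{km}/q^k}(a)\notin\F_q$ control the relevant $q^s$-degrees and obstruct the required norm relation, forcing $\lambda_1=0$ and reducing to the case above. This is essentially the scatteredness computation of the cited source carried out with the two coefficients $a\mu_i$ and $a\mu_j$ in place of a single one, and it is precisely where the hypotheses on $s$ and on $\N_{q^{km}/q^k}(a)$ are used.
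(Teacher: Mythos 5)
Your first assertion (each $L_{\U_i}$ is maximum $1$-scattered) is proved correctly and coincides with the paper's argument: the identity $\N_{q^{km}/q^k}(\mu_i)=\N_{q^m/q}(\mu_i)\in\F_q$, valid because $\gcd(k,m)=1$, lets you apply Theorem \ref{th:masxcattkodd1} to the coefficient $a\mu_i$. The opening of your disjointness argument is also sound: the common-point equation, the use of $\{1,\omega\}$ as an $\F_{q^{km}}$-basis of $\F_{q^{2km}}$, and the elimination of the case $\lambda\in\F_{q^m}$ via $\N_{q^m/q}(\mu_i)=\N_{q^m/q}(\mu_j)$ all check out (and your Cramer determinant is indeed the nonzero norm $\N_{q^{2km}/q^{km}}(a\mu_jy^{q^s}+y\omega)$).

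The genuine gap is the case $\lambda\notin\F_{q^m}$, which you explicitly defer (``I would proceed as\dots'', ``I expect this to be the main obstacle''): this is exactly where hypotheses (ii) and (iii) do all the work, and your sketch never derives a contradiction from them. What is missing is the following chain, which the paper obtains by applying $z\mapsto z^{q^{2m}}$ to the ratio defining $\lambda$ and splitting on $u:=x^{q^{2m}}y-xy^{q^{2m}}$ rather than on $\lambda_1$. If $u\neq 0$, then $u$ solves $z^{q^s-1}=-A/\bigl(\mu_i\mu_ja^{q^{2m}+1}\bigr)$ (writing $\omega^2=A+B\omega$ as in the paper); since $\gcd(s,km)=k$, solvability in $\F_{q^{km}}$ is equivalent to $\N_{q^{km}/q^k}\bigl(-A/(\mu_i\mu_ja^{q^{2m}+1})\bigr)=1$, which, using (i) to see that $\N_{q^{km}/q^k}\bigl(A/(\mu_i\mu_j)\bigr)\in\F_q$, forces $\bigl(\N_{q^{km}/q^k}(a)\bigr)^{q^{2m}+1}\in\F_q$; then $\gcd(k,4m)=1$ (which needs both (i) and $k$ odd) yields $\N_{q^{km}/q^k}(a)\in\F_q$, contradicting (iii). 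If instead $u=0$, i.e.\ $x=\alpha y$ with $\alpha\in\F_{q^m}^*$, there are two sub-cases: $\alpha^{q^s}\mu_i=\alpha\mu_j$, which is precisely your case $\lambda=\alpha\in\F_{q^m}$, and $ay^{q^s-1}\in\F_{q^m}$, in which $\lambda$ genuinely lies outside $\F_{q^m}$; this second sub-case is not excluded by any degree or Frobenius-invariance count but by a further norm computation ($\N_{q^{km}/q^k}(a)=\N_{q^{km}/q^k}(\beta)=\N_{q^m/q}(\beta)\in\F_q$, again contradicting (iii)). So your claim that the hypotheses ``force $\lambda_1=0$'' is true, but establishing it constitutes the bulk of the theorem; as written, the proposal asserts it rather than proves it.
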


\begin{proof}
First, we prove that each of $L_{\U_i}$ is a maximum $1$-scattered of $\PG(k-1,q^{2m})$. Clearly, $\dim_{\F_q}(\U_i)=km$. Since $\mu_i \in \F_{q^m}$ and $\gcd(m,k)=1$ we have that $N_{q^{km}/q^k}(\mu_i)=N_{q^{m}/q}(\mu_i) \in \F_q$. Therefore,
\[
N_{q^{km}/q^k}(a\mu_i)=N_{q^{km}/q^k}(a)N_{q^{km}/q^k}(\mu_i)=N_{q^{km}/q^k}(a)N_{q^{m}/q}(\mu_i)
\notin \F_q,\]
by (i). So, for each of $L_{\U_i}$, we are in the hypothesis of Theorem \ref{th:masxcattkodd1} and hence it follows that the $L_{\U_i}$'s are maximum $1$-scattered $\F_q$-linear set of $\PG(k-1,q^{2m})$. \newline
Suppose now by contradiction that $L_{\U_i} \cap L_{\U_j} \neq \emptyset$, for some $i \neq j$. This means that there exist $x,y \in \F_{q^{km}}^*$ and $\lambda \in \F_{q^{2m}}^*$ such that 
\[
(a\mu_ix^{q^s}+x\omega)=\lambda(a\mu_jy^{q^s}+y\omega).
\]
This implies that 
\begin{equation} \label{eq:disjointcondition1}
\left( \frac{a\mu_ix^{q^s}+x\omega}{a\mu_jy^{q^s}+y\omega}\right)^{q^{2m}}= \left( \frac{a\mu_ix^{q^s}+x\omega}{a\mu_jy^{q^s}+y\omega}\right)
\end{equation}
Let 
\[
\omega^2=A+B\omega, 
\]
with $A,B \in \F_{q^m}$, and $A \neq 0$, since $\omega \notin \F_{q^m}$. By \eqref{eq:disjointcondition1}, we obtain that
\[
\mu_i\mu_ja^{q^{2m}+1} x^{q^{s+2m}}y^{q^s}+x^{q^{2m}}yA+(\mu_ia^{q^{2m}} x^{q^{s+2m}}y+\mu_jax^{q^{2m}} y^{q^s}+ x^{q^{2m}}yB ) \omega
\]
\[
=\mu_i\mu_ja^{q^{2m}+1} x^{q^s}y^{q^{s+2m}}+ xy^{q^{2m}}A+(\mu_ja^{q^{2m}} xy^{q^{s+2m}}+\mu_iax^{q^s}y^{q^{2m}}+ xy^{q^{2m}}B)\omega.
\]
Taking into account that $\{1,\omega\}$ is an $\F_{q^m}$-basis of $\F_{q^{2m}}$ and an $\F_{q^{km}}$-basis of $\F_{q^{2km}}$ as well, it follows that 
\begin{equation} \label{eq:disjointcondition2}
    \mu_i\mu_ja^{q^{2m}+1} (x^{q^{2m}}y-xy^{q^{2m}})^{q^s}=(xy^{q^{2m}}-x^{q^{2m}}y)A
\end{equation}
and 
\begin{equation} \label{eq:disjointcondition3}
    a^{q^{2m}}(\mu_ix^{q^{s+2m}}y-\mu_jxy^{q^{s+2m}})+a(\mu_jx^{q^{2m}} y^{q^s}-\mu_ix^{q^s}y^{q^{2m}})=(xy^{q^{2m}}-x^{q^{2m}}y)B.
\end{equation}
Let $u:=x^{q^{2m}}y-xy^{q^{2m}}$. If $u \neq 0$, by \eqref{eq:disjointcondition2}, it is a solution in $\F_{q^{km}}$ of the equation in $z$
\[
z^{q^s-1}=-\frac{A}{\mu_i\mu_ja^{q^{2m}+1}}.
\]

By Condition (ii), and since $(s,km)=k$, the above equation has solutions if and only if 
\[
\N_{q^{km}/q^k} \left( -\frac{A}{\mu_i\mu_ja^{q^{2m}+1}} \right)=1,
\]
i.e.
\[
(-1)^m\N_{q^{km}/q^k} \left( \frac{A}{\mu_i\mu_j} \right)=(\N_{q^{km}/q^k} \left( a \right))^{q^{2m}+1}.
\]
Now, from $A,\mu_i,\mu_j \in \F_{q^m}^*$ and Condition (i), we have $\N_{q^{km}/q^k} \left( \frac{A}{\mu_i\mu_j} \right)=\N_{q^{m}/q} \left( \frac{A}{\mu_i\mu_j} \right)\in \F_q$ and as consequence 
\begin{equation} \label{eq:disjointcondition4}
(\N_{q^{km}/q^k} \left( a \right))^{q^{2m}+1} \in \F_q.
\end{equation}
It follows that $(\N_{q^{km}/q^k} \left( a \right))^{(q^{2m}+1)(q^{2m}-1)}=(\N_{q^{km}/q^k} \left( a \right))^{q^{4m}-1}=1$. 
Note that since $\gcd(k,m)=1$ and $k$ is odd, then $\gcd(k,4m)=1$, and \eqref{eq:disjointcondition4} implies that $\N_{q^{km}/q^k} \left( a \right) \in \F_q$, that contradicts Condition (iii). \newline
Therefore, $u=0$ and so $x=\alpha y$, for some $\alpha \in \F_{q^m}^*$. Substituting in \eqref{eq:disjointcondition3}, we get
\[
(\alpha^{q^s}\mu_i-\alpha \mu_j)(a^{q^{2m}} y^{q^{s+2m}+1}-ay^{q^{s}+q^{2m}})=0.
\]
If $\alpha^{q^s}\mu_i-\alpha \mu_j \neq 0$, then 
\[
a^{q^{2m}} y^{q^{s+2m}+1}-ay^{q^{s}+q^{2m}}=0,
\]
i.e. 
\[
(ay^{q^s-1})^{q^{2m}-1}=1,
\]
and since $\gcd(2m,km)=m$, this implies that $ay^{q^s-1}=\beta$, for some $\beta \in \F_{q^m}^*$. So 
\[
\N_{q^{km}/q^k}(a)=\N_{q^{km}/q^k}(\beta),
\]
and since $\beta \in \F_{q^m}^*$ and $\gcd(k,m)=1$, we get $\N_{q^{km}/q^k}(\beta)=\N_{q^{m}/q}(\beta) \in \F_q$. This implies $\N_{q^{km}/q^k}(a) \in \F_q$, again a contradiction to Condition (iii). \newline
Therefore, $\alpha^{q^s}\mu_i-\alpha \mu_j = 0$, i.e. $\alpha^{q^s-1}=\mu_j/\mu_i$ that implies 
\[
1=\N_{q^{m}/q}(\alpha^{q^s-1})=\N_{q^{m}/q}\left(\frac{\mu_j}{\mu_i}\right),
\]
a contradiction, to our hypothesis on the $\mu_i$'s.
\end{proof}

%{\color{red}John: Check of Theorem 4.8 complete, all good! Changed $w$ to $u$ to avoid confusion with $\omega$... Paolo, feel free to change it back, or to something else}

%\textcolor{blue}{Paolo: Super! I agree about the use of $u$ instead of $w$.}

\begin{theorem} \label{th:constructiondisjoint3}
   Let $\omega \in \F_{q^{2m}} \setminus \F_{q^{m}}$. Let $\mu_1,\ldots,\mu_{q-1} \in \F_{q^m}^*$ such that $\N_{q^m/q}(\mu_i) \neq \N_{q^m/q}(\mu_j)$, if $i \neq j$. Define
\[U_i=\{ a\mu_ix^{q^s}+x\omega \colon x \in  \F_{q^{km}}\} \subseteq \F_{q^{2km}},\]
for some positive integers $s,r$, with $k \geq 3$ odd.
Suppose that  \begin{enumerate}[(i)]
       \item $q \equiv 1 \ \mod k$,
       \item $\gcd(s,2m)=\gcd(s,km)=1$,
       \item $(N_{q^{km}/q}(a))^{\frac{q-1}{k}} \neq 1$.
   \end{enumerate} 
Then $L_{U_1},\ldots,L_{U_{q-1}}$ are pairwise disjoint maximum $1$-scattered $\F_q$-linear sets of $\PG(k-1,q^{2m})$ having rank $km$.
\end{theorem}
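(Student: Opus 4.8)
The plan is to follow the same two-part strategy as in the proof of Theorem \ref{th:constructiondisjoint2}, replacing the appeal to Theorem \ref{th:masxcattkodd1} by one to Theorem \ref{th:masxcattkodd2}. Since the defining shape of $U_i$ is identical to that in Theorem \ref{th:constructiondisjoint2} (with $a$ replaced by $a\mu_i$), the bulk of the computation can be reused; the only genuinely new input is in verifying the scattering hypothesis and in the final norm bookkeeping.

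First I would verify that each $L_{U_i}$ is maximum $1$-scattered. As $\dim_{\F_q}(U_i)=km$ and $U_i=\{a\mu_i x^{q^s}+x\omega\colon x\in\F_{q^{km}}\}$, it suffices to check the third hypothesis of Theorem \ref{th:masxcattkodd2} with $a$ replaced by $a\mu_i$, namely $(\N_{q^{km}/q}(a\mu_i))^{(q-1)/k}\neq 1$. The key elementary observation is that for $\mu_i\in\F_{q^m}^*$ one has $\N_{q^{km}/q}(\mu_i)=\N_{q^m/q}(\mu_i)^k$, so $(\N_{q^{km}/q}(\mu_i))^{(q-1)/k}=\N_{q^m/q}(\mu_i)^{q-1}=1$; multiplicativity of the norm then reduces the condition to $(\N_{q^{km}/q}(a))^{(q-1)/k}\neq 1$, which is hypothesis (iii). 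Hypotheses (i) and (ii) are carried over verbatim.

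For the disjointness I would argue by contradiction exactly as before. Assuming $L_{U_i}\cap L_{U_j}\neq\emptyset$ produces $x,y\in\F_{q^{km}}^*$ and $\lambda\in\F_{q^{2m}}^*$ with $a\mu_i x^{q^s}+x\omega=\lambda(a\mu_j y^{q^s}+y\omega)$; using that $\lambda$ is fixed by the $q^{2m}$-power map, writing $\omega^2=A+B\omega$ with $A,B\in\F_{q^m}$, $A\neq 0$, and separating along the $\F_{q^{km}}$-basis $\{1,\omega\}$ of $\F_{q^{2km}}$, I obtain precisely equations \eqref{eq:disjointcondition2} and \eqref{eq:disjointcondition3}. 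Setting $u=x^{q^{2m}}y-xy^{q^{2m}}$, the case $u=0$ is handled verbatim as in Theorem \ref{th:constructiondisjoint2}: it forces $x=\alpha y$ with $\alpha\in\F_{q^m}^*$ (using $\gcd(2m,km)=m$, valid since $k$ is odd), and the factorisation $(\mu_i\alpha^{q^s}-\mu_j\alpha)(a^{q^{2m}}y^{q^{s+2m}+1}-ay^{q^{2m}+q^s})=0$ yields either $\N_{q^m/q}(\mu_i)=\N_{q^m/q}(\mu_j)$ (contradicting the hypothesis on the $\mu$'s) or $ay^{q^s-1}=\beta\in\F_{q^m}^*$, whence $\N_{q^{km}/q}(a)=\N_{q^{km}/q}(\beta)=\N_{q^m/q}(\beta)^k$ is a $k$-th power in $\F_q^*$ and so $(\N_{q^{km}/q}(a))^{(q-1)/k}=1$, contradicting (iii).

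The main obstacle, and the only place where the argument genuinely departs from Theorem \ref{th:constructiondisjoint2}, is the case $u\neq 0$. Here equation \eqref{eq:disjointcondition2} gives $u^{q^s-1}=-A/(\mu_i\mu_j a^{q^{2m}+1})$; since $\gcd(s,km)=1$, the image of the $(q^s-1)$-power map on $\F_{q^{km}}^*$ is exactly the subgroup of norm-$1$ elements, so solvability forces $\N_{q^{km}/q}(-A/(\mu_i\mu_j a^{q^{2m}+1}))=1$. Evaluating this norm via $\N_{q^{km}/q}(\beta)=\N_{q^m/q}(\beta)^k$ for $\beta\in\F_{q^m}$, via $\N_{q^{km}/q}(-1)=(-1)^{km}$, and via the Galois-invariance $\N_{q^{km}/q}(a^{q^{2m}})=\N_{q^{km}/q}(a)$, I arrive at $\N_{q^{km}/q}(a)^2=(-1)^{km}w^k$ for some $w\in\F_q^*$. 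The subtlety is that this controls $\N_{q^{km}/q}(a)^2$ rather than $\N_{q^{km}/q}(a)$, and carries a sign. Both difficulties are dissolved by the hypothesis that $k$ is odd: then $(-1)^{km}=((-1)^m)^k$ is itself a $k$-th power, so $\N_{q^{km}/q}(a)^2$ is a $k$-th power in $\F_q^*$; and since $\gcd(2,k)=1$, a short order argument in the cyclic group $\F_q^*$ (the element $\N_{q^{km}/q}(a)^{(q-1)/k}$ is a $k$-th root of unity whose square is $1$, hence equals $1$) upgrades this to $(\N_{q^{km}/q}(a))^{(q-1)/k}=1$, contradicting (iii). This exhausts all cases and establishes pairwise disjointness.
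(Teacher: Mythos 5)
Your proposal is correct and follows essentially the same route as the paper's proof: the same reduction of the scatteredness of each $L_{U_i}$ to Theorem \ref{th:masxcattkodd2} via $\N_{q^{km}/q}(\mu_i)=\N_{q^{m}/q}(\mu_i)^k$, and the same contradiction scheme for disjointness through equations \eqref{eq:disjointcondition2} and \eqref{eq:disjointcondition3} with the case split on $u$. The only divergence is in the endgame of the case $u\neq 0$, where the paper argues by a parity case analysis on $\frac{m(q-1)}{k}$ while you observe directly that $\N_{q^{km}/q}(a)^{(q-1)/k}$ is a $k$-th root of unity whose square is $1$ and hence, as $k$ is odd, equals $1$ — a cleaner formulation of the same contradiction.
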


\begin{proof}
As in the proof of the previous theorem, we first prove that each of $L_{\U_i}$ is a maximum $1$-scattered of $\PG(k-1,q^{2m})$. Clearly, $\dim_{\F_q}(\U_i)=km$. Since $\mu_i \in \F_{q^m}$, we have that $N_{q^{km}/q}(\mu_i)=(N_{q^{m}/q}(\mu_i))^k$. Therefore,
\[
(N_{q^{km}/q}(a\mu_i))^{\frac{q-1}{k}}=(N_{q^{km}/q}(a))^{\frac{q-1}{k}}((N_{q^{m}/q}(\mu_i))^k)^{\frac{q-1}{k}}=(N_{q^{rm}/q}(a))^{\frac{q-1}{k}} \neq 1,\]
by Condition (iii). So, each $L_{\U_i}$ satisfies the assumptions of Theorem \ref{th:masxcattkodd2} and hence it follows the $L_{\U_i}$'s are maximum $1$-scattered $\F_q$-linear set of $\PG(k-1,q^{2m})$. \newline
Now, arguing as in Theorem \ref{th:constructiondisjoint2}, we have that $L_{\U_i}$ and $L_{\U_j}$ have a common point, for some $i \neq j$ if and only if there exist $x,y \in \F_{q^{km}}^*$ such that equations \eqref{eq:disjointcondition2} and \eqref{eq:disjointcondition3} hold. Choosing again $w \coloneqq xy^{q^{2m}}-yx^{q^{2m}}$, if $w \neq 0$, it turns out to be a solution in $\F_{q^{km}}^*$ of the equation in $z$
\[
z^{q^s-1}=-\frac{A}{\mu_i\mu_ja^{q^{2m}+1}}.
\]
By Condition (ii), the above equation has solutions if and only if 
\[
\N_{q^{km}/q} \left( -\frac{A}{\mu_i\mu_ja^{q^{2m}+1}} \right)=1,
\]
i.e.
\[
(\N_{q^{km}/q} \left( a \right))^{2}=(-1)^m\left(\N_{q^{m}/q} \left( -\frac{A}{\mu_i\mu_j} \right)\right)^k,
\]
implying
\[
\left((\N_{q^{km}/q} \left( a \right))^{\frac{m(q-1)}{k}}\right)^{2}=(-1)^{\frac{m(q-1)}{k}}.
\]
Now, if $\frac{m(q-1)}{k}$ is even, then Condition (iii) implies that $q$ is odd and $\left((\N_{q^{km}/q} \left( a \right))^{\frac{m(q-1)}{k}}\right)^2=-1$. Raising the last equality to the $k$-th power we get a contradiction. While if $\frac{m(q-1)}{k}$ is odd then $q$ is even, and so $\left((\N_{q^{km}/q} \left( a \right))^{\frac{m(q-1)}{k}}\right)=1$, again a contradiction to Condition (iii). It follows that $w=0$ and so $x=\alpha y$, for some $\alpha \in \F_{q^m}^*$. Arguing again as in the proof of Theorem \ref{th:constructiondisjoint2}, we obtain that $ay^{q^s-1}=\beta$, for some $\beta \in \F_{q^m}^*$. So, since $\beta \in \F_{q^m}^*$, it follows that 
\[
(\N_{q^{m}/q}(\beta))^r=\N_{q^{km}/q}(a).
\]
This implies $(\N_{q^{km}/q}(a))^{\frac{m(q-1)}{k}}=1$, a contradiction to Condition (iii). Hence, $\alpha^{q^s}\mu_i-\alpha \mu_j = 0$, i.e. $\alpha^{q^s-1}=\mu_j/\mu_i$ that implies as in the previous proof that 
\[
1=\N_{q^{m}/q}(\alpha^{q^s-1})=\N_{q^{m}/q}\left(\frac{\mu_j}{\mu_i}\right),
\]
a contradiction, to our hypothesis.

\end{proof}
 Finally, in the next result, we prove the existence of maximum $1$-design in $\F_{q^m}^3$, for any $t \leq q-1$, also in the case $m$ even, $m \equiv 0 \, \mbox{mod } 3$ and $q \not \equiv 1 \, \mod 3$. (Note that these parameters are not covered by Theorem \ref{th:constructiondisjoint2} and Theorem \ref{th:constructiondisjoint3}.)

\begin{theorem} \label{th:disjoint4}
    Let $\omega \in \F_{q^{2m}} \setminus \F_{q^{m}}$ and let $\mu_1,\ldots,\mu_{q-1} \in \F_{q^m}^*$ such that $\N_{q^m/q}(\mu_i) \neq \N_{q^m/q}(\mu_j)$, if $i \neq j$. Let $a, b \in  \F_{q^{3m}}^*$ and $1 \leq s \leq 3m-1$, with $\gcd(s,m)=1$ such that
\[
\frac{f_{s,a,b}(x)}{x} \notin \F_{q^m} \mbox{ for each }x \in \F_{q^{3m}}^*,
\]
where $f_{s,a,b}$ is defined as in \eqref{eq:deffsab}.
    Define
\[\U_i=\{ \mu_if_{s,a,b}(x)+x\omega \colon x \in  \F_{q^{3m}}\} \subseteq \F_{q^{6m}}.\]
Then $L_{\U_1},\ldots,L_{\U_{q-1}}$ are pairwise disjoint maximum $1$-scattered $\F_q$-linear sets of the projective plane $\PG(2,q^{2m})$ having rank $3m$.
\end{theorem}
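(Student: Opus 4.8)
The plan is to mirror the proofs of Theorems \ref{th:constructiondisjoint2} and \ref{th:constructiondisjoint3}: first show that each $L_{\U_i}$ is maximum $1$-scattered, and then rule out common points. For the scatteredness, note that $\mu_i f_{s,a,b}(x)=f_{s,a\mu_i,b\mu_i}(x)$, so each $\U_i$ has exactly the shape to which Theorem \ref{th:maximumscattered3} applies (with $a,b$ replaced by $a\mu_i,b\mu_i$). Since $\omega\in\F_{q^{2m}}\setminus\F_{q^m}$ and $\F_{q^{2m}}\cap\F_{q^{3m}}=\F_{q^m}$, the pair $\{1,\omega\}$ is an $\F_{q^{3m}}$-basis of $\F_{q^{6m}}$; this makes $x\mapsto\mu_i f_{s,a,b}(x)+x\omega$ injective, so $\dim_{\F_q}\U_i=3m$. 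Moreover $\mu_i\in\F_{q^m}^*$, so the hypothesis $f_{s,a,b}(x)/x\notin\F_{q^m}$ passes to $f_{s,a\mu_i,b\mu_i}(x)/x=\mu_i f_{s,a,b}(x)/x\notin\F_{q^m}$ for all $x\in\F_{q^{3m}}^*$. Hence Theorem \ref{th:maximumscattered3} yields that each $L_{\U_i}$ is a maximum $1$-scattered linear set of rank $3m$ in $\PG(2,q^{2m})$.

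For the disjointness, I would suppose $L_{\U_i}\cap L_{\U_j}\neq\emptyset$ with $i\neq j$, so that $P:=\mu_i f_{s,a,b}(x)+x\omega=\lambda Q$, where $Q:=\mu_j f_{s,a,b}(y)+y\omega$, for some $x,y\in\F_{q^{3m}}^*$ and $\lambda\in\F_{q^{2m}}^*$. Writing $\tau$ for the $q^{2m}$-power map, we have $\tau(\lambda)=\lambda$, $\tau(\omega)=\omega$ and $\tau(\mu_i)=\mu_i$, so $P/Q=\lambda=\tau(P)/\tau(Q)$ gives $\tau(P)Q=P\tau(Q)$. Expanding this using $\omega^2=A+B\omega$ (with $A,B\in\F_{q^m}$ and $A\neq0$, since $\omega\notin\F_{q^m}$) and equating the coefficients of $1$ and $\omega$ in the $\F_{q^{3m}}$-basis $\{1,\omega\}$ yields two equations over $\F_{q^{3m}}$, entirely analogous to \eqref{eq:disjointcondition2} and \eqref{eq:disjointcondition3}. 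I then set $u:=x\tau(y)-\tau(x)y$ and split according to whether $u=0$.

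If $u=0$, then $x=\alpha y$ for some $\alpha\in\F_{q^m}^*$, and using $f_{s,a,b}(\alpha y)=\alpha^{q^s}f_{s,a,b}(y)$ the $\omega$-coefficient equation collapses to $(\mu_i\alpha^{q^s}-\mu_j\alpha)\,w=0$, where $w=\tau(f_{s,a,b}(y))\,y-f_{s,a,b}(y)\,\tau(y)$. Since $w=0$ would force $f_{s,a,b}(y)/y\in\F_{q^m}$, the scatteredness hypothesis gives $w\neq0$, hence $\alpha^{q^s-1}=\mu_j/\mu_i$; applying $\N_{q^m/q}$ and using that the norm is invariant under the $q^s$-power map yields $\N_{q^m/q}(\mu_i)=\N_{q^m/q}(\mu_j)$, contradicting the hypothesis on the $\mu_\ell$.

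The hard part is the case $u\neq0$. Here $x/y\notin\F_{q^m}$, and the coefficient-of-$1$ equation takes the form
\[
\mu_i\mu_j\big(\tau(f_{s,a,b}(x))\,f_{s,a,b}(y)-f_{s,a,b}(x)\,\tau(f_{s,a,b}(y))\big)=A\,u.
\]
Because $f_{s,a,b}$ has two terms, writing $X=x^{q^s}$, $Y=y^{q^s}$ and using $\tau^2(z)=z^{q^m}$ on $\F_{q^{3m}}$ expands the left-hand bracket into a combination of the antisymmetric expressions $\tau(X)Y-X\tau(Y)$ and $\tau^2(X)Y-X\tau^2(Y)$ and their $\tau$-conjugates, rather than the single clean monomial that appeared in Theorems \ref{th:constructiondisjoint2}--\ref{th:constructiondisjoint3}. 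My plan is to combine this with the $\omega$-coefficient equation so as to eliminate $A$ and $B$, and then to exploit the $\tau$-invariance of $A,B\in\F_{q^m}$ together with $\gcd(s,m)=1$ and the hypothesis $f_{s,a,b}(x)/x\notin\F_{q^m}$; I expect this to force either $x/y\in\F_{q^m}$ or $f_{s,a,b}(x)/x\in\F_{q^m}$, both of which are impossible. Carrying out this elimination for the two-term map, and checking that no solution survives, is where essentially all the difficulty of the proof is concentrated.
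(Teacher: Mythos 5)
Your first two steps (the maximum scatteredness of each $L_{\U_i}$ via Theorem \ref{th:maximumscattered3}, and the branch $u=0$ leading to $\alpha^{q^s-1}=\mu_j/\mu_i$ and a norm contradiction) are correct and coincide with the paper's own proof. But the branch $u\neq 0$ — which, as you say yourself, is where all the difficulty lies — is not proved in your proposal: you describe an elimination strategy and state that you \emph{expect} it to yield a contradiction. Since ruling out $u\neq 0$ is exactly the content that distinguishes this theorem from a routine computation, this is a genuine gap and the proposal does not prove the statement.

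The idea you are missing is that the two-term shape of $f_{s,a,b}$ does \emph{not} destroy the ``single clean'' structure of the coefficient-of-$1$ equation. Writing $w=xy^{q^{2m}}-x^{q^{2m}}y$ and using that exponents reduce modulo $q^{3m}$ on $\F_{q^{3m}}$ (so e.g.\ $x^{q^{4m+s}}=x^{q^{m+s}}$), one checks
\[
x^{q^{2m+s}}y^{q^{s}}-x^{q^{s}}y^{q^{2m+s}}=-w^{q^{s}},\qquad
x^{q^{m+s}}y^{q^{s}}-x^{q^{s}}y^{q^{m+s}}=w^{q^{m+s}},\qquad
x^{q^{m+s}}y^{q^{2m+s}}-x^{q^{2m+s}}y^{q^{m+s}}=-w^{q^{2m+s}},
\]
while the mixed terms cancel, so that equation \eqref{eq:disjoint2condition2} \emph{alone} (no elimination of $B$, no use of \eqref{eq:disjoint2condition3} at this stage) collapses into
\[
G(w):=\mu_i\mu_j\left( b^{q^{2m}+1}w^{q^{2m+s}}- ab^{q^{2m}}w^{q^{m+s}} +a^{q^{2m}+1}w^{q^{s}}\right) +Aw=0,
\]
a linearized polynomial in the single variable $w$: precisely the ``clean'' object you claimed would not appear. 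Since $\mu_i\mu_j$ and $A$ both lie in $\F_{q^m}^*$, this is, up to rescaling the coefficient of $w$ by an element of $\F_{q^m}^*$, exactly the equation already treated in the proof of \cite[Proposition 2.7]{bartoli2018maximum} (the case $\mu_i=\mu_j=1$), and that argument transfers verbatim to give $w=0$; this is how the paper disposes of the hard case. In other words, the hard case is settled not by a new elimination between your two equations, but by observing the collapse to $G(w)=0$ and quoting the known kernel-triviality argument for $G$. Without this observation (or a fully worked-out substitute for your elimination plan), your proposal stops exactly where the theorem begins.
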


\begin{proof}
First observe that $\dim_{\F_q}(\U_i)=3m$, for each $i \in \{1,\ldots,q-1\}$. Since $\mu_i \in \F_{q^m}$ then also
\[
\frac{\mu_i f_{s,a,b}(x)}{x} \notin \F_{q^m} \mbox{ for each }x \in \F_{q^{3m}}^*
\]
holds. So, each $L_{\U_i}$ satisfies the assumptions of Theorem \ref{th:maximumscattered3} and hence it follows the $L_{\U_i}$'s are maximum $1$-scattered $\F_q$-linear set of $\PG(2,q^{2m})$. \newline 
Suppose now by contradiction that $L_{\U_i} \cap L_{\U_j} \neq \emptyset$, for some $i \neq j$. This means that there exist $x,y \in \F_{q^{3m}}^*$ and $\lambda \in \F_{q^{2m}}^*$ such that 
\[
(\mu_i f_{s,a,b}(x)+x\omega)=\lambda(\mu_j f_{s,a,b}(y)+y\omega).
\]
This implies that 
\begin{equation} \label{eq:disjoint2condition1}
\left( \frac{\mu_if_{s,a,b}(x)+x\omega}{\mu_jf_{s,a,b}(y)+y\omega}\right)^{q^{2m}}= \left( \frac{\mu_if_{s,a,b}(x)+x\omega}{\mu_jf_{s,a,b}(y)+y\omega}\right).
\end{equation}
Let
\[
\omega^2=A+B\omega, 
\]
with $A,B \in \F_{q^m}$, and $A \neq 0$ and taking into account that $\{1,\omega\}$ is an $\F_{q^m}$-basis of $\F_{q^{2m}}$ and an $\F_{q^{3m}}$-basis of $\F_{q^{6m}}$ as well, by \eqref{eq:disjoint2condition1}, we obtain that
\begin{equation} \label{eq:disjoint2condition2}
    \mu_i\mu_j (f_{s,a,b}(x)^{q^{2m}}f_{s,a,b}(y)- f_{s,a,b}(y)^{q^{2m}}f_{s,a,b}(x)) =(xy^{q^{2m}}-x^{q^{2m}}y)A
\end{equation}
and 
\begin{equation} \label{eq:disjoint2condition3}
    \mu_i f_{s,a,b}(x)^{q^{2m}} y+ \mu_j f_{s,a,b}(y) x^{q^{2m}} - \mu_j f_{s,a,b}(y)^{q^{2m}}x-\mu_i f_{s,a,b}(x)y^{q^{2m}} =(xy^{q^{2m}}-x^{q^{2m}}y)B.
\end{equation}
Recalling that $f_{s,a,b}(x)=ax^{q^s}+b{x^{q^{2m+s}}}$, \eqref{eq:disjoint2condition2} reads like
\[
G(w):=\mu_i\mu_j( b^{q^{2m}+1}w^{q^{2m+s}}- b^{q^{2m}}aw^{q^{m+s}} +a^{q^{2m}+1}w^{q^{s}}) +Aw=0,
\]
where $w=xy^{q^{2m}}-x^{q^{2m}}y$. 
%By hypothesis, we have that $f_{s,a,b}(x) \neq 0$ for each $x \in \F_{q^{3m}}^*$, that implies that $\N_{q^{3m}/q^m}(a) \neq -\N_{q^{3m}/q^m}(b)$.
Arguing as in the proof of \cite[Proposition 2.7]{bartoli2018maximum} for the case $\mu_i=\mu_j=1$, we get that $w=0$.
%\textcolor{blue}{Do you think that here I have to insert all the details?}
This implies that $x=\alpha y$, for some $\alpha \in \F_{q^m}^*$. Putting $x=\alpha y$ and $f_{s,a,b}(x)=ax^{q^s}+b^{x^{q^{2m+s}}}$ in \eqref{eq:disjoint2condition3}, and observing that $f_{s,a,b}(\alpha y)=\alpha^{q^s}f_{s,a,b}(y)$, we have
\[
(\alpha^{q^s}\mu_i-\alpha \mu_j)(f_{s,a,b}(y)^{q^{2m}}y-f_{s,a,b}(y)y^{q^{2m}})=0.
\]
If $f_{s,a,b}(y)^{q^{2m}}y-f_{s,a,b}(y)y^{q^{2m}}=0$, we get that $f_{s,a,b}(y)/y \in \F_{q^{m}}$, a contradiction.
Therefore, $\alpha^{q^s}\mu_i-\alpha \mu_j = 0$, i.e. $\alpha^{q^s-1}=\mu_j/\mu_i$ and so again 
\[
1=N_{q^{m}/q}(\alpha^{q^s-1})=N_{q^{m}/q}\left(\frac{\mu_j}{\mu_i}\right),
\]
a contradiction, to the fact that the $\mu_i$'s have pairwise distinct norm.
\end{proof}

This results allows to state the following existence theorem for maximum 1-design.
\begin{theorem} \label{th:existence1design}
    For any positive integer $k,m \geq 2$, such that $mk$ is even, there exist $L_{\U_1},\ldots,L_{\U_t}$ pairwise disjoint maximum scattered $\F_q$-linear sets in $\PG(V,\F_{q^m})=\PG(k-1,q^m)$, for any $1 \leq t \leq q-1$. In particular, there exists a maximum 1-design $[(mk/2,\ldots,mk/2),k]_{q^m/q}$-system for any $1 \leq t \leq q-1$.
\end{theorem}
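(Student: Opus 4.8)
The plan is to reduce the general statement to two small building blocks, maximum $1$-designs of $V(2,q^m)$ and of $V(3,q^m)$, and then to assemble the required object by taking direct sums. By Lemma \ref{lem:designdisjoint}, exhibiting $t$ pairwise disjoint maximum scattered $\F_q$-linear sets in $\PG(k-1,q^m)$ is the same as exhibiting a maximum $1$-design $(\U_1,\ldots,\U_t)$ in $V=V(k,q^m)$, that is, an ordered $t$-tuple of $\F_q$-subspaces, each of $\F_q$-dimension $mk/2$, spanning $V$ and with $\sum_{i=1}^t\dim_{\F_q}(\U_i\cap\langle v\rangle_{\F_{q^m}})\le 1$ for every point $\langle v\rangle_{\F_{q^m}}$. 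Hence it suffices to produce such a maximum $1$-design for every admissible $(k,m,t)$.

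First I would record that direct summation preserves both the $1$-design property and maximality. If $V=V_1\oplus\cdots\oplus V_\ell$ with $V_j=V(k_j,q^m)$ and if in each block we have a $1$-design $(\U_{j,1},\ldots,\U_{j,t})$ with $\dim_{\F_q}\U_{j,i}=mk_j/2$, then putting $\U_i=\U_{1,i}\oplus\cdots\oplus\U_{\ell,i}$ and applying Proposition \ref{cor:directsum} with $h=1$ and all $r_j=1$ yields a $(1,\ell-(\ell-1))=(1,1)$-design, i.e. a $1$-design, in $V=V(k,q^m)$ with $k=\sum_j k_j$; moreover $\dim_{\F_q}\U_i=\sum_j mk_j/2=mk/2$, so it is maximum. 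Since every block below can be realised with exactly $q-1$ subspaces, for a prescribed $t\le q-1$ I would simply use the same index subset of size $t$ in each block, so the assembly is uniform in $t$.

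Next I would split on the parity of $k$. If $k$ is even I would take $V(k,q^m)=V(2,q^m)^{\oplus k/2}$ and use in each summand the pseudoregulus-type maximum $1$-design of $V(2,q^m)$ given by Proposition \ref{prop:ex1desnorm} with its parameter equal to $1$; this only needs an integer $s$ with $\gcd(s,m)=1$ and elements $\mu_1,\ldots,\mu_{q-1}\in\F_{q^m}^*$ of pairwise distinct norm, both available for every $m\ge 2$. If $k$ is odd then, because $mk$ is even, $m$ must be even; I would then take $V(k,q^m)=V(3,q^m)\oplus V(2,q^m)^{\oplus(k-3)/2}$, handle the even summands as before, and handle the single $V(3,q^m)$ summand by Theorem \ref{th:disjoint4}: writing $m=2m'$ and applying that theorem with internal parameter $m'$ produces $q-1$ pairwise disjoint maximum scattered linear sets of rank $3m'=3m/2$ in $\PG(2,q^{2m'})=\PG(2,q^m)$, the needed $a,b,s$ being supplied by the existence result of \cite{csajbok2017maximum} quoted above. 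Gluing the blocks gives the desired maximum $1$-design, and translating back through Lemma \ref{lem:designdisjoint} produces the pairwise disjoint maximum scattered linear sets; the ``in particular'' statement about the $[(mk/2,\ldots,mk/2),k]_{q^m/q}$ system is then immediate.

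The step deserving the most care is the parity reduction rather than any computation. The only odd-dimensional building block at our disposal is $V(3,q^m)$, and a maximum scattered linear set there has rank $3m/2$, which is an integer exactly when $m$ is even --- precisely the condition forced by $mk$ even once $k$ is odd. Thus the main point to verify is that the hypothesis ``$mk$ even'' is exactly what guarantees the availability of a suitable odd block, and that the dimension bookkeeping above indeed preserves the maximum $1$-design property for all $t\le q-1$; no genuinely new estimate beyond the already-established block constructions is required.
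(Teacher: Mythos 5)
Your proof is correct; it uses the same toolbox as the paper but organizes it differently. The paper's proof runs through four cases: for $k$ even it quotes Proposition \ref{prop:ex1desnorm} directly (that construction is already stated in every even dimension, so no gluing is needed); for $k$ odd it first uses the direct constructions of Theorems \ref{th:constructiondisjoint2} and \ref{th:constructiondisjoint3} when their gcd, respectively congruence ($q\equiv 1\pmod{k}$), hypotheses hold, and only in the remaining cases does it decompose $V$ into summands of dimension $2$ and $3$ and combine Theorem \ref{th:disjoint4} with the gluing Proposition \ref{cor:directsum}. You instead apply the decomposition argument uniformly: every case reduces to the blocks $V(2,q^m)$ (Proposition \ref{prop:ex1desnorm} with parameter $1$; for $k$ even your glued design is in fact exactly the design of Proposition \ref{prop:ex1desnorm}) and, when $k$ is odd, one block $V(3,q^m)$ treated by Theorem \ref{th:disjoint4} with internal parameter $m/2$ --- legitimate for every $q$ and every even $m$, since the existence result quoted from \cite{csajbok2017maximum} supplies admissible $a$, $b$, $s$ unconditionally. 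Your bookkeeping is accurate: with $h=1$ and all $r_i=1$, Proposition \ref{cor:directsum} returns a $(1,1)$-design; dimensions add to $mk/2$; and since a maximum scattered subspace cannot lie in a hyperplane, each member spans its block, so restricting to any $t\le q-1$ indices preserves the spanning condition and the construction is uniform in $t$. What the two routes buy: yours is shorter and avoids all arithmetic case distinctions beyond parity, but every design it produces consists of subspaces that split as direct sums of lower-rank scattered subspaces; the paper's preference for Theorems \ref{th:constructiondisjoint2} and \ref{th:constructiondisjoint3}, when applicable, yields maximum scattered linear sets of $\PG(k-1,q^m)$ that do not arise by gluing, which is of independent geometric interest (these are among the paper's main new constructions) even though it is not needed for the bare existence statement.
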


\begin{proof}
    The assertion follows by Proposition \ref{prop:ex1desnorm} if $k$ is even. If $k$ is odd the assertion follows by Theorem \ref{th:constructiondisjoint2} if $\gcd(m,k)$=1 and by Theorem \ref{th:constructiondisjoint3} if $q \equiv 1 \ \mod k$.
    If $k,m,q$ do not satisfy one of the above mentioned assumptions, we get the existence of $L_{\U_1},\ldots,L_{\U_t}$ pairwise disjoint maximum scattered $\F_q$-linear sets in $\PG(V,\F_{q^m})=\PG(k-1,q^m)$, for any $1 \leq t \leq q-1$, by decomposing $V$ as the direct sum of at least one vector subspace of dimension 3 and a suitable number of vector subspaces of dimension 2 and 3, using Theorem \ref{th:disjoint4} and then applying Proposition \ref{cor:directsum}. The last part follows by Lemma \ref{lem:designdisjoint}.
\end{proof}

\subsection{Maximum \texorpdfstring{$1$}{Lg}-design with $t>q-1$} \label{subsec:maximum1designt>q-1}
The question remains as to whether the values of $t$ covered in Theorem \ref{th:existence1design} are optimal. An upper bound on $t$ is the following.
\begin{proposition} [see \textnormal{\cite[Corollary 5.22]{santonastaso2022subspace}}]
Let $(U_1,\ldots,U_t)$ be a maximum $1$-design in $V=V(k,q^m)$.
Consider the associated $\fq$-linear sets $L_{U_1},\ldots,L_{U_t}$. Then 
%\begin{itemize}
    %\item 
    $t \leq (q-1)\left(\frac{q^{\frac{mk}{2}}+1}{q^m-1}\right)$,
 %   \item $t = (q-1)\frac{q^{\frac{mk}{2}}+1}{q^m-1}$
and the equality holds if and only if $m=2$, $k$ is odd, $L_{U_i}$ is a Baer subgeometry of $\PG(k-1,q^2)$ for any $i$ and $L_{U_1},\ldots,L_{U_t}$ is a partition of $\PG(k-1,q^2)$.
%\end{itemize}
\end{proposition}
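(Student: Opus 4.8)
The plan is to compare the total number of points covered by the pairwise disjoint linear sets with the number of points of the ambient projective space. By Lemma \ref{lem:designdisjoint} the $L_{U_i}$ are pairwise disjoint scattered $\F_q$-linear sets, and since $(U_1,\ldots,U_t)$ is a \emph{maximum} $1$-design each $U_i$ has $\F_q$-dimension $mk/2$, so each $L_{U_i}$ is maximum scattered with $|L_{U_i}|=\frac{q^{mk/2}-1}{q-1}$. As the $L_{U_i}$ are pairwise disjoint subsets of $\PG(k-1,q^m)$, which has $\frac{q^{mk}-1}{q^m-1}$ points, I would write
\[
t\cdot\frac{q^{mk/2}-1}{q-1}\le \frac{q^{mk}-1}{q^m-1},
\]
factor $q^{mk}-1=(q^{mk/2}-1)(q^{mk/2}+1)$, and cancel $q^{mk/2}-1$ to obtain exactly $t\le (q-1)\frac{q^{mk/2}+1}{q^m-1}$. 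Equality holds in this chain precisely when the $L_{U_i}$ cover every point, i.e. when they partition $\PG(k-1,q^m)$; this already yields the backward implication, since disjoint Baer subgeometries of rank $k=mk/2$ covering $\PG(k-1,q^2)$ force $t$ to equal the bound by the same size count.

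For the forward implication I would exploit that, at equality, $t=(q-1)\frac{q^{mk/2}+1}{q^m-1}$ must be a positive \emph{integer}, and extract the claimed structure purely from this divisibility condition. First I would treat the parity of $k$: if $k$ is even then $mk/2$ is a multiple of $m$, so $q^{mk/2}+1\equiv 2\pmod{q^m-1}$ and the condition $q^m-1\mid (q-1)(q^{mk/2}+1)$ reduces to $q^m-1\mid 2(q-1)$, which is impossible for $m\ge 2$ since $q^m-1\ge q^2-1>2(q-1)$; hence $k$ must be odd. For $k$ odd the integrality of $mk/2$ forces $m$ to be even, and the residue of $mk/2$ modulo $m$ equals $m/2$, giving $q^{mk/2}+1\equiv q^{m/2}+1\pmod{q^m-1}$. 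Using the factorization $q^m-1=(q^{m/2}-1)(q^{m/2}+1)$ I would then reduce the divisibility to $q^{m/2}-1\mid q-1$, which forces $m/2=1$, that is $m=2$.

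Once $m=2$ and $k$ odd are established, each $U_i$ has $\F_q$-dimension $mk/2=k$, so each $L_{U_i}$ is a scattered linear set of rank $k$ in $\PG(k-1,q^2)$. I would argue that it spans the whole space: otherwise it lies in a hyperplane $\PG(k-2,q^2)$, where a scattered linear set has rank at most $(k-1)\cdot 2/2=k-1<k$, a contradiction. A rank-$k$ spanning linear set of $\PG(k-1,q^2)$ is by definition a canonical subgeometry, i.e. a Baer subgeometry; and since equality means the $L_{U_i}$ cover all points, they form a partition of $\PG(k-1,q^2)$, completing the forward direction.

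The main obstacle is the forward direction, specifically the clean extraction of $m=2$ and the parity of $k$ from the integrality of the bound; the delicate points are the two modular reductions $q^{mk/2}+1\equiv q^{m/2}+1\pmod{q^m-1}$ and $q^m-1\mid 2(q-1)$, the cancellation $q^m-1=(q^{m/2}-1)(q^{m/2}+1)$, and the observation that a maximum scattered linear set of rank $k$ in $\PG(k-1,q^2)$ is automatically spanning and hence a Baer subgeometry. The size count giving the inequality itself is entirely routine.
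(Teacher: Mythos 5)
Your proof is correct. Note that this proposition is one the paper only quotes from an external source (\cite[Corollary 5.22]{santonastaso2022subspace}) without reproducing a proof, so there is no internal argument to compare against; your route --- packing pairwise disjoint scattered linear sets of size $\frac{q^{mk/2}-1}{q-1}$ into the $\frac{q^{mk}-1}{q^m-1}$ points of $\PG(k-1,q^m)$, then extracting $k$ odd and $m=2$ from the integrality of $t$ via the reductions $q^{mk/2}+1\equiv 2 \pmod{q^m-1}$ (for $k$ even) and $q^{mk/2}+1\equiv q^{m/2}+1\pmod{q^m-1}$ (for $k$ odd), and finally forcing each $U_i$ to span via the Blokhuis--Lavrauw bound applied inside a hyperplane --- is the natural self-contained argument, and all of its delicate steps check out (in particular, $(q^{m/2}-1)(q^{m/2}+1)\mid (q-1)(q^{m/2}+1)$ does cancel to $q^{m/2}-1\mid q-1$, and a spanning rank-$k$ linear set of $\PG(k-1,q^2)$ is indeed a Baer subgeometry by the paper's own definition of canonical subgeometry).
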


We will show now some contructions of maximum $1$-design with $t>q-1$ in the case $m=4$ and $k \equiv 0 \ \mod 3$. In this case we show the existence of a set of $t=2(q-1)$ pairwise disjoint maximum scattered linear sets of rank $6$ in $\PG(k-1,q^4)$, with $k \equiv 0 \ \mod 3$. This will require some results and techniques from the study of algebraic objects known as finite (pre)semifields, and so we recall the following notions.

Finite \textbf{presemifields} are division algebras with a finite number of elements in which multiplication is not assumed to be associative. If a presemifield contains a multiplicative identity, then
it is called a \textbf{semifield}.
It is possible to identify the elements of a $n$-dimensional semifield over $\F_q$ with the elements of the
field extension $\fqn$.  We refer to \cite{lavrauw2011finite}, for details on semifields; in particular the operations of isotopy and Knuth-equivalence, which preserve the property of having no non-trivial zero divisors. Two multiplications $S$ and $S'$ are \textbf{isotopic} if there exist invertible additive maps $f,g,h$ such that $S'(x,y)= f(S(g(x),h(y))$. The Knuth operations arise from the following action of the symmetric group $S_3$. Given an element $\pi\in S_3$ and a multiplication $S(x,y)$, the multiplication $S^\pi(x,y)$ is defined as one satisfying 
\[
\mathrm{Tr}_{q^6/q}(S^{\pi}(x_1,x_2)x_3) = \mathrm{Tr}_{q^6/q}(S(x_{\pi(1)},x_{\pi(2)})x_{\pi(3)}).
\]
Two multiplications which are equivalent under a combination of these actions are said to be \textbf{ Knuth equivalent}.

}

We start with a useful lemma. Let us denote 
\[U_a=\{ ax^{q^3}+x\omega \colon x \in  \F_{q^{6}}\} \subseteq \F_{q^{12}}.\]

\begin{lemma} \label{lem:existence1designdouble}
The linear sets $L_{U_a}$ and $L_{U_b}$ are disjoint if and only if there are no solutions to the system $S(x,y)=T(x,y)=0$, $xy\ne 0$, where
\begin{align*} S(x,y)&:= (xy^{q^{4}}-x^{q^{4}}y)A-(a^{q^4}bx^{q}y^{q^3}-ab^{q^4}x^{q^3}y^{q})\\
T(x,y)&:= 
(xy^{q^{4}}-x^{q^{4}}y)B- (a^{q^{4}}x^{q}y-ax^{q^3}y^{q^{4}})-(bx^{q^{4}} y^{q^3}-b^{q^4}xy^{q}).
\end{align*}
\end{lemma}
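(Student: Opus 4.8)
The plan is to translate the geometric condition ``$L_{U_a}$ and $L_{U_b}$ share a point'' into the algebraic system $S=T=0$ by writing out the projective identification in $\PG(\F_{q^{12}},\F_{q^4})=\PG(2,q^4)$. Recall $[\F_{q^{12}}:\F_{q^4}]=3$, so the points of this plane are the $1$-dimensional $\F_{q^4}$-subspaces $\la u\ra_{\F_{q^4}}$ with $u\in\F_{q^{12}}^*$, and two such vectors $u,u'$ span the same point if and only if $u/u'\in\F_{q^4}$, equivalently $(u/u')^{q^4}=u/u'$, i.e. $u^{q^4}u'=u(u')^{q^4}$. First I would note that the map $x\mapsto ax^{q^3}+x\omega$ sends every nonzero $x\in\F_{q^6}$ to a nonzero vector: if $ax^{q^3}+x\omega=0$ with $x\ne 0$, then $\omega=-ax^{q^3}/x\in\F_{q^6}$, contradicting $\omega\in\F_{q^4}\setminus\F_{q^2}$ together with $\F_{q^4}\cap\F_{q^6}=\F_{q^2}$. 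Hence a common point of $L_{U_a}$ and $L_{U_b}$ corresponds exactly to a pair $x,y\in\F_{q^6}^*$ with $u^{q^4}u'=u(u')^{q^4}$, where $u=ax^{q^3}+x\omega$ and $u'=by^{q^3}+y\omega$.

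Next I would compute the conjugate $u^{q^4}$ explicitly. Since $x\in\F_{q^6}$ we have $x^{q^7}=x^{q}$, and since $\omega\in\F_{q^4}$ we have $\omega^{q^4}=\omega$; therefore $u^{q^4}=a^{q^4}x^{q}+x^{q^4}\omega$, and likewise $(u')^{q^4}=b^{q^4}y^{q}+y^{q^4}\omega$. I would then expand the two products $u^{q^4}u'$ and $u(u')^{q^4}$ and substitute the relation $\omega^2=A+B\omega$, where $A,B\in\F_{q^2}$ and $A\ne 0$ because $\omega\notin\F_{q^2}$. Each product then takes the shape $(\text{element of }\F_{q^6})+(\text{element of }\F_{q^6})\,\omega$.

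The key step is to observe that $\{1,\omega\}$ is an $\F_{q^6}$-basis of $\F_{q^{12}}$: indeed $\omega\notin\F_{q^6}$ by the intersection computation above, while $\omega^2=A+B\omega$ with $A,B\in\F_{q^2}\subseteq\F_{q^6}$ shows that $\F_{q^6}(\omega)$ has degree $2$ over $\F_{q^6}$. Consequently the single equation $u^{q^4}u'-u(u')^{q^4}=0$ splits into its two $\F_{q^6}$-components, along $1$ and along $\omega$. Carrying out the bookkeeping, the coefficient of $1$ is precisely $S(x,y)$ and the coefficient of $\omega$ is precisely $T(x,y)$ (up to an overall sign), so the point-identity is equivalent to $S(x,y)=T(x,y)=0$. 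Therefore $L_{U_a}\cap L_{U_b}\ne\emptyset$ if and only if the system admits a solution with $xy\ne 0$, and the stated equivalence follows by contraposition.

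I expect the main obstacle to be purely computational: keeping track of the Frobenius exponents (in particular reducing $x^{q^7}=x^q$ via $x^{q^6}=x$ and using $\omega^{q^4}=\omega$), and verifying carefully that after substituting $\omega^2=A+B\omega$ the two components reproduce $S$ and $T$ exactly. The only genuinely conceptual points are the reduction of projective incidence to the identity $u^{q^4}u'=u(u')^{q^4}$ and the verification that $\{1,\omega\}$ is an $\F_{q^6}$-basis, which is what legitimises comparing the $1$- and $\omega$-components coefficientwise.
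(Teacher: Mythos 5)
Your proposal is correct and takes essentially the same approach as the paper: the paper's proof simply declares the argument ``completely analogous'' to that of Theorem \ref{th:constructiondisjoint2}, and that argument is precisely what you carry out --- reducing projective incidence to $u^{q^4}u' = u(u')^{q^4}$ with $u = ax^{q^3}+x\omega$, $u' = by^{q^3}+y\omega$, reducing Frobenius exponents via $x^{q^7}=x^q$, substituting $\omega^2 = A+B\omega$, and splitting along the $\F_{q^6}$-basis $\{1,\omega\}$ of $\F_{q^{12}}$. The expansion you defer does check out: the components along $1$ and $\omega$ come out as $-S(x,y)$ and $-T(x,y)$ respectively, so the stated equivalence follows exactly as you claim.
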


\begin{proof}
The proof is completely analogous to the proof of Theorem \ref{th:constructiondisjoint2} up to equations (\ref{eq:disjointcondition2}) and (\ref{eq:disjointcondition3}).
\end{proof}

\begin{theorem} %\label{th:existence1designdouble}
There exists a set of $2(q-1)$ pairwise disjoint maximum scattered $\F_q$-linear sets in $\PG(2,q^4)$ for any $q$ odd with $q\equiv ~1~ \mod 3$.
\end{theorem}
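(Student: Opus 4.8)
The plan is to realize the $2(q-1)$ linear sets as two families of the monomial type appearing in Lemma \ref{lem:existence1designdouble}, namely
\[
L_{U_{a\mu_1}},\dots,L_{U_{a\mu_{q-1}}},\qquad L_{U_{b\mu_1}},\dots,L_{U_{b\mu_{q-1}}},
\]
where $U_c=\{cx^{q^3}+x\omega:x\in\F_{q^6}\}\subseteq\F_{q^{12}}$, the elements $\mu_1,\dots,\mu_{q-1}\in\F_{q^2}^*$ have pairwise distinct norms $\N_{q^2/q}(\mu_i)$, and $a,b\in\F_{q^6}^*$ are two fixed elements chosen below. Here $m=2$, $k=3$, $s=3$, so that $\gcd(s,2m)=\gcd(m,k)=1$ and $\gcd(s,km)=k=3$, placing us exactly in the regime of Theorem \ref{th:masxcattkodd1} and Theorem \ref{th:constructiondisjoint2}.

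First I would dispose of the parts already covered by earlier results. Imposing $\N_{q^6/q^3}(a),\N_{q^6/q^3}(b)\notin\F_q$, each $L_{U_{a\mu_i}}$ and each $L_{U_{b\mu_i}}$ is maximum $1$-scattered by Theorem \ref{th:masxcattkodd1}, since $\N_{q^6/q^3}(c\mu_i)=\N_{q^6/q^3}(c)\,\N_{q^2/q}(\mu_i)$ stays outside $\F_q$ for $c\in\{a,b\}$. Disjointness \emph{within} each of the two families is precisely Theorem \ref{th:constructiondisjoint2}, whose distinct-norm hypothesis on the $\mu_i$ is exactly what forbids intersections; in particular the $q-1$ members of each family, being pairwise disjoint and nonempty, are automatically distinct. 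Thus the whole content of the proof reduces to the \textbf{cross-disjointness} $L_{U_{a\mu_i}}\cap L_{U_{b\mu_j}}=\emptyset$ for all $i,j$.

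For this I would apply Lemma \ref{lem:existence1designdouble} to the pair $(a\mu_i,b\mu_j)$: it suffices to show that the system $S=T=0$ has no solution with $xy\neq 0$. Because $\mu_i,\mu_j\in\F_{q^2}$ are fixed by the Frobenius powers $x\mapsto x^{q^3},x^{q^4}$ occurring in $S,T$, substituting $a\mapsto a\mu_i$, $b\mapsto b\mu_j$ factors the scalars $\mu_i,\mu_j$ out of the off-diagonal terms, and a rescaling $x\mapsto\alpha x$ with $\alpha\in\F_{q^2}^*$ lets one collect the resulting factors into norm classes; in this way the continuum of pairs $(\mu_i,\mu_j)$ collapses to the finite data of $\N_{q^2/q}(\mu_i/\mu_j)\in\F_q^*$ together with the fixed ratio $\N_{q^6/q^3}(a)/\N_{q^6/q^3}(b)$. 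The two biadditive forms $S,T$ should then be read, via the isotopy/Knuth-equivalence formalism recalled before the lemma, as encoding a presemifield-type multiplication on $\F_{q^6}$: a common nonzero solution of $S=T=0$ corresponds to a nonzero zero divisor, and I would normalise this multiplication by Knuth operations to a recognizable presemifield (a generalized twisted field) whose zero-divisor-free locus is known.

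The main obstacle is exactly this last point: proving that the normalised multiplication has no zero divisors for every admissible choice of $\mu_i,\mu_j$, and it is here that the hypotheses enter. I expect a hypothetical common nonzero solution to force a norm identity relating $\N_{q^6/q^3}(a)$ and $\N_{q^6/q^3}(b)$ to cubes in $\F_{q^3}^*$, together with a parity/sign constraint that $q$ odd rules out, mirroring the even/odd case split used in the proof of Theorem \ref{th:constructiondisjoint3}. Since $q\equiv 1\pmod 3$ one checks $\F_q^*\subseteq(\F_{q^3}^*)^3$, so the norm invariant of each family lies in a single coset of the index-$3$ subgroup of cubes of $\F_{q^3}^*$; choosing $b$ so that $\N_{q^6/q^3}(b)$ lies in a cube-coset different from that of $\N_{q^6/q^3}(a)$ (with both outside $\F_q$) yields the contradiction, and hence the cross-disjointness. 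Assembling the two families then produces $2(q-1)$ pairwise disjoint maximum scattered $\F_q$-linear sets of rank $6$ in $\PG(2,q^4)$, as claimed.
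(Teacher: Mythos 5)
Your overall architecture coincides with the paper's: both take the two families $\{L_{U_{a\mu_i}}\}_i$ and $\{L_{U_{b\mu_i}}\}_i$ with $\mu_i\in\F_{q^2}^*$ of distinct norms, get maximum scatteredness from Theorem \ref{th:masxcattkodd1}, within-family disjointness from Theorem \ref{th:constructiondisjoint2}, and reduce everything to the cross-disjointness $L_{U_{a\mu_i}}\cap L_{U_{b\mu_j}}=\emptyset$ via Lemma \ref{lem:existence1designdouble}. The genuine gap is that this cross-disjointness — the entire substance of the theorem — is in your write-up a plan rather than a proof, and the plan rests on a choice of $b$ that there is no reason to believe works. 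The paper does not take $b$ with $\N_{q^6/q^3}(b)$ merely in a different cube coset from $\N_{q^6/q^3}(a)$; it ties $b$ rigidly to $a$ by the relation $a^{q^3}b=\mu\in\F_{q^2}$. That relation is what makes the form $S$ tractable: writing $b=\mu a^{-q^3}$ and applying the Knuth operation $(23)$ together with explicit isotopies, $S$ collapses to a multiplication in the two-parameter family $\mathbb{S}(\alpha,0,\beta,0,q^2)$ of \cite{marino2011towards,lavrauw2011fq} — a family of scattered semifields, not a generalized twisted field as you assert. The hypotheses $q$ odd and $q\equiv 1\pmod{3}$ enter only at this point, via the existence result \cite[Theorem 5.3]{lavrauw2011fq} for the presemifield $\mathbb{S}(\lambda,0,\lambda,0,q^2)$, and the argument must then exploit the remaining freedom, choosing $a$ so that $\rho=a^{(q^3+1)(q-1)}$ is a primitive cube root of unity and the parameter $A$ (the choice of $\omega$) so that $\tau=A^{q+1}/\mu^{q+1}=\epsilon/2$, in order to satisfy the isotopy conditions $\alpha^{q^4+q^2+1}=\beta^{q^4+q^2+1}$ and $\alpha^{q^3+1}=\lambda^{q^3+1}$. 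With an unconstrained $b$ as in your proposal, this reduction to a known family simply does not go through.

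The mechanism you anticipate — a common zero of $S=T=0$ forcing a norm identity killed by cube cosets and a parity constraint — cannot work as stated. In Theorems \ref{th:constructiondisjoint2} and \ref{th:constructiondisjoint3} such norm arguments succeed precisely because there the two coefficients differ by a scalar $\mu_j/\mu_i\in\F_{q^2}$, so the off-diagonal terms of the analogue of $S$ are proportional and the system degenerates to a single equation $z^{q^s-1}=c$, solvable if and only if a norm condition holds. For genuinely distinct $a,b$, the form
\[
S(x,y)=(xy^{q^{4}}-x^{q^{4}}y)A-\bigl(a^{q^4}bx^{q}y^{q^3}-ab^{q^4}x^{q^3}y^{q}\bigr)
\]
involves four independent monomial types and asking for the nonexistence of nontrivial zeros is exactly a presemifield-existence question; no norm computation replaces that. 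This is also why the paper needs a case distinction that your sketch lacks: when $\mu^{q+1}\neq A^{q+1}$ one shows $S=0$ alone has no nontrivial solutions, whereas when $\mu^{q+1}=A^{q+1}$ the form $S$ \emph{does} vanish nontrivially, and one must factor $S$ as a composition, parametrise its zeros, and substitute them into $T$ to reach a contradiction. The phrases ``I expect'' and ``should then be read'' in your proposal mark precisely the points where the real work of the proof lies, and that work is absent.
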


\begin{proof}
Let $a\in \F_{q^6}$ be such that $N_{q^6/q^3}(a)\notin \F_q$, and $b$ such that $a^{q^3}b=\mu\in \F_{q^2}$. We claim that $L_{U_a}$ and $L_{U_b}$ are disjoint. Let $S(x,y)$ be as in Lemma \ref{lem:existence1designdouble}. First we suppose that $\mu^{q+1}\ne A^{q+1}$, and show that there are no non-trivial solutions to $S(x,y)=0$, and hence $L_{U_a}$ and $L_{U_b}$ are disjoint.
%\theta\in \F_{q^2}$. 
    
    Note that $S(x,y)$ is an $\fq$-bilinear map, and so we are claiming that $S(x,y)$ defines the multiplication of a (pre)semifield. 
A straightforward calculation verifies that 
    \[
    S^{(23)}(x,y)=   A(x^{q^2}y^{q^2}-x^{q^4}y)-\mu(a^{q-1}x^{q^4}y^{q^3}-a^{q^5-1}x^{q^2}y^{q^5}),
    \] 
    
    Thus $S(x,y)$  is Knuth-equivalent to 
    \[
    S_1(x,y):= S^{(23)}(x^{q^4},y) = (Ay+\mu a^{q^3-q^4}y^{q^3})^{q^2}x-(Ay+\mu a^{q-1}y^{q^3})x^{q^2},
    \] 
    
    which defines a presemifield if and only if 
\[
    (Ay+\mu a^{q^3-q^4}y^{q^3})^{q^4+q^2+1}-(Ay+\mu a^{q-1}y^{q^3})^{q^4+q^2+1}\ne 0\quad\forall~y\ne 0.
    \]
    This condition holds if and only if the multiplication
        \begin{align*}
           S_2(x,y)&:= (Ay+\mu a^{q^3-q^4}y^{q^3})x-(Ay+\mu a^{q-1}y^{q^3})x^{q^2} \\
           &= A(x-x^{q^2})y+\mu(a^{q^3-q^4}x-a^{q-1}x^{q^2})y^{q^3}
        \end{align*}

defines a presemifield. %(though we note that $S_1$ and $S_2$ do note define isotopic presemifields, as the orders of their respective {\it nuclei} are different). 
This multiplication is in turn isotopic to 
\begin{align*}
S_3(x,y)&:=\left(\frac{\mu^{-1}a^{q^4-q^3}}{1-a^{(q^3+1)(q-1)}}\right)S_2(y,x-\mu A^{-1}a^{q-1}x^{q^3})\\&= (\alpha y+\beta y^{q^2})x+yx^{q^3},
\end{align*}
where 
\[
\alpha = \frac{a^{q^4-q^3}(A^{q+1}-\mu^{q+1})}{\mu A^{q}(1-a^{(q^3+1)(q-1)})}
 ;\quad \beta = \frac{ a^{q^4-q^3}(\mu^{q+1}a^{(q^3+1)(q-1)}-A^{q+1})}{\mu A^{q}(1-a^{(q^3+1)(q-1)})}.
\]

This is an algebra of the form $\mathbb{S}(\alpha,0,\beta,0,q^2)$ defined in \cite{marino2011towards} and studied further in \cite{lavrauw2011fq}. There it was shown that there exist $\lambda$ such that $\mathbb{S}(\lambda,0,\lambda,0,q^2)$ is a presemifield if $q$ is odd and $q\equiv~ 1~ \mod 3$ \cite[Theorem 5.3]{lavrauw2011fq}; namely, $\lambda$ satisfying $\lambda^{q^3+1}= \frac{1}{2(1-\epsilon)}$ for $\epsilon$ a primitive sixth root of unity. %\textcolor{blue}{I read in \cite[Theorem 5.3]{lavrauw2011fq}, that $\N_{q^6/q}(\lambda)= \frac{1}{2(1-\epsilon)}$ while you ask $\N_{q^6/q^3}(\lambda)= \frac{1}{2(1-\epsilon)}$, why?}
Note that our assumption that $\mu^{q+1}\ne A^{q+1}$ implies that $\alpha\ne 0$.
%, or if $q=2^{2h}$ with $h\equiv~ 1~\mod 3$ \cite[Theorem 5.6]{lavrauw2011fq}, or if $q=3^h$ with $h\not\equiv ~0~\mod 3$ \cite[Theorem 5.8]{lavrauw2011fq}. 
Note that, if 
\begin{equation} \label{eq:condsemifield1}
\alpha^{q^4+q^2+1}=\beta^{q^4+q^2+1}
\end{equation}
 and 
 \begin{equation} \label{eq:condsemifield2}
     \alpha^{q^3+1}=\lambda^{q^3+1},
 \end{equation}
 then $\mathbb{S}(\lambda,0,\lambda,0,q^2)$ is isotopic to $\mathbb{S}(\alpha,0, \beta,0,q^2)$, as we will now demonstrate.
Conditions \eqref{eq:condsemifield1} and \eqref{eq:condsemifield2} hold if and only if there exist nonzero $\eta_1,\eta_2\in \F_{q^6}$ such that $\alpha=\lambda \eta_1^{q^3-1}=\beta \eta_2^{q^2-1}$, in which case $(\eta_1^{q^3}\eta_2)^{-1} S_3(\eta_1 x,\eta_2 y)$ is the multiplication of $\mathbb{S}(\lambda,0,\lambda,0,q^2)$. Hence it suffices to show that there exist choices for $a$ and $A$ such that these conditions are satisfied. Note that we are indeed free to choose $A$ as any nonzero element of $\F_{q^2}$, since it is the norm of $\omega$ an arbitrary element of $\F_{q^4}\backslash\F_{q^2}$.

We choose $a$ such that $\rho := a^{(q^3+1)(q-1)}\in\Fq\backslash\{0,1\}$. Note that this necessitates that $\rho$ is a primitive third root of unity. We set $\tau= \frac{A^{q+1}}{\mu^{q+1}}$. Then condition $\alpha^{q^4+q^2+1}=\beta^{q^4+q^2+1}$ is satisfied if and only if $(\tau-1)^3=(\rho-\tau)^3$.
 The condition $\alpha^{q^3+1}=\lambda^{q^3+1}$ becomes
%\[
%a^{(q-1)(q^3+1)}(A^{q+1}-1)^2 = \lambda^{q^3+1}A^{q+1}(1-a^{(q-1)(q^3+1)})^2.
%\]
\[
\frac{\rho}{\tau}\left(\frac{\tau-1}{\rho-1}\right)^2 = \frac{1}{2(1-\epsilon)}.
\]
The choices $\rho=\epsilon^2$ and $\tau = \epsilon/2$ are sufficient to prove this existence (note that the primitive sixth roots of unity are precisely the roots of the polynomial $x^2-x+1$).

Now we suppose $\mu^{q+1}=A^{q+1}$. Then viewing $S(x,y)$ as a linearized polynomial in $x$, we get that
\[
S(x,y) = \left(x+\left(\frac{\mu}{A^q a^{q-1}}\right)x^{q^3}\right) \circ (Ay^{q^4}x-\mu a^{q^4-q^3} y^{q^3}x^q),
\]
where $\circ$ denotes composition as polynomials in $x$. The first factor of this composition has no nontrivial solutions in $\F_{q^6}$, since $\left(\frac{\mu}{A^q a^{q-1}}\right)^{q^3+1}\ne 1$. Hence the nontrivial solutions to $S(x,y)=0$ are when $x=\frac{\lambda y^{q^3}}{\tau a^{q^3}}$, where $\tau\in \F_{q^2}$ is such that $\mu=A\tau^{q-1}$, and $\lambda\in \Fq$. 

Plugging this value of $x$ into $T(x,y)=0$, we get that $a^{q^3-q}=y^{(q+1)(q^3-1)}$, contradicting the fact that that $a^{(q-1)(q^3+1)}\ne 1$. Hence $L_{U_a}$ and $L_{U_b}$ are disjoint.

%We choose $\gamma$ such that $\alpha=\beta$, with $\alpha$ satisfying the conditions of , proving that $S_3$ and hence $S$ defines a semifield multiplication, and so $L_{U_a}$ and $L_{U_b}$ are disjoint.

%Note that such a $\gamma$ exists, since 

Finally we take the set $\{L_{\mu_i a}:i=1,\ldots q-1\}\cup \{L_{\mu_i b}:i=1,\ldots q-1\}$,
with the $\mu_i$ as in Theorem \ref{th:constructiondisjoint2}. We have shown that these linear sets are indeed pairwise disjoint, and are maximum scattered by Theorem \ref{th:masxcattkodd1}, proving the claim.
%Noting that taking $\fq$-multiples of $a$ return the same values for $\rho$ in the above calculation, 
\end{proof}

Under the assumptions of the above theorem, by using Proposition \ref{cor:directsum}, we get the existence of $2(q-1)$ pairwise disjoint maximum scattered $\F_q$-linear sets in $\PG(3r-1,q^4)$.
\begin{corollary} \label{cor:maximum1designt>q-1particular}
     There exists a set of $2(q-1)$ pairwise disjoint maximum scattered $\F_q$-linear sets in $\PG(3r-1,q^4)$ for any $r\geq 1$, $q$ odd with $q\equiv ~1~ \mod 3$.
\end{corollary}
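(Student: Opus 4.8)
The plan is to lift the $\PG(2,q^4)$ construction of the previous theorem to higher dimension by repeated direct summation, the dimension count working out exactly as needed. First I would reinterpret that theorem geometrically: by Lemma \ref{lem:designdisjoint}, the existence of $2(q-1)$ pairwise disjoint maximum scattered $\F_q$-linear sets in $\PG(2,q^4)=\PG(V_0,\F_{q^4})$, where $V_0=V(3,q^4)$, is equivalent to the existence of a maximum $1$-design $(\U_1,\dots,\U_{2(q-1)})$ in $V_0$, each $\U_i$ having $\F_q$-dimension equal to the maximum $mk/2=4\cdot 3/2=6$.

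Next I would set $V=V_1\oplus\cdots\oplus V_r$ with each $V_j\cong V_0=V(3,q^4)$, so that $V=V(3r,q^4)$, and place a copy of the same $1$-design in each summand: in $V_j$ take the family $(\U_{j,1},\dots,\U_{j,2(q-1)})$ obtained from the previous theorem. Since a $1$-design is precisely a $(1,1)$-subspace design, Proposition \ref{cor:directsum} applied with $h=1$, $t=2(q-1)$, $\ell=r$ and each $r_j=1$ yields that the glued family $(\U_1,\dots,\U_{2(q-1)})$, where $\U_i=\U_{1,i}\oplus\cdots\oplus\U_{r,i}$, is a $\bigl(1,\sum_{j=1}^r 1-(r-1)\bigr)=(1,1)$-subspace design in $V$, that is, again a $1$-design. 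The essential observation is that the direct-sum defect $-(\ell-1)h$ exactly cancels the surplus, so the $1$-design property is preserved rather than degrading to a weaker $(1,r)$-subspace design.

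Finally I would verify maximality and translate back. Each summand $\U_{j,i}$ has $\F_q$-dimension $6$, so $\dim_{\F_q}\U_i=6r$; with $m=4$ and $k=3r$ this is precisely $mk/2=6r$, hence $(\U_1,\dots,\U_{2(q-1)})$ is a \emph{maximum} $1$-design in $V(3r,q^4)$. Invoking Lemma \ref{lem:designdisjoint} once more in the reverse direction produces $2(q-1)$ pairwise disjoint maximum scattered $\F_q$-linear sets in $\PG(V,\F_{q^4})=\PG(3r-1,q^4)$, as claimed.

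There is no genuine obstacle here beyond careful bookkeeping: the entire argument is a routine application of the direct-sum proposition. The one point deserving attention is confirming that summing $r$ copies of a $(1,1)$-design (as opposed to higher-rank designs) keeps the rank parameter equal to $1$, which is exactly what the $-(\ell-1)h$ correction in Proposition \ref{cor:directsum} guarantees; the remainder is dimension arithmetic.
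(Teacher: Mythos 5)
Your proof is correct and follows exactly the paper's route: the paper likewise obtains this corollary by gluing $r$ copies of the $\PG(2,q^4)$ construction via Proposition~\ref{cor:directsum}, the $(1,1)$-design property being preserved because the $-(\ell-1)h$ correction cancels the sum of the $r_i$'s. Your additional bookkeeping (the dimension count $6r = mk/2$ and the translation through Lemma~\ref{lem:designdisjoint}) just makes explicit what the paper leaves implicit.
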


For $q \in \{2,3\}$, we have the following constructions, with the aid of MAGMA.

\begin{theorem} %\label{th:existence1designsporadic}
\label{thm:sporadic1des}
    There exist two disjoint maximum scattered $\F_q$-linear sets in $\PG(2,2^4)$, and a set of six pairwise disjoint maximum scattered $\F_q$-linear sets in $\PG(2,3^4)$. In particular, there exists a maximum 1-design $[(6,6),3]_{2^4/2}$-system, and a maximum 1-design $[(6,6,6,6,6,6),4]_{3^4/3}$-system.
\end{theorem}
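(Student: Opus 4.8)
The plan is to realize both sporadic families as linear sets of the binomial form
\[
U_a=\{ ax^{q^3}+x\omega \colon x \in  \F_{q^{6}}\} \subseteq \F_{q^{12}}
\]
inside $\PG(2,q^4)=\PG(\F_{q^{12}},\F_{q^4})$, exactly the shape treated in Lemma \ref{lem:existence1designdouble}, because for this family both the scattered property and pairwise disjointness admit workable criteria. Once disjoint maximum scattered linear sets are produced, the two ``maximum $1$-design'' statements follow immediately from Lemma \ref{lem:designdisjoint}, so the whole task reduces to exhibiting the linear sets.

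First I would record that each individual $L_{U_a}$ is maximum scattered: this is the case $k=3$, $m=2$, $s=3$ of Theorem \ref{th:masxcattkodd1}, where $\gcd(m,k)=1$, $\gcd(s,2m)=1$ and $\gcd(s,km)=3=k$, so that $L_{U_a}$ has rank $6=mk/2$ and is maximum $1$-scattered in $\PG(2,q^4)$ as soon as $\N_{q^6/q^3}(a)\notin\F_q$. Second, pairwise disjointness of $L_{U_a}$ and $L_{U_b}$ is, by Lemma \ref{lem:existence1designdouble}, equivalent to the system $S(x,y)=T(x,y)=0$ having no solution with $xy\neq0$. Producing $t$ pairwise disjoint members is therefore the same as choosing $a_1,\dots,a_t$ with $\N_{q^6/q^3}(a_i)\notin\F_q$ for each $i$ and with the disjointness system admitting only the trivial solution for every pair $(a_i,a_j)$, $i\neq j$.

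For $q=2$ this amounts to exhibiting a single pair $(a,b)$ clearing both norm conditions and the one disjointness system over $\F_{2^{12}}$; for $q=3$ it amounts to exhibiting six values $a_1,\dots,a_6$ clearing the norm condition and all $\binom{6}{2}=15$ disjointness systems over $\F_{3^{12}}$. One can organize the search economically as three scalar orbits $\{\mu_1 a_j,\mu_2 a_j\}_{j=1,2,3}$ with $\N_{q^m/q}(\mu_1)\neq\N_{q^m/q}(\mu_2)$ (the device of Theorem \ref{th:constructiondisjoint2} that forces disjointness within an orbit), reducing the problem to three base elements $a_1,a_2,a_3$ and their cross-disjointness. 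Each of the remaining conditions is a finite check — the vanishing of an explicit $\F_q$-bilinear system over a fixed finite field — so the existence is certified by displaying explicit field elements and verifying the conditions directly, which is the role of the \textsc{Magma} computation. The main obstacle, and the reason these cases are genuinely sporadic rather than instances of the earlier families, is the \emph{simultaneous} satisfaction of all the disjointness conditions: the closed-form presemifield argument of the preceding theorem applies only for $q$ odd with $q\equiv1\bmod 3$, so it covers neither $q=2$ (even) nor $q=3$ ($q\equiv0\bmod 3$), and no analogous algebraic identity forces the fifteen conditions for $q=3$ at once; one therefore falls back on an explicit admissible choice of the $a_i$ together with a direct computational verification.
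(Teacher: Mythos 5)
Your proposal matches the paper's own proof in essence: the paper likewise takes linear sets of the binomial form $U_a$, relies on Theorem \ref{th:masxcattkodd1} (via the norm condition $\N_{q^6/q^3}(a)\notin\F_q$) for maximum scatteredness, and certifies pairwise disjointness by applying \textsc{Magma} to the bilinear system of Lemma \ref{lem:existence1designdouble}, differing only in that it records concrete admissible choices (for $q=2$, a pair $(a,b)$ with $(ab)^3a^{q^4+q^2+1}=1$; for $q=3$, elements $a_i$ with $a_i^{q+1}$ a root of $(z^3-z+1)(z^3+z^2-1)$) rather than your scalar-orbit search organization. Since the statement is a computationally verified existence claim, your reduction plus the finite check constitutes the same argument.
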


\begin{proof}
%Similar to Theorem \ref{th:constructiondisjoint2}, taking $k=s=3$ and $m=2$, let
%\[\U_a=\{ ax^{q^3}+x\omega \colon x \in  \F_{q^{6}}\} \subseteq \F_{q^{12}}.\]
%Then $L_{U_a}$  is maximum scattered if and only if $N_{q^6:q^3}(a)\notin \F_q$.

%Analogously to (\ref{eq:disjointcondition2}) and (\ref{eq:disjointcondition3}), the linear sets $L_{U_a}$ and $L_{U_b}$ are disjoint if and only if there are non non-zero solutions to the following system.

%\begin{align*} (xy^{q^{4}}-x^{q^{4}}y)A&=(a^{q^4}bx^{q}y^{q^3}-ab^{q^4}x^{q^3}y^{q})\\
%(xy^{q^{4}}-x^{q^{4}}y)B&= (a^{q^{4}}x^{q}y-ax^{q^3}y^{q^{4}})+(bx^{q^{4}} y^{q^3}-b^{q^4}xy^{q}).
%\end{align*}
For $q=2$, let $a$ be any element of $\F_{2^6}$ such that $N_{2^6/2^3}(a)\notin \F_2$, and $b$ such that $(ab)^3a^{q^4+q^2+1}=1$. Then it can be verified by using MAGMA applied to Lemma \ref{lem:existence1designdouble} that $L_{U_a}$ and $L_{U_b}$ are disjoint.

For $q=3$, let $a_i$ be any element of $\F_{3^6}$ such that $a_i^{q+1}$ is a root of the polynomial $(z^3-x+1)(z^3+z^2-1)$. Again it can be verified that $L_{U_{a_i}}$ and $L_{U_{a_j}}$ are disjoint maximum scattered linear sets for all $i\ne j$.
\end{proof}

Further computations show that there exist at least nine pairwise disjoint maximum scattered $\F_q$-linear sets in $\PG(2,4^4)$, and twelve pairwise disjoint maximum scattered $\F_q$-linear sets in $\PG(2,5^4)$; in each case (except $q=2$) we reach $t=3(q-1)$. These searches restricted to linear sets of the form $L_{U_a}$ as in the previous theroem. However, we do not have a general construction at present. The general problem of existence and maximum length of maximum $1$-designs remains open.

\begin{open} Do there exist $t$ pairwise disjoint maximum scattered $\F_q$-linear sets in $\PG(k-1,q^m)$, for  $t>q-1$ in general? What is the maximum  value  of $t$ for which such linear sets exist?
\end{open}

%{\color{red}John to Paolo: I added these examples as partially computational results for now.}

\section{Two-intersecting sets, strongly regular graphs and two-weight linear codes from maximum 1-designs}

Maximum $1$-designs give examples of two-weight intersecting sets as proved in the following.

\begin{proposition} [see \textnormal{\cite[Corollary 5.14]{santonastaso2022subspace}}]
For a maximum $1$-design $(\U_1,\ldots,\U_t)$ in $V=V(k,q^m)$, consider the associated $\fq$-linear sets $L_{\U_1},\ldots,L_{\U_t}$.
Then
\[ \left| H \cap (L_{U_1}\cup \ldots\cup L_{U_t}) \right| \in \left\{t \left(\frac{q^{\frac{m(k-2)}{2}}-1}{q-1}\right),t \left(\frac{q^{\frac{m(k-2)}{2}}-1}{q-1}\right)+q^{\frac{m(k-2)}{2}}\right\}, \]
 for any hyperplane $H$ of $\PG(k-1,q^m)=\PG(V,\F_{q^m})$.
In particular, 
\begin{itemize}
\item $L_{U_1}\cup \ldots\cup L_{U_t}$ is a two-intersection set with respect to hyperplanes if and only if $t< (q-1)\frac{q^{\frac{mk}2}+1}{q^m-1}$;
\item if $t= (q-1)\frac{q^{\frac{mk}2}+1}{q^m-1}$ then $L_{U_1}\cup \ldots\cup L_{U_t}=\PG(k-1,q^m)$.
\end{itemize}
\end{proposition}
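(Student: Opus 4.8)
The plan is to translate the hyperplane--intersection behaviour of the point set $L := L_{\U_1}\cup\cdots\cup L_{\U_t}$ into the dimension-of-intersection data already controlled by the maximum $1$-design hypothesis. Write $d := \tfrac{m(k-2)}{2}$ and recall $\dim_{\fq}(\U_i) = \tfrac{mk}{2} = d+m$ for each $i$. Fix a hyperplane $H = \PG(W,\F_{q^m})$ of $\PG(k-1,q^m)$, where $W$ is a $(k-1)$-dimensional $\F_{q^m}$-subspace. Since $L_{\U_1},\ldots,L_{\U_t}$ are pairwise disjoint (Lemma \ref{lem:designdisjoint}) and each $\U_i$ is scattered, the intersection $H\cap L_{\U_i} = L_{\U_i\cap W}$ is itself a scattered linear set, so
\[
|H\cap L| = \sum_{i=1}^t |L_{\U_i\cap W}| = \sum_{i=1}^t \frac{q^{\dim_{\fq}(\U_i\cap W)}-1}{q-1}.
\]
Thus the whole problem reduces to determining the tuple $\bigl(\dim_{\fq}(\U_i\cap W)\bigr)_i$.

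First I would pin down these individual dimensions. The Grassmann identity gives $\dim_{\fq}(\U_i\cap W)\ge \dim_{\fq}(\U_i)+\dim_{\fq}(W)-\dim_{\fq}(V)=d$ for every $i$, while Corollary \ref{Cor:hdesignhyper} (with $h=1$, $n=\tfrac{mk}{2}$) gives $\sum_{i}\dim_{\fq}(\U_i\cap W)\le t(n-m)+1 = td+1$. Together the lower bounds $\dim_{\fq}(\U_i\cap W)\ge d$ leave exactly two possibilities: either every $\dim_{\fq}(\U_i\cap W)=d$, or exactly one index has $\dim=d+1$ and all others equal $d$. Substituting into the displayed sum yields respectively
\[
|H\cap L| = t\,\frac{q^{d}-1}{q-1}
\qquad\text{or}\qquad
|H\cap L| = t\,\frac{q^{d}-1}{q-1}+q^{d},
\]
the second value following from the identity $(t-1)\tfrac{q^{d}-1}{q-1}+\tfrac{q^{d+1}-1}{q-1}=t\tfrac{q^{d}-1}{q-1}+q^{d}$. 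These are the two numbers of the statement, with $d=\tfrac{m(k-2)}{2}$.

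For the two ``in particular'' dichotomies I would count. By disjointness $|L| = t\,\tfrac{q^{mk/2}-1}{q-1}$, while $\PG(k-1,q^m)$ has $\tfrac{q^{mk}-1}{q^m-1}=\tfrac{(q^{mk/2}-1)(q^{mk/2}+1)}{q^m-1}$ points; since $L\subseteq \PG(k-1,q^m)$, comparing sizes shows $t\le t_0$ always and $L=\PG(k-1,q^m)$ if and only if $t=t_0$, where $t_0:=(q-1)\tfrac{q^{mk/2}+1}{q^m-1}$. If $t=t_0$ then $L$ is the whole space, every hyperplane meets $L$ in all of its points, only one intersection number occurs, and $L$ is not a two-intersection set; this is the excluded equality case.

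It remains to show both values genuinely occur when $t<t_0$; this is the step I expect to be the crux. The larger value always occurs: each $\U_i$ is maximum scattered (a maximum $1$-design with one block), so by Theorem \ref{th:numberhyperplaneshdesign} (with $t=1$, $h=1$) there is a hyperplane $W$ with $\dim_{\fq}(\U_i\cap W)=d+1$, and the $1$-design bound forces $\dim_{\fq}(\U_j\cap W)=d$ for all $j\neq i$, producing the value $t\tfrac{q^{d}-1}{q-1}+q^{d}$. For the smaller value I would count the hyperplanes of the second type: by the $1$-design property each such hyperplane has a unique excess index $i$ with $\dim_{\fq}(\U_i\cap W)=d+1$, so they correspond bijectively to the pairs $(W,i)$ with $\dim_{\fq}(\U_i\cap W)=d+1$; as all $\U_i$ have identical parameters, each contributes the same number $g$ of such hyperplanes (Theorem \ref{th:numberhyperplaneshdesign}), whence there are exactly $tg$ hyperplanes of the second type. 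Evaluating at $t=t_0$, where $L=\PG(k-1,q^m)$ and hence all $\theta:=\tfrac{q^{mk}-1}{q^m-1}$ hyperplanes are of the second type, gives $t_0 g=\theta$; therefore for general $t$ the number of second-type hyperplanes is $tg=t\theta/t_0$, which is strictly smaller than $\theta$ exactly when $t<t_0$. Thus for $t<t_0$ first-type hyperplanes also exist, the smaller value occurs, and $L$ is a genuine two-intersection set. The main obstacle is precisely this counting step: guaranteeing that the second-type hyperplanes do not exhaust all hyperplanes unless $L$ fills the space.
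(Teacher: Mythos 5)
Your proof of the first assertion is correct and complete: the reduction $|H\cap L_{\U_i}|=|L_{\U_i\cap W}|$, the Grassmann lower bound $\dim_{\fq}(\U_i\cap W)\ge d$, and the upper bound $\sum_i\dim_{\fq}(\U_i\cap W)\le td+1$ from Corollary \ref{Cor:hdesignhyper} combine exactly as you say (here $d=\tfrac{m(k-2)}{2}$ in your notation). The handling of the case $t=t_0$ and the argument that the larger value always occurs are also sound; in particular you apply Theorem \ref{th:numberhyperplaneshdesign} with $t=1$, where the hidden hypothesis $t\le q-1$ (needed because the positivity $g_j>0$ ultimately rests on Proposition \ref{prop:numberweightMSRD}, i.e.\ on the existence of linearized Reed--Solomon codes) is automatic. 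Note that the paper only cites this proposition from an external source, so there is no internal proof to compare against; your argument must stand on its own.

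The genuine gap is in the last counting step. Writing $g$ for the number of hyperplanes meeting a fixed maximum scattered $\U_i$ in dimension $d+1$, you determine $g$ by ``evaluating at $t=t_0$'', i.e.\ by instantiating a maximum $1$-design with $t_0=(q-1)\tfrac{q^{mk/2}+1}{q^m-1}$ blocks in the same ambient space. But such a design exists essentially only when $m=2$ and $k$ is odd: by the paper's own upper-bound proposition (quoted from \cite[Corollary 5.22]{santonastaso2022subspace}), equality $t=t_0$ forces a partition into Baer subgeometries, and for most admissible $(q,m,k)$ the number $t_0$ is not even an integer, so there is nothing to evaluate at. The identity $t_0g=\theta$ you need is true, but it cannot be obtained this way for general parameters. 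The repair is elementary and bypasses the issue entirely: double-count point--hyperplane incidences for a single maximum scattered linear set,
\[
\sum_{H}|L_{\U_i}\cap H| \;=\; |L_{\U_i}|\cdot\frac{q^{m(k-1)}-1}{q^m-1} \;=\; \theta\,\frac{q^{d}-1}{q-1}+g\,q^{d},
\]
which yields $g=\tfrac{q^{mk/2}-1}{q-1}=\theta/t_0$ unconditionally; then your count $N_2=tg$ of second-type hyperplanes satisfies $N_2<\theta$ if and only if $t<t_0$, and the conclusion follows. (In fact the same double count applied directly to $L=L_{\U_1}\cup\cdots\cup L_{\U_t}$ gives $N_2$ at once, without the uniqueness-of-excess-index discussion.)
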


As a consequence of Theorem \ref{th:existence1design}, we get the following constructions of two-intersection sets.

\begin{corollary}
    Let $t$ be a positive integer such that $1 \leq t \leq q-1$. For any positive integers $k,m \geq 2$ such that $mk$ is even, there exists a set $S \subseteq \PG(k-1,q^m)$ such that
    \[ \left| H \cap S \right| \in \left\{t \left(\frac{q^{\frac{m(k-2)}{2}}-1}{q-1}\right),t \left(\frac{q^{\frac{m(k-2)}{2}}-1}{q-1}\right)+q^{\frac{m(k-2)}{2}}\right\}. \]
\end{corollary}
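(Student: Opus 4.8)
The plan is to deduce this statement directly from Theorem \ref{th:existence1design} together with the proposition immediately preceding the corollary (namely \cite[Corollary 5.14]{santonastaso2022subspace}), without any new computation. The point is that both the existence of the desired configuration and its two-intersection behaviour have already been established; the corollary is merely their combination under the stated hypotheses on $k$, $m$, $t$.

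Concretely, I would proceed as follows. First, since $k,m\geq 2$ with $mk$ even and $1\leq t\leq q-1$, Theorem \ref{th:existence1design} provides pairwise disjoint maximum scattered $\F_q$-linear sets $L_{U_1},\ldots,L_{U_t}$ in $\PG(k-1,q^m)=\PG(V,\F_{q^m})$, or equivalently (via Lemma \ref{lem:designdisjoint}) a maximum $1$-design $(U_1,\ldots,U_t)$ in $V=V(k,q^m)$ with each $\dim_{\fq}(U_i)=mk/2$. Second, I set
\[
S:=L_{U_1}\cup\cdots\cup L_{U_t}\subseteq\PG(k-1,q^m).
\]
Third, applying the preceding proposition to this maximum $1$-design yields, for every hyperplane $H$ of $\PG(k-1,q^m)$, the containment
\[
|H\cap S|\in\left\{t\left(\frac{q^{\frac{m(k-2)}{2}}-1}{q-1}\right),\,t\left(\frac{q^{\frac{m(k-2)}{2}}-1}{q-1}\right)+q^{\frac{m(k-2)}{2}}\right\},
\]
which is exactly the claimed statement.

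Since the substantive content lies entirely in the two results being invoked, there is no genuine obstacle to overcome here: the only thing to verify is that the hypotheses of the corollary (any $k,m\geq 2$ with $mk$ even and any $1\leq t\leq q-1$) coincide exactly with the hypotheses under which Theorem \ref{th:existence1design} guarantees the configuration, so that the preceding proposition is directly applicable to $(U_1,\ldots,U_t)$. As the parameter ranges match verbatim, the argument is complete. I would note only that the corollary asserts membership in the two-element set, not that both values are attained, so no further hypothesis (such as the strict inequality $t<(q-1)\frac{q^{mk/2}+1}{q^m-1}$ needed to force a genuine two-intersection set) is required.
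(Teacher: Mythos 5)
Your proposal is correct and is essentially the paper's own argument: the paper states this corollary as an immediate consequence of Theorem \ref{th:existence1design} combined with the preceding proposition (the two-intersection property of unions of linear sets arising from a maximum $1$-design), which is exactly the combination you carry out. Your closing remark is also accurate — only membership in the two-element set is claimed, so the strict bound $t<(q-1)\frac{q^{mk/2}+1}{q^m-1}$ is not needed (and in fact it holds automatically here, since $t\leq q-1$ and $\frac{q^{mk/2}+1}{q^m-1}>1$ for $k\geq 2$).
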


It is also well-known the connection between two-intersection sets with respect to the hyperplanes and strongly regular graphs, see e.g.\ \cite{calderbank1986geometry}.

Let $\mathcal{G}$ be a regular graph with $v$ vertices and degree $K$. $\mathcal{G}$ is said to be \textbf{strongly regular} if there exist two integers $\lambda$ and $\mu$ such that every two adjacent vertices have $\lambda$ common neighbours and every two non-adjacent vertices have $\mu$ common neighbours. Also, we say that $(v,K,\lambda,\mu)$ are its parameters. Starting from a two-intersection set with respect to the hyperplanes, it is possible give constructions of strongly regular graphs a following.

Let $S$ be a set of points in $\PG(k-1,q)$ such that $\langle \mathcal{P} \rangle=\PG(k-1,q)$.
Embed $\PG(k-1,q)$ as a hyperplane $H$ of $\PG(k,q)$.
Let $\Gamma(S)$ be the graph whose vertices are the points in $\PG(k,q)\setminus \mathcal{H}$ and two vertices $P$ and $Q$ are adjacent if the line $PQ$ meets $H$ in a point of $S$.
In the case that $S$ is a two-intersection set with respect to the hyperplanes, then $\Gamma(S)$ is a strongly regular graph. 
So applying the above described construction to the two-intersecting set associated to a maximum $1$-design we obtain a strongly regular graph with the following parameters.

\begin{proposition}
[see \textnormal{\cite[Corollary 5.27]{santonastaso2022subspace}}]   \label{cor:srg}
Let $t$ be a positive integer such that $1\leq t <(q-1)\frac{q^ {\frac{mk}{2}}+1}{q^m-1}$. Let $m,k \geq 2$ positive integers such that $mk$ is even. Let $(\U_1,\ldots,\U_t)$ be a maximum $1$-design in $V=V(k,q^m)$.
Define $$
M=t\left(\frac{q^{\frac{km}2}-1}{q-1}\right);\quad w_0=t \left(\frac{q^{\frac{m(k-2)}{2}}-1}{q-1}\right); \quad w_1=t \left(\frac{q^{\frac{m(k-2)}{2}}-1}{q-1}\right)+q^{\frac{m(k-2)}{2}}.
$$
Then $\Gamma(L_{\U_1}\cup\ldots\cup L_{\U_t})$ is a strongly regular graph with parameters $(v,K,\lambda,\mu)$, where
\begin{itemize}
    \item $v=q^{mk}$;
    \item $K=M(q^m-1)$;
    \item $\lambda=K^2+3K-q^m(1+K)(2M-w_1-w_0)+q^{2m}(M-w_1)(M-w_0)$;
    \item $\mu=\frac{q^{2m}(M-w_1)(M-w_0)}{q^{mk}}$.
\end{itemize}
\end{proposition}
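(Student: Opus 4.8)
The plan is to reduce the statement to the standard correspondence between projective two-intersection sets and strongly regular graphs, and then to read off the parameters from the graph's spectrum. Write $Q=q^m$ and $S=L_{\U_1}\cup\cdots\cup L_{\U_t}\subseteq\PG(k-1,Q)$. First I would use the two-intersection property established above: since $(\U_1,\ldots,\U_t)$ is a maximum $1$-design, the $L_{\U_i}$ are pairwise disjoint maximum scattered linear sets, each of size $\frac{q^{mk/2}-1}{q-1}$, so $|S|=M$, and every hyperplane of $\PG(k-1,Q)$ meets $S$ in exactly $w_0$ or $w_1$ points. The standing hypothesis $t<(q-1)\frac{q^{mk/2}+1}{q^m-1}$ guarantees that both values are genuinely attained, so $S$ is a true two-intersection set and not a one-intersection set.

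Next I would realise $\Gamma(S)$ as a Cayley graph. Fixing the embedding $\PG(k-1,Q)=H\subset\PG(k,Q)$ and coordinates so that the affine points are $\langle(1,y)\rangle$ with $y\in\F_Q^k$, the point of $H$ on the line joining $\langle(1,y_1)\rangle$ and $\langle(1,y_2)\rangle$ is $\langle(0,y_1-y_2)\rangle$; hence $\Gamma(S)=\mathrm{Cay}(\F_Q^k,D)$ with connection set $D=\{u\in\F_Q^k\setminus\{0\}:\langle u\rangle\in S\}$. Each point of $S$ contributes $Q-1$ vectors to $D$, so $|D|=M(Q-1)$, which already establishes the regularity of degree $K=M(q^m-1)$ and the vertex count $v=Q^k=q^{mk}$.

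Then I would compute the eigenvalues through the additive characters of $(\F_Q^k,+)$. The trivial character yields $K$. For $a\neq0$ the eigenvalue is $\sum_{u\in D}\psi(\langle a,u\rangle)$, where $\psi$ is a fixed nontrivial additive character of $\F_Q$; evaluating the inner sum over the scalar multiples of each point of $S$ (which equals $Q-1$ when the point lies in $a^{\perp}$ and $-1$ otherwise) gives $Qw-M$, where $w=|S\cap a^{\perp}|\in\{w_0,w_1\}$. Thus $\Gamma(S)$ has exactly the two restricted eigenvalues $r=Qw_1-M$ and $s=Qw_0-M$, both of which occur since $S$ is genuinely two-intersecting and $r-s=Q(w_1-w_0)=q^m\,q^{m(k-2)/2}\neq0$. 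A regular graph possessing only two distinct eigenvalues besides its degree is strongly regular, so $\Gamma(S)$ is strongly regular on $v=q^{mk}$ vertices.

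Finally I would read off $\lambda$ and $\mu$ from the standard identities $\mu=K+rs$ and $\lambda=\mu+r+s$. Substituting $M=t\frac{q^{mk/2}-1}{q-1}$, $w_0=t\frac{q^{m(k-2)/2}-1}{q-1}$ and $w_1=w_0+q^{m(k-2)/2}$, a short computation gives $s=Qw_0-M=-\frac{t}{q-1}(Q-1)$ and $r=Qw_1-M=q^{mk/2}-\frac{t}{q-1}(Q-1)$, whence $\mu=K+rs=\bigl(\tfrac{t}{q-1}\bigr)^2(Q-1)^2-\tfrac{t}{q-1}(Q-1)$ and likewise for $\lambda$. I expect the only real work to be the reconciliation with the displayed formulas, which are written in unsimplified polynomial form (in particular $\lambda$ carries a $K^2$ term and $\mu$ a $q^{mk}$ in the denominator): when one factors $M-w_0=\tfrac{t}{q-1}q^{m(k-2)/2}(Q-1)$ and $M-w_1=q^{m(k-2)/2}\bigl(\tfrac{t}{q-1}(Q-1)-1\bigr)$, the power $q^{mk}$ in the denominator of $\mu$ is cancelled exactly by the powers of $q$ produced, returning $K+rs$. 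This last matching is pure bookkeeping relying on the precise sizes of $M,w_0,w_1$ rather than any new idea, and as an alternative one may simply quote the general parameter formulas for the strongly regular graph of a two-intersection set and specialise them.
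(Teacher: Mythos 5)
Your proposal is correct and follows essentially the same route as the paper: the paper likewise reduces to the fact that $L_{\U_1}\cup\ldots\cup L_{\U_t}$ is a two-intersection set with respect to hyperplanes (with intersection numbers $w_0,w_1$, both attained precisely under the hypothesis $t<(q-1)\frac{q^{mk/2}+1}{q^m-1}$) and then invokes the standard correspondence between two-intersection sets and strongly regular graphs, citing it rather than reproving it. The only difference is that you supply a self-contained proof of that correspondence via the Cayley-graph/character-sum computation and check algebraically that $\mu=K+rs$ and $\lambda=\mu+r+s$ coincide with the displayed unsimplified formulas (both indeed reduce to $\mu=T^2-T$ and $\lambda=T^2-3T+q^{mk/2}$ with $T=\tfrac{t(q^m-1)}{q-1}$), which is exactly the bookkeeping the paper outsources to its reference.
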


Taking into account Theorem \ref{th:existence1design}, we get the following constructions of strongly regular graphs.

\begin{corollary}
    For any $t$ positive integer such that $1\leq t <q-1$ and any $m,k \geq 2$ positive integers such that $mk$ is even, there exist a strongly regular graph with parameters $(v,K,\lambda,\mu)$, where $v,K,\lambda,\mu$ are as in Proposition \ref{cor:srg}
\end{corollary}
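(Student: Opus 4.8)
The plan is to obtain the strongly regular graph as a direct combination of the existence result for maximum $1$-designs and the graph construction already established. First I would invoke Theorem \ref{th:existence1design}: given $m,k \geq 2$ with $mk$ even and any $t$ with $1 \leq t \leq q-1$, it guarantees a maximum $1$-design $(\U_1,\ldots,\U_t)$ in $V=V(k,q^m)$ (equivalently, $t$ pairwise disjoint maximum scattered $\F_q$-linear sets $L_{\U_1},\ldots,L_{\U_t}$ in $\PG(k-1,q^m)$). This is precisely the input data required by Proposition \ref{cor:srg}, so the only remaining task is to check that the range of $t$ in our hypothesis lies inside the range for which that proposition applies.

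The one genuine verification concerns the strict inequality $1 \leq t < (q-1)\frac{q^{\frac{mk}{2}}+1}{q^m-1}$ demanded by Proposition \ref{cor:srg}. Here I would exploit the assumptions $m,k\geq 2$: they yield $\frac{mk}{2} \geq m$, hence $q^{\frac{mk}{2}} \geq q^m$ and therefore $\frac{q^{\frac{mk}{2}}+1}{q^m-1} > 1$. Multiplying through by $q-1>0$ gives $(q-1)\frac{q^{\frac{mk}{2}}+1}{q^m-1} > q-1 \geq t$, so the hypothesis of Proposition \ref{cor:srg} is satisfied for every admissible $t$. In particular we stay strictly below the extremal value $(q-1)\frac{q^{\frac{mk}{2}}+1}{q^m-1}$, so we remain in the regime where $L_{\U_1}\cup\cdots\cup L_{\U_t}$ is a genuine two-intersection set (rather than all of $\PG(k-1,q^m)$).

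With both ingredients in place I would apply Proposition \ref{cor:srg} to the maximum $1$-design produced above, concluding that $\Gamma(L_{\U_1}\cup\cdots\cup L_{\U_t})$ is strongly regular with the parameters $(v,K,\lambda,\mu)$ listed there, expressed through $M$, $w_0$, $w_1$ in terms of $q,m,k,t$. I do not anticipate any real obstacle: the statement is a corollary assembling two prior results, and the only point requiring care is the elementary parameter-range estimate of the second paragraph, which the constraints $m,k\geq 2$ settle immediately. Were one instead to allow the full range up to the partition threshold $t=(q-1)\frac{q^{\frac{mk}{2}}+1}{q^m-1}$, that boundary case would need separate treatment, but it does not arise under the assumption $t\leq q-1$.
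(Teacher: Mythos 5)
Your proposal is correct and matches the paper's approach: the corollary follows by feeding the maximum $1$-designs guaranteed by Theorem~\ref{th:existence1design} into the graph construction of Proposition~\ref{cor:srg}, which is exactly what the paper does (it states the corollary as an immediate consequence, without a written proof). Your explicit check that $m,k\geq 2$ forces $q-1 < (q-1)\frac{q^{\frac{mk}{2}}+1}{q^m-1}$, so that every admissible $t$ lies in the range required by Proposition~\ref{cor:srg}, is the one verification the paper leaves implicit, and you carry it out correctly.
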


Finally, we can consider the linear codes associated with the linear sets defined by a maximum $1$-design. It is well-known that set of points (considered with a multiplicity function) in the projective space corresponds to linear Hamming metric codes, see \cite{vladut2007algebraic}. For a set of points $S$ we can construct a matrix $G$ by taking as columns a representative of each point in $S$. In this way, it is possible to define an hamming metric code $\C$ having $G$ as a generator matrix. Considering the set of points the one defined by the linear sets associated to a maximum $1$-design, we get linear codes with the following parameters.

\begin{proposition} [see \textnormal{\cite[Corollary 5.24]{santonastaso2022subspace}}] \label{prop:parametertwoweightcodes}
Let $(U_1,\ldots,U_t)$ be a maximum $1$-design in $V=\F_{q^m}^k$, with $t< (q-1)\frac{q^{\frac{mk}2}+1}{q^m-1}$.
Consider the associated $\fq$-linear sets $L_{U_1},\ldots,L_{U_t}$ and let $S=L_{U_1}\cup \ldots\cup L_{U_t}$.
Then a linear code $\C \subseteq \F_{q^m}^M$ associated to $S$ has dimension $k$, minimum distance $d=M-w_1$, and weight enumerator polynomial
\[ 1+(q^m-1)h_1z^{M-w_1}+(q^m-1)h_0z^{M-w_0}, \]
where
\begin{itemize}
    \item $w_0=t \frac{q^{\frac{m(k-2)}{2}}-1}{q-1}$;
    \item $w_1=(t-1) \frac{q^{\frac{m(k-2)}{2}}-1}{q-1}+\frac{q^{\frac{m(k-2)}{2}+1}-1}{q-1}$;
    \item $M=t\frac{q^{\frac{km}2}-1}{q-1}$;
    \item $h_1=t\frac{(q^{\frac{mk}2}-1)(q^{m(k-1)}-1)-(q^{\frac{m(k-2)}2}-1)(q^{mk}-1)}{(q^m-1)(q-1)q^{\frac{m(k-2)}2}}$;
    \item $h_0=\frac{q^{mk}-1}{q^m-1}-h_1$;
    \item $d=M-w_1$.
\end{itemize}
\end{proposition}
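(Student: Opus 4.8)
The plan is to invoke the standard correspondence between projective systems and Hamming-metric linear codes \cite{vladut2007algebraic}: if $G$ is a $k\times M$ matrix whose columns are representatives of the points of $S$, then for every nonzero $v\in\F_{q^m}^k$ the codeword $vG$ has Hamming weight $M-|S\cap H_v|$, where $H_v$ denotes the hyperplane $v^{\perp}$ of $\PG(k-1,q^m)$. Thus the entire weight distribution of $\C$ is governed by the distribution of the intersection numbers $|S\cap H|$ as $H$ ranges over the hyperplanes, together with the observation that the $q^m-1$ nonzero scalar multiples of a fixed $v$ all determine the same hyperplane, hence codewords of the same weight.

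First I would record the two numerical inputs. By Lemma~\ref{lem:designdisjoint} the $L_{U_i}$ are pairwise disjoint scattered $\fq$-linear sets, and since each is maximum scattered of rank $mk/2$ it contains exactly $\frac{q^{mk/2}-1}{q-1}$ points; disjointness then gives $|S|=M=t\frac{q^{mk/2}-1}{q-1}$. By the two-intersection property of $S$ established in \cite[Corollary 5.14]{santonastaso2022subspace} (stated above), every hyperplane meets $S$ in either $w_0=t\frac{q^{m(k-2)/2}-1}{q-1}$ or $w_0+q^{m(k-2)/2}$ points; a short simplification shows the latter equals the stated $w_1$, so that $w_1-w_0=q^{m(k-2)/2}$. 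Consequently $\C$ has exactly the two nonzero weights $M-w_0$ and $M-w_1$, and since $w_1>w_0$ its minimum distance is $d=M-w_1$.

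It remains to count, for $i\in\{0,1\}$, the number $h_i$ of hyperplanes $H$ with $|S\cap H|=w_i$; the frequency of the weight $M-w_i$ in $\C$ is then $(q^m-1)h_i$. I would pin down $h_0,h_1$ from two linear relations: the total number of hyperplanes of $\PG(k-1,q^m)$ gives
\[
h_0+h_1=\frac{q^{mk}-1}{q^m-1},
\]
while counting incident pairs $(P,H)$ with $P\in S$ and $P\in H$ --- each of the $M$ points lying on $\frac{q^{m(k-1)}-1}{q^m-1}$ hyperplanes --- gives
\[
w_0h_0+w_1h_1=M\,\frac{q^{m(k-1)}-1}{q^m-1}.
\]
Solving this $2\times2$ system and dividing by $w_1-w_0=q^{m(k-2)/2}$ yields
\[
h_1=\frac{M(q^{m(k-1)}-1)-w_0(q^{mk}-1)}{(q^m-1)\,q^{m(k-2)/2}},
\]
and substituting the values of $M$ and $w_0$ recovers exactly the closed form asserted for $h_1$, with $h_0=\frac{q^{mk}-1}{q^m-1}-h_1$. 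This produces the claimed weight enumerator $1+(q^m-1)h_1z^{M-w_1}+(q^m-1)h_0z^{M-w_0}$.

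The computation is essentially routine incidence counting; the only point requiring genuine care is the role of the hypothesis $t<(q-1)\frac{q^{mk/2}+1}{q^m-1}$. This bound is precisely what guarantees $S\neq\PG(k-1,q^m)$, so that $S$ is a bona fide two-intersection set: both intersection numbers $w_0,w_1$ actually occur and both $h_0,h_1$ come out strictly positive, ensuring $\C$ is genuinely a two-weight code rather than degenerating to a single weight. Verifying positivity of the solved values $h_0,h_1$ under this hypothesis, and checking the algebraic identity equating $w_0+q^{m(k-2)/2}$ with the stated $w_1$, are the two places where I would slow down; everything else follows by direct substitution.
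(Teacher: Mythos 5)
Your argument is correct. The paper gives no proof of this proposition --- it is quoted directly from \cite[Corollary 5.24]{santonastaso2022subspace} --- and your reconstruction (codeword weights as $M-|S\cap H|$ via the projective-system correspondence, the two intersection numbers $w_0$ and $w_1=w_0+q^{\frac{m(k-2)}{2}}$ supplied by the two-intersection property stated just before, and the standard double counting of point--hyperplane incidences to solve for $h_0,h_1$) is exactly the standard argument underlying such two-weight-code results; your closed form for $h_1$ matches the stated one after substituting $M$ and $w_0$, and you correctly identify the hypothesis $t<(q-1)\frac{q^{\frac{mk}{2}}+1}{q^m-1}$ as what ensures $S\neq\PG(k-1,q^m)$, so that both weights genuinely occur and $d=M-w_1$.
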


These codes are two-weight codes and we can completely determine their weight enumerator. Theorem \ref{th:existence1design} ensures the following existence result.

\begin{corollary}
 For any $t$ positive integer such that $1\leq t <q-1$ and any $m,k \geq 2$ positive integers such that $mk$ is even, there exist a linear code $\C \subseteq \F_{q^m}^M$ with dimension $k$, minimum distance $d=M-w_1$, and weight enumerator polynomial
\[ 1+(q^m-1)h_1z^{M-w_1}+(q^m-1)h_0z^{M-w_0}, \]
where $w_0,w_1,M,h_1,h_0,d$ are as in Proposition \ref{prop:parametertwoweightcodes}.
\end{corollary}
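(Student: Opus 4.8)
The plan is to derive this statement purely as a specialisation of two results already in hand: the existence theorem for maximum $1$-designs, Theorem~\ref{th:existence1design}, and the code-parameter computation, Proposition~\ref{prop:parametertwoweightcodes}. No new geometric or arithmetic input is needed; the only genuine thing to check is that the numerical hypothesis of Proposition~\ref{prop:parametertwoweightcodes} is automatically satisfied throughout the range $1\le t<q-1$.

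First I would fix $t$ with $1\le t<q-1$ and $m,k\ge 2$ with $mk$ even, and invoke Theorem~\ref{th:existence1design} to produce a maximum $1$-design $(\U_1,\ldots,\U_t)$ in $V=\F_{q^m}^k$, equivalently a family $L_{\U_1},\ldots,L_{\U_t}$ of pairwise disjoint maximum scattered $\fq$-linear sets of $\PG(k-1,q^m)$ of rank $mk/2$. This is exactly the object that Theorem~\ref{th:existence1design} guarantees in this parameter range.

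Next I would verify the hypothesis $t<(q-1)\frac{q^{mk/2}+1}{q^m-1}$ required by Proposition~\ref{prop:parametertwoweightcodes}. Since $k\ge 2$ we have $mk/2\ge m$, hence $q^{mk/2}\ge q^m>q^m-2$, which gives $\frac{q^{mk/2}+1}{q^m-1}\ge 1$. Combined with $t<q-1$ this yields $t<q-1\le (q-1)\frac{q^{mk/2}+1}{q^m-1}$, so the hypothesis holds; in particular the union $S=L_{\U_1}\cup\cdots\cup L_{\U_t}$ is a proper two-intersection set with respect to hyperplanes rather than all of $\PG(k-1,q^m)$.

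Finally, applying Proposition~\ref{prop:parametertwoweightcodes} to this maximum $1$-design produces a linear code $\C\subseteq\F_{q^m}^M$ whose dimension $k$, minimum distance $d=M-w_1$, and two-weight enumerator $1+(q^m-1)h_1z^{M-w_1}+(q^m-1)h_0z^{M-w_0}$ are precisely the claimed ones, with $w_0,w_1,M,h_1,h_0,d$ as listed there. There is no real obstacle in this argument: the only step deserving care is the inequality check above, which confirms that the design supplied by Theorem~\ref{th:existence1design} indeed falls within the admissible range of Proposition~\ref{prop:parametertwoweightcodes}; everything else is a direct citation of the two cited results.
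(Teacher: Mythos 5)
Your proof is correct and takes essentially the same approach as the paper, which derives this corollary directly by combining Theorem \ref{th:existence1design} with Proposition \ref{prop:parametertwoweightcodes} (the paper states only ``Theorem \ref{th:existence1design} ensures the following existence result''). Your explicit check that $t<q-1$ forces $t<(q-1)\frac{q^{mk/2}+1}{q^m-1}$, using $q^{mk/2}\ge q^m$ for $k\ge 2$, is the one hypothesis the paper leaves implicit, and you verify it correctly.
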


\begin{remark}
By the constructions in Subsection \ref{subsec:maximum1designt>q-1}, we also get two-intersection sets, strongly regular graphs and two-weight linear codes with parameters described in the above corollaries for any $t \leq 2(q-1)$, $m=4$ and $k \equiv 0 \ \mod 3$, as well as some sporadic larger values of $t$ for some small $q$. For example taking $q=2$, $k=3$, $m=4$, $t=2$, we obtain an $\mathrm{srg}(4096,1890,874,870)$.
\end{remark}

\section{Constructions and existence of MSRD code and \texorpdfstring{$h$}{Lg}-design}

In this section, we will explore the consequences of the previous sections for the existence of $h$-design and MSRD codes.

Considering a system associated to a Linerized Reed-Solomon code and Twisted Linearized Reed-Solomon code as already proved in \cite{santonastaso2022subspace,neri2023geometry}, we get the following construction of $(k-1)$-designs.

\begin{theorem} [see \textnormal{\cite[Theorem 5.2]{santonastaso2022subspace} \cite[Proposition 4.3]{neri2023geometry}}] 
\label{th:example2(n-1)des}
     Let $\mu_1,\ldots,\mu_{t} \in \F_{q^m}^*$ and suppose that $\N_{q^m/q}(\mu_i) \neq \N_{q^m/q}(\mu_j)$ if $i \neq j$. Let $\eta \in \F_{q^m}$ with the property that $\N_{q^m/q}(\eta)(-1)^{km} \notin G$, with $G$ the multiplicative subgroup of $\F_{q}^*$ generated by $\{\N_{q^m/q}(\mu_1),\ldots,\N_{q^m/q}(\mu_t)\}$. Let $S$ be an $n$-dimensional $\F_q$-subspace of $\F_{q^m}$. If $k\leq tn$ then 
    the ordered set of the $\fq$-subspaces $(\U_1,\ldots,\U_{t})$, where
    \[
    \U_i=\{ (x+\eta \N_{\sigma}^k(\mu_i)\sigma^k(x),\sigma(x)\N_{\sigma}^1(\mu_i),\sigma^2(x)N_{\sigma}^2(\mu_i),\ldots, \sigma^{k-1}(x)\N_{\sigma}^{k-1}(\mu_i)) \colon x \in  S\} \subseteq \F_{q^m}^k
    \]
    is $(k-1)$-design $[\bfn,k]_{q^m/q}$-system, with $\bfn=(n,\ldots,n)$. If $n=m$, then it is a maximum $(k-1)$-design.
\end{theorem}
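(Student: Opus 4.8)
The plan is to recognise $(\U_1,\ldots,\U_t)$ as a system associated with a twisted linearized Reed--Solomon code, and then to invoke the correspondence between MSRD codes and $(k-1)$-designs. First I would fix an $\F_q$-basis $\boldsymbol{\beta}=(\beta_1,\ldots,\beta_n)$ of $S$. Since $S\subseteq\Fm$ we automatically have $n\leq m$, so $m\geq n_1=n$; moreover the hypotheses $\N_{q^m/q}(\mu_i)\neq\N_{q^m/q}(\mu_j)$ for $i\neq j$ and $\N_{q^m/q}(\eta)(-1)^{km}\notin G$ are exactly those appearing in the definition of $\C_{k,\muu,\boldsymbol{\beta}}(\eta)$. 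I would therefore work with this code, which is a nondegenerate MSRD $\Fmnkd$ code by \cite[Theorem 6.3]{neri2022twisted}.

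The key step is to check that $(\U_1,\ldots,\U_t)\in\Psi([\C_{k,\muu,\boldsymbol{\beta}}(\eta)])$. A basis of the message space of $\C_{k,\muu,\boldsymbol{\beta}}(\eta)$ is given by the skew polynomials $1+\eta x^k,\,x,\,x^2,\ldots,x^{k-1}$, so a generator matrix is obtained by applying $\mathrm{ev}_{\muu,\boldsymbol{\beta}}$ to these. Using the generalized operator evaluation together with $\N_0(\mu_i)=1$, the column of the $i$-th block indexed by $\beta_j$ equals
$$\bigl(\beta_j+\eta\N_k(\mu_i)\sigma^k(\beta_j),\ \sigma(\beta_j)\N_1(\mu_i),\ \sigma^2(\beta_j)\N_2(\mu_i),\ \ldots,\ \sigma^{k-1}(\beta_j)\N_{k-1}(\mu_i)\bigr),$$
which is precisely the element of $\U_i$ obtained for $x=\beta_j$. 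As $x$ ranges over $S$ these columns $\F_q$-span $\U_i$, so $(\U_1,\ldots,\U_t)$ is a system associated with $\C_{k,\muu,\boldsymbol{\beta}}(\eta)$; reading off the second coordinate shows the parametrising map is injective, hence $\dim_{\F_q}(\U_i)=n$ and the system is nondegenerate.

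The conclusion is then immediate. Because $\C_{k,\muu,\boldsymbol{\beta}}(\eta)$ is MSRD with $m\geq n_1$, Corollary \ref{teo:designMSRD} gives $\sum_{i=1}^t\dim_{\F_q}(\U_i\cap H)\leq k-1$ for every hyperplane $H$ of $\F_{q^m}^k$; since the hyperplanes are exactly the $(k-1)$-dimensional $\Fm$-subspaces, this says that $(\U_1,\ldots,\U_t)$ is a $(k-1)$-design $[\bfn,k]_{q^m/q}$ system. Finally, when $n=m$ each $\U_i$ has $\F_q$-dimension $m=\frac{mk}{(k-1)+1}$, which is the value defining a maximum $(k-1)$-design, so the design is maximum.

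The main obstacle is purely bookkeeping: one must unwind $\mathrm{ev}_{\muu,\boldsymbol{\beta}}$ on the twisted basis $1+\eta x^k,\,x,\ldots,x^{k-1}$ and verify that the twist $\eta x^k$ contributes only to the first coordinate, so that the resulting columns match the defining tuples of $\U_i$, and to confirm that the norm-conditions on the $\mu_i$ and $\eta$ coincide with those guaranteeing the twisted linearized Reed--Solomon code is MSRD. No genuine difficulty arises once the correct code has been identified.
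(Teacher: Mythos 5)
Your proposal is correct and takes essentially the same route as the paper, which presents this theorem precisely as the system associated with a (twisted) linearized Reed--Solomon code (citing \cite[Theorem 5.2]{santonastaso2022subspace} and \cite[Proposition 4.3]{neri2023geometry}): one unwinds $\mathrm{ev}_{\muu,\boldsymbol{\beta}}$ on the basis $1+\eta x^k, x,\ldots,x^{k-1}$ to see that $(\U_1,\ldots,\U_t)\in\Psi([\C_{k,\muu,\boldsymbol{\beta}}(\eta)])$, invokes the MSRD property of $\C_{k,\muu,\boldsymbol{\beta}}(\eta)$ from \cite[Theorem 6.3]{neri2022twisted}, and concludes via the correspondence between MSRD codes with $m\geq n_1$ and $(k-1)$-designs. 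Your verification of the bookkeeping (the twist affecting only the first coordinate, injectivity via the second coordinate giving $\dim_{\F_q}(\U_i)=n$ and nondegeneracy, and the dimension count $m=\frac{mk}{k}$ in the case $n=m$) is accurate and complete.
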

\medskip

\medskip 
\begin{remark}
Note that condition $(-1)^{km}\N_{q^m/q}(\eta) \notin G$ in Theorem \ref{th:example2(n-1)des} leads to a restriction on $t$. As a consequence, we can construct subspace designs with $\eta \neq 0$ if $t$ is not larger of the size of biggest proper subgroup of $\F_{q}^*$, that is $(q-1)/r'$, where $r'$ is the smallest prime diving $q-1$.
If $G=\F_{q}^*$, the only possibly choice for $\eta$ is $\eta=0$ and in this case we can go until $t=q-1$, see also \cite[Proposition 6.8]{neri2022twisted}.
\end{remark}

It is possible construct a $(k-1)$-design having a number of blocks greater than $q-1$ in the following way.
\begin{proposition} \label{prop:optimaldesignlargerblocks}
    Let $\mu_1,\ldots,\mu_{t} \in \F_{q^m}^*$ and suppose that $\N_{q^m/q}(\mu_i) \neq \N_{q^m/q}(\mu_j)$ if $i \neq j$. Let $S$ be an $n$-dimensional $\F_q$-subspace of $\F_{q^m}$ and $S_{j_i,i} \neq \{0\}$ be $\F_q$-subspaces of $S$ such that 
    \[
    S=S_{1,i} \oplus \cdots \oplus S_{j_i,i}, 
    \]
    for each $i$.
    If $k\leq tn$ then 
    the ordered set of the $\fq$-subspaces $(\U_{1,1},\ldots,\U_{j_1,1},\ldots,\U_{1,t},\ldots,\U_{j_t,t})$, where
    \[
    \U_{l,i}=\{ (x,\sigma(x)\N_{\sigma}^1(\mu_i),\sigma^2(x)N_{\sigma}^2(\mu_i),\ldots, \sigma^{k-1}(x)\N_{\sigma}^{k-1}(\mu_i)) \colon x \in  S_{l,i}\} \subseteq \F_{q^m}^k
    \]
    is $(k-1)$-design in $\F_{q^m}^k$
\end{proposition}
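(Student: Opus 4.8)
The plan is to reduce this to the case of full blocks, Theorem~\ref{th:example2(n-1)des} with $\eta=0$, together with the elementary principle that replacing a block by an internal direct-sum decomposition of it cannot increase the total intersection dimension with any $\Fm$-subspace. Concretely, for each $i$ I would group the subblocks $\U_{1,i},\ldots,\U_{j_i,i}$ back into the single ``coarse'' block $\U_i$ built from all of $S$, apply the known design property to the coarse system $(\U_1,\ldots,\U_t)$, and then transfer the bound down to the fine system.

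First I would fix $i$ and consider the $\fq$-linear map
\[
\phi_i \colon x \in S \longmapsto (x,\sigma(x)\N_{\sigma}^{1}(\mu_i),\ldots,\sigma^{k-1}(x)\N_{\sigma}^{k-1}(\mu_i)) \in \Fm^k .
\]
Since $\sigma$ is $\fq$-linear and the $\N_{\sigma}^{j}(\mu_i)$ are scalars, $\phi_i$ is $\fq$-linear, and it is injective because its first coordinate returns $x$. Writing $\U_i=\phi_i(S)$, so that $\U_{l,i}=\phi_i(S_{l,i})$, the hypothesis $S=S_{1,i}\oplus\cdots\oplus S_{j_i,i}$ transports under the isomorphism $\phi_i$ to the internal direct-sum decomposition $\U_i=\U_{1,i}\oplus\cdots\oplus\U_{j_i,i}$ of $\fq$-spaces. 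By Theorem~\ref{th:example2(n-1)des} with $\eta=0$ (whose spanning hypothesis $k\le tn$ is exactly what is assumed here), the coarse system $(\U_1,\ldots,\U_t)$ is a $(k-1)$-design in $\Fm^k$; in particular $\langle \U_1,\ldots,\U_t\rangle_{\Fm}=\Fm^k$ and $\sum_{i=1}^t\dim_{\fq}(\U_i\cap H)\le k-1$ for every $\Fm$-hyperplane $H$.

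The key step is then to compare the two systems hyperplane by hyperplane. Fixing an $\Fm$-subspace $H\subseteq\Fm^k$ of dimension $k-1$, the subspaces $\U_{l,i}\cap H$ lie in $\U_i\cap H$ and, being intersections with $H$ of the independent summands $\U_{l,i}$, remain independent; hence
\[
\sum_{l=1}^{j_i}\dim_{\fq}(\U_{l,i}\cap H)=\dim_{\fq}\Big(\bigoplus_{l=1}^{j_i}(\U_{l,i}\cap H)\Big)\le \dim_{\fq}(\U_i\cap H).
\]
Summing over $i$ and using the design bound for the coarse system gives
\[
\sum_{i=1}^t\sum_{l=1}^{j_i}\dim_{\fq}(\U_{l,i}\cap H)\le \sum_{i=1}^t\dim_{\fq}(\U_i\cap H)\le k-1 .
\]
Since each $\U_i$ is the $\fq$-sum of its $\U_{l,i}$, the total $\Fm$-span of the fine blocks equals $\langle \U_1,\ldots,\U_t\rangle_{\Fm}=\Fm^k$, so the non-degeneracy requirement is inherited, and the displayed bound shows that the fine system is a $(k-1)$-design. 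The argument is short, and the only point requiring genuine care is the directness of the intersections $\U_{l,i}\cap H$ — this is precisely what makes the per-block dimensions add rather than overlap; once it is in place, everything else is inherited from Theorem~\ref{th:example2(n-1)des}, so I do not anticipate any serious obstacle.
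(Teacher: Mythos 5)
Your proposal is correct and follows essentially the same route as the paper's own proof: regroup the fine blocks $\U_{l,i}$ into the coarse blocks $\U_i$ built from all of $S$, apply Theorem~\ref{th:example2(n-1)des} (with $\eta=0$) to the coarse system, and transfer the hyperplane bound via $\sum_{l}\dim_{\fq}(\U_{l,i}\cap H)\le\dim_{\fq}(\U_i\cap H)$ together with the inherited spanning condition. Your write-up is in fact slightly more careful than the paper's, since you explicitly justify that inequality by noting the summands $\U_{l,i}\cap H$ remain independent inside $\U_i\cap H$, a point the paper states without comment.
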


\begin{proof}
    Consider 
    \[
    \U_i=\{ (x,\sigma(x)\N_{\sigma}^1(\mu_i),\sigma^2(x)N_{\sigma}^2(\mu_i),\ldots, \sigma^{k-1}(x)\N_{\sigma}^{k-1}(\mu_i)) \colon x \in  S\} \subseteq \F_{q^m}^k.
    \]
    By Theorem \ref{th:example2(n-1)des}, we know that $(\U_1,\ldots,\U_t)$ is a $(k-1)$-design in $\F_{q^m}^k$. Since 
    \[
    \dim_{\F_q}(\U_{1,i} \cap H)+\ldots +\dim_{\F_q}(\U_{j_i,i} \cap H) \leq \dim_{\F_q}(\U_{i} \cap H),
    \]
    for each $i$ and any hyperplane $H$ of $\F_{q^m}^k$, it follows that also the $U_{l,i}'s$ form a $(k-1)$-design in $\F_{q^m}^k$. Finally, since 
    \[
  \langle \U_{1,1},\ldots,\U_{j_t,t} \rangle_{\F_{q^m}}=\langle \U_{1},\ldots,\U_{t} \rangle_{\F_{q^m}}=\F_{q^m}^k,
    \]
    we have the assertion.
\end{proof}

Since when $m \geq n_i$, optimal subspace designs correspond to $(k-1)$-design (cf. Corollary \ref{cor:optimalk-1design}) using Proposition \ref{prop:optimaldesignlargerblocks}, we get the following result.

\begin{corollary}
    There exists an MSRD $\Fmnk$ code with $m \geq n_i$, for any $k<\sum_{i=1}^t n_i \leq (q-1)m$.
\end{corollary}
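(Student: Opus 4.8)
The plan is to build the code through its geometric counterpart. Since $m\ge n_i$ for every $i$, Corollary \ref{cor:optimalk-1design} identifies MSRD $[\bfn,k]_{q^m/q}$ codes with $(k-1)$-designs: via the correspondence $\Psi,\Phi$ it is enough to construct a $(k-1)$-design $[\bfn,k]_{q^m/q}$ system whose $i$-th block has $\F_q$-dimension $n_i$, and then take any code in its $\Phi$-image. Thus the whole statement reduces to a geometric construction, and the natural source of such designs is the one produced by Proposition \ref{prop:optimaldesignlargerblocks}.

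Concretely, I would first fix the ambient linearised Reed--Solomon system. Choose $\mu_1,\dots,\mu_{q-1}\in\F_{q^m}^{*}$ with pairwise distinct norms $\N_{q^m/q}(\mu_j)$, which is possible because the norm map is onto $\F_q^{*}$, and let $\phi_j\colon\F_{q^m}\to\F_{q^m}^k$ be the injective $\F_q$-linear map $x\mapsto(x,\sigma(x)\N_{\sigma}^{1}(\mu_j),\dots,\sigma^{k-1}(x)\N_{\sigma}^{k-1}(\mu_j))$ of Theorem \ref{th:example2(n-1)des} with $\eta=0$. By that theorem $(\phi_1(\F_{q^m}),\dots,\phi_{q-1}(\F_{q^m}))$ is already a $(k-1)$-design, since $k<N\le(q-1)m$ gives $k\le(q-1)m$. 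I then refine it exactly as in Proposition \ref{prop:optimaldesignlargerblocks}: assign to each block $i$ an index $c(i)\in\{1,\dots,q-1\}$ and an $\F_q$-subspace $S_i\subseteq\F_{q^m}$ with $\dim_{\F_q}S_i=n_i$, set $\U_i=\phi_{c(i)}(S_i)$, and require that for each fixed $j$ the subspaces $\{S_i:c(i)=j\}$ be independent in $\F_{q^m}$. Injectivity of $\phi_{c(i)}$ gives $\dim_{\F_q}\U_i=n_i$, and the independence within each index yields, for every hyperplane $H$, the bound $\sum_i\dim_{\F_q}(\U_i\cap H)\le\sum_{j=1}^{q-1}\dim_{\F_q}(\phi_j(\F_{q^m})\cap H)\le k-1$, so the refined family is again a $(k-1)$-design.

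The step I expect to be the main obstacle is the realisation itself: one must be able to distribute the multiset $(n_1,\dots,n_t)$ among the $q-1$ indices so that the dimensions sharing an index sum to at most $m$ --- the capacity forced by independence of subspaces of $\F_{q^m}$ --- while keeping the resulting family spanning. The budget $N=\sum_i n_i\le(q-1)m$ together with $n_i\le m$ is what makes room for this distribution: in the simplest regime $t\le q-1$ one places a single block per index, and in general one fills the indices subject to the capacity $m$. Preserving the spanning condition under the splitting (needed so that the $\U_i$ form a nondegenerate system) is the secondary point I would have to secure, for instance by arranging each used index to be completely filled --- automatic when $N=(q-1)m$ --- or otherwise by retaining a subfamily that still recovers the $\F_{q^m}$-span of the unsplit linearised Reed--Solomon system. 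Once the distribution and the spanning are in place, the dimension count and the design inequality are exactly those of Proposition \ref{prop:optimaldesignlargerblocks}, and Corollary \ref{cor:optimalk-1design} converts the $(k-1)$-design back into the required MSRD $[\bfn,k]_{q^m/q}$ code.
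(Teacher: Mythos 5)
Your strategy coincides with the paper's: the corollary is obtained there by combining Proposition \ref{prop:optimaldesignlargerblocks} (splitting the blocks of a linearized Reed--Solomon system) with the correspondence between $(k-1)$-designs and MSRD codes in the regime $m\geq n_i$, which is exactly the reduction you set up, with the reasonable relaxation of requiring only independence of the subspaces sharing a norm class rather than an exact direct-sum decomposition of a fixed $S$.

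However, the step you yourself single out as the main obstacle is a genuine gap, and the sentence ``the budget $N=\sum_i n_i\le (q-1)m$ together with $n_i\le m$ is what makes room for this distribution'' is false. Assigning the blocks to the $q-1$ norm classes subject to the capacity $m$ is a bin-packing problem, and feasibility does not follow from the volume bound. Take $q=3$, $m=3$, $\bfn=(2,2,2)$: these parameters satisfy the hypotheses of the corollary ($n_i=2\le m$ and $k<N=6=(q-1)m$), yet no valid assignment exists, because two blocks of dimension $2$ placed in the same class would have to be independent $\F_3$-subspaces of $\F_{3^3}$, forcing $\dim_{\F_3}(\F_{3^3})\ge 4$. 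Hence your construction cannot produce this profile; and neither can Proposition \ref{prop:optimaldesignlargerblocks} itself, since it only realizes profiles that can be partitioned into at most $q-1$ groups of equal sum $n\le m$. So your argument establishes the statement only for profiles admitting such a packing (for instance $t\le q-1$, or all $n_i=m$), and in this respect it faithfully reproduces, gap included, the paper's own derivation, which is a direct appeal to that proposition with no further argument. Covering the remaining admissible profiles requires an ingredient absent from both: for example, puncturing and block-splitting of linearized Reed--Solomon codes where the packing allows it, or a counting argument producing a $k$-dimensional $\F_{q^m}$-subspace of $\F_{q^m}^{\bfn}$ that avoids all nonzero vectors of sum-rank weight at most $N-k$ (such an argument does succeed for the profile $(2,2,2)$ above, so the obstruction is to this proof, not necessarily to the statement).
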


Recalling that the Delsarte dual system $\U^{\perp}$ of a system $\U$ is the system associated to a code $\C \in \Phi([\U])$. So from Theorem \ref{Th:sdesignMSRD} we immediately get the following geometric property on $\U^{\perp}$.

\begin{corollary}
Let $\mathcal{U}$ be an $\Fmnkd$ system with $d \geq 2$. Then $\mathcal{U}$ is an $(h,r)$-subspace design if an only if an its Delsarte dual $[\bfn,N-k]_{q^m/q}$ system $\mathcal{U}^{\perp}$ is an $(N-k-r+h-1,N-r-2)$-design.
   
\end{corollary}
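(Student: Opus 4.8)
The plan is to derive the statement directly from Theorem \ref{th:connectiondesignweight}, applied to two different codes, using the fact that the Delsarte dual system is by construction a system associated to the dual code. First I would fix representatives. Since $d \geq 2$, Proposition \ref{prop:charact_nondegenarate} guarantees that for any $\C \in \Phi([\U])$ the dual $\C^\perp$ is nondegenerate; moreover $\C^\perp$ is a $[\bfn,N-k]_{q^m/q}$ code, so in particular the block lengths $\bfn$ (and hence $N$) are unchanged. By definition of the Delsarte class dual, $[\U^\perp]=\Psi([\C^\perp])$, so $\U^\perp$ is a system associated to $\C^\perp$. This places both $\U$ (associated to $\C$) and $\U^\perp$ (associated to $\C^\perp$) within the scope of Theorem \ref{th:connectiondesignweight}.

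Next I would apply Theorem \ref{th:connectiondesignweight} to $\C$ with its associated system $\U$. The equivalence of items (1) and (3) gives
\[
\U \text{ is an } (h,r)\text{-subspace design} \iff \ds_{r-h+1}(\C^\perp) \geq r+2.
\]
Then I would apply the same theorem to $\C^\perp$, whose $\Fm$-dimension is $N-k$, with its associated system $\U^\perp$. Replacing the role of $k$ by $N-k$ in item (2), the equivalence of items (1) and (2) gives, for any target parameters $h',r'$,
\[
\U^\perp \text{ is an } (h',r')\text{-subspace design} \iff \ds_{(N-k)-h'}(\C^\perp) \geq N-r'.
\]

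The final step is to match the two weight conditions, both of which concern $\C^\perp$. Setting $h' = N-k-r+h-1$ and $r' = N-r-2$ yields $(N-k)-h' = r-h+1$ and $N-r' = r+2$, so the right-hand sides of the two displayed equivalences become the identical inequality $\ds_{r-h+1}(\C^\perp) \geq r+2$. Chaining the two equivalences then shows that $\U$ is an $(h,r)$-subspace design if and only if $\U^\perp$ is an $(N-k-r+h-1,\,N-r-2)$-subspace design, which is precisely the claim; the well-definedness of the Delsarte class dual makes this a statement about the class $[\U^\perp]$, independent of the chosen $\C$.

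I expect the only delicate point to be bookkeeping rather than genuine mathematics: one must check that the indices land in the admissible ranges for the generalized weights so that Theorem \ref{th:connectiondesignweight} is legitimately applicable to $\C^\perp$. Here $r \geq h$ (Proposition \ref{prop:diminuzione}) ensures $r-h+1 \geq 1$, and one verifies that $r-h+1 \leq N-k$ and $0 \le h' \le N-k$ in the relevant regime; since both equivalences in Theorem \ref{th:connectiondesignweight} hold for codes with $k<N$ (which is in force, as $\C^\perp$ is nondegenerate), no further hypotheses are needed.
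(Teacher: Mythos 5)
Your proof is correct and is essentially the paper's own argument: the paper states this corollary as an immediate consequence of the design/generalized-weight correspondence (Theorem \ref{th:connectiondesignweight}, despite the text citing Theorem \ref{Th:sdesignMSRD}, apparently a reference typo), applied to $\C$ via the equivalence $(1)\Leftrightarrow(3)$ and to $\C^\perp$ via $(1)\Leftrightarrow(2)$, exactly as you do. Your index-matching $h'=N-k-r+h-1$, $r'=N-r-2$ and the nondegeneracy check via $d\geq 2$ and Proposition \ref{prop:charact_nondegenarate} spell out precisely what the paper leaves implicit.
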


Another consequence of Theorem \ref{Th:sdesignMSRD}, is that a Delsarte dual of a system that is a maximum $h$-design is also an optimal $h'$-design as well.

\begin{corollary} \label{cor:dualofmaximumdesign}
    Let $(\mathcal{U}_1,\ldots,\mathcal{U}_t)$ be a maximum  $h$-design $[\bfn,k]_{q^m/q}$ system, with $n=\frac{mk}{h+1}$. Then any Delsarte dual $\mathcal{U}^{\perp}$ of $\mathcal{U}$ is an optimal $(tm-h-2)$-design $[\bfn,N-k]_{q^m/q}$ system.
\end{corollary}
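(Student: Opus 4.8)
The plan is to avoid working with the dual subspaces directly and instead route the whole argument through the code side, where both the MSRD property and its behaviour under the Delsarte dual are already established; only at the very end do I translate back into the language of designs. First I would pass from $\mathcal{U}$ to an associated code. Since $(\mathcal{U}_1,\ldots,\mathcal{U}_t)$ is a maximum $h$-design with $n=\frac{mk}{h+1}$, and since $h\le k-1$ forces $n=m\cdot\frac{k}{h+1}\ge m$, Theorem \ref{Th:sdesignMSRD} applies and shows that any code $\C\in\Phi([\mathcal{U}])$ is an MSRD $[\bfn,k,d]_{q^m/q}$ code. Reading the distance off from the same theorem, where $h=tm-d$, gives $d=tm-h$.

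Next I would dualise on the code side. As $\C$ is MSRD with all blocks of common length $n\ge m$, Theorem \ref{th:dualMSRD} guarantees that $\C^\perp$ is again an MSRD code, now of type $[\bfn,N-k]_{q^m/q}$, and that equality holds in the dual bound, so that $d(\C^\perp)=tm-d+2=h+2$. By the very definition of the Delsarte class dual, a Delsarte dual system $\mathcal{U}^\perp$ of $\mathcal{U}$ is exactly a system lying in $\Psi([\C^\perp])$.

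The third step runs Theorem \ref{Th:sdesignMSRD} in the reverse direction, now for $\C^\perp$. Its blocks still have common length $n\ge m$, so the theorem applies, and since $\C^\perp$ is MSRD it yields that $\mathcal{U}^\perp$ is an $h'$-design $[\bfn,N-k]_{q^m/q}$ system with $h'=tm-d(\C^\perp)=tm-h-2$. Optimality then comes for free: $\mathcal{U}^\perp$ corresponds to the MSRD code $\C^\perp$, and a subspace design that corresponds to an MSRD code is optimal by definition (cf. Corollary \ref{cor:optimalk-1design}). This is precisely the claimed optimal $(tm-h-2)$-design.

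I do not expect a genuine obstacle here; the substance is parameter bookkeeping, and the one point deserving care is checking that the dimension condition demanded by Theorem \ref{Th:sdesignMSRD} for $\C^\perp$ is met. I would verify it directly: from $N-k=tn-k=\frac{n(tm-h-1)}{m}$ one gets $\frac{m(N-k)}{h'+1}=\frac{n(tm-h-1)}{tm-h-1}=n$ exactly when $h'=tm-h-2$, which simultaneously confirms that the theorem is applicable and that the design parameter is the announced one. It is worth recording that $\mathcal{U}^\perp$ is in general only optimal and not maximum, since a maximum $h'$-design would require $m\ge h'+1=tm-h-1$, a condition that need not hold; this is exactly why the statement says \emph{optimal} rather than \emph{maximum}. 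As a sanity check, the same conclusion can be reached through the generalized-weight characterisation of Theorem \ref{th:connectiondesignweight} together with the Wei-type duality of Theorem \ref{th:weitype}.
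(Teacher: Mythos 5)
Your proof is correct and follows essentially the same route as the paper's own argument: pass to an associated code, apply Theorem \ref{Th:sdesignMSRD} to see it is MSRD, use MSRD duality (Theorem \ref{th:dualMSRD}) to get that the dual code is MSRD, and apply Theorem \ref{Th:sdesignMSRD} again to the dual code, solving $\frac{m(N-k)}{h'+1}=n$ to obtain $h'=tm-h-2$. Your additional bookkeeping (checking $n\ge m$, computing $d=tm-h$ and $d(\C^{\perp})=h+2$, and noting why the conclusion is \emph{optimal} rather than \emph{maximum}) simply makes explicit what the paper leaves implicit.
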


\begin{proof}
    Since $(\mathcal{U}_1,\ldots,\mathcal{U}_t)$ is a maximum  $h$-design $[\bfn,k]_{q^m/q}$ system, by Theorem  \ref{Th:sdesignMSRD} any code $\C \in \Psi([\mathcal{U}])$ is an MSRD $[\bfn,k]_{q^m/q}$ code and so any $\C^{\perp}$ is an MSRD $[\bfn,N-k]_{q^m/q}$ code as well. Therefore, by applying Theorem \ref{Th:sdesignMSRD} to $\C^{\perp}$, we have that any Delsarte dual $\mathcal{U}^{\perp}$ of $\mathcal{U}$ is an optimal $h'$-design $[\bfn,N-k]_{q^m/q}$ system, with $h'$ such that $\frac{mk}{h+1}=\frac{m(N-k)}{h'+1}$ and hence the assertion follows.
\end{proof}

\begin{theorem} [see \textnormal{\cite[Theorem 5.8]{santonastaso2022subspace}}] \label{th:applincollamento}
    If $h+1$ divides $k$, $m \geq h+1$ and $t\leq q-1$, then there exists a maximum $h$-design $\left[\left(\frac{mk}{h+1},\ldots,\frac{mk}{h+1}\right),k\right]_{q^m/q}$ system.
\end{theorem}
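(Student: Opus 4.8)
The plan is to reduce to the base case $k=h+1$ and then glue. Since $h+1$ divides $k$, write $k=(h+1)\ell$ and fix a decomposition $\F_{q^m}^k=V_1\oplus\cdots\oplus V_\ell$ into $\F_{q^m}$-subspaces $V_j$, each of dimension $h+1$. The idea is to install a maximum $h$-design inside each $V_j$ and to check that the direct-sum (gluing) operation of Proposition \ref{cor:directsum} keeps the resulting system optimal.

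First I would produce, inside each summand $V_j$, a maximum $h$-design consisting of $t$ subspaces of $\fq$-dimension $m$. This is exactly the content of Theorem \ref{th:example2(n-1)des} applied with $k$ replaced by $h+1$, with $n=m$ and $\eta=0$: choosing $\mu_1,\ldots,\mu_t\in\F_{q^m}^*$ with pairwise distinct norms (possible because $t\le q-1$ and the norm map $\N_{q^m/q}\colon\F_{q^m}^*\to\fq^*$ is surjective, so there are $q-1$ elements with pairwise distinct norms) and taking $S=\F_{q^m}$, the hypothesis $k\le tn$ becomes $h+1\le tm$, which holds since $m\ge h+1$ and $t\ge 1$; with $\eta=0$ the side condition $(-1)^{(h+1)m}\N_{q^m/q}(\eta)\notin G$ is trivially satisfied, since $0\notin\fq^*\supseteq G$. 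The theorem then yields an $\big((h+1)-1\big)$-design, that is an $h$-design, $(\U_{j,1},\ldots,\U_{j,t})$ in $V_j$ with each $\dim_{\fq}(\U_{j,i})=m=\frac{m(h+1)}{h+1}$, hence a maximum $h$-design.

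Next I would glue these $\ell$ designs using Proposition \ref{cor:directsum}. A maximum $h$-design is, by definition, an $(h,h)$-subspace design, so each summand contributes $r_j=h$. The proposition then shows that the system $(\U_1,\ldots,\U_t)$ with $\U_i=\bigoplus_{j=1}^\ell\U_{j,i}$ is an $\left(h,\sum_{j=1}^\ell h-(\ell-1)h\right)=(h,h)$-subspace design in $\F_{q^m}^k$, i.e.\ again an $h$-design. The crucial point, and the only place optimality is actually used, is that because every $r_j$ is as small as possible ($r_j=h$), the glued parameter $\sum_j r_j-(\ell-1)h$ collapses back to $h$; had the base designs not been optimal, gluing would strictly increase $r$ and destroy the $h$-design property. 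This is the one step that requires care rather than routine computation.

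Finally I would verify the numerical and spanning conditions. Each $\U_i$ has $\fq$-dimension $\sum_{j=1}^\ell m=\ell m=\frac{mk}{h+1}$, which together with the standing hypothesis $m\ge h+1$ is precisely the requirement for $(\U_1,\ldots,\U_t)$ to be a \emph{maximum} $h$-design; and $\langle\U_1,\ldots,\U_t\rangle_{\F_{q^m}}=\bigoplus_{j=1}^\ell\langle\U_{j,1},\ldots,\U_{j,t}\rangle_{\F_{q^m}}=\bigoplus_{j=1}^\ell V_j=\F_{q^m}^k$, since each base design spans its own summand. This produces the desired maximum $h$-design $\left[\left(\frac{mk}{h+1},\ldots,\frac{mk}{h+1}\right),k\right]_{q^m/q}$ system, completing the argument.
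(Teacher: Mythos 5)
Your proof is correct: the base case via Theorem \ref{th:example2(n-1)des} (with $k$ replaced by $h+1$, $n=m$, $\eta=0$, where all side conditions hold as you check), followed by gluing $\ell=k/(h+1)$ copies via Proposition \ref{cor:directsum}, does yield a maximum $h$-design with the stated parameters, and you correctly identify that the collapse $\ell h-(\ell-1)h=h$ is where optimality of the base designs is essential. Note that the paper itself does not prove this theorem but imports it from \cite[Theorem 5.8]{santonastaso2022subspace}; your argument is essentially the one given in that reference, which likewise combines the linearized Reed--Solomon construction in dimension $h+1$ with the direct-sum gluing of subspace designs.
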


%{\color{red}John: remark that this was the only known construction for $1<h<k-1$? And the next theorem gives (definitely) new examples.}

This existence result was the only known construction for $1<h<k-1$. Hence the following theorem provides new examples.

\begin{theorem}
    For $m \geq 4$ even, $k \geq 3$ odd and $t \leq q-1$, there exist a optimal $(tm-3)$-design $\left[(\frac{km}{2},\ldots,\frac{km}{2}),\frac{k(tm-2)}{2}\right]_{q^m/q}$ system, which cannot be obtained by Theorem \ref{th:applincollamento}.
\end{theorem}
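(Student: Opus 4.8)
The plan is to realize the claimed design as the Delsarte dual of a maximum $1$-design, and then to separate it from the family of Theorem~\ref{th:applincollamento} by a divisibility obstruction stemming from the parity of $k$. The construction itself is essentially immediate from the dual machinery already set up; the genuine content is verifying that the resulting parameters fall outside the reach of the previously known construction.

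First I would invoke Theorem~\ref{th:existence1design}. Since $m$ is even, $mk$ is even for every $k$, so for any $1\le t\le q-1$ there exists a maximum $1$-design $\mathcal{U}=(\mathcal{U}_1,\ldots,\mathcal{U}_t)$ realised as a $\left[\left(\frac{mk}{2},\ldots,\frac{mk}{2}\right),k\right]_{q^m/q}$ system; here each block has $\Fq$-dimension $n=\frac{mk}{2}=\frac{mk}{h+1}$ with $h=1$, exactly as required for a maximum $h$-design. Applying Corollary~\ref{cor:dualofmaximumdesign} with $h=1$, any Delsarte dual $\mathcal{U}^{\perp}$ is then an optimal $(tm-1-2)=(tm-3)$-design $[\bfn,N-k]_{q^m/q}$ system. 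It remains only to record the parameters: the block sizes are preserved by dualisation, $N=t\cdot\frac{mk}{2}=\frac{tmk}{2}$, and $N-k=\frac{tmk}{2}-k=\frac{k(tm-2)}{2}$. Hence $\mathcal{U}^{\perp}$ is precisely an optimal $(tm-3)$-design $\left[\left(\frac{mk}{2},\ldots,\frac{mk}{2}\right),\frac{k(tm-2)}{2}\right]_{q^m/q}$ system, which is the object asserted. (A quick check confirms the level is admissible: $tm-3\le \frac{k(tm-2)}{2}-1$ holds for $k\ge 3$.)

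For the novelty claim I would compare these parameters with those output by Theorem~\ref{th:applincollamento}, which yields a maximum $h$-design $\left[\left(\frac{mK}{h+1},\ldots\right),K\right]_{q^m/q}$ under the constraint $h+1\mid K$ together with $m\ge h+1$. If our design arose in this way, matching the design level forces $h=tm-3$, so $h+1=tm-2$, and matching the block size $\frac{mK}{tm-2}=\frac{mk}{2}$ forces the dimension $K=\frac{k(tm-2)}{2}$. Since $m$ is even, $tm-2$ is even; writing $tm-2=2e$ gives $K=ke$, and $tm-2=2e$ divides $ke$ only if $2\mid k$, contradicting that $k$ is odd. Thus the divisibility hypothesis of Theorem~\ref{th:applincollamento} fails for these parameters, so the construction cannot be obtained from it; this argument is uniform in $t$.

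I would also record the conceptual reading of this obstruction, which is the main point to get right. By Corollary~\ref{cor:optimalk-1design} an optimal design is maximum precisely when $m\ge h+1$, whereas here $h+1=tm-2$ exceeds $m$ as soon as $t\ge 2$ (indeed $tm-2\ge 2m-2>m$ for $m\ge 4$). Hence for $t\ge 2$ the dual design is optimal but \emph{not} maximum, while Theorem~\ref{th:applincollamento} produces only maximum designs, giving an independent reason that these examples are new. The only subtle step is this separation from the known family, and it rests entirely on $k$ being odd.
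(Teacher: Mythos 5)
Your proof is correct and takes essentially the same route as the paper: obtain a maximum $1$-design from Theorem \ref{th:existence1design}, pass to its Delsarte dual via Corollary \ref{cor:dualofmaximumdesign} with $h=1$, compute $N-k=\frac{k(tm-2)}{2}$, and rule out Theorem \ref{th:applincollamento} because $tm-2$ divides $\frac{k(tm-2)}{2}$ only if $k$ is even. Your side remark that for $t\geq 2$ the dual design is optimal but \emph{not} maximum (since $m< tm-2$) is a worthwhile refinement: the paper's own proof loosely calls the dual a ``maximum'' $(tm-3)$-design, which conflicts with its definition of maximum requiring $m\geq h+1$, whereas the theorem statement (and your argument) correctly says ``optimal.''
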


\begin{proof}
    For $m \geq 4$ even and $k \geq 3$ odd, by Theorem \ref{th:existence1design}, there exists $\U$ a maximum $1$-design $\Fmnk$ system. Let consider a Delsarte dual system $\mathcal{U}^{\perp}$ of $\U$. By Corollary \ref{cor:dualofmaximumdesign}, for $h=1$, we get that  $\mathcal{U}^{\perp}$ is a maximum $(tm-3)$-design. Moreover, since $tm-2$ does not divide $\frac{k(tm-2)}{2}$ then this system cannot be obtained by by Theorem \ref{th:applincollamento}.
\end{proof}

From Theorem \ref{th:existence1design} and Theorem \ref{Th:sdesignMSRD}, we get immediately the following existence result. %{\color{red}John: move to directly after these theorems? they are the coding theory counterpart} 
\begin{theorem} 
    For any positive integer $k,m \geq 2$, such that $mk$ is even and any $1 \leq t \leq q-1$, there exist a MSRD $[(mk/2,\ldots,mk/2),k]_{q^m/q}$ code.
\end{theorem}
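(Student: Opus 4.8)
The plan is to deduce the statement by combining the existence result of Theorem~\ref{th:existence1design} with the geometric characterisation of MSRD codes in Theorem~\ref{Th:sdesignMSRD}. First I would fix integers $k,m\geq 2$ with $mk$ even and an integer $t$ with $1\leq t\leq q-1$, and invoke Theorem~\ref{th:existence1design} to produce a maximum $1$-design $[(mk/2,\ldots,mk/2),k]_{q^m/q}$ system $\U=(\U_1,\ldots,\U_t)$. This is exactly the object whose existence across the full admissible parameter range was the content of the preceding construction, so no new geometric input is needed at this stage.

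Next I would set $n\coloneqq mk/2$ and verify that the hypotheses of Theorem~\ref{Th:sdesignMSRD} are satisfied. Since $k\geq 2$ we have $n=mk/2\geq m$, placing us in the regime $n\geq m$ to which Theorem~\ref{Th:sdesignMSRD} applies (with equality $n=m$ precisely when $k=2$). Taking $h=1$, the integer $h+1=2$ divides $mk$ because $mk$ is even, and $h=1\leq k-1$ again because $k\geq 2$; moreover $n=mk/2=\frac{mk}{h+1}$. Thus $\U$ is an $h$-design with $n=\frac{mk}{h+1}$ for the positive integer $h=1\leq k-1$, which is exactly the configuration appearing in Theorem~\ref{Th:sdesignMSRD}.

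Finally I would apply the implication of Theorem~\ref{Th:sdesignMSRD} running from the $h$-design property to the MSRD property: any code $\C\in\Phi([\U])$, that is, any code whose associated system is $\U$, is then an MSRD $[(mk/2,\ldots,mk/2),k]_{q^m/q}$ code. Because the map $\Phi$ furnishes a code for every system, this produces the desired MSRD code for each $t$ in the stated range, completing the argument.

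The deduction is genuinely immediate once Theorem~\ref{th:existence1design} is available, and there is no substantial obstacle at this final step: all the difficulty was absorbed into constructing the pairwise disjoint maximum scattered linear sets. The only point that demands a little care is checking that the hypothesis $n\geq m$ of Theorem~\ref{Th:sdesignMSRD} holds uniformly over the parameter range, including the boundary case $k=2$ where $n=m$ and both regimes of the characterisation coincide; this is also the reason the statement assumes $k\geq 2$ rather than $k\geq 1$.
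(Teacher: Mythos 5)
Your proposal is correct and follows exactly the paper's own (one-line) proof: the paper likewise obtains the result immediately by combining Theorem~\ref{th:existence1design} with Theorem~\ref{Th:sdesignMSRD}. Your added verifications---that $n=mk/2\geq m$ since $k\geq 2$, that $h=1\leq k-1$, and that $h+1=2$ divides $mk$---are precisely the routine checks the paper leaves implicit.
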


Moreover, from Corollary \ref{cor:maximum1designt>q-1particular} and Theorem \ref{Th:sdesignMSRD}, we get MSRD codes having a number of blocks greater than $q-1$.

\begin{theorem}
For any positive integer $r$ and $q$ odd with $q\equiv ~1~ \mod 3$, there exists an MSRD $[(6r,\ldots,6r),3r]_{q^4/q}$ code.
\end{theorem}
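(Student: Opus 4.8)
The plan is to read this statement off directly from the geometric machinery already assembled, since the genuinely hard construction has been carried out in Corollary~\ref{cor:maximum1designt>q-1particular}. The code in question has parameters $m=4$, $k=3r$, and common block length $n=6r$, and one checks immediately that $n = 6r = \frac{4\cdot 3r}{2} = \frac{mk}{2}$; equivalently $n=\frac{mk}{h+1}$ with $h=1$. Thus the target is exactly an MSRD code whose associated system is a maximum $1$-design, and the whole argument reduces to exhibiting such a design and invoking the correspondence of Theorem~\ref{Th:sdesignMSRD}.

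Concretely, I would first apply Corollary~\ref{cor:maximum1designt>q-1particular}, which for $q$ odd with $q\equiv 1 \bmod 3$ produces $t=2(q-1)$ pairwise disjoint maximum scattered $\F_q$-linear sets $L_{\U_1},\ldots,L_{\U_t}$ in $\PG(3r-1,q^4)=\PG(k-1,q^m)$, each of rank $\frac{mk}{2}=6r$. Next, I would feed these into Lemma~\ref{lem:designdisjoint}: the equivalence (1)$\Leftrightarrow$(2) there says that pairwise disjoint scattered linear sets are precisely the geometric incarnation of a $1$-design, so $(\U_1,\ldots,\U_t)$ forms a $1$-design in $V=V(3r,q^4)$. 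Since each $\U_i$ has $\F_q$-dimension $6r = \frac{mk}{h+1}$ with $h=1$, this $1$-design is in fact a \emph{maximum} $1$-design $[(6r,\ldots,6r),3r]_{q^4/q}$ system with $t=2(q-1)$ blocks.

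Finally, I would apply Theorem~\ref{Th:sdesignMSRD} in the direction ``maximum $h$-design $\Rightarrow$ MSRD.'' Here $n=6r\geq 4=m$, the integer $h+1=2$ divides $mk=12r$, and $h=1\leq k-1=3r-1$ for every $r\geq 1$, so all the standing hypotheses are met; the theorem then guarantees that any code $\C\in\Phi([\U])$ is an MSRD $[(6r,\ldots,6r),3r]_{q^4/q}$ code, which is exactly the assertion. I do not expect any real obstacle in this final step: all the delicate work---establishing disjointness of the scattered linear sets through the presemifield and Knuth-equivalence analysis---is already packaged inside Corollary~\ref{cor:maximum1designt>q-1particular}, so what remains is only the routine verification that the numerical parameters feed correctly into the design-to-code dictionary.
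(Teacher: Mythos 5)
Your proposal is correct and follows exactly the route the paper intends: the paper states this theorem as an immediate consequence of Corollary~\ref{cor:maximum1designt>q-1particular} together with Theorem~\ref{Th:sdesignMSRD}, which is precisely your chain of reasoning (with Lemma~\ref{lem:designdisjoint} supplying the translation from disjoint scattered linear sets to a maximum $1$-design). Your explicit verification of the numerical hypotheses ($n=6r\geq m=4$, $h+1=2$ dividing $mk=12r$, and $h=1\leq 3r-1$) is a welcome fleshing-out of details the paper leaves implicit.
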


It remains an open problem to construct MSRD codes for the remaining values of $m,k$. However we note that even in the rank metric this question remains open, and the question in the sum-rank metric appears to be strictly more difficult.
\begin{open}
Construct MSRD codes with $n=mk/(h+1)$, for the values of $m,k,h$ not covered by the previous results.
\end{open} %{\color{red}John: make into open problem, and highlight that it is still open in rm, so not surprising that it is open in srm}

\medskip

\section*{Acknowledgements}
The authors are very grateful to Olga Polverino and Ferdinando Zullo for carefully reading the paper and giving several suggestions. John Sheekey would like to thank the University
of Campania ``Luigi Vanvitelli” and his coauthor for his
hospitality during the preparation of this article.
Paolo Santonastaso is very grateful for the hospitality of the School of Mathematics and Statistics, University College Dublin, Dublin, Ireland, where he was a visiting PhD student for 1 month during the preparation of this paper.
This work was supported by the Italian National Group for Algebraic and Geometric Structures and their Applications (GNSAGA - INdAM).

\bibliographystyle{abbrv}
\bibliography{biblio}

\begin{thebibliography}{10}

\bibitem{alfarano2022linear}
G.~N. Alfarano, M.~Borello, A.~Neri, and A.~Ravagnani.
\newblock Linear cutting blocking sets and minimal codes in the rank metric.
\newblock {\em Journal of Combinatorial Theory, Series A}, 192:105658, 2022.

\bibitem{alfarano2021sum}
G.~N. Alfarano, F.~J. Lobillo, A.~Neri, and A.~Wachter-Zeh.
\newblock Sum-rank product codes and bounds on the minimum distance.
\newblock {\em Finite Fields and Their Applications}, 80:102013, 2022.

\bibitem{bartoli2021evasive}
D.~Bartoli, B.~Csajb{\'o}k, G.~Marino, and R.~Trombetti.
\newblock Evasive subspaces.
\newblock {\em Journal of Combinatorial Designs}, 2021.

\bibitem{bartoli2018maximum}
D.~Bartoli, M.~Giulietti, G.~Marino, and O.~Polverino.
\newblock Maximum scattered linear sets and complete caps in {G}alois spaces.
\newblock {\em Combinatorica}, 38(2):255--278, 2018.

\bibitem{beelen2017twisted}
P.~Beelen, S.~Puchinger, and J.~R. n{\'e}~Nielsen.
\newblock Twisted reed-solomon codes.
\newblock In {\em 2017 IEEE International Symposium on Information Theory
  (ISIT)}, pages 336--340. IEEE, 2017.

\bibitem{blokhuis2000scattered}
A.~Blokhuis and M.~Lavrauw.
\newblock Scattered spaces with respect to a spread in {PG}$(n,q)$.
\newblock {\em Geometriae Dedicata}, 81(1):231--243, 2000.

\bibitem{borello2023geometric}
M.~Borello and F.~Zullo.
\newblock Geometric dual and sum-rank minimal codes.
\newblock {\em in preparation}, 2023.

\bibitem{byrne2021fundamental}
E.~Byrne, H.~Gluesing-Luerssen, and A.~Ravagnani.
\newblock Fundamental properties of sum-rank-metric codes.
\newblock {\em IEEE Transactions on Information Theory}, 2021.

\bibitem{calderbank1986geometry}
R.~Calderbank and W.~M. Kantor.
\newblock The geometry of two-weight codes.
\newblock {\em Bulletin of the London Mathematical Society}, 18(2):97--122,
  1986.

\bibitem{csajbok2017maximum}
B.~Csajb{\'o}k, G.~Marino, O.~Polverino, and F.~Zullo.
\newblock Maximum scattered linear sets and {MRD}-codes.
\newblock {\em Journal of Algebraic Combinatorics}, 46(3):517--531, 2017.

\bibitem{csajbok2021generalising}
B.~Csajb{\'o}k, G.~Marino, O.~Polverino, and F.~Zullo.
\newblock Generalising the scattered property of subspaces.
\newblock {\em Combinatorica}, 41(2):237--262, 2021.

\bibitem{dvir2012subspace}
Z.~Dvir and S.~Lovett.
\newblock Subspace evasive sets.
\newblock In {\em Proceedings of the forty-fourth annual ACM symposium on
  Theory of computing}, pages 351--358, 2012.

\bibitem{gruica2022generalised}
A.~Gruica, A.~Ravagnani, J.~Sheekey, and F.~Zullo.
\newblock Generalised scattered subspaces.
\newblock {\em arXiv preprint arXiv:2207.01027}, 2022.

\bibitem{guruswami2011linear}
V.~Guruswami.
\newblock Linear-algebraic list decoding of folded {R}eed-{S}olomon codes.
\newblock In {\em 2011 IEEE 26th Annual Conference on Computational
  Complexity}, pages 77--85. IEEE, 2011.

\bibitem{guruswami2016explicit}
V.~Guruswami and S.~Kopparty.
\newblock Explicit subspace designs.
\newblock {\em Combinatorica}, 36(2):161--185, 2016.

\bibitem{guruswami2021lossless}
V.~Guruswami, N.~Resch, and C.~Xing.
\newblock Lossless dimension expanders via linearized polynomials and subspace
  designs.
\newblock {\em Combinatorica}, pages 1--35, 2021.

\bibitem{guruswami2017optimal}
V.~Guruswami and C.~Xing.
\newblock Optimal rate list decoding over bounded alphabets using
  algebraic-geometric codes.
\newblock {\em ACM Journal of the ACM (JACM)}, 69(2):1--48, 2022.

\bibitem{helleseth1977weight}
T.~Helleseth, T.~Kl{\o}ve, and J.~Mykkeltveit.
\newblock The weight distribution of irreducible cyclic codes with block
  lengths $n_l ((q^l- 1) n)$.
\newblock {\em Discrete Mathematics}, 18(2):179--211, 1977.

\bibitem{hirschfeld1998projective}
J.~Hirschfeld.
\newblock {\em Projective geometries over finite fields. Oxford mathematical
  monographs}.
\newblock Oxford University Press New York, 1998.

\bibitem{kurihara2015relative}
J.~Kurihara, R.~Matsumoto, and T.~Uyematsu.
\newblock Relative generalized rank weight of linear codes and its applications
  to network coding.
\newblock {\em IEEE Transactions On information theory}, 61(7):3912--3936,
  2015.

\bibitem{lavrauw2011fq}
M.~Lavrauw, G.~Marino, O.~Polverino, and R.~Trombetti.
\newblock Fq-pseudoreguli of $\mathrm{PG}(3, q^3)$ and scattered semifields of
  order $q^6$.
\newblock {\em Finite Fields and Their Applications}, 17(3):225--239, 2011.

\bibitem{lavrauw2011finite}
M.~Lavrauw and O.~Polverino.
\newblock Finite semifields.
\newblock {\em Current research topics in Galois Geometry}, pages 131--160,
  2011.

\bibitem{lavrauw2013scattered}
M.~Lavrauw and G.~Van~de Voorde.
\newblock Scattered linear sets and pseudoreguli.
\newblock {\em The electronic journal of combinatorics}, 2013.

\bibitem{lavrauw2015field}
M.~Lavrauw and G.~Van~de Voorde.
\newblock Field reduction and linear sets in finite geometry.
\newblock {\em Topics in finite fields}, 632:271--293, 2015.

\bibitem{leroy1995pseudolinear}
A.~Leroy.
\newblock {Pseudolinear transformations and evaluation in Ore extensions}.
\newblock {\em Bulletin of the Belgian Mathematical Society - Simon Stevin},
  2(3):321 -- 347, 1995.

\bibitem{lunardon2017mrd}
G.~Lunardon.
\newblock {MRD}-codes and linear sets.
\newblock {\em Journal of Combinatorial Theory, Series A}, 149:1--20, 2017.

\bibitem{lunardon2014maximum}
G.~Lunardon, G.~Marino, O.~Polverino, and R.~Trombetti.
\newblock Maximum scattered linear sets of pseudoregulus type and the segre
  variety $\mathcal{S}_{n,n}$.
\newblock {\em Journal of Algebraic Combinatorics}, 39(4):807--831, 2014.

\bibitem{lunardon2018generalized}
G.~Lunardon, R.~Trombetti, and Y.~Zhou.
\newblock Generalized twisted {G}abidulin codes.
\newblock {\em Journal of Combinatorial Theory, Series A}, 159:79--106, 2018.

\bibitem{luo2005some}
Y.~Luo, C.~Mitrpant, A.~H. Vinck, and K.~Chen.
\newblock Some new characters on the wire-tap channel of type ii.
\newblock {\em IEEE Transactions on information theory}, 51(3):1222--1229,
  2005.

\bibitem{marino2022evasive}
G.~Marino, A.~Neri, and R.~Trombetti.
\newblock Evasive subspaces, generalized rank weights and near {MRD} codes.
\newblock {\em arXiv preprint arXiv:2204.11791}, 2022.

\bibitem{marino2007fq}
G.~Marino, O.~Polverino, and R.~Trombetti.
\newblock On $\mathbb{F}_q$-linear sets of {PG}$(3,q^3)$ and semifields.
\newblock {\em Journal of Combinatorial Theory, Series A}, 114(5):769--788,
  2007.

\bibitem{marino2011towards}
G.~Marino, O.~Polverino, and R.~Trombetti.
\newblock Towards the classification of rank 2 semifields 6-dimensional over
  their center.
\newblock {\em Designs, Codes and Cryptography}, 61(1):11--29, 2011.

\bibitem{Martinez2018skew}
U.~Mart{\'\i}nez-Pe{\~n}as.
\newblock Skew and linearized {R}eed–{S}olomon codes and maximum sum rank
  distance codes over any division ring.
\newblock {\em Journal of Algebra}, 504:587--612, 2018.

\bibitem{martinez2019theory}
U.~Mart{\'\i}nez-Pe{\~n}as.
\newblock Theory of supports for linear codes endowed with the sum-rank metric.
\newblock {\em Designs, Codes and Cryptography}, 87(10):2295--2320, 2019.

\bibitem{martinezpenas2021hamming}
U.~Mart{\'\i}nez-Pe{\~n}as.
\newblock Hamming and simplex codes for the sum-rank metric.
\newblock {\em Designs, Codes and Cryptography}, 88(8):1521--1539, 2020.

\bibitem{martinez2022survey}
U.~Martínez-Peñas, M.~Shehadeh, and F.~R. Kschischang.
\newblock Codes in the sum-rank metric: Fundamentals and applications.
\newblock {\em Foundations and Trends® in Communications and Information
  Theory}, 19(5):814--1031, 2022.

\bibitem{neri2022twisted}
A.~Neri.
\newblock Twisted linearized reed-solomon codes: A skew polynomial framework.
\newblock {\em Journal of Algebra}, 609:792--839, 2022.

\bibitem{neri2023geometry}
A.~Neri, P.~Santonastaso, and F.~Zullo.
\newblock The geometry of one-weight codes in the sum-rank metric.
\newblock {\em Journal of Combinatorial Theory, Series A}, 194:105703, 2023.

\bibitem{oggier2012existence}
F.~Oggier and A.~Sboui.
\newblock On the existence of generalized rank weights.
\newblock In {\em 2012 International Symposium on Information Theory and its
  Applications}, pages 406--410. IEEE, 2012.

\bibitem{ore1933theory}
O.~Ore.
\newblock Theory of non-commutative polynomials.
\newblock {\em Annals of Mathematics}, 34(3):480--508, 1933.

\bibitem{ott2021bounds}
C.~Ott, S.~Puchinger, and M.~Bossert.
\newblock Bounds and genericity of sum-rank-metric codes.
\newblock {\em 2021 XVII International Symposium" Problems of Redundancy in
  Information and Control Systems"(REDUNDANCY)}, pages 119--124, 2021.

\bibitem{polverino2010linear}
O.~Polverino.
\newblock Linear sets in finite projective spaces.
\newblock {\em Discrete Mathematics}, 310(22):3096--3107, 2010.

\bibitem{polverino2020connections}
O.~Polverino and F.~Zullo.
\newblock Connections between scattered linear sets and {MRD}-codes.
\newblock {\em Bulletin of the Institute of Combinatorics and its
  Applications}, 89:46--74, 2020.

\bibitem{pudlak2004pseudorandom}
P.~Pudl{\'a}k and V.~R{\"o}dl.
\newblock Pseudorandom sets and explicit constructions of {R}amsey graphs.
\newblock {\em Quaderni di Matematica}, 13:327 -- 346, 2004.

\bibitem{Randrianarisoa2020ageometric}
T.~H. Randrianarisoa.
\newblock A geometric approach to rank metric codes and a classification of
  constant weight codes.
\newblock {\em Designs, Codes and Cryptography}, 88:1331–1348, 2020.

\bibitem{santonastaso2022subspace}
P.~Santonastaso and F.~Zullo.
\newblock On subspace designs.
\newblock {\em To appear in EMS Surveys in Mathematical Sciences}, 2023.

\bibitem{sheekey2016new}
J.~Sheekey.
\newblock A new family of linear maximum rank distance codes.
\newblock {\em Advances in Mathematics of Communications}, 10(3):475, 2016.

\bibitem{sheekeysurvey}
J.~Sheekey.
\newblock {MRD} codes: constructions and connections.
\newblock {\em Combinatorics and Finite Fields: Difference Sets, Polynomials,
  Pseudorandomness and Applications}, 23, 2019.

\bibitem{sheekeyVdV}
J.~Sheekey and G.~Van~de Voorde.
\newblock Rank-metric codes, linear sets, and their duality.
\newblock {\em Designs, Codes and Cryptography}, 88:655--675, 2020.

\bibitem{taylor1992geometry}
D.~E. Taylor.
\newblock {\em The geometry of the classical groups}, volume~9.
\newblock Heldermann Verlag, 1992.

\bibitem{vladut2007algebraic}
S.~Vladut, D.~Nogin, and M.~Tsfasman.
\newblock Algebraic geometric codes: basic notions, 2007.

\bibitem{wei1991generalized}
V.~K. Wei.
\newblock Generalized hamming weights for linear codes.
\newblock {\em IEEE Transactions on information theory}, 37(5):1412--1418,
  1991.

\bibitem{zini2021scattered}
G.~Zini and F.~Zullo.
\newblock Scattered subspaces and related codes.
\newblock {\em Designs, Codes and Cryptography}, pages 1--21, 2021.

\end{thebibliography}
\end{document}